\DeclareMathOperator{\WL}{WL}
\newcommand\doubleBrackets[1]{\{\!\!\{ #1 \}\!\!\}}
\newcommand{\N}{\mathbb{N}}
\newcommand{\curlC}{\mathcal{C}}
\newtheorem{theorem}{Theorem}[section]
\newtheorem{lemma}[theorem]{Lemma}
\newtheorem{corollary}[theorem]{Corollary}
\newtheorem{observation}[theorem]{Observation}
\newtheorem{notation}[theorem]{Notation}
\newtheorem{proposition}[theorem]{Proposition}
\theoremstyle{definition}
\newtheorem{definition}[theorem]{Definition}
\newtheorem{assumption}[theorem]{Assumption}
\theoremstyle{remark}
\title{A Classification of Long-Refinement Graphs for Colour Refinement}
\author{Sandra Kiefer\\ 
University of Oxford\\
\tt{sandra.kiefer@cs.ox.ac.uk}
\and
T.\ Devini de Mel\\ 
University of Oxford\\
\tt{devini.demel@gmail.com}
}
\pgfplotsset{compat=1.18}
\begin{document}

\maketitle

\begin{abstract}
The Colour Refinement algorithm is a classical procedure to detect symmetries in graphs, whose most prominent application is in graph-isomorphism tests. The algorithm and its generalisation, the Weisfeiler--Leman algorithm, evaluate local information to compute a colouring for the vertices in an iterative fashion. Different final colours of two vertices certify that no isomorphism can map one onto the other. 

The number of iterations that the algorithm takes to terminate is its central complexity parameter. For a long time, it was open whether graphs that take the maximum theoretically possible number of Colour Refinement iterations actually exist. Starting from an exhaustive search on graphs of low degrees, Kiefer and McKay proved the existence of infinite families of such long-refinement graphs with degrees $2$ and $3$, thereby showing that the trivial upper bound on the iteration number of Colour Refinement is tight. 

In this work, we provide a complete characterisation of the long-refinement graphs with low (or, equivalently, high) degrees. We show that, with one exception, the aforementioned families are the only long-refinement graphs with maximum degree at most $3$, and we fully classify the long-refinement graphs with maximum degree $4$.  To this end, via a reverse-engineering approach, we show that all low-degree long-refinement graphs can be represented as compact strings, and we derive multiple structural insights from this surprising fact. Since long-refinement graphs are closed under taking edge complements, this also yields a classification of long-refinement graphs with high degrees.

Kiefer and McKay initiated a search for long-refinement graphs that are only distinguished in the last iteration of Colour Refinement before termination. We conclude it in this submission by showing that such graphs cannot exist.
\end{abstract}

\section{Introduction}\label{section:intro}

Colour Refinement, also known as Naïve Vertex Classification or the $1$-dimensional Weisfeiler--Leman algorithm, is a classical combinatorial procedure to detect symmetries in graphs. The algorithm computes in iterations a colouring of the vertices that reflects how the vertices are connected to the rest of the graph. By assessing local structure, it refines in each iteration the colouring from the previous one, until it reaches a stable state, which is when it terminates and returns the current colouring. The colour of each vertex in an iteration consists of (an encoding of) the colours from the previous iteration of the vertex and the vertices in its neighbourhood. The colourings are isomorphism-invariant: if two vertices obtain different colours, then there is no isomorphism between the graphs mapping one vertex to the other. This way, the colourings can decrease the size of the search space for isomorphisms drastically, and, in fact, the most prominent application of the algorithm is in graph-isomorphism tests and graph-canonisation approaches. Colour Refinement can be implemented to run in time $O((m + n) \log(n))$ \cite{carcro82,McKay81}, where $n$ and $m$ are the number of vertices and edges, respectively, of the input graph, and efficient implementations are used in all competitive isomorphism solvers, see e.g.\ \cite{AndersS21,DargaLSM04,JunttilaK07,JunttilaK11,McKay81,mckaypip14}.

Besides the practical implementations, the algorithm has also fascinated theoreticians over decades. The main reason for this is that it keeps revealing more and more links between various areas in computer science. For example, the distinguishing power of the algorithm corresponds to the expressivity of graph neural networks \cite{morritfey+19}, and there are machine-learning kernel functions based on the algorithm \cite{LiSSSS16,ShervashidzeSLMB11}. Two graphs have the same final colourings with respect to the algorithm if and only if there is a fractional isomorphism between them \cite{god97,ramscheiull94,tin91}. This is equivalent to no sentence in the $2$-variable fragment of the counting logic $\textsf{C}$ distinguishing the graphs \cite{cfi}, and also to the graphs agreeing in their homomorphism counts for all trees \cite{DellGR18}. 
Remarkably, these last two results generalise naturally to higher-dimensional versions of the Weisfeiler--Leman algorithm (\cite{Dvorak10,immlan90}, see also \cite{WeisfeilerL68} for the original $2$-WL algorithm), whose expressive power is a very active area of research. Most notably, the algorithm serves as a central subroutine in Babai's celebrated quasipolynomial-time graph-isomorphism algorithm \cite{Babai16}. 

The connections that the algorithm exhibits to other areas such as logic, homomorphism counting, and machine learning extend the toolbox of proof methods to understand the algorithm better. While the expressive power of Colour Refinement has been completely characterised \cite{ArvindKRV17,kieschweisel22}, its computational complexity still sparks research. The central complexity measure that has tight links in the related fields is the number of iterations that the algorithm uses to reach its output. This number of colour recomputations is crucial for the parallel running time of Colour Refinement, since $\ell$ iterations can be simulated with $O(\ell)$ steps on a PRAM with $O(n)$ processors \cite{groverb06,KoblerV08}. Also, when run on two graphs in parallel, the iteration number corresponds to a bound on the descriptive complexity of the difference between the graphs. More precisely, the number of iterations until the graphs obtain different colourings equals the minimum quantifier depth of a distinguishing sentence in the counting logic $\textsf{C}^2$ \cite{cfi,immlan90}. There are also interesting connections to proof complexity \cite{AtseriasM13,AtseriasO18} and database theory \cite{GR24}.

The algorithm proceeds by iteratively enriching the colouring from the previous iteration with more information, yielding with each iteration a finer partition of the vertices into colour classes. A graph that requires the maximum theoretically possible number of Colour Refinement iterations needs to provide as little “progress” as possible in each iteration, implying that the initial partition into colour classes needs to be the trivial one with all vertices of the same colour, whereas in the final partition, every vertex has to have its own distinct colour. Hence, the trivial upper bound on the iteration number on $n$-vertex graphs is $n-1$, with the number of colour classes growing by $1$ in each iteration. We call graphs that achieve this bound \emph{long-refinement graphs}. On random graphs, the iteration number is asymptotically almost surely $2$ \cite{BabErdSelSta80}. On the other hand, Krebs and Verbitsky managed to construct graphs with an iteration number of $n - O(\sqrt{n})$ \cite{krever15}. For a long time, no progress on the trivial upper bound $n-1$ was made, and it was open whether long-refinement graphs actually exist. Inspired by work due to Gödicke \cite{Goedicke} and starting from an exhaustive search on graphs of low degrees, Kiefer and McKay proved the existence of infinite families of long-refinement graphs with degrees $2$ and $3$, thereby showing that the trivial upper bound on the iteration number of Colour Refinement on $n$-vertex graphs is tight \cite{KMcK20}. 

The main motivation for the works \cite{Goedicke} and \cite{KMcK20} was to find out whether long-refinement graphs exist. With the discovered families, various distinctive properties of long-refinement graphs with degrees 2 and 3 became apparent. Those allowed the authors to find compact representations for the graph families, which ultimately made it possible to show that the families are infinite. The hope was to use these insights to construct new families of long-refinement graphs of degrees $2$ and $3$ that would cover all possible graph sizes (starting from the smallest possible $n$, which is $10$). However, infinitely many gaps have persisted since then, and in particular, it has remained open for which of the remaining values of $n$ there exist long-refinement $n$-vertex graphs at all.

\paragraph*{Our Contribution}

In this work, we provide a complete characterisation of the long-refinement graphs with small (or, equivalently, high) degrees. More concretely, we show that, with one exception of a graph with minimum degree $1$, the aforementioned families are the only long-refinement graphs with maximum degree at most $3$, and we fully characterise the long-refinement graphs with maximum degree at most $4$. Since complements of long-refinement graphs are long-refinement graphs as well, this also yields a characterisation of the long-refinement graphs with high degrees (i.e.\ with minimum degree at least $n-5$). While it clearly strengthens and generalises the results from \cite{Goedicke} and \cite{KMcK20}, our approach does not use experimental results and is entirely theoretical.

By the definition of Colour Refinement, it is easy to see that every long-refinement graph has exactly two different vertex degrees. From the facts that the algorithm iteratively refines a partition of the vertex set and that long-refinement graphs need to end up with the finest partition possible, we deduce that when executing Colour Refinement on a long-refinement graph, there will be a stage in which each colour class has size at most $2$. We analyse the possible connections between these pairs and singletons, depending on the indices of the iterations in which the pairs split into singletons. This generalises the analyses from \cite{KMcK20}, since our results hold for any degree pairs. Our structural study gives us a clear understanding of the order and edges between splitting pairs in a long-refinement graph. We then head off to reverse-engineer the larger colour classes from earlier iterations step by step. Again, our results hold for long-refinement graphs with no degree restrictions, but the analysis becomes more and more complex with larger colour classes.

Restricting ourselves to small degrees, we obtain our classification of long-refinement graphs of maximum degree at most $4$. With this, we manage to show that all even-order long-refinement graphs $G$ with vertex degrees $2$ and $3$ have a stage in which each colour class has size $2$, and all odd-order long-refinement graphs with these degree restrictions have a stage in which each colour class has size at most $2$ and there is only one singleton colour class. This is a surprising insight because the corresponding statement was phrased as a requirement in \cite{KMcK20}, while we show that it is in fact a general property. It implies that \emph{every} long-refinement graph with degrees $2$ and $3$ has a simple compact string representation. It also allows us to capture more restrictions on symmetries in the strings that represent long-refinement graphs, ultimately yielding that the families from \cite{KMcK20} are exhaustive and that the gaps in sizes that the work could not cover occurred because there are no long-refinement graphs with those sizes and degrees simultaneously. From the characterisation via string representations, we can then easily deduce that there is exactly one long-refinement graphs with minimum degree $1$ and maximum degree at most $3$.

Since Colour Refinement is usually applied to two graphs with the purpose of distinguishing them through their colourings, a natural question to ask -- also in the light of the lower bound from \cite{krever15} -- is whether there are actually \emph{pairs of graphs} that take as many refinements as theoretically possible to be distinguished. This question was initiated in \cite{KMcK20} and has been open since then. Our aforementioned results give us a complete understanding of the structure and splitting dynamics of long-refinement graphs with low (or high) degrees. No two of those long-refinement graphs are distinguished only in the last iteration of Colour Refinement on them, but there could in principle be higher-degree long-refinement graphs with this property. In this submission, we conclude the search by proving that there cannot be a ``long-refinement pair'' of graphs, i.e.\ a pair of long-refinement graphs (of any degrees) that is only distinguished in the last iteration of Colour Refinement before termination.

\paragraph*{Further Related Work}

It is worth mentioning that just as for Colour Refinement, for general $k$, 
the iteration number $\WL_k(n)$ of the $k$-dimensional Weisfeiler--Leman algorithm ($k$-WL) on $n$-vertex graphs is being studied. For $2$-WL, the paper \cite{kieschw19} used a simple combinatorial game to show that the trivial upper bound of $\WL_2(n) \leq n^2-1$ on the iteration number on $n$-vertex graphs (as well as the improved bound one obtains merely accounting for the fact that the initial partition for $k$-WL with $k \geq 2$ always has multiple colour classes) is asymptotically not tight. In \cite{lichponschwei19}, algebraic machinery was employed to show that $\WL_2(n) \in O(n \log(n))$. This was recently generalised to $O(kn^{k-1}\log(n))$ for $k \geq 3$ in \cite{GroheLN23}. Concerning lower bounds, Fürer's $\WL_k(n) \in \Omega(n)$ remained unbeaten over decades \cite{Furer01}. A few years ago, significant progress was made when proceeding to relational structures of higher arity \cite{BerkholzN16} (see also \cite{GroheLN23}). Only very recently, Fürer's bound was also improved for graphs, with $\Omega(n^{k/2})$ being the currently best bound \cite{GroheLNS23com}.

Concerning the number of iterations until two graphs are distinguished, it has been shown that on graphs of bounded treewidth, on planar graphs, and on interval graphs, there is a constant dimension of the Weisfeiler--Leman algorithm that distinguishes a graph from such a class from every second one in $O(\log(n))$ iterations \cite{ GroheK21,groverb06,vBGKO23}.

\section{Overview of Results and Methods}
This section serves as an overview of our main results and the methods that we use. All results formally stated in this section as well as the short proofs for Lemmas \ref{lem:deg-13-singleton-intro} and \ref{lem:deg-13-pair-intro} are collected from the later sections. 

Given a (vertex-coloured or monochromatic) input graph, Colour Refinement uses a simple local criterion to iteratively refine the partition of the vertices induced by the current colouring. As soon as a stable state with respect to the criterion is reached, the algorithm returns the graph together with the current vertex colouring.

\begin{definition}[Colour Refinement]
	Let $G$ be a graph with vertex set $V(G)$, edge set $E(G)$, and vertex-colouring $\lambda \colon V(G)\rightarrow\curlC$. 
	On input $(G,\lambda)$, the Colour Refinement algorithm computes its output recursively with the following update rules. Letting $\chi^0 \coloneqq \lambda$, for $i \in \N$, the \emph{colouring after $i$ iterations of Colour Refinement} is
		\[\chi^i(v) \coloneqq (\chi^{i-1}(v),\doubleBrackets{\chi^{i-1}(u) | \{u,v\} \in E(G)}).\]
\end{definition}

Thus, two vertices obtain different colours in iteration $i$ precisely if they already had different colours in iteration $i-1$ or if they differ in the $\chi^{i-1}$-colour multiplicities in iteration $i-1$ among their neighbours.

Letting $\pi^i \coloneqq \pi_{G,\lambda}(\chi^i)$ be the vertex partition induced by the colouring after $i$ iterations of Colour Refinement, it is immediate that $\pi^{i+1}$ is at least as fine as $\pi^{i}$ (we write $\pi^{i+1} \preceq \pi^{i}$), for every $i \in \N$, and that, hence, there is a minimal $j \leq |V(G)|-1$ such that $\pi^{j+1} = \pi^{j}$. With this choice of $j$, the output of Colour Refinement on input $(G,\lambda)$ is $\chi_{G,\lambda} \coloneqq \chi^j$.

This paper studies long-refinement graphs, i.e.\ graphs $G$ on which the Colour Refinement procedure takes $n-1$ iterations to terminate, where $n = |V(G)|$. It is easy to see that this is the maximum theoretically possible number of iterations on an $n$-vertex graph. To analyse long-refinement graphs, we first recall some very basic insights about the refinement iterations on them. Here, since Colour Refinement refines the colour of each vertex based on the multiplicity of colour occurrences in its neighbourhood, the degrees of vertices (i.e.\ their numbers of neighbours) are crucial. 

Colour Refinement terminates as soon as the vertex partition induced by the colours is not strictly finer than the one in the previous iteration. Hence, to obtain $n-1$ iterations until termination, every iteration on a long-refinement graph must increase the number of colour classes by exactly $1$, starting with a single colour class containing all vertices and ending with a partition in which each vertex forms its own colour class. This implies in particular that there are precisely two vertex degrees in the graph and that in each iteration exactly one colour class is split into two colour classes, which then induces the next split. In Section \ref{sec:colref}, with Lemma \ref{rmk:work-backwards}, we formalise the notion of \emph{inducing a split} by introducing \emph{unbalanced} pairs of colour classes. The lemma is the central ingredient for the structural arguments in Sections \ref{sec:rev-eng} and \ref{sec:max4}.

In Section \ref{sec:rev-eng}, we first observe that every long-refinement graph must have a stage in which every colour class has size at most $2$, call that iteration $p$ and the induced vertex partition $\pi^p$. From then on, there is a linear splitting order among the classes of size $2$, the \emph{pairs}. The first part of Section \ref{sec:rev-eng} analyses the connections between these classes, completely dropping the degree requirements from \cite{KMcK20} and sharpening the insights about the partitions. The first split of a pair must be induced by the splitting of a class of size $4$ into two pairs, which we denote $P_a$ and $P_b$. See Figure \ref{h=0-intro} for the situation when $P_a$ and $P_b$ are successive in the linear order of pairs, and when $P_b$ is the successor of the successor of $P_a$.

    \begin{figure}[htpb]
    \centering
        \begin{subfigure}[htpb]{0.24\textwidth}
    	\centering
    	\includegraphics[scale=0.8]{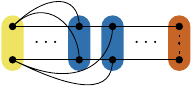}
    	\caption{$\pi^{p-1}$ if $b=a+1$}
    \end{subfigure}
    \begin{subfigure}[htpb]{0.24\textwidth}
    	\centering
    	\includegraphics[scale=0.8]{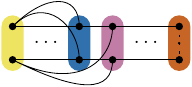}
    	\caption{$\pi^p$ if $b=a+1$}
    \end{subfigure}
    \hspace{0.1cm}
        \begin{subfigure}[htpb]{0.24\textwidth}
    	\centering
    	\includegraphics[scale=0.8]{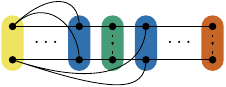}
    	\caption{$\pi^{p-1}$ if $b=a+2$}
    \end{subfigure}
    \begin{subfigure}[htpb]{0.24\textwidth}
    	\centering
    	\includegraphics[scale=0.8]{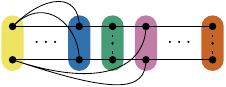}
    	\caption{$\pi^p$ if $b=a+2$}
    \end{subfigure}
    \caption{The partitions $\pi^{p-1}$ and $\pi^{p}$, where the splitting order of pairs is from left to right. $P_a$ is blue, $P_b$ is first blue and then pink, respectively. For clarity, we omit all edges between non-consecutive pairs apart from the ones connecting $P_a$ and $P_b$ to the minimum of the linear order.}\label{h=0-intro}
\end{figure}

The iterations prior to this see the pairs symmetrically preceding $P_a$ and succeeding $P_b$ resulting from splits of classes of $4$ themselves. We thus reverse-engineer the first partition to contain no classes larger than $4$. This partition may or may not contain a class of size $3$, and the  analysis of partitions in the preceding iterations -- wherein we reverse-engineer the classes of size $6$ -- considers both cases. The statements in the second part of Section \ref{sec:rev-eng} read more technical, due in part to the case distinction required for if there is a class of size $3$, but a closer look at them lets us understand a similar argument and a similar symmetry between the iterations where classes of size $4$ split into two pairs, and those where classes of size $6$ split into a class of size $4$ and a pair. Throughout Section \ref{sec:rev-eng}, arguments about the classes allow us to make observations about the edges incident to the vertices within them, and we gain general insights about the structure of all long-refinement graphs, without any assumptions on degree.

We employ these insights in Section \ref{sec:max4}. Here, we first revisit the compact string notation for long-refinement graphs of degrees $2$ and $3$ that was introduced in \cite{KMcK20}.

\begin{notation}\label{notation:23-intro}
	Let $\Sigma \coloneqq \{\mathrm{S},\mathrm{X},\mathrm{X}_2,0,0_2,1,1_2\}$, and let $n_\mathcal{P}$ be the number of colour classes of size $2$ in the minimal iteration $p$ in which all colour classes have size at most $2$. By $\pi^p_{\mathcal{S}}$, denote the set of colour singletons in iteration $p$. 
    
    We define $\sigma \colon [n_\mathcal{P}] \rightarrow \Sigma$ as the function that maps $i$ to the element in $\Sigma$ that matches the type of $P_i$. More precisely, $\sigma(i)$ is defined as follows:
	\begin{itemize}
		\setlength\itemsep{0em}
		\item $\mathrm{S}$, \ if $P_i = \min(\prec) = P_1$.
		\item $\mathrm{X}$, \ if $P_i \in \{P_a,P_b\}$ and $P_i$ is not adjacent to any singleton $S \in \pi^p_{\mathcal{S}}$. 
		\item $\mathrm{X}_2$, \ if $P_i \in \{P_a,P_b\}$ and $P_i$ is adjacent to a singleton $S_i \in \pi^p_{\mathcal{S}}$.
		\item $0$, \ if $P_i \notin \{P_1,P_a,P_b\}$, $\deg(P_i)=2$ and $P_i$  is not adjacent to any singleton $S \in \pi^p_{\mathcal{S}}$.
		\item $0_2$, \ if $P_i \notin \{P_1,P_a,P_b\}$, $\deg(P_i)=2$, and $P_i$ is adjacent to a singleton $S_i  \in \pi^p_{\mathcal{S}}$.
		\item $1$, \ if $P_i \notin \{P_1,P_a,P_b\}$ with $\deg(P_i)=3$, such that $P_i$ is not adjacent to any singleton $S \in \pi^p_{\mathcal{S}}$.
		\item $1_2$, \ if $P_i \notin \{P_1,P_a,P_b\}$, $\deg(P_i)=3$, and $P_i$ is adjacent to a singleton $S_i  \in \pi^p_{\mathcal{S}}$.
	\end{itemize}
	Since $\prec$ is a linear order, we can define a string representation based on a long-refinement graph $G$, where the $i$-th letter in the string is $\sigma(i)$ and corresponds to $P_i$, the $i$-th element of $\prec$. We call a string defined in this way a \emph{long-refinement string}.  
\end{notation}

Since the paper \cite{KMcK20} was concerned with proving the existence of long-refinement graphs, it sufficed to use the string representations to find infinitely many long-refinement graphs. Here, we are concerned with a \emph{complete characterisation}, and to this end, we use the results from Section~\ref{sec:rev-eng} to detect precisely which strings over the alphabet yield long-refinement graphs. We find, for example, that every long-refinement graph with degrees $2$ and $3$ has a stage in which all classes have size $2$ except for one singleton in the case where the order of the graph is odd. This yields the helpful insight that we can use the above alphabet to represent all long-refinement graphs with degrees $2$ and $3$ unambiguously, since the strings describe the exact neighbourhood of each pair and each singleton, as long as there is at most one singleton. Thus, up to isomorphism, every string $\Xi$ represents exactly one graph, which we denote by $G(\Xi)$.

Using the technical insights from Section \ref{sec:rev-eng}, Section \ref{sec:max4} generalises the results beyond degrees $2$ and $3$, namely to maximum degree at most $4$. This yields the first comprehensive classification of such long-refinement graphs. Through the symmetries defined in Section \ref{sec:rev-eng}, we are able to divide this argument into cases depending on the number of pairs that remain in the partition, and not merged into larger classes. 

In Section \ref{sec:classification}, with Theorems \ref{thm:classification23} -- \ref{thm:classification3}, we collect all results about the structure of long-refinement graphs of low degrees to obtain a compact classification, ordered by degree pairs.  Moreover, we provide an alternative and simple proof that there is exactly one long-refinement graph with minimum degree $1$ and maximum degree at most $4$, which is the graph in Figure \ref{fig:d=l+3l+4-graphs-intro}. We append these results in the following subsection (Subsection \ref{subsec:classification}), with references to the figures and tables that represent the graphs. Note that, since a graph is a long-refinement graph if and only if the graph obtained by complementing the edge relation is a long-refinement graph, the results directly yield a classification of long-refinement graphs with high degrees.

Finally, in Section \ref{sec:long-ref-distinguish}, we turn towards the distinguishability of long-refinement graphs. In its applications, Colour Refinement is used to distinguish two graphs. Therefore, given our insights about individual long-refinement graphs, it is natural to ask for graphs that take $n-1$ iterations to be distinguished by Colour Refinement, i.e.\ to obtain distinct colourings (which implies that they are not isomorphic). The strongest known lower bound on the number of iterations to distinguish two graphs with Colour Refinement is $n - O(\sqrt{n})$ \cite{krever15} and it is from a decade ago \cite{krever15}. With the results from \cite{KMcK20}, it is plausible to search for long-refinement pairs of graphs, and hence this was posed as an open question in the same paper. Again using the fact that the last split in both graphs must be a pair splitting into singletons, we show that such pairs of graphs do not exist. 

\begin{theorem}\label{thm:nolongdist-intro}
    For $n \geq 3$, there is no pair of graphs $G$, $H$ with $|V(H)| \leq |V(G)| = n$ that Colour Refinement distinguishes only after $n - 1$ iterations.
\end{theorem}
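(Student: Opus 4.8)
The plan is to run Colour Refinement on the disjoint union $G \sqcup H$ and to keep track, for each colour class, of how many of its vertices lie in $G$ and how many lie in $H$. Since there are no edges between the two graphs, the colouring of $G\sqcup H$ restricted to $V(G)$ (respectively $V(H)$) is exactly the one that Colour Refinement produces on $G$ (respectively $H$) in isolation, up to a renaming of colours; only the identifications of colours \emph{across} the two graphs are genuinely new information. Call a colour class \emph{balanced} if it contains equally many $G$- and $H$-vertices. Then $G$ and $H$ are distinguished after $i$ iterations precisely when some class is unbalanced after $i$ iterations, and an unbalanced class stays unbalanced, so there is a well-defined first distinguishing iteration. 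Because $n-1 \ge 2$, the graphs are not distinguished in iteration $0$, where all vertices share one colour; this forces $|V(G)| = |V(H)| = n$.

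Next I would pin down the partition one step before distinguishing. If the combined partition were stable in some iteration $i \le n-2$, the colour histograms of $G$ and $H$ would be frozen from iteration $i$ onwards, so the graphs could never be distinguished later; hence the combined partition strictly refines in each of the first $n-1$ iterations, and the number $D^{i}$ of combined classes satisfies $D^{n-2}\ge n-1$. On the other hand, through iteration $n-2$ every class is balanced (that is what ``not yet distinguished'' means), so every class meets $G$, the restriction map to $V(G)$ is a bijection onto the classes of $G$ alone, and $D^{n-2}$ equals that number, which is at most $n$. Thus $D^{n-2}\in\{n-1,n\}$, and the induced partition of $V(G)$ in iteration $n-2$ is either discrete or consists of a single pair together with singletons, and likewise for $H$. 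Translating back, $\pi^{n-2}$ on $G\sqcup H$ consists only of cross-pairs $\{g,h\}$ with one vertex from each graph, together with at most one exceptional class of size $4$ made up of a pair of $G$ and a pair of $H$.

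The heart of the argument is the analysis of the step $\pi^{n-2}\to\pi^{n-1}$, where I would invoke Lemma~\ref{rmk:work-backwards}: the distinguishing in iteration $n-1$ must be produced by the splitting of a (necessarily balanced) class into parts of unequal $G/H$-content, and that split is induced by an unbalanced pair of classes that split in the previous iteration. For a cross-pair $\{g,h\}$ this split is precisely ``a pair splitting into two singletons'', while for the size-$4$ class it is a splitting into two pairs, matching the $P_a$/$P_b$ configuration that is reverse-engineered in Section~\ref{sec:rev-eng}. Using the structural results of Sections~\ref{sec:rev-eng} and~\ref{sec:max4}, I would follow the chain of induced splits backwards and argue that the neighbourhood discrepancy responsible for the imbalance cannot appear for the \emph{first} time in iteration $n-1$: either it would already have surfaced in an earlier iteration, contradicting balancedness through iteration $n-2$, or the configuration forces the combined refinement to need more than $n-1$ iterations.

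I expect the main obstacle to be ruling out the case $D^{n-2}=n$, in which each graph has been fully individualised already in iteration $n-2$ while still strictly refining at every earlier step. Here the combined partition is a colour-preserving perfect matching $g_i\leftrightarrow h_i$ between two fully refined graphs, and the distinguishing split merely separates some matched pair. One must show that a monochromatic graph cannot reach the discrete partition as early as iteration $n-2$ under strict refinement at every step; the natural route is to establish that strict refinement in all of the first $n-1$ iterations forces exactly one class to split per iteration, which would make the graph a genuine long-refinement graph needing the full $n-1$ iterations and hence unable to become discrete one step early. Establishing this ``no overtaking'' property, and then combining it with the pair-into-singletons and size-$4$ analysis that disposes of the remaining case $D^{n-2}=n-1$, yields Theorem~\ref{thm:nolongdist-intro}.
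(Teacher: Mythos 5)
Your setup is sound and agrees with the paper's: reduce to $|V(G)|=|V(H)|=n$, note that the union partition must strictly refine through iteration $n-1$ while every class stays balanced through iteration $n-2$, and conclude that $\pi^{n-2}_U$ consists of cross-pairs plus at most one balanced class $P_G\cup P_H$ of size $4$. But the proposal stops exactly where the work begins. The assertion that ``the neighbourhood discrepancy responsible for the imbalance cannot appear for the first time in iteration $n-1$'' \emph{is} the theorem, and ``follow the chain of induced splits backwards using Sections \ref{sec:rev-eng} and \ref{sec:max4}'' is not an argument: those sections analyse classes inside a single graph, whereas the classes you must control here are cross-graph unions, and Lemma \ref{rmk:work-backwards} as stated applies only to long-refinement graphs, which the disjoint union is not. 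The paper's actual mechanism is short and concrete and is missing from your write-up: all vertices of a class of $\pi^{n-2}_U$ have equal degrees into every class of $\pi^{n-3}_U$, so the only classes that can separate them in iteration $n-1$ are the two created in iteration $n-2$, namely the cross-singletons $\{g',h'\}$ and $\{g'',h''\}$ coming from the penultimate pair of each graph; since consecutive pairs are joined by a perfect matching (Lemma \ref{lem:successive-matching}), each vertex of $P_G\cup P_H$ has exactly one neighbour among $\{g',g''\}$ (resp.\ $\{h',h''\}$), which matches $v_1$ with some $w_1$ and $v_2$ with $w_2$ having identical degrees into \emph{every} class of $\pi_U^{n-2}$; hence $P_G\cup P_H$ splits, if at all, into two balanced cross-pairs and nothing is distinguished. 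Some version of this computation has to appear for the proof to exist.

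Second, your treatment of the case $D^{n-2}=n$ rests on the claim that strict refinement in every iteration forces exactly one class to split per iteration. That claim is false in general (strict refinement only guarantees at least one split), and it is unusable here in any event: in the very case you are trying to exclude, the class count must jump by $2$ in some iteration by construction, and $G$ alone does not refine at iteration $n-1$ at all, since it is already discrete at iteration $n-2$. So this case is not closed by your argument. The paper sidesteps it by asserting at the outset that both graphs must be long-refinement graphs; you correctly isolate the case as the remaining obstacle, but the route you propose for it does not work, and it would instead need the same ``only the freshly created classes can separate equally coloured vertices'' analysis applied to the step $\pi^{n-3}_U\to\pi^{n-2}_U$.
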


\subsection{Our Classification of Long-Refinement Graphs with Small Degrees}\label{subsec:classification}

This subsection lists our main classification results, see also Theorems \ref{thm:classification23} -- \ref{thm:classification3}. Since our entire analysis is up to isomorphism, we will not always distinguish explicitly between graphs from the same isomorphism class.

We write $N_G(v)$ for the neighbourhood of vertex $v$ in the graph $G$, and $\deg(G)$ for the set of vertex degrees in $G$. 

Recall Notation \ref{notation:23-intro}. The work \cite{KMcK20} identified families of long-refinement graphs $G$ with $\deg(G) = \{2,3\}$, but left as an open problem whether they are comprehensive. We answer that question affirmatively.

\begin{theorem}\label{thm:classification23-intro}
    The even-order long-refinement graphs $G$ with $\deg(G) = \{2,3\}$ are exactly the ones represented by strings contained in the following sets:
	\begin{itemize}
		\setlength\itemsep{0em}
		\item $\{\mathrm{S011XX}\}$
		\item $\{\mathrm{S1^{\mathit{k}}001^{\mathit{k}}X1X1^{\mathit{k}}0} \mid k \in \N_0\}$
		\item $\{\mathrm{S1^{\mathit{k}}11001^{\mathit{k}}XX1^{\mathit{k}}0} \mid k \in \N_0\}$
		\item $\{\mathrm{S1^{\mathit{k}}0011^{\mathit{k}}XX1^{\mathit{k}}10} \mid k \in \N_0\}$
		\item $\{\mathrm{S011(011)^{\mathit{k}}00(110)^{\mathit{k}}XX(011)^{\mathit{k}}0} \mid k \in \N_0\}$
		\item $\{\mathrm{S(011)^{\mathit{k}}00(110)^{\mathit{k}}1X0X1(011)^{\mathit{k}}0} \mid k \in \N_0\}$
	\end{itemize}

    The odd-order long-refinement graphs $G$ with $\deg(G) = \{2,3\}$ are exactly the ones represented by strings contained in the following sets:
\begin{itemize}
	\setlength\itemsep{0em}
        \item $\{\mathrm{S1_211XX}\}$
	\item $\{\mathrm{S0X1X_2}\} \cup \{\mathrm{S1^{\mathit{k}}1011^{\mathit{k}}X1X1^{\mathit{k}}1_2} \mid k \in \N_0\}$
	\item $\{\mathrm{S110XX_2}\} \cup \{\mathrm{S111^{\mathit{k}}1011^{\mathit{k}}XX1^{\mathit{k}}1_2} \mid k \in \N_0\}$
	\item $\{\mathrm{S1^{\mathit{k}}01^{\mathit{k}}1XX1^{\mathit{k}}1_2} \mid k \in \N_0\}$
	\item $\{\mathrm{S(011)^{\mathit{k}}00(110)^{\mathit{k}}X1_2X(011)^{\mathit{k}}0} \mid k \in \N_0\}$
\end{itemize}
\end{theorem}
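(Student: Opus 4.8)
The plan is to prove the two set inclusions separately: that every string in the listed families is realised by some long-refinement graph with $\deg(G)=\{2,3\}$, and conversely that every such graph is represented by a string in one of the families. Both directions are carried out at the level of the string $\Xi=\sigma(1)\cdots\sigma(n_{\mathcal{P}})$ of Notation \ref{notation:23-intro}, so the first step is to make this representation available and unambiguous. By the structural analysis of Section \ref{sec:rev-eng}, every long-refinement graph reaches the stage $\pi^p$ in which no colour class exceeds size $2$; specialising to $\deg(G)=\{2,3\}$ and using Section \ref{sec:max4}, I would argue that $\pi^p$ consists only of pairs when $n$ is even, and of pairs together with exactly one singleton when $n$ is odd. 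Since the string then records the complete neighbourhood of every pair and of the at most one singleton, the correspondence $\Xi\mapsto G(\Xi)$ is well-defined and unambiguous, and the theorem reduces to the purely combinatorial problem of deciding which strings are admissible.

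For the realisability direction, I would take a string from one of the families, reconstruct $G(\Xi)$, and verify that Colour Refinement terminates only after $n-1$ iterations. The efficient way to do this is to track the chain of splits directly: by Lemma \ref{rmk:work-backwards}, a class splits in a given iteration exactly when it is unbalanced towards the class that split in the previous iteration, so I would exhibit the sequence of partitions from the trivial one to the discrete one and check that at each stage there is a single unbalanced pair, forcing a single split. The regular block structure of the strings — the factors $1^{k}$ and $(011)^{k}$ and their mirror images — lets this verification proceed by a base case and an induction on $k$, where the inductive step strips one symmetric pair of blocks and checks that the unbalanced-pair relation, and hence the chain of splits, is preserved. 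The standalone strings $\mathrm{S011XX}$ and $\mathrm{S1_211XX}$, and the exceptional summands such as $\mathrm{S0X1X_2}$, are the small base graphs, which I would handle by direct computation.

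For completeness I would fix a long-refinement graph $G$ with $\deg(G)=\{2,3\}$ and argue that its string $\Xi$ lies in one of the families. This is the hard part. I would invoke the symmetry of Section \ref{sec:rev-eng}: the first split of a pair is forced by a class of size $4$ breaking into the two pairs $P_a,P_b$, and the earlier iterations produce, symmetrically, the pairs preceding $P_a$ and succeeding $P_b$ out of classes of size $4$ and $6$. This pins the letters $\mathrm{X}$ (or $\mathrm{X}_2$) representing $P_a,P_b$ near the centre of the order $\prec$ and forces the rest of the string to be an almost palindromic arrangement of the blocks $1$, $0$, and $(011)$ around a single pivot, namely the lone $0$ or the factor $00$ that marks the size-$4$ class. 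I would then branch into the two configurations of Figure \ref{h=0-intro}, according to whether $P_b$ is the successor or the successor-of-the-successor of $P_a$, and, orthogonally, according to the parity of $n$, which materialises as the singleton and hence as the letters $0_2,1_2,\mathrm{X}_2$ together with the terminal $1_2$. Matching each combination against the reverse-engineering constraints, I expect every case to collapse onto exactly one family, with $k$ counting the symmetric blocks on either side of the pivot.

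The main obstacle is this completeness direction, and within it the task of excluding strings that are locally consistent but globally impossible. Because the reverse-engineering of the larger classes constrains only consecutive blocks, the difficulty is to show that the degree restriction together with the propagation of unbalancedness across the pivot forces the exponent $k$ to be the single degree of freedom while pinning down every boundary letter: the initial $\mathrm{S}$, the pivot, the central $\mathrm{X}$/$\mathrm{X}_2$ letters, and the terminal $0$ or $1_2$. Controlling exactly how one split propagates over the pivot, and how in the odd-order case adjacency to the unique singleton is forced to occur at a single prescribed position, is what separates the genuine families from the near-misses and is where I expect the bulk of the work to concentrate.
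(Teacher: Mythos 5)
Your proposal follows essentially the same route as the paper: establish that $\pi^p$ has at most one singleton so the string representation is unambiguous, then use the reverse-engineering machinery of Section \ref{sec:rev-eng} (unbalanced classes, the symmetric growth of size-$4$ and size-$6$ classes around $P_a,P_b$ and around the ``pivot'' $P_c\cup P_{n_\mathcal{P}}$ or $P_{c-1}\cup S$) to drive an exhaustive case analysis for completeness, with realisability of the resulting families checked by tracking the split chain. The one concrete ingredient your sketch leaves implicit is the finite bound that makes the case analysis terminate -- in the paper this is Lemma \ref{lem:value-of-d}, which pins $d$ to the range $[\ell,\ell+4]$ and thereby bounds where your ``pivot'' can sit relative to the ends of the string; with that in hand your branching on $b-a$ and parity matches the paper's organisation by the value of $d-\ell$.
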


In particular, the theorem implies that the gaps that the graphs found in \cite{kiefer:phd} leave open cannot be covered with degrees $2$ and $3$.

\begin{corollary}\label{no-LRgraphs-n-intro}
There are no long-refinement graphs $G$ with $\deg(G) = \{2,3\}$ and order $n \neq 12$ such that $n \bmod 18\in\{6,12\}$.
\end{corollary}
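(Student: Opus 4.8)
The plan is to derive Corollary~\ref{no-LRgraphs-n-intro} as a direct consequence of the complete string classification in Theorem~\ref{thm:classification23-intro}. Since that theorem asserts that \emph{every} long-refinement graph $G$ with $\deg(G)=\{2,3\}$ is represented (up to isomorphism) by a string in one of the listed families, it suffices to compute, for each family, the set of possible orders $n = |V(G)|$ and to verify that no $n$ with $n \equiv 6 \pmod{18}$ or $n \equiv 12 \pmod{18}$ arises --- except for the isolated value $n=12$, which is explicitly excluded from the statement.

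First I would translate string length into vertex count. By Notation~\ref{notation:23-intro}, the length of a long-refinement string equals $n_\mathcal{P}$, the number of size-$2$ colour classes in iteration $p$, and the graph has $n = 2 n_\mathcal{P}$ vertices in the even-order case and $n = 2 n_\mathcal{P} + 1$ in the odd-order case (the ``$+1$'' accounting for the single remaining singleton, which the subscript-$2$ letters in the odd-order families encode). So for each family I would read off the string length as an affine function of the parameter $k$: for instance, a family of the form $\mathrm{S}\,1^{k}\cdots 1^{k}\cdots 1^{k}\,0$ has length $c + 3k$ for an appropriate constant $c$ obtained by counting the fixed letters, and the families built from the block $(011)^{k}$ likewise contribute length $c' + 9k$ (three copies of a length-$3$ block per $k$). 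Reading off these constants family by family is routine letter-counting.

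Next I would reduce each resulting order to its residue modulo $18$. In the even-order case, $n = 2(\text{length})$, so a length of the form $c+3k$ gives $n = 2c + 6k$, whose residues modulo $18$ cycle through $\{2c, 2c+6, 2c+12\} \pmod{18}$ as $k$ ranges over $\N_0$; a length of the form $c'+9k$ gives $n = 2c' + 18k \equiv 2c' \pmod{18}$, a single residue. In the odd-order case $n = 2(\text{length}) + 1$, shifting every residue by $1$. The entire verification then amounts to checking that for each family the (at most three) attainable residues modulo $18$ avoid both $6$ and $12$, and separately confirming that the one sporadic string $\mathrm{S011XX}$ of length $6$ yields precisely $n = 12$, which is the permitted exception.

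The main obstacle --- really the only place requiring care --- is getting the fixed-letter counts exactly right in each of the eleven listed families, since an off-by-one in any constant $c$ shifts the computed residues and could spuriously hit or miss $6$ or $12$; I would therefore tabulate length as a function of $k$ for every family, double-checking the odd-order families where the parity correction and the subscript-$2$ singleton interact. I would also confirm that the sporadic singletons (the strings not parameterised by $k$, such as $\mathrm{S1_211XX}$, $\mathrm{S0X1X_2}$, $\mathrm{S110XX_2}$) are individually screened against the two forbidden residues. Once the residue table is assembled, the corollary follows immediately, with the $n=12$ exception arising exactly as the $k=0$-independent sporadic case in the even-order family $\{\mathrm{S011XX}\}$.
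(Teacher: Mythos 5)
Your proposal is correct and matches the paper's (implicit) derivation: the corollary is stated as an immediate consequence of Theorem~\ref{thm:classification23-intro}, obtained exactly by tabulating the orders $n$ of each string family as an affine function of $k$ and checking residues modulo $18$. One small simplification you could note: since every $n$ with $n \bmod 18 \in \{6,12\}$ is even, only the six even-order families need to be screened, and indeed their orders fall in the residue classes $\{12\}$, $\{14,2,8\}$, $\{16,4,10\}$, $\{16,4,10\}$, $\{0\}$, $\{0\}$ modulo $18$, which meet $\{6,12\}$ only at $n=12$.
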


Theorem \ref{thm:classification23-intro} and Corollary \ref{no-LRgraphs-n-intro} completely classify the long-refinement graphs $G$ with $\deg(G)=\{2,3\}$. Expanding to $\max\deg(G) \leq 4$, we obtain the following further parts of our classification of long-refinement graphs with small degrees (using the results in Section \ref{sec:max4}).

\begin{theorem}\label{thm:deg-12-or-14-intro}
    There is no long-refinement graph $G$ with $\deg(G) = \{1,2\}$ or $\deg(G) = \{1,4\}$. 
\end{theorem}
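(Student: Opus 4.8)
The plan is to treat both degree pairs through a common set of observations about degree-$1$ vertices (\emph{leaves}) and then split into an easy case ($\{1,2\}$) and a harder one ($\{1,4\}$). Throughout, write $D_1$ for the set of leaves and $D_d$ for the set of vertices of the other degree $d\in\{2,4\}$; recall that a long-refinement graph has exactly two vertex degrees, so $V(G)=D_1\cup D_d$. First I would record three leaf lemmas, all proved via automorphism-invariance of Colour Refinement (two vertices related by an automorphism of $G$ share every colour $\chi^i$, hence can never both become singletons): (i) no two leaves are adjacent, since a $P_2$-component admits the transposition of its two vertices; (ii) no two leaves share a neighbour, since two leaves on a common vertex are interchangeable; and (iii) consequently the leaves are matched injectively to distinct vertices of $D_d$, so $|D_1|\le|D_d|$ and every leaf-bearing vertex of $D_d$ carries exactly one leaf.

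For $\deg(G)=\{1,2\}$ this already suffices. Since $\max\deg(G)=2$, the graph $G$ is a disjoint union of paths and cycles, and since $\min\deg(G)=1$ there are no isolated vertices, so every connected component has at least two vertices. Every such component --- a path $P_k$ with $k\ge 2$ or a cycle $C_k$ with $k\ge 3$ --- admits a nontrivial automorphism moving at least one vertex (a reflection or a rotation), and extending it by the identity on the remaining components yields an automorphism $\phi$ of $G$ with $\phi(w)\ne w$ for some $w$. By automorphism-invariance, $\chi^i(w)=\chi^i(\phi(w))$ for all $i$, so $w$ and $\phi(w)$ never receive distinct colours; hence the final colouring is not the all-singleton partition and $G$ cannot be a long-refinement graph.

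For $\deg(G)=\{1,4\}$ I would combine the leaf lemmas with the first refinement steps and the structural results of Sections~\ref{sec:colref}--\ref{sec:max4}. Writing $t=|D_1|$ and $s=|D_4|$, the first iteration performs the degree split into $D_1$ and $D_4$; since each leaf sees only its unique neighbour, $D_1$ cannot split at iteration $2$, so the forced second split must refine $D_4$. By leaf lemma (iii) a vertex of $D_4$ has either one leaf neighbour or none, so iteration $2$ separates the leaf-bearing set $U$ (of size $t$) from the non-leaf-bearing set $W$ (of size $s-t$), which forces $0<t<s$ (otherwise the process would terminate after iteration~$1$). The crucial dynamical fact is that leaves are \emph{slaved} to their neighbours with a one-step delay: since $\chi^i(v)$ for a leaf $v$ is a function of $\chi^{i-1}$ of its unique neighbour, the partition of $D_1$ at iteration $j$ equals the partition of $U$ at iteration $j-1$. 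As the single-split property forbids two classes from splitting in the same iteration, each split inside $U$ at an iteration $j$ reserves iteration $j+1$ for the induced split inside $D_1$; hence the $U$-splits are pairwise non-consecutive, and a count shows iterations $3,\dots,n-1$ are packed exactly by the $t-1$ splits of $U$, the $t-1$ splits of $D_1$, and the $s-t-1$ splits of $W$.

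The main obstacle is to turn this rigid timing skeleton into a contradiction, which requires controlling the refinement of the degree-$4$ ``core''. Here I would invoke the reverse-engineering results of Section~\ref{sec:rev-eng} --- the stage $p$ at which all classes have size at most $2$, the linear splitting order of the pairs, and the fact that the first pair-split is induced by a size-$4$ class splitting into $P_a,P_b$ --- together with the extension of the string representation (Notation~\ref{notation:23-intro}) to maximum degree $4$ developed in Section~\ref{sec:max4}. The strategy is to show that the leaf-slaving constraint (the pair structure of $D_1$ at stage $p$ is the delayed image of that of $U$, and the final split, known to be a pair into two singletons as used for Theorem~\ref{thm:nolongdist-intro}, cannot simultaneously discharge a leaf pair and its governing $U$-pair within the available iterations) is incompatible with every admissible degree-$4$ string. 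I expect the delicate point to be the interaction at the last two iterations, where a pair of leaf-bearing vertices of $D_4$ splitting last would force a leaf split at the non-existent iteration $n$; ruling out the remaining configurations, in which the governing vertices are not leaf-bearing, is exactly where the full classification of maximum-degree-$4$ long-refinement graphs is needed.
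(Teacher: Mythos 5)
Your treatment of $\deg(G)=\{1,2\}$ is correct and takes a different route from the paper: the paper uses connectivity to reduce to a single path and the fact that a path on $n$ vertices stabilises after only $\lfloor(n-1)/2\rfloor$ iterations, whereas you observe that every component of a graph with maximum degree $2$ admits a nontrivial automorphism, so the partition can never become discrete. Both work; yours is arguably more self-contained. Your leaf lemmas, the slaving identity (the partition of $D_1$ at iteration $j$ is the pullback under the leaf-to-neighbour bijection of the partition of $U$ at iteration $j-1$), the forced split of $D_4$ into $U$ and $W$ at iteration $2$, and the resulting timing count for iterations $3,\dots,n-1$ are all correct as well.

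The gap is in the $\{1,4\}$ case: you never actually derive a contradiction. The plan to show the slaving constraint is ``incompatible with every admissible degree-$4$ string'' cannot be executed as stated, because no string calculus exists for degree-$4$ pairs (Notation \ref{notation:23-intro} only encodes pairs of degree $2$ and $3$), and deferring the remaining configurations to ``the full classification of maximum-degree-$4$ long-refinement graphs'' is not an independent argument -- that classification (all of Section \ref{sec:max4}) is exactly how the paper obtains Theorem \ref{thm:deg-12-or-14-intro}, and if you may cite it the theorem follows by inspection with none of your leaf machinery. Your setup can in fact be closed without the classification, but two steps are missing. First, once the last $U$-pair $\{u,u'\}$ splits into singletons at iteration $j_0$ and the slaved leaf pair $\{v,v'\}$ splits at $j_0+1$, \emph{nothing else can ever split}: by Lemma \ref{rmk:work-backwards} the next split would require a class unbalanced with respect to the singleton $\{v\}$, i.e.\ a non-singleton class containing $v$'s unique neighbour $u$, but $\{u\}$ is already a singleton. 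Hence $j_0+1=n-1$ and $P_{n_\mathcal{P}}$ must be a pair of leaves (note $t\geq 2$ by parity of the degree sum, so a last $U$-pair exists). Second, Lemmas \ref{lem:successive-matching} and \ref{lem:P_1-P_aP_b} show every pair other than $P_{n_\mathcal{P}}$ contains only vertices of degree at least $2$, while Lemmas \ref{lem:first-symmetry} and \ref{lem:first-symmetry-end} place $P_{n_\mathcal{P}}$ together with such a pair in a colour class of size $4$ in $\pi^{p-\ell-1}$ with $p-\ell-1\geq 1$; this class would mix leaves with degree-$4$ vertices, contradicting the degree-homogeneity of colour classes after iteration $1$. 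Without these (or equivalent) closing steps, your proposal for $\{1,4\}$ is a plan rather than a proof.
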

\begin{theorem}\label{thm:deg-24-intro}
    The long-refinement graphs $G$ with $\deg(G) = \{2,4\}$ are precisely the ones in Figures \ref{fig:d=l-deg24-17-bigfig}, \ref{fig:d=l-deg24-14-bigfig}, \ref{fig:d=l-deg24-12-bigfig}, \ref{fig:d=l-deg24-13-bigfig}, \ref{fig:d=l-deg24-16-bigfig},  \ref{fig:d=l-deg24-18-bigfig}, \ref{fig:d=l-deg24-21-bigfig}, \ref{fig:d=l-deg24-27-bigfig}, \ref{fig:d=l+1-singleton-adjXX-1-bigfig},
    \ref{fig:d=l+1-singleton-adjXX-2-bigfig},
 \ref{fig:deg24-14a-bigfig}, and \ref{fig:deg24-24a-bigfig}. 
\end{theorem}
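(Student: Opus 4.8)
The plan is to specialise the fully general reverse-engineering machinery of Section~\ref{sec:rev-eng} to the degree pair $\{2,4\}$ and then push it through the maximum-degree-$4$ analysis of Section~\ref{sec:max4}. As a starting point I would invoke the two structural facts that hold for every long-refinement graph: there are exactly two vertex degrees (here forced to be $2$ and $4$), and there is a minimal iteration $p$ at which every colour class of $\pi^p$ has size at most $2$, after which the classes of size $2$ (the \emph{pairs}) split one at a time in a fixed linear order $\prec$. Because $\deg(G)=\{2,4\}$ fixes the degrees, every incidence count in a pair or singleton is constrained: a degree-$2$ vertex has two neighbours and a degree-$4$ vertex has four, so once I know which classes a vertex connects to, its full neighbourhood is determined up to the symmetry $\prec$ provides.

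First I would reconstruct $\pi^p$ and the earlier partitions $\pi^{p-1},\pi^{p-2},\dots$ back to the point where no class exceeds size $4$, using the analysis of how a class of size $4$ splits into the two pairs $P_a$ and $P_b$ and the symmetry between the pairs preceding $P_a$ and those succeeding $P_b$ (Figure~\ref{h=0-intro}). The key lever from Section~\ref{sec:rev-eng}, namely Lemma~\ref{rmk:work-backwards} on unbalanced pairs of colour classes inducing the next split, tells me exactly which earlier splits are possible; imposing $\deg(G)=\{2,4\}$ rules out most of the a priori options and leaves a short list of admissible local configurations around $P_a$, $P_b$, and the minimum $P_1$ of $\prec$.

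Next, following the organisation of Section~\ref{sec:max4}, I would split into cases according to how many pairs remain unmerged (i.e.\ are not absorbed into classes of size $4$ or $6$ as the partitions are reverse-engineered). In each case the admissible local configurations, together with the degree constraint and the requirement that each single iteration increase the number of colour classes by exactly one, pin down the adjacencies between consecutive pairs, between a pair and a singleton, and across the symmetry axis. This produces a finite list of candidate graphs; I would then establish \emph{soundness} by checking directly that running Colour Refinement on each candidate splits exactly one class per iteration for $n-1$ iterations (so every candidate is genuinely a long-refinement graph with the stated degrees), and \emph{completeness} by arguing that the reverse-engineering leaves no configuration outside this list. Matching the resulting families to the drawn graphs gives precisely Figures~\ref{fig:d=l-deg24-17-bigfig}--\ref{fig:deg24-24a-bigfig}.

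The main obstacle I anticipate is the combinatorial explosion introduced by degree $4$: unlike the degree-$\{2,3\}$ case, where each pair carries at most three incident edges and the compact string notation of Notation~\ref{notation:23-intro} captures a vertex's entire neighbourhood, a degree-$4$ vertex has enough edges that a single letter no longer determines its connections, so the clean string encoding breaks down. The delicate part is therefore to show that the extra freedom is in fact heavily constrained---that the unbalanced-pair mechanism forces the degree-$4$ neighbourhoods into only a few symmetric patterns---and to handle the interaction of these patterns with the optional size-$3$ class and the parity of $n$ without missing or double-counting a family. Keeping the case distinction on the number of remaining pairs exhaustive yet non-redundant is where the bulk of the careful bookkeeping will lie.
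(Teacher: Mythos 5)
Your plan follows essentially the same route as the paper: reverse-engineer the partitions via the unbalanced-class mechanism of Lemma \ref{rmk:work-backwards}, exploit the symmetry around $P_a$, $P_b$, and $P_1$, and then exhaust a finite case analysis under the maximum-degree-$4$ constraint, which is exactly how Section \ref{sec:max4} proceeds (there the cases are formally indexed by $d-\ell\in[0,4]$ from Lemma \ref{lem:value-of-d} and by the identity of the classes $A_{q(t)-\ell-1}$, $B_{q(t)-\ell-1}$, with the $\{2,4\}$ graphs read off from the resulting exhaustive list rather than by specialising to that degree pair up front). You also correctly identify the genuine difficulty, namely that the string encoding of Notation \ref{notation:23-intro} no longer determines degree-$4$ neighbourhoods, which is why the paper resorts to explicit figures and adjacency tables in this case.
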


\begin{theorem}\label{thm:deg-34-intro}
    The long-refinement graphs $G$ with $\deg(G) = \{3,4\}$ are precisely the ones in Figures \ref{fig:d=l-deg34-11-bigfig}, \ref{fig:d=l-deg34-10-bigfig},  \ref{fig:d=l-deg34-12-bigfig}, \ref{fig:d=l-deg34-13A-bigfig}, \ref{fig:d=l-deg34-16-bigfig}, \ref{fig:d=l-deg34-21-bigfig}, \ref{fig:d=l-deg34-27-bigfig}, 
        \ref{fig:d=l+2-deg34-15-bigfig},
    as well as Tables \ref{tab:adj-list-d=l-34-intro}, \ref{tab:adj-list2-d=l-34-intro}, and \ref{tab:adj-list3-d=l-34-intro}.
\end{theorem}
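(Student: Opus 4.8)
The plan is to combine the degree-free structural results of Section~\ref{sec:rev-eng} with the degree constraint $\deg(G) = \{3,4\}$ and then carry out the finite case analysis set up in Section~\ref{sec:max4}. Since $G$ is a long-refinement graph, every iteration splits exactly one colour class into two, so there is a unique minimal iteration $p$ whose induced partition $\pi^p$ has all colour classes of size at most $2$; let $n_{\mathcal{P}}$ be its number of pairs. By Section~\ref{sec:rev-eng}, the pairs of $\pi^p$ carry a linear splitting order $\prec$, the first pair-split is induced by a size-$4$ class splitting into the two distinguished pairs $P_a, P_b$, and the earlier partitions arise by symmetrically re-merging pairs preceding $P_a$ and succeeding $P_b$ into classes of size $4$, and (one level up) classes of size $6$ into a size-$4$ class together with a pair. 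All of this holds with no assumption on degree, so I may use it verbatim.

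First I would pin down the singleton structure. Mirroring the argument that yields the at-most-one-singleton stage for $\deg(G)=\{2,3\}$ (the fact underlying Notation~\ref{notation:23-intro}), I would show that for $\deg(G)=\{3,4\}$ the stage $\pi^p$ again consists of pairs only when $n$ is even and of pairs together with exactly one singleton when $n$ is odd: the parity of $n$ forces the number of singletons, and the unbalanced-pair criterion of Lemma~\ref{rmk:work-backwards} rules out more than one singleton surviving to stage $p$. With the singleton count fixed, the degree condition becomes the main lever, since each vertex lies in a pair or is the unique singleton and has exactly $3$ or $4$ neighbours, which severely restricts how pairs and the singleton may be joined by edges while still producing exactly one split per iteration.

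Next I would run the case analysis of Section~\ref{sec:max4}, organised by the number of pairs in $\pi^p$ that remain and are not later re-merged into larger classes. For each case I use the edge-incidence facts from Section~\ref{sec:rev-eng}---which describe precisely the edges between consecutive pairs, between $P_a, P_b$ and the minimum $P_1$ of $\prec$, and between pairs and singletons---together with the constraint that every vertex has degree $3$ or $4$, to enumerate the admissible adjacency configurations. Each configuration is then unwound by the reverse-engineering procedure (re-merging pairs into size-$4$ classes and size-$4$-plus-pair into size-$6$ classes) to recover a concrete graph, and I verify directly that Colour Refinement on it takes $n-1$ iterations, i.e.\ that every iteration is unbalanced in the sense of Lemma~\ref{rmk:work-backwards}. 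The surviving graphs are exactly those depicted in the listed figures and recorded in the adjacency-list tables; conversely, each listed graph is checked to be a long-refinement graph, giving both directions of the classification.

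The hard part will be the completeness of the case analysis under the weaker degree bound. Unlike the $\{2,3\}$ setting, where degree-$2$ pairs have such constrained neighbourhoods that a single compact string over the alphabet of Notation~\ref{notation:23-intro} captures every graph, the presence of degree-$4$ vertices enlarges the space of possible incidences, so that some graphs admit only a full adjacency-list description---hence the tables. Controlling this blow-up requires showing that the minimum-degree-$3$ condition prevents the unbounded ``chain'' patterns that make the $\{2,3\}$ families infinite: I would argue that any attempt to lengthen the linear order of pairs beyond the configurations in the list forces either a second split in some iteration or a vertex of degree outside $\{3,4\}$, contradicting either the long-refinement property or the degree hypothesis. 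Establishing this rigidity---and thereby the finiteness of the classification---is the crux; once it is in place, the remaining verifications are finite, if tedious, checks.
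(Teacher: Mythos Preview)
Your proposal contains a fundamental misreading of the statement: the classification for $\deg(G)=\{3,4\}$ is \emph{not} finite. Tables~\ref{tab:adj-list-d=l-34-intro}, \ref{tab:adj-list2-d=l-34-intro}, and \ref{tab:adj-list3-d=l-34-intro} each describe infinite families, parameterised by $n\in\N$ or $k\in\N_0$, of long-refinement graphs of orders $6n+13$, $6k+15$, and $6k+19$ respectively. Your final paragraph, which frames ``finiteness of the classification'' as the crux and proposes to show that minimum degree $3$ prevents the unbounded chain patterns, is therefore aimed at a false target; any argument along those lines would fail. The infinite $\{3,4\}$ families arise in the paper exactly through Lemmas~\ref{lem:1classgettingbigger} and \ref{lem:2classgettingbigger}, which track how one class absorbs an unbounded sequence of size-$6$ classes; you would need the analogue of those lemmas, not a rigidity argument.

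There is also a structural mismatch in how you organise the case analysis. The paper does not first establish an at-most-one-singleton property and then split cases by ``the number of pairs that remain un-merged''; instead, it splits on the value of the parameter $d$ from Lemma~\ref{lem:value-of-d} (namely $d\in\{\ell,\ell+1,\ell+2,\ell+3,\ell+4\}$), and within each value determines the possible $A_{q(t)-\ell-1},B_{q(t)-\ell-1}$, explicitly allowing for multiple singletons in $\pi^{q(t)}_{\mathcal S}$ (cf.\ Lemma~\ref{lem:l+1l+2ABsingletons}). The at-most-one-singleton phenomenon for $\{2,3\}$ graphs is an \emph{output} of this analysis, not an input, and the analogous claim is not used as a preliminary step for $\{3,4\}$. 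Your proposed ordering would leave you without the machinery that actually drives the enumeration.
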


\begin{theorem}\label{thm:classification3-intro}
    The graph in Figure \ref{fig:d=l+3l+4-graphs-intro} is the only long-refinement graph $G$ with $\deg (G) = \{1,3\}$. 
\end{theorem}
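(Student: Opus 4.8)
The plan is to combine the degree-agnostic structural results of Section~\ref{sec:rev-eng} with one new tool tailored to degree-$1$ vertices, and then to reduce to the string classification of Theorem~\ref{thm:classification23-intro}. Throughout, let $G$ be a long-refinement graph with $\deg(G)=\{1,3\}$, write $n=|V(G)|$, let $n_1,n_3$ be the numbers of vertices of the two degrees, and call the degree-$1$ vertices \emph{leaves}. Since every long-refinement graph has exactly two degrees, and since the degree sum $n_1+3n_3$ is even with $n_1+3n_3\equiv n_1+n_3=n\pmod 2$, the order $n$ is even; moreover the first refinement step separates the $n_1$ leaves from the $n_3$ degree-$3$ vertices, so from iteration~$1$ on these two sets are unions of colour classes.

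First I would prove the key lemma that leaves \emph{lag their neighbours by one iteration}: if $v,v'$ are leaves with neighbours $w,w'$, then for every $i\ge 1$ we have $\chi^i(v)=\chi^i(v')$ if and only if $\chi^{i-1}(w)=\chi^{i-1}(w')$. This follows by a short induction from the update rule $\chi^i(v)=(\chi^{i-1}(v),\doubleBrackets{\chi^{i-1}(w)})$, using that $\pi^{i-1}\preceq\pi^{i-2}$ makes the ``previous leaf colour'' term subsumed. Two consequences drive the argument: no degree-$3$ vertex is adjacent to two leaves (else those leaves would share every colour, contradicting long refinement), so the leaves inject into the degree-$3$ vertices; and the partition that the leaves induce at any iteration~$i$ is a verbatim, one-step-delayed copy of the partition their neighbours induce. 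In particular, at the stage~$p$ guaranteed by Section~\ref{sec:rev-eng} in which every colour class has size at most~$2$, each leaf sits in a class of size at most~$2$, so the leaves form singletons and \emph{leaf-pairs}.

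Next I would locate the leaves within the linear order $\prec$ of pairs. Feeding the degree information ($\deg=1$, distinct from the values $\deg\in\{2,3\}$ encoded by the symbols of Notation~\ref{notation:23-intro}) into the reverse-engineering description of $\pi^{p-1}$ and $\pi^{p}$ -- in particular the fact that the first pair to split originates from a size-$4$ class splitting into $P_a$ and $P_b$, and the symmetric build-up of the classes preceding $P_a$ and succeeding $P_b$ -- together with the lagging lemma, which ties every leaf-split at iteration~$i$ to a neighbour-split at iteration~$i-1$, pins the leaf-pairs to a bounded set of admissible positions. Augmenting the alphabet of Notation~\ref{notation:23-intro} by one symbol for a leaf-pair, these constraints leave only finitely many candidate long-refinement strings, each of which either matches a degree-$1$ analogue of an admissible string from Theorem~\ref{thm:classification23-intro} or is excluded by the same admissibility conditions. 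Verifying long refinement for the sole survivor and identifying it with the graph of Figure~\ref{fig:d=l+3l+4-graphs-intro} -- and recalling that Theorem~\ref{thm:deg-12-or-14-intro} rules out $\deg(G)\in\{\{1,2\},\{1,4\}\}$, so that $\{1,3\}$ is indeed the only relevant degree set -- completes the classification.

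The main obstacle I anticipate is the interaction between the two interleaved refinement processes. On the degree-$3$ side the splits cascade: by Lemma~\ref{rmk:work-backwards} each split is forced by an unbalanced pair surviving from the previous iteration. A leaf-split, by contrast, is \emph{inert} -- a leaf propagates no further information -- yet it is forced to occur exactly one iteration after its neighbours separate, and it must be the \emph{unique} split of its iteration. Showing that these forced leaf-splits can be slotted into the single-split-per-iteration schedule without ever colliding with, or permanently stalling, the degree-$3$ cascade is the delicate bookkeeping at the heart of the proof, and it is precisely this rigidity that collapses the candidate space to a single graph.
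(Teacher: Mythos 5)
There is a genuine gap. Your opening observations are sound --- the parity argument forcing $n$ to be even, the one-iteration lag of a leaf behind its neighbour, and the consequence that distinct leaves have distinct neighbours --- but everything after that is a plan rather than a proof. The decisive step, namely that the interaction between the inert leaf-splits and the degree-$3$ cascade ``pins the leaf-pairs to a bounded set of admissible positions'' and ``leaves only finitely many candidate long-refinement strings'' with a single survivor, is asserted and then explicitly deferred as ``delicate bookkeeping''; that bookkeeping \emph{is} the theorem. Two concrete holes: (i) you never bound the number of leaves --- your injection argument only gives $n_1 \leq n_3$, not $n_1 \leq 2$ (the paper imports the bound of two from \cite[Lemma 13]{KMcK20}), so the case analysis over leaf placements is not even set up over a finite list; (ii) the string machinery of Notation \ref{notation:23-intro} is shown to represent a graph unambiguously only when the degrees are $2$ and $3$ and there is at most one singleton at stage $p$, and the admissibility conditions of Theorem \ref{thm:classification23-intro} were derived under those degree assumptions, so an ``augmented alphabet'' for leaf-pairs would require re-proving both well-definedness and admissibility, which you do not do.

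The paper's own proof is a short reduction that sidesteps all of this. Given at most two leaves: if there is exactly one leaf $v_1$, delete it; the induced subgraph $G'$ is a long-refinement graph of odd order with $\deg(G')=\{2,3\}$ and a unique degree-$2$ vertex, so by Theorem \ref{thm:classification23-intro} it must be $G(\mathrm{S}1_211\mathrm{XX})$, which forces $G$ to be the graph of Figure \ref{fig:d=l+3l+4-graphs-intro}. If there are two leaves, join them by an edge; the result is a long-refinement graph with degrees $\{2,3\}$, even order, and exactly two \emph{adjacent} degree-$2$ vertices, but the only even-order string with exactly two degree-$2$ vertices is $\mathrm{S011XX}$, where those vertices are non-adjacent --- a contradiction. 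Your leaf-lag lemma is a genuinely useful ingredient (it is essentially the reason these reductions preserve the iteration count), but it should be deployed to justify the deletion and edge-insertion reductions rather than to rerun the Section \ref{sec:max4} analysis for a new degree set.
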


\begin{figure}[tpb]
    \centering
        \includegraphics{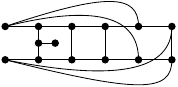}
    \caption{The unique long-refinement graph with degrees $\{1,3\}$.}
    \label{fig:d=l+3l+4-graphs-intro}
\end{figure}

While Theorems \ref{thm:deg-12-or-14-intro} -- \ref{thm:classification3-intro} are proved through an analysis of the refinement iterations in Section \ref{sec:max4}, we now present a simple alternative proof for Theorem \ref{thm:classification3-intro}, which relies merely on Theorem \ref{thm:classification23-intro}. By \cite[Lemma 13]{KMcK20}, a long-refinement graph $G$ with $\deg(G) = \{1,3\}$ has at most two vertices of degree $1$. We start by treating the case where $G$ has exactly one vertex of degree $1$.

\begin{lemma}\label{lem:deg-13-singleton-intro}
    The graph in Figure \ref{fig:d=l+3l+4-graphs-intro} is the only long-refinement graph $G$ with $\deg(G) = \{1,3\}$ in which $|\{v \in V(G) : \deg(v) = 1\}| = 1$.
\end{lemma}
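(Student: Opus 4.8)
The plan is to reduce to the already-established classification of long-refinement graphs of degree $\{2,3\}$ (Theorem \ref{thm:classification23-intro}) by deleting the unique degree-$1$ vertex. Write $v$ for the single vertex with $\deg(v) = 1$ and $w$ for its unique neighbour, and set $G' \coloneqq G - v$. Since $v$ is adjacent only to $w$, deleting $v$ lowers exactly $w$'s degree from $3$ to $2$, so $\deg(G') = \{2,3\}$ with $w$ as its \emph{unique} degree-$2$ vertex, and $|V(G')| = n-1$. I would show that $G$ is a long-refinement graph if and only if $G'$ is one, which turns the problem into identifying the degree-$\{2,3\}$ long-refinement graphs with exactly one degree-$2$ vertex, and then recovering $G$ by reattaching a pendant vertex.

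The main technical step, and the part I expect to be delicate, is the iteration correspondence. Since $v$ is the only degree-$1$ vertex, the first refinement step on $G$ splits off the singleton $\{v\}$, and the second step splits off $\{w\}$ as the unique vertex adjacent to the singleton $v$; on $G'$, the first step already splits off $\{w\}$ as the unique degree-$2$ vertex. I would prove by induction on $i$ that for every $i \geq 1$ the partition induced by $\chi_G^{i+1}$, restricted to $V(G') = V(G)\setminus\{v\}$, coincides with the partition induced by $\chi_{G'}^{i}$, while $\{v\}$ stays a separate singleton throughout. The crucial observation is that $v$ influences only the colour of $w$, and only to the effect of distinguishing $w$ as a singleton, which in $G'$ occurs one iteration earlier purely because of $w$'s lower degree; on all other vertices the two neighbourhood multisets agree at every step. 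Hence the number of colour classes of $\chi_G^{i+1}$ equals one plus that of $\chi_{G'}^{i}$, so the class count grows by exactly $1$ in each iteration of $G$ precisely when it does so for $G'$. As $G$ reaches the discrete partition at iteration $n-1$ exactly when $G'$ reaches it at iteration $n-2$, this gives the desired equivalence and, reversing the construction, a bijection up to isomorphism.

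For the counting step, recall that a pair, i.e.\ a colour class of size $2$ at the stage $p$ where all classes have size at most $2$, consists of two vertices of equal degree, so every degree-$2$ pair contributes two degree-$2$ vertices. Thus a degree-$\{2,3\}$ graph with exactly one degree-$2$ vertex cannot place that vertex in a pair: it must be a singleton. Since even-order long-refinement graphs of degree $\{2,3\}$ have a stage in which every class has size exactly $2$ (and hence an even number of degree-$2$ vertices), $G'$ must have odd order, its unique degree-$2$ vertex must be its unique singleton, and every pair must have degree $3$. In the notation of Notation \ref{notation:23-intro}, this means the long-refinement string of $G'$ uses none of the degree-$2$ letters $0, 0_2$. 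Inspecting the odd-order families of Theorem \ref{thm:classification23-intro}, every family except $\{\mathrm{S1_211XX}\}$ contains a $0$ or $0_2$, so $G' = G(\mathrm{S1_211XX})$ is forced.

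To conclude, it remains to confirm that $G(\mathrm{S1_211XX})$ genuinely realises exactly one degree-$2$ vertex, i.e.\ that its letters $\mathrm{S}$ and $\mathrm{X}$ are degree-$3$ pairs and its singleton is the degree-$2$ vertex; I would verify this directly from the structure encoded by the string. (Equivalently, since the graph of Figure \ref{fig:d=l+3l+4-graphs-intro} is readily checked to be a long-refinement graph of the required type, the bijection sends it to the unique $0$-free odd string, confirming the consistency.) Reattaching the pendant $v$ to the singleton of $G(\mathrm{S1_211XX})$ then reproduces exactly the graph of Figure \ref{fig:d=l+3l+4-graphs-intro}, and uniqueness of $G'$ yields uniqueness of $G$. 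The only genuinely subtle ingredient is the iteration-correspondence of the second paragraph; the remainder is a parity count plus a finite inspection of the string families.
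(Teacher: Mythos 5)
Your proposal is correct and follows essentially the same route as the paper: delete the unique degree-$1$ vertex, observe that the resulting graph $G'$ has degrees $\{2,3\}$ with a unique degree-$2$ vertex and is a long-refinement graph precisely when $G$ is (with the iteration count shifted by one), and then invoke Theorem \ref{thm:classification23-intro} to force $G' = G(\mathrm{S}1_211\mathrm{XX})$. The paper states the iteration correspondence and the odd-order parity more tersely (deriving oddness of $|V(G')|$ from evenness of $|V(G)|$ by degree-sum parity rather than from the pair structure), but these are cosmetic differences; your more detailed induction on the restricted partitions fills in exactly the step the paper leaves implicit.
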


\begin{proof}
     Let $G$ be any long-refinement graph with $\deg(G)=\{1,3\}$ and the set of vertices of degree $1$ being a singleton $\{v_1\}$. Then the partition after the first Colour Refinement iteration on $G$ is $\pi_G^1 = \{\{v_1\}, V(G) \setminus \{v_1\}\}$. The graph $G' \coloneqq G[V(G) \setminus \{v_1\}]$, i.e.\ the subgraph of $G$ induced by $V(G) \setminus \{v_1\}$, satisfies $\deg(G') = \{2,3\}$ and $\{v \in V(G') \mid \deg(v) = 2\} = \{v_2\}$, where $\{v_2\} = N_G(v_1)$. Also, Colour Refinement takes $n-2$ iterations to terminate on $G'$, otherwise $G$ would not be a long-refinement graph. Hence, $G'$ is a long-refinement graph as well, and since $|V(G)|$ is even, $|V(G')|$ is odd. By our classification in Theorem \ref{thm:classification23-intro}, $G'$ must be $G(\mathrm{S}1_211\mathrm{XX})$, since that is the only long-refinement graph with degrees $2$ and $3$ in which there is only one vertex of degree $2$. So, by degree arguments, $G$ must be the graph in Figure \ref{fig:d=l+3l+4-graphs-intro}.
\end{proof}

Now we consider the situation in which there are exactly two vertices of degree $1$.

\begin{lemma}\label{lem:deg-13-pair-intro}
    There is no long-refinement graph $G$ with $\deg(G) = \{1,3\}$ in which $|\{v \in V(G) : \deg(v) = 1\}| = 2$. 
\end{lemma}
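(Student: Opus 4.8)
The plan is to mirror the successful strategy from Lemma \ref{lem:deg-13-singleton-intro}: reduce the degree-$\{1,3\}$ case to a statement about degree-$\{2,3\}$ long-refinement graphs, where the classification in Theorem \ref{thm:classification23-intro} gives complete control. So let $G$ be a hypothetical long-refinement graph with $\deg(G) = \{1,3\}$ and exactly two vertices of degree $1$, say $v_1$ and $v_1'$. The first thing I would examine is what the very first Colour Refinement iteration does on $G$. Since there are precisely two degree-$1$ vertices, the partition $\pi^1_G$ splits $V(G)$ into the two degree-$1$ vertices on one side and the degree-$3$ vertices on the other. But a long-refinement graph must gain exactly one colour class per iteration starting from the monochromatic partition, and after one iteration we would already have two classes of sizes $2$ and $n-2$ — which is exactly the required single split, so this is consistent. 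The subtlety is what happens next: for $G$ to be a long-refinement graph, the pair $\{v_1, v_1'\}$ must itself eventually split, and the whole refinement must proceed one class at a time.

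The key step is to delete the two degree-$1$ vertices and analyse the induced subgraph $G' \coloneqq G[V(G) \setminus \{v_1, v_1'\}]$, just as in the previous lemma. Here I need to split into cases according to whether $v_1$ and $v_1'$ share a neighbour or have distinct neighbours in $G$. If their neighbours are distinct vertices $v_2, v_2'$, then $G'$ has exactly two vertices of degree $2$ (namely $v_2, v_2'$, which each lose one neighbour) and the rest of degree $3$, so $\deg(G') = \{2,3\}$; if they share a single neighbour $w$, then $w$ has degree $1$ in $G'$, which would force $\deg(G')$ to include degree $1$, a different regime. In the distinct-neighbour case I would argue, exactly as before, that Colour Refinement takes $n-3$ iterations on $G'$ (removing two vertices that split off first costs two iterations off the total of $n-1$), so $G'$ is itself a long-refinement graph of order $n-2$ with degrees $\{2,3\}$ and \emph{two} vertices of degree $2$. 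Then I consult Theorem \ref{thm:classification23-intro}: I would read off from the classification which degree-$\{2,3\}$ long-refinement strings have exactly two degree-$2$ vertices and check whether any of them can be extended back to a degree-$\{1,3\}$ graph by reattaching two degree-$1$ vertices to the two degree-$2$ vertices in a way that still yields a long-refinement graph.

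The main obstacle — and where the lemma should ultimately close — is showing that no such reattachment can produce a long-refinement graph, equivalently that the relevant $G'$ from the classification either does not exist or cannot be extended. Two things must align for the extension to work: the parity and the splitting dynamics. The parity is a clean first filter: since $|V(G)|$ is even (long-refinement graphs with degrees $\{1,3\}$ have even order by a handshake/degree-sum argument, as the number of odd-degree vertices must be even and here every vertex has odd degree), $|V(G')| = n-2$ is also even, so I must look among the \emph{even}-order degree-$\{2,3\}$ graphs of Theorem \ref{thm:classification23-intro}. But the even-order classification strings (e.g.\ $\mathrm{S011XX}$ and the infinite families) describe graphs in which the two degree-$2$ vertices, encoded by the $\mathrm{X}/\mathrm{X}_2$ and $\mathrm{S}$ letters and the $0$-letters, occupy specific positions in the splitting order; I would need to verify that attaching pendant vertices to precisely those degree-$2$ vertices disturbs the delicate one-split-per-iteration schedule. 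I expect the cleanest argument is that reattaching the pendants forces $v_1$ and $v_1'$ to receive identical colours throughout refinement (by the symmetry that their neighbours $v_2, v_2'$ must themselves be symmetric in the colouring history of $G'$), so the pair $\{v_1, v_1'\}$ never splits — contradicting that $G$ is a long-refinement graph, in which the final partition is discrete. The shared-neighbour case I would dispatch separately, likely by showing it forces a degree-$1$ vertex in $G'$ and hence, by \cite[Lemma 13]{KMcK20} applied inductively together with Lemma \ref{lem:deg-13-singleton-intro}, reduces to the already-handled single-pendant situation or to a contradiction with the required degree set. The crux is therefore the symmetry argument preventing the pendant pair from ever splitting; everything else is parity bookkeeping and a finite inspection of the classification.
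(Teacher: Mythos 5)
Your reduction (delete the two pendants, land in the degree-$\{2,3\}$ classification) can be made to work, but the step you lean on to close the argument is wrong, and it is the crux. You claim that after reattaching the pendants to the unique candidate $G'$, the vertices $v_2,v_2'$ ``must themselves be symmetric in the colouring history of $G'$'', so that $\{v_1,v_1'\}$ never splits. This cannot be right: $G'$ is by hypothesis a long-refinement graph, so Colour Refinement produces the \emph{discrete} partition on it, and in particular $v_2$ and $v_2'$ \emph{are} separated at some iteration. Consequently the pendant pair $\{v_1,v_1'\}$ does split, exactly one iteration after $v_2$ and $v_2'$ are separated. The genuine obstruction is one of timing, not of symmetry: for $G$ to be a long-refinement graph, every iteration must split exactly one class, so $\{v_1,v_1'\}$ must be the \emph{last} class to split, which forces $\{v_2,v_2'\}$ to be the last pair separated in $G'$. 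The unique even-order candidate with exactly two degree-$2$ vertices is $G(\mathrm{S011XX})$, and there the degree-$2$ vertices form the pair $P_2$, which splits second in the order $\prec$, not last; so the pendant pair would split simultaneously with a later pair, giving two splits in one iteration and a contradiction. You also assert without proof that $G'$ takes $n-3$ iterations ``exactly as before''; this needs the shift-by-one argument that the pendants contribute no distinguishing information to $v_2,v_2'$ while $v_1,v_1'$ share a colour. Finally, your shared-neighbour case is easier than you make it: if $N(v_1)=N(v_1')$ then $v_1$ and $v_1'$ are twins and are never distinguished by Colour Refinement at all, so $G$ cannot be a long-refinement graph.

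For comparison, the paper avoids all of this by going in the opposite direction: it \emph{adds} the edge $\{v_1,v_1'\}$ rather than deleting the vertices. The resulting graph $G'$ has the same vertex set, satisfies $\pi_G^i=\pi_{G'}^i$ for every $i$ (so no iteration-count bookkeeping is needed), has degrees $\{2,3\}$ with exactly two degree-$2$ vertices, and -- decisively -- has those two vertices \emph{adjacent}. Since the only even-order graph in Theorem \ref{thm:classification23-intro} with exactly two degree-$2$ vertices is $G(\mathrm{S011XX})$, whose degree-$2$ vertices are not adjacent, the contradiction is immediate, with no analysis of splitting times required.
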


\begin{proof}
     Suppose $G$ is a long-refinement graph with $\deg(G) = \{1,3\}$ for which $\{v \in V(G) : \deg(v) = 1\} = \{v_1, v'_1\}$ with $v_1 \neq v'_1$. We know that $\{v_1,v'_1\} \notin E(G)$, since $G$ is connected. However, now the graph $G' \coloneqq (V(G), E(G) \cup \{\{v_1,v'_1\}\}$ is a long-refinement graph with $\deg(G') = \{2,3\}$ and $\{v \in V(G') : \deg(v) = 2\} = \{v_1, v'_1\}$. The degree property is clear. To see that $G'$ is a long-refinement graph, note that inserting the new edge does not alter the partition into degrees, i.e.\ $\pi^1$. Also, since in both graphs, it holds that $|N(\{v_1,v'_1\})| \leq 2$, the class $\{v_1,v'_1\}$ must be the last class that splits (in both graphs). Furthermore, $N_G(v_1)\setminus\{v'_1\} = N_{G'}(v_1)\setminus\{v'_1\}$ and $N_G(v'_1)\setminus\{v_1\} = N_{G'}(v'_1)\setminus\{v_1\}$. This implies that $\pi_G^i = \pi_{G'}^i$ for every $i \in \N$, so $G'$ is a long-refinement graph if and only if $G$ is one. By double counting edges, $G$ (and hence also $G'$) has an even number of vertices and hence $G'$ is one of the graphs from the first list in Theorem \ref{thm:classification23-intro}. The only graph from that list with exactly two vertices of degree $2$ is $G(\mathrm{S011XX})$, but these vertices are not adjacent. Hence, $G'$ does not exist, and therefore $G$ does not exist.
\end{proof}

This concludes the proof of Theorem \ref{thm:classification3-intro}, which together with Theorem \ref{thm:deg-12-or-14-intro} yields that the graph in Figure \ref{fig:d=l+3l+4-graphs-intro} is the only long-refinement graph $G$ with $\min\deg (G) = 1$ and $\max\deg(G) \leq 4$. 

Figure \ref{fig:all-diagrams} lists all long-refinement graphs from our classification that are not part of infinite families, in the order in which they appear in our analysis in the remainder of this paper. Tables \ref{tab:adj-list-d=l-34-intro} -- \ref{tab:adj-list3-d=l-34-intro} list the infinite families.

\begin{figure}[htpb]
    \centering
    \begin{subfigure}{0.24\linewidth}
            \centering 
            \includegraphics[scale=0.8]{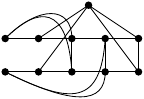}
         \caption{}\label{fig:d=l-deg34-11-bigfig}
    \end{subfigure}
    \begin{subfigure}{0.24\linewidth}
            \centering 
            \includegraphics[scale=0.8]{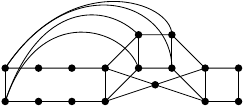}
         \caption{}\label{fig:d=l-deg24-17-bigfig}
    \end{subfigure}
    \begin{subfigure}{0.24\linewidth}
            \centering 
            \includegraphics[scale=0.8]{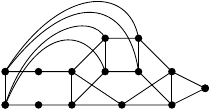}
         \caption{}\label{fig:d=l-deg24-14-bigfig}
    \end{subfigure} 
    \begin{subfigure}{0.19\linewidth}
            \centering 
            \includegraphics[scale=0.8]{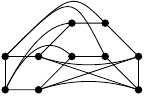}
         \caption{}\label{fig:d=l-deg34-10-bigfig}
    \end{subfigure}
    \begin{subfigure}{0.19\linewidth}
            \centering 
            \includegraphics[scale=0.8]{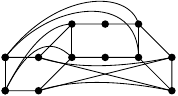}
         \caption{}\label{fig:d=l-deg24-12-bigfig}
    \end{subfigure}
    \begin{subfigure}{0.19\linewidth}
            \centering 
            \includegraphics[scale=0.8]{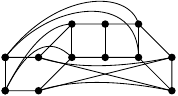}
         \caption{}\label{fig:d=l-deg34-12-bigfig}
    \end{subfigure}
    \begin{subfigure}{0.19\linewidth}
            \centering 
            \includegraphics[scale=0.8]{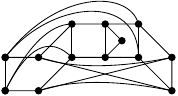}
         \caption{}\label{fig:d=l-deg34-13A-bigfig}
    \end{subfigure}
    \begin{subfigure}{0.19\linewidth}
            \centering 
            \includegraphics[scale=0.8]{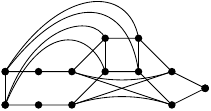}
         \caption{}\label{fig:d=l-deg24-13-bigfig}
    \end{subfigure}
    \begin{subfigure}{0.24\linewidth}
            \centering 
            \includegraphics[scale=0.8]{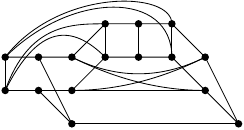}
         \caption{}\label{fig:d=l-deg34-16-bigfig}
    \end{subfigure}
    \begin{subfigure}{0.24\linewidth}
            \centering 
            \includegraphics[scale=0.8]{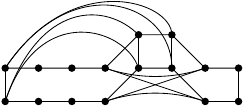}
         \caption{}\label{fig:d=l-deg24-16-bigfig}
    \end{subfigure}
    \begin{subfigure}{0.33\linewidth}
            \centering 
            \includegraphics[scale=0.8]{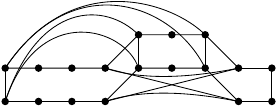}
         \caption{}\label{fig:d=l-deg24-18-bigfig}
    \end{subfigure}
    \begin{subfigure}{0.32\linewidth}
            \centering 
            \includegraphics[scale=0.8]{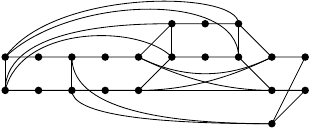}
         \caption{}\label{fig:d=l-deg24-21-bigfig}
    \end{subfigure}
    \begin{subfigure}{0.32\linewidth}
            \centering 
            \includegraphics[scale=0.8]{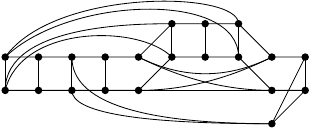}
         \caption{}\label{fig:d=l-deg34-21-bigfig}
    \end{subfigure}
    \begin{subfigure}{0.32\linewidth}
            \centering 
            \includegraphics[scale=0.8]{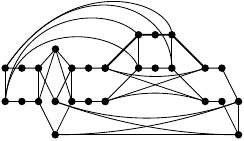}
         \caption{}\label{fig:d=l-deg24-27-bigfig}
    \end{subfigure}
    \begin{subfigure}{0.3\linewidth}
            \centering 
            \includegraphics[scale=0.8]{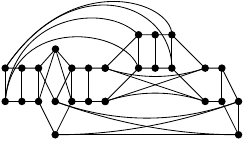}
            \caption{}\label{fig:d=l-deg34-27-bigfig}
    \end{subfigure}
    \begin{subfigure}{0.2\linewidth}
        \centering \includegraphics[scale=0.8]{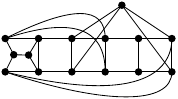}
        \caption{}
        \label{fig:d=l+2-deg34-15-bigfig}
    \end{subfigure}
    \begin{subfigure}{0.2\linewidth}
        \centering
        \includegraphics[scale=0.8]{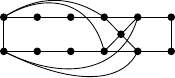}
        \caption{}\label{fig:d=l+1-singleton-adjXX-1-bigfig}
    \end{subfigure}
    \begin{subfigure}{0.2\linewidth}
        \centering
        \includegraphics[scale=0.8]{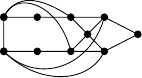}
        \caption{}\label{fig:d=l+1-singleton-adjXX-2-bigfig}
    \end{subfigure}
\begin{subfigure}{0.25\linewidth}
    \centering
    \includegraphics[scale=0.8]{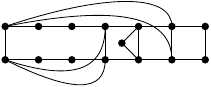}
    \caption{}\label{fig:deg24-14a-bigfig}
\end{subfigure}
\begin{subfigure}{0.32\linewidth}
    \centering
    \includegraphics[scale=0.8]{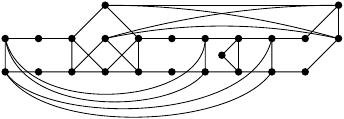}
    \caption{}\label{fig:deg24-24a-bigfig}
\end{subfigure}
\begin{subfigure}{0.32\linewidth}
    \centering
    \includegraphics[scale=0.8]{d=l+3l+4-graphs-4.pdf}
    \caption{}
\end{subfigure}
        \caption{The long-refinement graphs $G$ with $\deg(G) \neq \{2,3\}$, omitting the infinite families from Tables \ref{tab:adj-list-d=l-34} -- \ref{tab:adj-list3-d=l-34}.}
    \label{fig:all-diagrams}
\end{figure}

\begin{table}[htpb]
    \centering
\begin{tabular}{c|c|c}
    vertex & adjacency 1 with $n \in \N$ & adjacency 2 with $n \in \N_0$\\ \hline
    0 & $2n+1, 2n+2, 2n+5, 2n+6 $ & $2n+1, 2n+2, 2n+5, 2n+6 $\\
    1 & $3,4n+7, 4n+8$ & $3,4n+7, 4n+8$\\
    2 & $4, 4n+9, 4n+10$&  $4, 4n+9, 4n+10$\\
    odd  $i \in  I$& $i-2,i+1,i+2$& $i-2,i+1,i+2$\\
    even $i \in I$& $i-2,i-1, i+2$& $i-2,i-1, i+2$\\
    $2n+1^*$& $0, 2n-1, 2n+3$& $2n+1, 2n+2, 2n+5$\\
    $2n+2^*$& $0, 2n, 2n+4$& $2n, 2n+1 2n+4$\\
    $2n+3$& $2n+1, 2n+5, 6n+11, 6n+12$&$0, 2n+1, 2n+4, 2n+5$\\
    $2n+4$& $2n+4, 2n+8, 6n+17, 6n+18$&$0, 2n+2, 2n+3, 2n+6$\\
    $2n+5$& $0, 2n+3, 2n+7$&$2n+3,2n+6,2n+7$\\
    $2n+6$& $0, 2n+4, 2n+8$&$2n+4,2n+5,2n+8$\\
    $4n+7$& $1, 4n+5, 4n+9$&$1, 4n+5, 4n+9$\\
    $4n+8$&  $1, 4n+6, 4n+10$& $1, 4n+6, 4n+10$\\
    $4n+9$& $2,4n+7, 4n+11$&$2,4n+7, 4n+11$\\
    $4n+10$& $2, 4n+8, 4n+12$&$2, 4n+8, 4n+12$\\
    $6n+11$& $2n+3, 2n+4, 6n+9$&$0,6n+12,6n+10$\\
    $6n+12$&$2n+3, 2n+4, 6n+10$&$0,6n+11,6n+10$\\
\end{tabular}
    \caption{The two infinite families of long-refinement graphs $G$ with order $6n+13$ for $\deg(G) = \{3,4\}$ where $I= [3,2n]\cup[2n+7,4n+6] \cup [4n+11, 6n+10]$. Note also that $2n+1, 2n+2$ are omitted when $n=0$.} \label{tab:adj-list-d=l-34-intro}
\end{table}
\vspace{2ex}

\begin{table}[htpb]
    \centering
\begin{tabular}{c|c|c}
    vertex & adjacency 1 & adjacency 2 \\ \hline
    0 & $2k+3, 2k+4, 2k+7, 2k+8 $&$2k+5,2k+6,6k+13,6k+14$\\
    1 & $3,4k+9, 4k+10$&$3,4k+9,4k+10$ \\
    2 & $4, 4k+11, 4k+12$&$4, 4k+11, 4k+12$\\
    odd  $i \in I$ & $i-2,i+1,i+2$& $i-2,i+1,i+2$\\
    even $i \in I$ & $i-2,i-1, i+2$& $i-2,i-1, i+2$\\
    $2k+3$ & $0, 2k+1, 2k+5$&$2k+1,2k+4,2k+5$\\
    $2k+4$ & $0, 2k+2, 2k+6$&$2k+2,2k+3,2k+6$\\
    $2k+5$& $2k+3, 2k+7, 6k+13, 6k+14$&$0,2k+3,2k+6,2k+7$\\
    $2k+6$& $2k+4, 2k+8, 6k+13, 6k+14$&$0,2k+4,2k+5,2k+8$\\
    $2k+7$& $0, 2k+5, 2k+9$&$2k+5,2k+8,2k+9$\\
    $2k+8$& $0, 2k+6, 2k+10$&$2k+6,2k+7,2k+10$\\
    $4k+9$& $1, 4k+7, 4k+11$&$1, 4k+7, 4k+11$\\
    $4k+10$&  $1, 4k+8, 4k+12$&$1, 4k+8, 4k+12$\\
    $4k+11$& $2,4k+9, 4k+13$&$2,4k+9, 4k+13$\\
    $4k+12$& $2, 4k+10, 4k+14$&$2, 4k+10, 4k+14$\\
    $6k+13$& $2k+5, 2k+6, 6k+11$&$0, 6k+14, 6k+11$\\
    $6k+14$&$2k+5, 2k+6, 6k+12$&$0, 6k+13, 6k+12$\\
\end{tabular}
    \caption{All infinite families of long-refinement graphs $G$ with order $6k+15$ and $\deg(G) = \{3,4\}$, where $I= [3,2k+2]\cup[2k+9,4k+8] \cup [4k+13, 6k+12]$, defined for $k\in \N_0$.}\label{tab:adj-list2-d=l-34-intro}
\end{table}

\vspace{1ex}

\begin{table}[htpb]
    \centering
\begin{tabular}{c|c|c}
    vertex  &  adjacency 1  &  adjacency 2 \\ \hline
    0  &  $2k+7, 2k+8, 2k+11,2k+12$  & $2k+5,2k+6,6k+17,6k+18$\\
    1  &  $3,4k+13, 4k+14$  &  $3,4k+13, 4k+14$\\
    2  &  $4, 4k+15, 4k+16$  &  $4, 4k+15, 4k+16$\\
    odd  $i \in I$  &  $i-2,i+1,i+2$ &  $i-2,i+1,i+2$\\
    even $i \in I$  &  $i-2,i-1, i+2$ &  $i-2,i-1, i+2$\\
    $2k+7$  &  $0, 2k+5, 2k+9$ &  $2k+5,2k+8,2k+9$\\
    $2k+8$  &  $0, 2k+6, 2k+10$ & $2k+6,2k+7,2k+10$\\
    $2k+9$ &  $2k+7, 2k+11, 6k+17, 6k+18$ & $0, 2k+7, 2k+10, 2k+11$\\
    $2k+10$ &  $2k+8, 2k+12, 6k+17,6k+18$ & $0, 2k+8 ,2k+9 ,2k+12$\\
    $2k+11$ &  $0, 2k+9, 2k+13$ & $2k+9, 2k+12, 2k+13$\\
    $2k+12$ &  $0, 2k+10, 2k+14$ & $2k+10,2k+11,2k+14$\\
    $4k+13$ &  $1, 4k+11, 4k+15$ & $1, 4k+11, 4k+15$\\
    $4k+14$ &   $1, 4k+12, 4k+16$ & $1, 4k+12, 4k+16$\\
    $4k+15$ &  $2,4k+13, 4k+17$ & $2, 4k+13, 4k+17$\\
    $4k+16$ &  $2, 4k+14, 4k+18$ & $2, 4k+14, 4k+18$\\
    $6k+17$ &  $2k+9, 2k+10, 6k+15$ & $0, 6k+18, 6k+15$\\
    $6k+18$ & $2k+9, 2k+10, 6k+16$ & $0, 6k+17, 6k+16$\\
\end{tabular}
    \caption{All infinite families of long-refinement graphs $G$ with order $6k+19$ and $\deg(G) = \{3,4\}$, where $I= [3,2k+6]\cup[2k+13,4k+12] \cup [4k+17, 6k+16]$.}\label{tab:adj-list3-d=l-34-intro}
\end{table}

\newpage

\section{Preliminaries}

With $\N$ being the set of natural numbers, we set $\N_0 \coloneqq \N \cup \{0\}$, and for $a, b \in \N_0$, let $[a,b] \coloneqq \{n \in \N_0 \mid a \leq n \leq b\}$ and $[b] \coloneqq [1,b]$. Multisets are a generalisation of sets in which repeated elements are allowed, i.e.\ elements may occur more than once. To distinguish them from sets, we denote multisets using double braces
 $\doubleBrackets{,}$. Given a set $S$, a \emph{partition} of $S$ is a set of pairwise disjoint non-empty sets $\Pi$ such that $\bigcup_{M\in \Pi} M = S$. Given two partitions $\Pi,\Pi'$ of $S$, we say that $\Pi$ \emph{refines} $\Pi'$ and write $\Pi \preceq \Pi'$ if every element of $\Pi$ is a (not necessarily strict) subset of some element of $\Pi'$. We call a set with a single element a \emph{singleton} and a set with two elements a \emph{pair}. A partition that consists of singletons only is called \emph{discrete}.

In our setting, all graphs are undirected, finite, and simple. In particular, they have no self-loops and between every pair of vertices only at most one edge. For a graph $G$, we use $V(G)$ and $E(G)$ to denote the set of vertices and the set of edges of $G$, respectively. We drop parameters if they are clear from or irrelevant in the context and just use $V$ and $E$ or write $G=(V,E)$. The \emph{order} of $G$ is $|G| \coloneqq |V(G)|$. The \emph{neighbourhood} of a vertex $v \in V(G)$ is $N(v)  \coloneqq \{u \in V(G) \mid \{u,v\} \in E\}$ and we extend the notion to sets of vertices $S$ by setting $N(S) \coloneqq \bigcup_{v \in S} (N(v)) \setminus S$. We consider $v$ \emph{adjacent} to $S$ if $N(v)\cap S \neq \emptyset$ and say that $C'$ is adjacent to $C$ if $N(C)\cap C' \neq \emptyset$. 
The \emph{degree} of $v$ is $\deg(v) \coloneqq |N(v)|$ and we set $\deg(G) \coloneqq \{\deg(v) \mid v \in V(G)\}$. For $d \in \N_0$, the graph $G$ is \emph{$d$-regular} if $\deg(G) = \{d\}$, and \emph{regular} if it is $d$-regular for some $d$. A \emph{matching} is a $1$-regular graph. 

For non-empty $S, S' \subseteq V(G)$, the \emph{induced subgraph} $G[S]$ is the graph $(S,E)$ where $E = \{\{u,v\} \mid \{u,v\} \in E(G), u,v \in S\}$, and the subgraph $G[S,S']$ is the graph $(S\cup S',E')$ where $E' = \{\{u,v\} \mid \{u,v\} \in E(G), u \in S,v \in S'\}$. If $V(G) = S \cup S'$ and $S \cap S' = \emptyset$ and $E(G) \cap \{\{v,w\} \mid v,w \in S\} = \emptyset = E(G) \cap \{\{v,w\} \mid v,w \in S'\}$, then $G$ is \emph{bipartite (on bipartition $(S,S')$)}. If, additionally, $\big\{\{v,w\} \mid v \in S, w \in S'\big\} = E(G[S,S'])$, the graph $G$ is \emph{complete bipartite}.

A \emph{path} (of length $\ell$) from $u \in V(G)$ to $v \in V(G)$ is a sequence $u = v_0, v_1 \dots, v_\ell = v$ such that $\{v_{i-1},v_{i}\} \in E(G)$ and $v_i \neq v_j$ for all $i \in [\ell]$ and all $j \in [0,\ell]$ with $j \notin \{i,\ell\}$. The \emph{distance} $d(u,v)$ between $u$ and $v$ is the length of a shortest path between $u$ and $v$. 
	A \emph{vertex colouring} of $G$ is a mapping $\lambda$ from $V(G)$ into some set $\curlC$, whose elements we call \emph{colours}. We call the pair $(G,\lambda)$ a \emph{vertex-coloured graph} and denote by $\pi_G(\lambda)$ the partition of $V(G)$ into the \emph{colour classes}, i.e.\ into the maximal sets of vertices that have the same colour with respect to $\lambda$.

	An \emph{isomorphism} between graphs $G$ and $H$ is a bijective mapping $\varphi \colon V(G) \rightarrow V(H)$ such that for all $u,v \in V(G)$, it holds that $\{u,v\} \in E(G)$ if and only if $\{\varphi(u),\varphi(v)\} \in E(H)$. Isomorphisms between vertex-coloured graphs $(G,\chi_G)$ and $(H,\chi_H)$ must additionally satisfy $\chi_G(u) = \chi_H(\varphi(u))$.

\section{Colour Refinement}\label{sec:colref}

In this section, we give a brief introduction to the Colour Refinement algorithm, which is also called the \emph{$1$-dimensional Weisfeiler--Leman algorithm}, and state the main facts about it that we will use in our further analysis. Given a (vertex-coloured or monochromatic) input graph, Colour Refinement uses a simple local criterion to iteratively refine the partition of the vertices induced by the current colouring. As soon as a stable state with respect to the criterion is reached, the algorithm returns the graph together with the current vertex colouring.

\begin{definition}[Colour Refinement]
	Let $G$ be a graph with vertex colouring $\lambda \colon V(G)\rightarrow\curlC$. 
	On input $(G,\lambda)$, the Colour Refinement algorithm computes its output recursively with the following update rules. Let $\chi^0 \coloneqq \lambda$ and for $i \in \N$, the \emph{colouring after $i$ iterations of Colour Refinement} is
		\[\chi^i(v) \coloneqq (\chi^{i-1}(v),\doubleBrackets{\chi^{i-1}(u) | \{u,v\} \in E(G)}).\]
\end{definition}

Thus, two vertices obtain different colours in iteration $i$ precisely if they already had different colours in iteration $i-1$ or if they differ in the $\chi^{i-1}$-colour multiplicities among their neighbours in iteration $i-1$.
Letting $\pi^i \coloneqq \pi_{G,\lambda}(\chi^i)$ be the vertex partition induced by the colouring after $i$ iterations of Colour Refinement, it is immediate that $\pi^{i+1} \preceq \pi^{i}$ holds for every $i \in \N$ and that, hence, there is a minimal $j \leq |G|-1$ such that $\pi^{j+1} = \pi^{j}$. With this choice of $j$, the \emph{output} of Colour Refinement on input $(G,\lambda)$ is $\chi_{G,\lambda} \coloneqq \chi^j$. We call $j$ the \emph{iteration number} of Colour Refinement on input $(G,\lambda)$ and denote it by $\WL_1(G,\lambda)$. If $\lambda$ induces the trivial partition, i.e.\ $(G,\lambda)$ is monochromatic, we set $\WL_1(G) \coloneqq \WL_1(G,\lambda)$. We call $(G,\lambda)$ and the induced partition $\pi(\lambda)$ \emph{stable} if $\WL_1(G,\lambda) = 0$, i.e.\ if $\pi(\lambda) = \pi(\chi_{G\lambda})$. It follows from the definition of the Colour Refinement algorithm that for all $P, Q \in \pi(\chi_{G,\lambda})$ with $P \neq Q$, the graph $G[P]$ is regular and $H\coloneqq G[P,Q]$ is \emph{biregular}, i.e.\ there are $k,\ell \in \N$ such that for every $v \in P$, it holds that $|N_H(v)| = k$, and for every $w \in Q$, it holds that $|N_H(w)| = \ell$.

A \emph{long-refinement graph} is a graph $G$ for which $\WL_1(G) = |G|-1$, i.e.\ on which Colour Refinement takes the maximum possible number of iterations to compute its output. Note that, in our quest for long-refinement graphs, we do not have to consider vertex-coloured graphs, because, as observed in \cite{KMcK20}, all long-refinement graphs $G$ of order at least $2$ are monochromatic. Moreover, they are connected, have $|\deg(G)| = 2$, and Colour Refinement produces the discrete partition on them.

The only connected graphs with degrees $1$ and $2$ are paths, and a monochromatic path on $n$ vertices takes $\lfloor\frac{n-1}{2}\rfloor$ iterations of Colour Refinement to stabilise (see, e.g., \cite[Fact 5]{KMcK20}). Solving $\lfloor\frac{n-1}{2}\rfloor = n - 1$ yields $n \leq 1$, but the graph consisting of a singleton vertex does not satisfy the degree restrictions. We hence obtain the following observation.

\begin{observation}[cf.\ \cite{KMcK20}]\label{obs:maxdeg2}
    There is no long-refinement graph $G$ with $\deg(G)=\{1,2\}$.
\end{observation}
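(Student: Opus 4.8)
The plan is to prove Observation \ref{obs:maxdeg2} directly from the structural facts that have already been established about long-refinement graphs, namely that every such graph must be connected, have exactly two distinct vertex degrees, and produce the discrete partition under Colour Refinement. The key external input is the (cited) fact that the only connected graphs with degree set $\{1,2\}$ are paths, together with the formula $\WL_1(P_n) = \lfloor (n-1)/2 \rfloor$ for the iteration number of Colour Refinement on a monochromatic path $P_n$ on $n$ vertices. Given these, the argument is essentially a one-line calculation, and I would present it exactly as the excerpt foreshadows.

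First I would observe that if $G$ is a long-refinement graph with $\deg(G) = \{1,2\}$, then by the basic properties recalled above, $G$ is connected with degree set exactly $\{1,2\}$, so $G$ must be a path $P_n$ where $n = |G|$. Second, I would invoke the iteration-number formula to get $\WL_1(G) = \lfloor (n-1)/2 \rfloor$. Third, the defining property of a long-refinement graph is $\WL_1(G) = n - 1$, so I would set $\lfloor (n-1)/2 \rfloor = n-1$ and solve: since $\lfloor (n-1)/2 \rfloor \leq (n-1)/2$, the equation forces $(n-1) \leq (n-1)/2$, hence $n - 1 \leq 0$, i.e.\ $n \leq 1$. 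Finally, the only graph with $n \leq 1$ is the single isolated vertex, which has degree set $\{0\}$ (or is empty) and therefore cannot have $\deg(G) = \{1,2\}$; this contradiction rules out the existence of such a long-refinement graph.

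There is no genuine obstacle here: the entire content has been pushed into the two cited facts (paths are the only connected $\{1,2\}$-degree graphs, and the path iteration-number formula), and the remaining step is an elementary floor inequality. The only point requiring a moment of care is verifying that the degenerate solution $n \leq 1$ is genuinely excluded by the degree hypothesis — the single vertex fails the requirement $\deg(G) = \{1,2\}$, which is what delivers the final contradiction. I would therefore keep the proof to three or four sentences, emphasising the reduction to paths and then the arithmetic, rather than reproving the path formula itself.
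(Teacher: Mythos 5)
Your proposal matches the paper's argument exactly: the paper also reduces to the fact that the only connected graphs with degrees $1$ and $2$ are paths, applies the formula $\lfloor(n-1)/2\rfloor$ for the iteration number on a monochromatic path (citing \cite[Fact 5]{KMcK20}), solves $\lfloor(n-1)/2\rfloor = n-1$ to get $n \leq 1$, and discards the singleton vertex via the degree restriction. The proof is correct and identical in approach.
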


Therefore, the smallest degree pairs with which there may be long-refinement graphs are $\{1,3\}$ and $\{2,3\}$. For both, examples are given in \cite{KMcK20}. More precisely, they present one long-refinement graph $G$ with $\deg(G) = \{1,3\}$, one with $\deg(G) = \{1,5\}$, and infinitely many with $\deg(G) = \{2,3\}$. For specific ones of the infinite families with $\deg(G) = \{2,3\}$, the proof of their Lemma 27 yields a simple construction to show the following. 

\begin{observation}
    There are infinitely many sets $S$ with $3 \in S$ and such that $\deg(G) = S$ for some long-refinement graph $G$.
\end{observation}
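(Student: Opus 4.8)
The plan is to reduce the statement to a one-parameter construction. Recall that every long-refinement graph is connected and satisfies $|\deg(G)| = 2$, so any witnessing set has the form $S = \{3,d\}$ with $d \neq 3$. Hence the claim is equivalent to the assertion that there exist long-refinement graphs of degrees $\{3,d\}$ for infinitely many values of $d$. One value, $d = 4$, is already witnessed by the families recorded in Tables \ref{tab:adj-list-d=l-34-intro}--\ref{tab:adj-list3-d=l-34-intro}; moreover, complementing a $\{2,3\}$ long-refinement graph of order $n$ yields a long-refinement graph of degrees $\{n-4,n-3\}$, so consecutive high-degree pairs are easy to realise. The difficulty is that these complements never reintroduce the degree $3$ once $n$ is large, so a dedicated construction that pins one degree to $3$ is required.

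First I would fix a ``pumpable'' subfamily of the $\{2,3\}$ long-refinement graphs from Theorem \ref{thm:classification23-intro} (equivalently, from \cite{KMcK20}), chosen so that its degree-$2$ vertices occupy a controlled region of the linear splitting order $\prec$. Following the construction behind \cite[Lemma 27]{KMcK20}, I would then apply a local degree-lifting operation at exactly these degree-$2$ vertices: replace the gadget carrying them by a parametrised, refinement-compatible gadget that raises their degree to $d$ while leaving every degree-$3$ vertex untouched. As the parameter grows, $d \to \infty$, producing for each admissible $d$ a candidate graph $G_d$ with $\deg(G_d) = \{3,d\}$.

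The hard part will be verifying that each $G_d$ is genuinely a long-refinement graph, i.e.\ that Colour Refinement still increases the number of colour classes by exactly one in every iteration and terminates only at the discrete partition. Here I would invoke Lemma \ref{rmk:work-backwards}: it suffices to exhibit, at every stage, a unique unbalanced pair of colour classes that forces precisely the next split, and to check that the lifted gadget splits in lockstep with the seed graph's splitting order rather than introducing a premature or simultaneous split. The crux is therefore to show that the degree-lifting gadget is \emph{refinement-transparent} --- that it inherits the seed's monochromatic initial partition, never separates two classes in a single iteration, and contributes its own splits only after the information has propagated to it along $\prec$. Granting this, the iteration number of $G_d$ equals $|V(G_d)| - 1$, so $G_d$ is a long-refinement graph with $\deg(G_d) = \{3,d\}$; since $d$ ranges over an infinite set, the sets $S = \{3,d\}$ are pairwise distinct, and the statement follows.
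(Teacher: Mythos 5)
Your reduction to sets of the form $S=\{3,d\}$ and your identification of the $\{2,3\}$ families (and of the construction behind \cite[Lemma 27]{KMcK20}) as the right starting point are both correct, but from there the proposal diverges from anything that can be carried out, and the actual construction is missing. You propose to raise the degree of the degree-$2$ vertices themselves to $d$ by attaching a ``parametrised, refinement-compatible gadget'' and to let $d\to\infty$; this gadget is never specified, and the property you yourself flag as the crux --- that it is ``refinement-transparent'' --- is never verified. This is not a minor omission: the entire content of the observation lives in exhibiting a concrete modification and checking that the result is still a long-refinement graph, and the classification results of this paper show how rigid long-refinement graphs are, so there is no reason to believe a generic degree-lifting gadget of growing size exists. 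In particular, each former degree-$2$ vertex would need $d-2$ new neighbours whose own degrees must again lie in $\{3,d\}$, and the whole augmented graph must still gain exactly one colour class per iteration --- a condition your sketch does not begin to address.

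The paper's argument is much more economical and realises the large degree differently. Take a seed $\{2,3\}$ long-refinement graph $G$ from an infinite family in which the number $s$ of degree-$2$ vertices is unbounded (e.g.\ the last family in \cite[Theorem 24]{KMcK20}), and insert a \emph{single} fresh vertex $w$ adjacent to all degree-$2$ vertices of $G$. Every old vertex then has degree $3$, while $w$ has degree $s$, so $\deg(G')=\{3,s\}$; the first iteration isolates $\{w\}$, the second separates $N(w)$ (the former degree-$2$ vertices) from the rest exactly as the first iteration on $G$ did, and the refinement then runs in lockstep one step behind that of $G$, giving $|G'|-1$ iterations. The parameter that tends to infinity is $s$, the count of degree-$2$ vertices in the seed, not the degree of the lifted vertices, which stays at $3$. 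Without this (or some equally explicit) construction and its verification, your proposal does not establish the statement.
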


\begin{proof}
To see this, one just needs to verify that there are infinitely many numbers $3 \neq s \in \N$ such that there are long-refinement graphs $G$ with $\deg(G)=\{2,3\}$ and $s = |\{v \in V(G) \mid \deg(v) = 2\}|$, for which one can consider, for example, the last infinite family in \cite[Theorem 24]{KMcK20}. Now inserting a fresh vertex $w$ with edges to all $v \in V(G)$ with $\deg(v) = 2$ yields a long-refinement graph $G'$ with $\deg(G') = \{3,s\}$.
\end{proof}


In the remainder of this section, we develop Lemma \ref{rmk:work-backwards}, which is the main structural insight about the partition classes in successive Colour Refinement iterations and which will allow us to pursue the following reverse-engineering approach. To study long-refinement graphs $G$ with $|G|>1$, we intend to construct the partition $\pi^{i-1}$ from the partition $\pi^i$. Each partition $\pi^i$ must have exactly one more colour class than $\pi^{i-1}$, see \cite[Proposition 8]{KMcK20}. So there must be disjoint $A_i$, $B_i$ such that $\{A_i, B_i\}  = \pi^i \setminus \pi^{i-1}$ and whose union is the unique $C_{i-1}$ with $\{C_{i-1}\} = \pi^{i-1}\backslash\pi^{i}$. We say that \emph{$C_{i-1}$ splits into $A_i$ and $B_i$} in iteration $i$. The class $C_i$ splits immediately after and as a direct consequence of $C_{i-1}$ splitting, so all vertices in $C_i$ must have the same number of neighbours in $C_{i-1}$, but differ in their numbers of neighbours in $A_{i}$ or in $B_i$. We formalise this by defining the \emph{degree of vertex $v$ with respect to set} $C \subseteq V(G)$ as $\deg_C(v) \coloneqq |N(v)\cap C|$ and, for vertex sets $C,C'$ and $v \in C$, we define the \emph{degree of $C$ into $C'$} as follows:
	\begin{equation*}\deg_{C'}(C) \coloneqq \left\{
	\begin{array}{ll}
		\deg_{C'}(v),\  & \text{if $\deg_{C'}(v) = \deg_{C'}(u)$ holds for every $u \in C$} \\
		\bot,\  & \mbox{otherwise }
	\end{array}
	\right.
	\end{equation*}
If $C' = V(G)$, we let $\deg(C) \coloneqq \deg_{C'}(C)$. If $\deg_{C'}(C) = \bot$, then $C$ is \emph{unbalanced wrt} $C'$. Otherwise, $C$ is \emph{balanced wrt }$C'$. If we do not want to specify the class $C'$ but just claim its existence, we call $C$ simply \emph{unbalanced}. With these notions, notice that every class in $\pi^i$ with $i>0$ is balanced wrt every class in $\pi^{i-1}$. In particular, $C_i$ is balanced wrt $C_{i-1}$, and also with respect to each class in $\pi^i\cap \pi^{i-1}$. Since $C_i$ is split in iteration $i+1$, $C_{i}$ cannot also be balanced wrt all classes in $\pi^i$, hence there must be a class in $\pi^i \setminus \pi^{i-1} = \{A_i,B_i\}$ wrt which $C_i$ is unbalanced (cf.\ \cite[Lemma 11]{KMcK20}). In fact, since $C_i$ is balanced wrt  $C_{i-1}$, i.e.\ $A_i\cup B_i$, all vertices in $C_i$ have the same degree into $A_i \cup B_i$, and therefore $\deg_{A_i}(C_i) = \bot$ holds if and only if $\deg_{B_i}(C_i) = \bot$. We summarise all these observations in the following lemma.

\begin{lemma}\label{rmk:work-backwards}
	Let $G$ be a long-refinement graph with $|G| > 1$. Then in any non-stable partition $\pi^i$ on $V(G)$, there is exactly one unbalanced class $C_i \in \pi^i$. 
    
    For the following, assume that, $i > 0$, that is, $\pi^i \neq \pi^0$. Then there are exactly two distinct options $A_i$, $B_i$ for $C' \in \pi^i$ with $\deg_{C'}C_i = \bot$. Hence,
\begin{equation*}
    	\deg_{A_i}(C_i) = \bot = \deg_{B_i}(C_i).
\end{equation*}	
Furthermore, $C_i = A_{i+1} \cup B_{i+1}$ and
    \begin{equation}\label{eq:equal-diff}
        	\deg_{A_{i}}(A_{i+1}) + \deg_{B_{i}}(A_{i+1}) = \deg_{C_{i-1}}(C_{i}) =  \deg_{A_{i}}(B_{i+1}) + \deg_{B_{i}}(B_{i+1}) \in [1,|C_{i-1}|-1].
    \end{equation}
    In particular, it holds that $\deg_{C_{i-1}}(C_{i}) \neq \bot$.
    
    Finally, if $i>1$, then $C_i \neq V(G)$ and, since the first iteration of Colour Refinement separates vertices by degree, $\deg(C_i) \neq \bot$ and thus $\deg(A_i) = \deg(B_i)$.
\end{lemma}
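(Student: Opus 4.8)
The plan is to reduce the entire statement to one elementary consequence of the update rule, and then read off each assertion from it. First I would record the key fact: if $u,v$ lie in the same class of $\pi^i$, then $\chi^i(u)=\chi^i(v)$, so the two multisets $\doubleBrackets{\chi^{i-1}(w)\mid \{w,u\}\in E(G)}$ and $\doubleBrackets{\chi^{i-1}(w)\mid \{w,v\}\in E(G)}$ coincide; hence $\deg_D(u)=\deg_D(v)$ for every $D\in\pi^{i-1}$. In other words, \emph{every class of $\pi^i$ is balanced with respect to every class of $\pi^{i-1}$}. Every remaining step is an application of this observation, so I would prove it once at the outset.

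For the uniqueness of the unbalanced class I would argue by counting. A class $C\in\pi^i$ remains undivided in $\pi^{i+1}$ precisely when all of its vertices receive the same colour $\chi^{i+1}$, i.e.\ exactly when $C$ is balanced with respect to every class of $\pi^i$; otherwise it breaks into at least two pieces. Thus $|\pi^{i+1}|-|\pi^i|$ is at least the number of unbalanced classes of $\pi^i$. Since $G$ is a long-refinement graph this difference equals $1$ (as recorded before the lemma), so $\pi^i$ has at most one unbalanced class; being non-stable, it has at least one. This yields the unique unbalanced class $C_i$, which is exactly the one that splits next, so $C_i=A_{i+1}\cup B_{i+1}$. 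Note this covers $i=0$ as well, where $C_0=V(G)$.

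Now fixing $i>0$, the classes of $\pi^i$ are those of $\pi^i\cap\pi^{i-1}$ together with $A_i$ and $B_i$, where $C_{i-1}=A_i\cup B_i$. By the key fact, $C_i$ is balanced with respect to every class of $\pi^{i-1}$, hence with respect to all classes in $\pi^i\cap\pi^{i-1}$ and with respect to $C_{i-1}$; so any class witnessing its imbalance must be $A_i$ or $B_i$. Balance with respect to $C_{i-1}$ forces $\deg_{A_i}(v)+\deg_{B_i}(v)$ to be constant on $C_i$, whence $\deg_{A_i}$ is non-constant on $C_i$ if and only if $\deg_{B_i}$ is; this gives $\deg_{A_i}(C_i)=\bot=\deg_{B_i}(C_i)$ and that these are the only two witnesses. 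For \eqref{eq:equal-diff} I would use that $A_{i+1},B_{i+1}\subseteq C_i$ are classes of $\pi^{i+1}$ and hence balanced with respect to $A_i$ and $B_i$; evaluating $\deg_{A_i}(v)+\deg_{B_i}(v)=\deg_{C_{i-1}}(C_i)$ at any $v\in A_{i+1}$ (resp.\ $v\in B_{i+1}$) yields the two outer equalities and $\deg_{C_{i-1}}(C_i)\neq\bot$. The range $[1,|C_{i-1}|-1]$ comes from $\deg_{A_i}(C_i)=\bot$: the value $0$ would force $\deg_{A_i}\equiv 0$ on $C_i$ and the value $|C_{i-1}|$ would force $\deg_{A_i}\equiv|A_i|$, each contradicting imbalance. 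The concluding degree claims are immediate once one notes that $\pi^1$ is exactly the degree partition (two classes, since $|\deg(G)|=2$): for $i\geq 1$ every class of $\pi^i$ refines $\pi^1$, so $\deg(C_i)\neq\bot$ and $C_i\neq V(G)$, and for $i>1$ the parent $C_{i-1}\in\pi^{i-1}$ likewise refines $\pi^1$, forcing $\deg(A_i)=\deg(B_i)$ (which genuinely fails at $i=1$, where $A_1,B_1$ are the two degree classes).

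The hard part is not any single computation but keeping the three partitions $\pi^{i-1}\succeq\pi^i\succeq\pi^{i+1}$ and their mutual refinements straight. The most delicate inference is that the imbalance of $C_i$ is witnessed by $A_i$ and $B_i$ \emph{simultaneously}: this is where I would be careful, since it is exactly the interplay between ``balanced with respect to the parent $C_{i-1}$'' (constant partial-degree sum) and ``unbalanced with respect to some child class of $\pi^i$'' that produces the clean two-sided conclusion. Once that symmetry is established, the summation identity, the bounds, and the degree statements are all short.
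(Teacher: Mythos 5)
Your proof is correct and follows essentially the same route as the paper, which derives the lemma from the single observation that every class of $\pi^i$ (for $i>0$) is balanced with respect to every class of $\pi^{i-1}$, combined with the fact that a long-refinement graph gains exactly one colour class per iteration; your handling of the two-sided imbalance via the constant sum $\deg_{A_i}(v)+\deg_{B_i}(v)=\deg_{C_{i-1}}(C_i)$ is exactly the paper's argument. Your remarks on the range $[1,|C_{i-1}|-1]$ and on why $\deg(A_i)=\deg(B_i)$ fails only at $i=1$ are correct and, if anything, slightly more explicit than the paper's exposition.
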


As a first application of our formal definition of classes being unbalanced wrt each other, we note that the class of long-refinement graphs is closed under edge complementation.

\begin{lemma}\label{lem:complement}
    A graph $G = (V,E)$ is a long-refinement graph if and only if the graph $G' = (V,E')$ is a long-refinement graph, where $E' = \{ \{u,v\} \mid u,v \in V, \{u,v\} \notin E, u \neq v\}$.
\end{lemma}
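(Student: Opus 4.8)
The plan is to show that Colour Refinement behaves identically on $G$ and on its complement $G'$ at the level of the induced partitions, from which the claim follows immediately since $|G| = |G'|$ and the iteration number is exactly the point at which the partition stabilises. Concretely, I would prove by induction on $i$ that $\pi^i_G = \pi^i_{G'}$ for every $i \in \N_0$. The base case is trivial: both graphs are monochromatic (or carry the same initial colouring $\lambda$), so $\pi^0_G = \pi^0_{G'}$.

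For the inductive step, assume $\pi^i_G = \pi^i_{G'} =: \pi^i$. I want to show that the refinement step produces the same split in both graphs. The key observation is that adjacency in $G'$ is the exact complement of adjacency in $G$: for any vertex $v$ and any colour class $C \in \pi^i$, we have $\deg^{G'}_C(v) = |C| - \deg^G_C(v)$ if $v \notin C$, and $\deg^{G'}_C(v) = (|C|-1) - \deg^G_C(v)$ if $v \in C$ (since there are no self-loops). In either case, for two vertices $u, v$ lying in the same class of $\pi^i$ — so that the relevant $|C|$ (and the ``$v \in C$ vs.\ $v \notin C$'' distinction) is the same for both — we get $\deg^G_C(u) = \deg^G_C(v)$ if and only if $\deg^{G'}_C(u) = \deg^{G'}_C(v)$. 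Using the language of Lemma \ref{rmk:work-backwards}, this says that a class $C_i \in \pi^i$ is unbalanced wrt $C' \in \pi^i$ in $G$ precisely when it is unbalanced wrt $C'$ in $G'$. Hence the refined partition is the same: $\pi^{i+1}_G = \pi^{i+1}_{G'}$. Since the two partition sequences coincide for all $i$, they stabilise at the same index $j$, so $\WL_1(G) = \WL_1(G')$, and in particular $\WL_1(G) = |G|-1$ iff $\WL_1(G') = |G'|-1$.

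I expect the main subtlety — rather than a genuine obstacle — to be handling the bookkeeping of the $v \in C$ versus $v \notin C$ cases cleanly, i.e.\ being careful that the complement degree formula uses $|C|-1$ when the vertex itself lies in $C$ because no self-loop is counted. This is exactly where the ``simple simple'' (no multi-edges, no loops) hypothesis on the graphs is used. One should phrase the degree identity so that it holds uniformly for all vertices \emph{within a single class} $C_i$ relative to a fixed target class $C'$, which is all that is needed since balance is an intra-class notion: it compares the values $\deg_{C'}(u)$ across $u$ ranging over one class. A minor point worth stating explicitly is that the initial monochromaticity (guaranteed for long-refinement graphs, and assumed via $\lambda$ in general) ensures the base case, and that the symmetry of the statement — $G$ is the complement of $G'$ just as $G'$ is the complement of $G$ — makes the ``if and only if'' immediate once one direction's partition-equality argument is in place. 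No separate converse argument is needed.
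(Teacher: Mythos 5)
Your proof is correct and follows essentially the same route as the paper's: induction on the iteration index $i$ to show $\pi^i_G = \pi^i_{G'}$, driven by the observation that complementation shifts degrees into a fixed class $C$ by the constant $|C|$ (or $|C|-1$ for vertices inside $C$), so that degree differences within a class are preserved. If anything, your vertex-level degree identity is slightly cleaner than the paper's phrasing via balanced/unbalanced classes and Lemma \ref{rmk:work-backwards}, since it yields $\pi^{i+1}_G = \pi^{i+1}_{G'}$ directly for arbitrary graphs without appealing to the uniqueness of the splitting class.
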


\begin{proof}
    It suffices to show that $\pi^i_G = \pi^i_{G'}$ holds for every $i \in \N_0$. First note that for vertex sets $C, C' \subseteq V$ that are either equal or disjoint, we have that $\deg_{C'}(C) = \bot$ holds in $G$ if and only if it holds in $G'$. Indeed, if $C = C'$, then $G[C]$ is a subgraph of the complete graph on $C$. The latter is $(|C|-1)$-regular, hence $G[C]$ is regular if and only if the ``complement subgraph'' is regular. Similarly, if $C \cap C' = \emptyset$, the graph $G[C,C']$ is a subgraph of the complete bipartite graph on bipartition $(C,C')$. The latter is biregular, meaning it satisfies $\deg_{C'}(C) \in \N \ni \deg_{C}(C')$. Hence, $\deg_{C'}(C) \neq \bot$ holds in $G[C,C']$ if and only if it holds in the ``complement subgraph'', and the same is true for the condition $\deg_{C}(C') \neq \bot$.
    
    We can now proceed by induction on $i$.  It is clear that $\pi^0_G = \pi^0_{G'}$. For the inductive step, suppose that $\pi^i_G = \pi^i_{G'}$ holds for some $i \geq 1$. It suffices to see that, up to possibly swapping the roles of $A_i$ and $B_i$, the classes $A_i$, $B_i$, and $C_i$ from Lemma \ref{rmk:work-backwards} are the same in both graphs.
    Indeed, the graph $G$ satisfies Lemma \ref{rmk:work-backwards}, so there is a unique unbalanced class $C_i \in \pi^i$ that splits into $A_{i+1}$ and $B_{i+1}$ in iteration $i+1$. By the observation above, $C_i$ is also unbalanced in $G'$, precisely to $A_i$ and to $B_i$, and there is no other unbalanced class in $G'$ in $\pi^i$ (as otherwise the same would hold in the long-refinement graph $G$, contradicting Lemma \ref{rmk:work-backwards}).
\end{proof}

\section{Reverse-Engineering Long-Refinement Graphs}\label{sec:rev-eng}

We now collect structural insights about long-refinement graphs with \emph{arbitrary} degrees. We first treat the last iterations of Colour Refinement to deduce the connections between the partition classes in those iterations where we are guaranteed that all classes have size at most $2$. We then reverse-engineer the previous, larger classes and their connections by analysing unions of the pairs and singletons. The structural insights allow us to deduce helpful facts about the arrangement of pairs and singletons, yielding necessary conditions for a graph to qualify as a long-refinement graph. We will use those in the Section \ref{sec:max4} to show that long-refinement graphs with small degrees can be represented as strings and that the only strings whose graphs satisfy the necessary conditions are the ones found and presented in \cite{KMcK20}. We also extend the classification to maximum degree $4$ there.

Let $G$ be a long-refinement graph of order $|G|>1$, and let $p$ be the minimal index of an iteration of Colour Refinement in which every element in the partition $\pi^p$ has size at most $2$, i.e.\ $\pi^p$ contains only pairs and singletons. Note that $p < \infty$ since Colour Refinement must produce the discrete partition on $G$. Let $\pi^p_{\mathcal{S}} \coloneqq \{M \in \pi^p \mid |M| = 1\}$ be the singletons in $\pi^p$, and $\pi^p_{\mathcal{P}} \coloneqq \{M \in \pi^p \mid  |M| = 2\}$ be the pairs in $\pi^p$. Note that for all $i\geq p$, the class $C_i$ that splits in iteration $i+1$ must be a pair $P \in \pi^p_{\mathcal{P}}$. This gives us a linear order $\prec$ on $\pi^p_\mathcal{P}$ by the iterations in which pairs split, and we denote the $i$-th element of $\prec$, the pair $C_{p+i-1}$, by $P_i$. Let $\alpha(P_i) \coloneqq P_{i-1}$ and $\beta(P_i) \coloneqq P_{i+1}$, respectively (for $i = \min(\prec)$, we let $\alpha(P_i) \coloneqq \bot$, and for $i = \max(\prec)$, we let $\beta(P_i) \coloneqq \bot$). We also call $P,P'$ \emph{consecutive pairs} if $P' = \beta(P)$. We let $n_\mathcal{P} \coloneqq|\pi^p_\mathcal{P}|$ be the number of pairs.

We now analyse the structure of long-refinement graphs, in particular how the elements in $\pi^p_\mathcal{P}$ are connected. In the remainder of this paper, by a \emph{pair}, we mean an element of $\pi^p_\mathcal{P}$. As the following lemma states, consecutive pairs are always connected via a matching. The proof is identical to the proof of the same statement for graphs $G$ with $\deg(G) = \{2,3\}$, which is \cite[Corollary 15]{KMcK20}.

\begin{lemma}\label{lem:successive-matching}
For all $i \in [n_\mathcal{P}-1]$, $\deg_{P_i}(P_{i+1})=\deg_{P_{i+1}}(P_{i})=1$.
\end{lemma}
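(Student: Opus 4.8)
The plan is to analyse the situation at iteration $p+i-1$, where the pair $P_i$ is the unique unbalanced class $C_{p+i-1}$ that splits in iteration $p+i$ into the two singletons forming $P_{i+1}$ (here I use that for all $j \geq p$ the unbalanced class $C_j$ is a pair, so $P_{i+1} = \beta(P_i)$ is exactly the pair that becomes unbalanced as a consequence of $P_i$ splitting). By Lemma \ref{rmk:work-backwards} applied at index $j = p+i-1$, the pair $P_{i+1}$ is unbalanced precisely with respect to the two singletons $A_{j+1}$, $B_{j+1}$ into which $P_i$ splits, and with respect to no other class of $\pi^{p+i}$. I would extract from this the key quantitative fact via equation \eqref{eq:equal-diff}: writing $P_{i+1} = \{x,y\}$ and $P_i = A_{j+1}\cup B_{j+1}$, the two vertices $x$ and $y$ must have different numbers of neighbours in $A_{j+1}$ (equivalently in $B_{j+1}$), while their total degree into $P_i = A_{j+1}\cup B_{j+1}$ is the common value $\deg_{P_i}(P_{i+1})$.

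First I would observe that since each of $A_{j+1}$ and $B_{j+1}$ is a singleton, the degree of any vertex into it is either $0$ or $1$. Hence for $\deg_{A_{j+1}}(P_{i+1}) = \bot$ to hold — i.e.\ for $x$ and $y$ to differ in their number of neighbours in the singleton $A_{j+1}$ — one of $x,y$ must be adjacent to the single vertex of $A_{j+1}$ and the other not. The same holds for $B_{j+1}$. Next I would combine this with the balancedness of $P_{i+1}$ with respect to $P_i$: since $P_{i+1}$ is balanced wrt $P_i = A_{j+1}\cup B_{j+1}$, both $x$ and $y$ have the same total degree $\deg_{P_i}(P_{i+1})$ into the two-element set $P_i$. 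The only way to have equal totals into $\{a\}\cup\{b\}$ while differing on each singleton separately is that, say, $x$ is adjacent to $a$ but not $b$, and $y$ is adjacent to $b$ but not $a$. This forces $\deg_{P_i}(P_{i+1}) = 1$, which is exactly the matching claim $\deg_{P_{i+1}}(P_i) = 1$ once I verify the symmetric count.

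To finish I would argue the symmetric statement $\deg_{P_i}(P_{i+1}) = \deg_{P_{i+1}}(P_i)$ by a simple edge-count between the two pairs: the number of edges in $G[P_i,P_{i+1}]$ equals $\sum_{v \in P_{i+1}} \deg_{P_i}(v) = |P_{i+1}|\cdot\deg_{P_i}(P_{i+1})$ on the one hand and $|P_i|\cdot\deg_{P_{i+1}}(P_i)$ on the other, and since $|P_i| = |P_{i+1}| = 2$ these two common values coincide. Having shown $\deg_{P_i}(P_{i+1}) = 1$, this yields $\deg_{P_{i+1}}(P_i) = 1$ as well, establishing the matching. I expect the main obstacle to be the careful bookkeeping in identifying $A_{j+1}$, $B_{j+1}$ as singletons and ruling out the degenerate possibility $\deg_{P_i}(P_{i+1}) = 0$; the latter is handled by noting that $\deg_{C_{i-1}}(C_i) \in [1,|C_{i-1}|-1]$ in \eqref{eq:equal-diff} guarantees a nonzero degree, and also by the earlier remark that $P_{i+1}$ can only become unbalanced through neighbours inside the just-split class $P_i$, so the two pairs are genuinely adjacent. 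One subtlety I would flag is that the argument uses in an essential way that both $P_i$ and $P_{i+1}$ are pairs (size exactly $2$), which is why the statement is restricted to the regime $i,i+1 \leq n_\mathcal{P}$, i.e.\ to indices at or beyond iteration $p$; this is precisely where the reduction to singleton splits makes the degree counts so rigid.
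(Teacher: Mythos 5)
Your argument is correct and is essentially the proof the paper has in mind (the paper itself only cites \cite[Corollary~15]{KMcK20}, whose argument is exactly this: $P_{i+1}$ is unbalanced wrt each of the two singletons into which $P_i$ splits, degrees into singletons are $0$ or $1$, and balancedness wrt $P_i$ then forces the perfect matching). The only remark worth making is that Equation~\eqref{eq:equal-diff} already gives $\deg_{P_i}(P_{i+1}) \in [1,|P_i|-1]=\{1\}$ in one step, and that your closing edge-count for $\deg_{P_{i+1}}(P_i)$ is redundant once you have pinned down the explicit matching structure -- but neither point is a gap.
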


Recall that $C_p$ is the class that splits in iteration $p+1$ and therefore is equal to $P_1$, and that by definition, we have $A_p, B_p \in \pi^p$. By the choice of $p$, both $A_p$ and $B_p$ have cardinality at most $2$. 

By Lemma \ref{rmk:work-backwards}, we know $\deg_{C_{p-1}}(P_1) \neq \bot$ and $\deg_{A_p}(P_1) = \bot =\deg_{B_p}(P_1)$. Furthermore, a pair $P$ is always balanced wrt itself, since $\deg_{P}(P) \in \{0,1\}$. We obtain the following proposition.

\begin{proposition}\label{prop:p1neqakbk}
$P_1 \notin \{A_p,B_p\}$.
\end{proposition}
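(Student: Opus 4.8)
The plan is to argue by contradiction, relying entirely on the two facts established in the paragraph immediately preceding the statement. Suppose toward a contradiction that $P_1 \in \{A_p, B_p\}$. Since $A_p$ and $B_p$ enter the argument completely symmetrically, we may assume without loss of generality that $P_1 = A_p$.

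First I would invoke Lemma \ref{rmk:work-backwards} at iteration $p$. Because $C_p = P_1$ is the unique unbalanced class of $\pi^p$ and its two unbalancing classes are precisely $A_p$ and $B_p$, we have $\deg_{A_p}(P_1) = \bot$ (and likewise $\deg_{B_p}(P_1) = \bot$). On the other hand, $P_1$ is a pair, and every pair $P = \{u,v\}$ is balanced with respect to itself: the only possible edge inside $P$ is $\{u,v\}$, so both vertices have the same number of neighbours in $P$, namely $0$ or $1$. Hence $\deg_{P}(P) \in \{0,1\}$, and in particular $\deg_{P}(P) \neq \bot$.

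Combining these under the assumption $P_1 = A_p$ yields the contradiction directly: $\deg_{A_p}(P_1) = \deg_{P_1}(P_1) \in \{0,1\}$, which is incompatible with $\deg_{A_p}(P_1) = \bot$. The identical reasoning rules out $P_1 = B_p$, this time using $\deg_{B_p}(P_1) = \bot$. Therefore $P_1 \notin \{A_p, B_p\}$, as claimed.

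As for the main obstacle, there is none of real substance, since both ingredients are recorded just before the proposition. The only point requiring a small amount of care is the elementary observation that a pair is always balanced with respect to itself; this is exactly what makes membership of $P_1$ in $\{A_p, B_p\}$ incompatible with $P_1$ being unbalanced with respect to both $A_p$ and $B_p$. The use of ``without loss of generality'' is justified because $A_p$ and $B_p$ play symmetric roles in Lemma \ref{rmk:work-backwards}.
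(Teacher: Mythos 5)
Your proof is correct and matches the paper's argument exactly: the paper also derives $\deg_{A_p}(P_1) = \bot = \deg_{B_p}(P_1)$ from Lemma \ref{rmk:work-backwards} and combines this with the observation that a pair is always balanced with respect to itself ($\deg_P(P) \in \{0,1\}$) to rule out $P_1 \in \{A_p, B_p\}$. No issues.
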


We now show that $A_p$ and $B_p$ are in fact pairs.

\begin{lemma}\label{lem:C_p-1}
	$A_{p},B_{p} \in \pi^p_\mathcal{P}$.
\end{lemma}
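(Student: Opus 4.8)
The plan is to pin down the cardinality of $C_{p-1}$ and then exclude the single problematic case by a parity argument. First I would observe that, since $A_p,B_p \in \pi^p$, both have size at most $2$, so $|C_{p-1}| = |A_p| + |B_p| \leq 4$. In the other direction, minimality of $p$ forces $\pi^{p-1}$ to contain a class of size at least $3$; but $\pi^{p-1}$ and $\pi^p$ coincide except that $C_{p-1}$ is replaced by $A_p$ and $B_p$, so every class of $\pi^{p-1}$ other than $C_{p-1}$ already lies in $\pi^p$ and hence has size at most $2$. Thus the large class can only be $C_{p-1}$ itself, giving $|C_{p-1}| \geq 3$. Combining the two bounds leaves $|C_{p-1}| \in \{3,4\}$. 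If $|C_{p-1}| = 4$, then $|A_p| = |B_p| = 2$ and we are done, so the entire content of the lemma is to rule out $|C_{p-1}| = 3$, where $C_{p-1}$ splits into one singleton and one pair.

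Assume for contradiction that $|C_{p-1}| = 3$; without loss of generality write $A_p = \{a\}$ and $B_p = \{b,c\}$, and set $P_1 = \{x,y\}$. The idea is to play two balancedness facts against each other. By Lemma \ref{rmk:work-backwards} applied at iteration $p$, the pair $P_1 = C_p$ is unbalanced wrt $A_p$ while $\deg_{C_{p-1}}(C_p) \neq \bot$. The former, together with $|A_p| = 1$, forces $\{\deg_{A_p}(x),\deg_{A_p}(y)\} = \{0,1\}$; the latter gives $\deg_{C_{p-1}}(x) = \deg_{C_{p-1}}(y)$. Subtracting these and using $C_{p-1} = A_p \cup B_p$, I obtain $\deg_{B_p}(x) - \deg_{B_p}(y) = \pm 1$, so $\deg_{B_p}(x) + \deg_{B_p}(y)$ is \emph{odd}. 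On the other hand, Proposition \ref{prop:p1neqakbk} gives $B_p \neq P_1$, so $B_p$ differs from the unique unbalanced class $C_p = P_1$ of $\pi^p$ (Lemma \ref{rmk:work-backwards}) and is therefore balanced wrt $P_1$, i.e.\ $\deg_{P_1}(b) = \deg_{P_1}(c)$. Counting the edges between $B_p$ and $P_1$ from both sides then yields $\deg_{B_p}(x) + \deg_{B_p}(y) = \deg_{P_1}(b) + \deg_{P_1}(c) = 2\deg_{P_1}(b)$, which is \emph{even}. This contradiction eliminates $|C_{p-1}| = 3$, leaving $|C_{p-1}| = 4$ and hence $A_p, B_p \in \pi^p_{\mathcal{P}}$.

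The crux, and the only step requiring a little care, is this double-counting argument: one must simultaneously use that $B_p$ is balanced wrt $P_1$ (so the two vertices of the pair $B_p$ contribute equally, making the edge count across $\{B_p,P_1\}$ even) and that $P_1$ is unbalanced wrt the singleton $A_p$ (so that $A_p$ injects exactly one unit of imbalance which $B_p$ must absorb, making the same edge count odd). I expect no further obstacle: every fact invoked is already on hand — the unique-unbalanced-class structure and the identity $\deg_{C_{p-1}}(C_p) \neq \bot$ come from Lemma \ref{rmk:work-backwards}, and $P_1 \notin \{A_p,B_p\}$ is Proposition \ref{prop:p1neqakbk} — and the argument is symmetric under swapping $A_p$ and $B_p$, so labelling the singleton as $A_p$ is genuinely without loss of generality. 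Notably, this parity argument does not even require $p > 1$, so it handles all cases uniformly.
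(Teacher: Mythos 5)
Your proof is correct and follows essentially the same route as the paper: both reduce to the case where $C_{p-1}$ is a singleton plus a pair and then derive a contradiction from the three facts that $P_1$ is unbalanced wrt the singleton $A_p$, balanced wrt $C_{p-1}$, and that $B_p$ (being distinct from the unique unbalanced class $C_p = P_1$) is balanced wrt $P_1$. Your parity/double-counting phrasing is just a cleaner repackaging of the paper's observation that the two vertices of $B_p$ would have to share a single common neighbour in $P_1$, which forces $\deg_{C_{p-1}}(P_1) = \bot$.
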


\begin{proof}
If $|A_p| = |B_p| = 1$, then $|C_{p-1}| = |A_p \cup B_p| = 2$, contradicting the minimality of $p$.

Let us next assume that $|A_p \cup B_p| = 3$, i.e.\ they are a singleton and a pair. Without loss of generality, let $A_p = \{s\}$ be the singleton. By Lemma \ref{rmk:work-backwards}, we know $\deg_{C_{p-1}}(P_1) \neq \bot$ and $\deg_{A_p}(P_1) = \bot =\deg_{B_p}(P_1)$. The latter yields that $G[A_p,P_1]$ contains a single edge. With Proposition \ref{prop:p1neqakbk} and since $P_1$ is the only class that splits in iteration ${k+1}$, Lemma \ref{rmk:work-backwards} gives us that $B_p$ is balanced wrt $P_1$, i.e.\ $\deg_{P_1}(B_p) \neq \bot$. Together with $\deg_{B_p}(P_1) = \bot$, this yields that the two vertices in $B_p$ have the same (unique) neighbour in $P_1$. But then $\deg_{C_{p-1}}(P_1) = \bot$, a contradiction.

Therefore, the only remaining possibility is that $|A_p| = |B_p| = 2$, which concludes the proof.
\end{proof}

 Let $a$ and $b$ be the positions of $A_p$ and $B_p$, respectively, in $\prec$. That is, $A_p = P_a$ and $B_p = P_b$. By Proposition \ref{prop:p1neqakbk}, we have $a > 1$ and $b > 1$. We can assume $b>a$ by symmetry. By Lemma \ref{lem:successive-matching}, $\deg_{P_2}(P_1) = 1$ and $\deg_{A_p}(P_1) = \bot = \deg_{B_p}(P_1)$, and therefore $P_2 \notin \{P_a,P_b\}$. Thus, with Proposition \ref{prop:p1neqakbk}, we obtain that $n_\mathcal{P} \geq 4$.  

 The following lemma describes how the first pair in $\prec$ is connected to the last classes that cause its splitting in iteration $p$. It generalises the same statement from degrees $2$ and $3$ to arbitrary degrees. 
  
\begin{lemma}[cf.\ \cite{KMcK20}, Lemma 16]\label{lem:P_1-P_aP_b}
    There exists $u \in P_1$ such that $G[\{u\},P_a]$ and $G[P_1\backslash\{u\},P_b]$ are complete bipartite, and $E(G[\{u\},P_b])=E(G[P_1\backslash\{u\},P_a])=\emptyset$.
\end{lemma}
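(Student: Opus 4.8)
The plan is to exploit the two facts we already have about $P_1$ from Lemma \ref{rmk:work-backwards}, namely $\deg_{A_p}(P_1)=\bot=\deg_{B_p}(P_1)$, together with the fact (established just after Proposition \ref{prop:p1neqakbk}) that $P_a=A_p$ and $P_b=B_p$ are genuine pairs. Write $P_1=\{u,w\}$. The condition $\deg_{P_a}(P_1)=\bot$ says that $u$ and $w$ have different numbers of neighbours in the $2$-element set $P_a$; since these numbers lie in $\{0,1,2\}$ and must differ, and since the total degree of $P_1$ into $C_{p-1}=P_a\cup P_b$ is balanced (all of $P_1$ has the same degree into $C_{p-1}$), I would first pin down exactly which degree profiles into $P_a$ and $P_b$ are possible. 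The goal is to show that, up to swapping $u$ and $w$ and up to swapping $P_a$ with $P_b$, we must have $\deg_{P_a}(u)=2$, $\deg_{P_a}(w)=0$ and correspondingly $\deg_{P_b}(u)=0$, $\deg_{P_b}(w)=2$, which is precisely the complete-bipartite/empty statement claimed.

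First I would record that $G[P_1,P_a]$ is biregular as a bipartite graph between two stable classes would be, but here $P_1$ is \emph{unbalanced} wrt $P_a$, so I cannot use biregularity on the $P_1$ side; instead I use it on the $P_a$ side, i.e.\ both vertices of $P_a$ have the same degree into $P_1$. Combined with $\deg_{P_a}(u)\neq\deg_{P_a}(w)$, a short counting argument (the total number of edges between $P_1$ and $P_a$ equals $\deg_{P_a}(u)+\deg_{P_a}(w)$ and also equals $2\cdot\deg_{P_1}(P_a)$, an even number) forces $\{\deg_{P_a}(u),\deg_{P_a}(w)\}=\{0,2\}$: the only pair of distinct values in $\{0,1,2\}$ whose sum is even. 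Say $\deg_{P_a}(u)=2$ and $\deg_{P_a}(w)=0$, which gives $G[\{u\},P_a]$ complete bipartite and $E(G[\{w\},P_a])=\emptyset$, i.e.\ $E(G[P_1\setminus\{u\},P_a])=\emptyset$. The identical argument applied to $P_b$ gives $\{\deg_{P_b}(u),\deg_{P_b}(w)\}=\{0,2\}$.

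The remaining step is to fix the orientation so that $u$ (the vertex fully joined to $P_a$) is the one with \emph{no} neighbour in $P_b$, and $w$ is the one fully joined to $P_b$. Here I would use that $P_1$ is balanced wrt $C_{p-1}=P_a\cup P_b$, meaning $\deg_{P_a}(u)+\deg_{P_b}(u)=\deg_{P_a}(w)+\deg_{P_b}(w)$. Plugging in $\deg_{P_a}(u)=2$, $\deg_{P_a}(w)=0$ forces $2+\deg_{P_b}(u)=0+\deg_{P_b}(w)$, hence $\deg_{P_b}(w)-\deg_{P_b}(u)=2$, which combined with $\{\deg_{P_b}(u),\deg_{P_b}(w)\}=\{0,2\}$ yields $\deg_{P_b}(u)=0$ and $\deg_{P_b}(w)=2$. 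This is exactly the claim: $G[\{u\},P_a]$ and $G[P_1\setminus\{u\},P_b]$ are complete bipartite, while $E(G[\{u\},P_b])=E(G[P_1\setminus\{u\},P_a])=\emptyset$.

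The main obstacle I anticipate is not the final balancing step but justifying the parity/biregularity claim cleanly: I must make sure I am invoking biregularity on the correct side. Since $P_a$ sits in $\pi^p$ and splits only later (in iteration $a+p$, after $P_1$ splits), at the stage where $P_1$ splits the relevant stable-partition structure is not yet available, so the cleanest route is purely combinatorial — both vertices of the pair $P_a$ have the same degree into the pair $P_1$ because $P_a$ is balanced wrt $P_1$ (as $P_1$ is the unique unbalanced class in $\pi^p$, everything else, in particular $P_a$, is balanced wrt $P_1$). That single observation, rather than any appeal to biregularity of a stable colouring, is what delivers the even-sum constraint, and I would make sure to state it explicitly to avoid circularity.
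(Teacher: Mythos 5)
Your proof is correct and follows essentially the same route as the paper's: it combines $\deg_{P_a}(P_1)=\bot=\deg_{P_b}(P_1)$ with the balancedness of $P_a$ and $P_b$ wrt $P_1$ (via Proposition~\ref{prop:p1neqakbk} and Lemma~\ref{rmk:work-backwards}) to force the degree profiles $\{0,2\}$, and then uses $\deg_{C_{p-1}}(P_1)\neq\bot$ to fix the orientation. The only difference is that you spell out the even-sum counting argument that the paper leaves implicit, and your closing remark correctly identifies that the balancedness of $P_a$ wrt $P_1$ comes from $P_1$ being the unique unbalanced class in $\pi^p$, which is exactly the justification the paper intends.
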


\begin{proof}
	Recall again, by Lemma \ref{rmk:work-backwards}, that $\deg_{A_p}(P_1) = \bot =\deg_{B_p}(P_1)$. In particular, there is a vertex $u \in P_1$ such that $N(u) \cap P_a \neq \emptyset$. 
    Furthermore, by Proposition \ref{prop:p1neqakbk} and again with Lemma \ref{rmk:work-backwards}, we must have $\deg_{P_1}(P_a) \neq \bot \neq \deg_{P_1}(P_b)$. The first inequality lets us conclude that $G[\{u\},P_a]$ is complete bipartite and that $E(G[P_1\backslash\{u\},P_a])=\emptyset$. With $\deg_{C_{p-1}}(P_1) \neq \bot$, we obtain the claimed statements about $G[P_1,P_b]$.
\end{proof}

We already know that consecutive pairs are connected via a matching. 
The following lemma says that non-consecutive pairs (except for the case of $P_a$ and $P_b$) as well as a pair and a singleton form either an empty or a complete bipartite graph, i.e.\ their connections are trivial.

\begin{lemma}\label{lem:non-successive-pairs}
Let $P_i$ and $P_j$ be pairs with $i+1<j$, and, if $i=1$,  then $j \notin \{a,b\}$. Let $S \in \pi^p_\mathcal{S}$.
Then $\deg_{S}(P_i) \in \{0,1\} \text { and } deg_{P_i}(P_j)=\deg_{P_j}(P_i)= \deg_{P_i}(S) \in \{0,2\}$.
\end{lemma}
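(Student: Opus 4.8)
The plan is to prove the two claimed facts separately, using the central structural tool from Lemma \ref{rmk:work-backwards} that in each non-stable partition there is exactly one unbalanced class, and that every class in $\pi^p$ is balanced with respect to every other class except possibly the unique one that will split. The key point to exploit throughout is that by the minimality of $p$, all the classes $P_i$, $P_j$, and $S$ have size at most $2$, and that by Lemma \ref{rmk:work-backwards} the only pair that is unbalanced at the relevant stage is the one currently splitting.

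First I would establish $\deg_S(P_i) \in \{0,1\}$ for the singleton $S$. Since $|S|=1$, the value $\deg_{P_i}(S) = |N(S)\cap P_i|$ is automatically well-defined and lies in $\{0,1,2\}$, so the content is to rule out $\deg_S(P_i) = 2$, i.e.\ to rule out that both vertices of $P_i$ are adjacent to the single vertex of $S$. If both vertices of $P_i$ were adjacent to $S$, then $\deg_S(P_i)=1 \neq \bot$, which is fine for $P_i$, but I would instead look at this from the side of $S$: the point is that $P_i$ must be balanced with respect to $S$ whenever $S$ is not the class inducing $P_i$'s split, and here $S$ is a singleton that never splits. The cleaner route is: at the iteration $p+i-1$ when $P_i$ splits, by Lemma \ref{rmk:work-backwards} the only two classes $C'$ with $\deg_{C'}(P_i)=\bot$ are the two halves $A,B$ into which $P_i$ splits (which are singletons), so $P_i$ is balanced with respect to every other class of $\pi^p$, in particular with respect to $S$; balanced means $\deg_S(P_i)\in\{0,1,2\}\setminus\{\bot\}$, and combined with the parallel statement for the two vertices of $P_i$ having equal degree into $S$, I get $\deg_S(P_i)\in\{0,1\}$ after observing that $\deg_S(P_i)=2$ would contradict biregularity of $G[P_i,S]$ forcing $\deg_{P_i}(S)=2$ in a way incompatible with $S$ being a singleton split-inducer — here I need to be careful and argue directly that both neighbours cannot occur without making $P_i$ unbalanced with respect to the wrong class.

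Second, for two pairs $P_i, P_j$ with $i+1<j$ (and $j\notin\{a,b\}$ if $i=1$), I would show $\deg_{P_i}(P_j)=\deg_{P_j}(P_i)\in\{0,2\}$, i.e.\ the bipartite graph $G[P_i,P_j]$ is either empty or complete. The mechanism is that $G[P_i,P_j]$ is biregular (both being distinct stable classes of $\pi^p$), so $\deg_{P_i}(P_j)$ and $\deg_{P_j}(P_i)$ are well-defined and symmetric in the counting sense. The crucial exclusion is the value $1$: a matching between $P_i$ and $P_j$. Here I would use that $P_i$ and $P_j$ are \emph{non-consecutive} in $\prec$, so neither is the split that directly induces the other's split; by Lemma \ref{rmk:work-backwards}, when $P_j$ splits the unique classes unbalancing it are the two singletons it splits into, hence $P_i$ (if $i<j$) is already a union of two singletons at that point and a matching $\deg_{P_i}(P_j)=1$ would mean the two halves of $P_j$ see different halves of $P_i$ — I must check this does not contradict balance, and more importantly rule out that a degree-$1$ connection to a non-consecutive earlier-or-later pair could arise, leveraging that consecutive pairs already absorb the unique matching interaction via Lemma \ref{lem:successive-matching}. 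Finally I equate the three quantities $\deg_{P_i}(P_j)=\deg_{P_j}(P_i)=\deg_{P_i}(S)$; given each lies in $\{0,2\}$ (for the pairs) and $\{0,1\}$ (for the singleton side of the first claim, reconciled through biregularity), I tie them together through the consistency of the biregular degrees.

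The main obstacle I anticipate is the clean exclusion of the matching case $\deg_{P_i}(P_j)=1$ for non-consecutive pairs: biregularity alone permits $\{0,1,2\}$, so ruling out $1$ genuinely requires invoking the single-unbalanced-class structure at the precise iteration where the later of $P_i,P_j$ splits, and arguing that a non-consecutive degree-$1$ link would either make a wrong class unbalanced or contradict the linear splitting order $\prec$ together with the established matching of consecutive pairs. Handling the boundary exception when $i=1$ and $j\in\{a,b\}$ — already excluded by hypothesis since $P_a,P_b$ are governed by Lemma \ref{lem:P_1-P_aP_b} — is what makes the hypothesis necessary, and I would make explicit that this is exactly the configuration where a non-trivial (non-$\{0,2\}$) connection to $P_1$ is allowed.
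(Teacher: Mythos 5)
Your plan leans on the right tool (the unique-unbalanced-class structure of Lemma \ref{rmk:work-backwards}), but you apply it backwards at the decisive points. You claim, for both $P_i$ and $P_j$, that the only classes $C'$ with $\deg_{C'}(C_m)=\bot$ are ``the two halves into which $C_m$ splits''. The lemma says the opposite: $C_m$ is unbalanced with respect to $A_m$ and $B_m$, the two halves of the \emph{previously} split class $C_{m-1}$; the halves of $C_m$ itself are $A_{m+1},B_{m+1}$ and are not even elements of $\pi^m$. This is not a harmless slip, because for $i=1$ the unbalancing classes are the \emph{pairs} $P_a$ and $P_b$, and that is exactly why the hypothesis $j\notin\{a,b\}$ is in the statement: $P_1$ genuinely fails to be balanced wrt $P_a$ (Lemma \ref{lem:P_1-P_aP_b} exhibits the ``one vertex of $P_1$ sees both of $P_a$, the other sees neither'' configuration). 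Your sketch never isolates this $2$--$0$ configuration as the one the hypothesis is there to exclude; you only discuss the matching case.

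There are two further concrete problems. First, you have the singleton claim inverted: you set out to ``rule out that both vertices of $P_i$ are adjacent to the single vertex of $S$'', but that is the \emph{allowed} case $\deg_{P_i}(S)=2$ (it is what the subscripted letters $0_2,1_2,\mathrm{X}_2$ of Notation \ref{notation:23} encode); what must be excluded is a single edge between $P_i$ and $S$, i.e.\ $\deg_S(P_i)=\bot$. The paper gets this in one line from $\pi^p_{\mathcal{S}}\subseteq\pi^{p-1}$ (no new singletons arise in iteration $p$, by Lemma \ref{lem:C_p-1}) together with $P_i\in\pi^p$. Second, for two pairs you assert that $G[P_i,P_j]$ is biregular because they are ``stable classes of $\pi^p$''; $\pi^p$ is not stable, and biregularity is essentially the statement to be proved. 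The step that actually kills the matching case --- that $P_j$ is still a class of $\pi^{p+i+1}$ because $j>i+1$, hence is balanced wrt each of the singletons $A_{p+i},B_{p+i}$ into which $P_i$ splits, so each vertex of $P_i$ is adjacent to both or to neither vertex of $P_j$ --- is missing; the appeal to consecutive pairs ``absorbing the unique matching interaction'' via Lemma \ref{lem:successive-matching} is not an argument. A repaired version of your outline (balancedness of the later class wrt the singletons of the earlier one, plus balancedness of $P_i$ wrt $P_j$, which for $i=1$ uses the hypothesis) is precisely the paper's proof, so the route is salvageable, but as written the key steps are misstated or absent.
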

\begin{proof}
Note that Lemma \ref{lem:C_p-1} says that $C_{p-1}$ splits into two pairs and hence does not introduce new singletons in iteration $p$, i.e.\ $\pi^p_\mathcal{S} \subseteq \pi^{p-1}$. Also $\pi^p_\mathcal{P} \subseteq \pi^p$, so $S \in \pi^{p-1}$ and $P_i \in \pi^p$. Thus, $\bot \neq \deg_{S}(P_i) \in \{0,1\}$ or, equivalently, $\deg_{P_i}(S)=\{0,2\}$.

For the second part, choose $u$ and $v$ such that $\{u,v\} = P_i$ and $A_{p+i}=\{u\}$, and $B_{p+i} = \{v\}$. Naturally, $\{A_{p+i}, B_{p+i}\} \subseteq \pi^{p+i}$ and, since $j > i+1$, $P_j  \in \pi^{p+i+1}$ . Therefore, $\deg_{A_{p+i}}(P_j) \neq \bot \neq \deg_{B_{p+i}}(P_j)$. By the fact that $|A_{p+i}| = |B_{p+i}|=1$, we have that each of $\deg_{A_{p+i}}(P_j)$ and $ \deg_{B_{p+i}}(P_j)$ is either $0$ or $1$. Now, we show that $\deg_{A_{p+i}}(P_j) = \deg_{B_{p+i}}(P_j)$. If $i=1$, then by assumption $P_j \notin \{P_a,P_b\}$ and thus $P_j \in \pi^{p-1}$, while $P_i \in \pi^p$. If $i\neq 1$, then $P_j \in \pi^p$ and $P_i \in \pi^{p+1}$. In either case, since $P_j$ splits later than $P_i$, this gives us $\deg_{P_{i}}(P_j) \neq \bot$ and therefore $\deg_{A_{p+i}}(P_j) = \deg_{B_{p+i}}(P_j) \in \{0,1\}$. This tells us that either $G[P_i,P_j]$ is complete bipartite or $E(G[P_i,P_j]) = \emptyset$. 
\end{proof}

The following lemma states that the distance in $\prec$ between $P_a$ and $P_b$, i.e.\ between the classes that induce the splitting of $P_1$, is at most $2$.

 \begin{lemma}\label{lem:bdista}
     $b \in \{a+1,a+2\}$.
 \end{lemma}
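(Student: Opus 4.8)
# Proof Proposal for Lemma \ref{lem:bdista}

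The plan is to derive a contradiction from assuming that $P_b$ sits too far beyond $P_a$ in the linear order $\prec$, by tracking where the splitting of $P_1$ must be ``induced'' from and using the rigidity forced by Lemma \ref{rmk:work-backwards}. The key observation is that $P_1$ splits in iteration $p$ precisely because $C_{p-1} = P_a \cup P_b$ splits, and that $C_{p-1}$ is itself a single class in $\pi^{p-1}$ of size $4$. So the first step is to recall that in $\pi^{p-1}$, the set $P_a \cup P_b$ is one class; hence right before iteration $p$, the positions $a$ and $b$ in $\prec$ correspond to a \emph{single} unsplit class, while all pairs $P_i$ with $i < a$ (and, symmetrically, those between $a$ and $b$) have already become pairs. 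The task is to show that there can be no pair strictly between $P_a$ and $P_b$ other than at most one, i.e.\ $b - a \le 2$.

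First I would suppose for contradiction that $b \geq a+3$, so there exist at least two distinct pairs $P_{a+1}, P_{a+2}$ with $a < a+1 < a+2 < b$. Since $P_a \cup P_b = C_{p-1}$ is a single class in $\pi^{p-1}$, neither $P_a$ nor $P_b$ has yet split in iteration $p-1$; but every pair $P_i$ with index strictly between $a$ and $b$ has split \emph{strictly earlier} than $P_a$ does. This is the crux: I would apply Lemma \ref{lem:non-successive-pairs} to the pairs $P_a$ and $P_{a+2}$ (which are non-consecutive, with $a+1 < a+2$), concluding that $\deg_{P_a}(P_{a+2}) = \deg_{P_{a+2}}(P_a) \in \{0,2\}$, so $G[P_a, P_{a+2}]$ is either empty or complete bipartite. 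The same applies to $P_b$ and the intermediate pairs. The goal is to show that this forces $P_a$ and $P_b$ to be indistinguishable as a combined class already before iteration $p$ with respect to their connections to the intervening pairs, contradicting that $C_{p-1}$ is a \emph{single} balanced class that only splits in iteration $p$.

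The main obstacle, which I would treat carefully, is the following balancing argument. In $\pi^{p-1}$ the class $C_{p-1} = P_a \cup P_b$ must be balanced with respect to every class that has already split off, in particular with respect to any pair $P_{a+1}$ lying between $a$ and $b$ (which is a genuine pair in $\pi^{p-1}$ since it split before iteration $p$). Balance of $C_{p-1}$ with respect to $P_{a+1}$ means all four vertices of $P_a \cup P_b$ have the same degree into $P_{a+1}$. But by Lemma \ref{lem:successive-matching}, consecutive pairs are joined by a matching, so $P_{a+1}$ has exactly one neighbour in $P_a$ (if $a+1$ is the successor of $a$) — giving degree pattern $(1,1,0,0)$ or similar across $C_{p-1}$ — which violates balance unless the configuration collapses. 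I would formalise this by computing $\deg_{P_{a+1}}(C_{p-1})$ from the matching/complete-bipartite structure and showing it cannot be constant on $C_{p-1}$ once $b \geq a+3$, because the consecutive-matching condition between $P_a, P_{a+1}$ and between $P_{a+1}, P_{a+2}$ pins down the one-sided degrees in an incompatible way. The subtle part is handling the boundary interactions at indices $a+1$ and $b-1$ simultaneously; I expect to split into the subcases $b = a+3$ and $b > a+3$, using Lemma \ref{lem:non-successive-pairs} to trivialise all long-range connections and reduce to a short local contradiction around $P_a, P_{a+1}, P_b$.
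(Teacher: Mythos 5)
Your starting point is the right local computation, and it is the same one the paper uses: when $b \geq a+3$, the pair $P_{a+1}$ is consecutive to $P_a$ but not to $P_b$, so Lemmas \ref{lem:successive-matching} and \ref{lem:non-successive-pairs} give $\deg_{P_{a+1}}(P_a)=1$ while $\deg_{P_{a+1}}(P_b)\in\{0,2\}$, whence $C_{p-1}=P_a\cup P_b$ is unbalanced wrt $P_{a+1}$. But the step on which you hang the contradiction --- that ``in $\pi^{p-1}$ the class $C_{p-1}$ must be balanced with respect to every class that has already split off'' --- is false, and if it were true it would prove far too much: $C_{p-1}$ is by definition \emph{the} unbalanced class of $\pi^{p-1}$ (otherwise it would never split in iteration $p$), and by Lemma \ref{rmk:work-backwards} it is unbalanced wrt precisely the two classes $A_{p-1},B_{p-1}$ created in iteration $p-1$; it is only guaranteed to be balanced wrt the classes already present in $\pi^{p-2}$. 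So exhibiting one class wrt which $C_{p-1}$ is unbalanced is not a contradiction; it only forces that class to lie in the two-element set $\pi^{p-1}\setminus\pi^{p-2}=\{A_{p-1},B_{p-1}\}$.

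What is missing is a pigeonhole step, and for it you need a third witness that your proposal never mentions: $P_{a-1}$, which exists because $a>1$ by Proposition \ref{prop:p1neqakbk}. Under $b>a+2$ the pairs $P_{a-1},P_{a+1},P_{b-1}$ are pairwise distinct and each is consecutive to exactly one of $P_a,P_b$, so $C_{p-1}$ is unbalanced wrt each of them; each must therefore first appear in $\pi^{p-1}\setminus\pi^{p-2}$, which has only two elements --- that is the contradiction. Your plan of working only with the pairs strictly between $P_a$ and $P_b$ cannot supply three witnesses: when $b>a+3$ the pair $P_{a+2}$ is consecutive to neither $P_a$ nor $P_b$, so $\deg_{P_{a+2}}(P_a)$ and $\deg_{P_{a+2}}(P_b)$ both lie in $\{0,2\}$ and may well coincide, and in every case the interior reliably yields only the two witnesses $P_{a+1}$ and $P_{b-1}$, which would merely tell you $\{A_{p-1},B_{p-1}\}=\{P_{a+1},P_{b-1}\}$ and leave you to continue reverse-engineering. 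Bringing in $P_{a-1}$ and the two-slot count closes the argument exactly as in the paper.
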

 \begin{proof}
Assume that $b > a + 2$. Then consider the set of distinct pairs $\Pi \coloneqq \{P_{a-1}, P_{a+1}, P_{b-1}\}$.
 For each $P \in \Pi$, it holds that exactly one set of $\{P_a,P\}$ and $\{P_b,P\}$ consists of consecutive pairs. Therefore, by Lemmas \ref{lem:successive-matching} and \ref{lem:non-successive-pairs}, $C_{p-1} = P_{a } \cup P_b$ is unbalanced wrt $P$. 
Thus, the first iteration $i$ for which any pair $P \in \Pi$ is in $\pi^i$ must directly precede the iteration $p$ in which $C_{p-1}$ splits. We have that $|\pi^i \setminus \pi^{i-1}| = 2$ and $(\pi^p \setminus \pi^{p-1}) \cap \Pi = \emptyset$, so there must be a $P \in \Pi$ that is not in $\pi^p$, a contradiction. Thus, $a < b \leq a + 2$.
 \end{proof}

In the previous lemmas, we have analysed the splits that happen from the first moment on where the largest class has size $2$. Our structural insights have revealed the relations between the pairs in $\pi^p$. Now we will work back in time from $\pi^p$, i.e.\ starting from $\pi^p$, we reverse-construct the partitions $\pi^{p-i}$ for increasing $i$. Recall that $a$ and $b$ are the positions of $A_p$ and $B_p$ in the splitting order $\prec$ of the pairs, whose collection is $\pi^p_\mathcal{P}$ with size $n_\mathcal{P} = |\pi^p_{\mathcal{P}}|$, and that $\pi^p_\mathcal{S}$ is the collection of singletons in $\pi^p$. Define $\ell_{4 \rightarrow 2} \coloneqq \min\{n_\mathcal{P}-b,a-1\}$. Then the technical statement in the lemma below expresses that each of the $\ell_{4 \rightarrow 2}$ iterations preceding $\pi^p$ splits one class of four vertices into two pairs, which will be elements of $\pi^p$. 

\begin{lemma}\label{lem:first-symmetry}
    For every $h \in [0,\ell_{4 \rightarrow 2}]$, the following holds:
	\begin{multline*}
		\pi^{p-h-1} =
		\pi^p_{\mathcal{S}} \, 
		\cup \, 
		\{P_{a-i}\, \cup \, P_{b+i} \mid i \in [0,h]\} \, \cup \,
		\left\{P_{i} \mid i \in [1,a-h-1] \, \cup \, [a+1,b-1]\cup [b+h+1, n_\mathcal{P}] \right\}.
	\end{multline*}
    \end{lemma}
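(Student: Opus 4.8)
The plan is to prove the formula for $\pi^{p-h-1}$ by induction on $h \in [0, \ell_{4 \to 2}]$, peeling off one partition at a time while tracking which classes merge as we move backwards in time. The base case $h = 0$ is essentially Lemma \ref{lem:C_p-1} together with Proposition \ref{prop:p1neqakbk}: the class $C_{p-1} = A_p \cup B_p = P_a \cup P_b$ is the unique class of $\pi^{p-1}$ that is not in $\pi^p$, and all other classes of $\pi^p$ (the remaining pairs and all singletons in $\pi^p_{\mathcal{S}}$) persist into $\pi^{p-1}$. This matches the claimed formula with $h = 0$, where the merged term is $P_a \cup P_b$ and the surviving pairs are indexed by $[1, a-1] \cup [a+1, b-1] \cup [b+1, n_\mathcal{P}]$.

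For the inductive step, I would assume the formula holds for some $h < \ell_{4 \to 2}$ and determine which class of $\pi^{p-h-1}$ splits in iteration $p-h$, i.e.\ which class is $C_{p-h-2}$. By Lemma \ref{rmk:work-backwards}, exactly one class of $\pi^{p-h-1}$ is unbalanced, and it is this class that is formed by a merge when we pass to $\pi^{p-h-2}$. The key claim is that the merging combines $P_{a-h-1}$ with $P_{b+h+1}$ into the four-element class $P_{a-h-1} \cup P_{b+h+1}$, while every other class of $\pi^{p-h-1}$ survives. To establish this, I would argue that the already-merged ``symmetric'' class $P_{a-h} \cup P_{b+h}$ plays the role that $P_a \cup P_b = C_{p-1}$ played in the base case: it is unbalanced with respect to exactly two classes, and those must be $P_{a-h-1}$ and $P_{b+h+1}$. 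The tools are Lemmas \ref{lem:successive-matching} and \ref{lem:non-successive-pairs}: the class $P_{a-h} \cup P_{b+h}$ is connected via a matching to each of its two ``outer'' neighbours $P_{a-h-1}$ and $P_{b+h+1}$ (so it is unbalanced with respect to each of them, since one half of the union meets the neighbour and the other does not), whereas all non-consecutive connections are trivial (empty or complete bipartite) and hence balanced. The constraint $h < \ell_{4 \to 2}$, i.e.\ $h+1 \le \min\{n_\mathcal{P} - b, a - 1\}$, guarantees that the indices $a - h - 1 \ge 1$ and $b + h + 1 \le n_\mathcal{P}$ remain valid, so these outer neighbours genuinely exist as pairs.

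The heart of the argument, and the step I expect to be the main obstacle, is to rule out any \emph{other} unbalanced class in $\pi^{p-h-1}$ and to verify that $P_{a-h-1} \cup P_{b+h+1}$ is the class that actually forms next, rather than, say, $P_{a-h} \cup P_{b+h}$ splitting off in a different direction or some singleton becoming involved. Lemma \ref{rmk:work-backwards} tells us the unbalanced class is unique, so I would pin it down by showing directly that $P_{a-h} \cup P_{b+h}$ is unbalanced (via the matching edges to its outer neighbours) and invoking uniqueness to conclude no other candidate can be. I then need the symmetry: the two classes with respect to which $P_{a-h} \cup P_{b+h}$ is unbalanced must be precisely $P_{a-h-1}$ and $P_{b+h+1}$, so these are $A_{p-h-1}$ and $B_{p-h-1}$ and their union is $C_{p-h-2}$. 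Here I would lean on the same reasoning as in Lemmas \ref{lem:C_p-1} and \ref{lem:P_1-P_aP_b} that an unbalanced union of this form forces a single matching edge into each of two distinct classes, symmetrically placed on either side in $\prec$, which fixes the merge as $P_{a-h-1} \cup P_{b+h+1}$. Substituting $h+1$ for $h$ in the claimed formula then reproduces exactly $\pi^{p-(h+1)-1}$: one more symmetric merged class $P_{a-(h+1)} \cup P_{b+(h+1)}$ appears, the surviving-pair index set shrinks from both ends, and $\pi^p_{\mathcal{S}}$ is untouched throughout because, by Lemma \ref{lem:C_p-1}, no new singletons are created in this range of iterations. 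This closes the induction.
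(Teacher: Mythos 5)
Your proposal is correct and follows essentially the same route as the paper's proof: induction on $h$ with the base case from Lemma \ref{lem:C_p-1}, and an inductive step that uses Lemmas \ref{lem:successive-matching} and \ref{lem:non-successive-pairs} to show the most recently merged class $P_{a-h}\cup P_{b+h}$ is unbalanced precisely wrt the two "outer" pairs, which by the uniqueness statement in Lemma \ref{rmk:work-backwards} forces $C_{p-h-2}=P_{a-h-1}\cup P_{b+h+1}$. The only (cosmetic) difference is that you step from $h$ to $h+1$ whereas the paper steps from $h-1$ to $h$.
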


\begin{proof}
    We prove the statement by induction on $h$. The case for $h=0$ follows from Lemma \ref{lem:C_p-1} and the fact that $\pi^{p-1}=\pi^p\backslash\{A_p,B_p\}\cup \{A_p \cup B_p\}$. This is illustrated in Figure \ref{h=0}. 
    
    \begin{figure}[htpb]
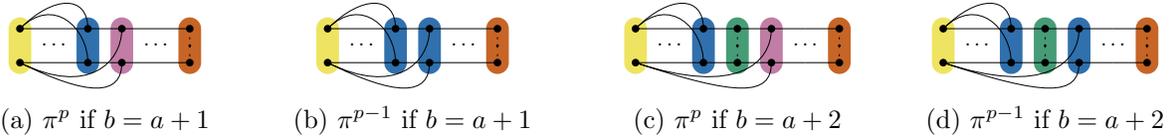

    \centering
    \begin{subfigure}[htpb]{0.24\textwidth}
    	\centering
    	\includegraphics[scale=0.8]{sym1-1.pdf}
    	\caption{$\pi^p$ if $b=a+1$}
    \end{subfigure}
    \begin{subfigure}[htpb]{0.24\textwidth}
    	\centering
    	\includegraphics[scale=0.8]{sym1-2.pdf}
    	\caption{$\pi^{p-1}$ if $b=a+1$}
    \end{subfigure}
    \hspace{0.1cm}
    \begin{subfigure}[htpb]{0.24\textwidth}
    	\centering
    	\includegraphics[scale=0.8]{sym1-5.pdf}
    	\caption{$\pi^p$ if $b=a+2$}
    \end{subfigure}
    \begin{subfigure}[htpb]{0.24\textwidth}
    	\centering
    	\includegraphics[scale=0.8]{sym1-6.pdf}
    	\caption{$\pi^{p-1}$ if $b=a+2$}
    \end{subfigure}
    \caption[$\pi^{p-1}$ when $b < n_\mathcal{P}$.]{$\pi^{p-1}$ when $b < n_\mathcal{P}$. $P_a$ is blue, $P_b$ is pink and blue, respectively. For clarity, we omit all edges between non-consecutive pairs.}\label{h=0}
\end{figure}

    For the following arguments, see Figure \ref{fig:inductionstep} for a visualisation of the situation. Assume $h \in [\ell_{4 \rightarrow 2}]$ and suppose the claim holds for all values in $[0,h-1]$. We show that it also holds for $h$ then.
    
    The definition of $\ell_{4 \rightarrow 2}$ ensures that $P_{a-h}$ and $P_{b+h}$ are well-defined, and the induction hypothesis for $h-1$ ensures that they, as well as $P_{a-h+1}\cup P_{b+h-1}$, are elements of $\pi^{p-h}$.
	
    By Lemma \ref{lem:successive-matching}, $\deg_{P_{a-h}}(P_{a-h+1}) = 1 = \deg_{P_{b+h}}(P_{b+h-1})$. By Lemma \ref{lem:non-successive-pairs}, $\deg_{P_{b+h}}(P_{a-h+1})$ and $\deg_{P_{a-h}}(P_{b+h-1})$ are each either $0$ or $2$. But this implies that $P_{a-h+1} \cup P_{b+h-1}$ is unbalanced wrt each of $P_{a-h}$ and $P_{b+h}$. So, by Lemma \ref{rmk:work-backwards}, it follows that $P_{a-h}$ and $P_{b+h}$ must be in $\pi^{p-h} \setminus \pi^{p-h-1}$. That is, $\{P_{a-h},P_{b+h}\} = \{A_{p-h},B_{p-h}\}$, and hence $A_{p-h} \cup B_{p-h} = C_{p-h-1} \in \pi^{p-h-1}$. Moreover, Equation \ref{eq:equal-diff} yields that $\deg_{P_{b+h}}(P_{a-h+1})=\deg_{P_{a-h}}(P_{b+h-1})$. Hence, $\pi^{p-h-1} =  \{P_{a-h} \cup P_{b+h}\}\cup (\pi^{p-h}\setminus\{P_{a-h},P_{b+h}\})$, and thus the desired result holds. 
    \end{proof}

\begin{figure}[htpb]
    \centering
    \begin{subfigure}[htpb]{0.45\textwidth}
    	\centering
    	\includegraphics{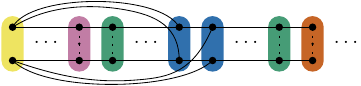}
    	\caption{$\pi^{p-h+1}$}
    \end{subfigure}
    \begin{subfigure}[htpb]{0.45\textwidth}
    	\centering
    	\includegraphics{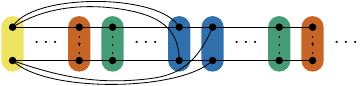}
    	\caption{$\pi^{p-h}$}
    \end{subfigure}
    \caption{Induction step for the proof of Lemma \ref{lem:first-symmetry} in the case that $b = a + 1$. The arrangement of the pairs follows the splitting order $\prec$, i.e.\ a pair further left has a lower index in $\prec$. The unique new pairs in $\pi^{p-h+1}$ (in pink and brown) are in symmetric position to $P_a$ and $P_b$ (in blue) and united to a class of size $4$ (in brown) in $\pi^{p-h}$.} \label{fig:inductionstep}
\end{figure}

For the purpose of better readability, let $\ell \coloneqq \ell_{4 \rightarrow 2}$ in the remainder of this section. The above allows us to describe the partition $\pi^{p-\ell-1}$. With the following lemma, we describe how the classes in $\pi^{p-\ell-1}$ are connected to each other. By pattern matching with the classes in Lemma \ref{lem:first-symmetry}, the reader will be able to verify that it states the following. The classes of size $4$ which split into pairs that are in $\pi^p$ are connected non-trivially if and only if they split in consecutive iterations.

\begin{lemma}\label{lem:first-symmetry-neighbours}
    For every $h \in [0,\ell-1]$, the following holds:    \begin{equation}\label{eq:consecutive-fours}
        \deg_{P_{a-h}\cup P_{b+h}}(P_{a-h-1}\cup P_{b+h+1}) = \deg_{P_{a-h-1}\cup P_{b+h+1}}(P_{a-h}\cup P_{b+h}) \in \{1,3\}.
    \end{equation}
    Moreover, for every $h \in [0,\ell-1]$ and every $C \in \pi^{p-h-2}\backslash \{C_{p-h}, C_{p-h-1}, C_{p-h-2}\}$, the following hold:    \begin{equation}\label{eq:deg_fours}
        \deg_{C}(P_{a-h}\cup P_{b+h}) \in \{0,|C|\} \ \text{ and } \deg_{P_{a-h}\cup P_{b+h}}(C) \in \{0,4\}.
    \end{equation}
\end{lemma}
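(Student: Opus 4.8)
The plan is to feed the explicit description of the partitions from Lemma~\ref{lem:first-symmetry} into the balance identities of Lemma~\ref{rmk:work-backwards}. Write $C_{p-h-1}=P_{a-h}\cup P_{b+h}$ and $C_{p-h-2}=P_{a-h-1}\cup P_{b+h+1}$; by the proof of Lemma~\ref{lem:first-symmetry} these are consecutive split classes with $\{A_{p-h},B_{p-h}\}=\{P_{a-h},P_{b+h}\}$ and $\{A_{p-h-1},B_{p-h-1}\}=\{P_{a-h-1},P_{b+h+1}\}$, and both are classes of $\pi^{p-h-2}$. I will use throughout that $a\ge 3$ (since $P_2\notin\{P_a,P_b\}$) and $h\le\ell-1\le a-2$, which keeps every index in range and guarantees that the exceptional case $i=1,\ j\in\{a,b\}$ of Lemma~\ref{lem:non-successive-pairs} is never met for the pairs I compare.

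For \eqref{eq:consecutive-fours} I would instantiate \eqref{eq:equal-diff} at $i=p-h-1$. Evaluating each summand with Lemma~\ref{lem:successive-matching} (consecutive pairs contribute $1$) and Lemma~\ref{lem:non-successive-pairs} (non-consecutive pairs contribute a value in $\{0,2\}$), the two outer expressions read $1+y$ and $x+1$ with $x=\deg_{P_{a-h-1}}(P_{b+h})$ and $y=\deg_{P_{b+h+1}}(P_{a-h})$ in $\{0,2\}$. Thus $\deg_{C_{p-h-2}}(C_{p-h-1})=1+x$, which is odd and hence lies in $\{1,3\}$. The equality $x=y$ forced by \eqref{eq:equal-diff} is exactly what makes the two halves $P_{a-h-1},P_{b+h+1}$ of $C_{p-h-2}$ acquire the common degree $1+x$ into $C_{p-h-1}$, so $C_{p-h-2}$ is balanced with respect to $C_{p-h-1}$ and $\deg_{C_{p-h-1}}(C_{p-h-2})=1+x=\deg_{C_{p-h-2}}(C_{p-h-1})$. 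Working through \eqref{eq:equal-diff} directly, rather than identifying the classes $A_{p-h-2},B_{p-h-2}$, is what lets me avoid any boundary fuss at $h=\ell-1$.

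For \eqref{eq:deg_fours} I would start from Lemma~\ref{rmk:work-backwards}: in $\pi^{p-h-2}$ the only unbalanced class is $C_{p-h-2}$, so $C_{p-h-1}$ is balanced with respect to every class, and every $C\ne C_{p-h-2}$ is balanced with respect to $C_{p-h-1}$. If $C$ is a singleton or a pair non-consecutive to both $P_{a-h}$ and $P_{b+h}$, then balance of $C_{p-h-1}$ with respect to $C$ already equalises the connections of $P_{a-h}$ and $P_{b+h}$ into $C$; since each such connection is trivial (Lemma~\ref{lem:non-successive-pairs}), the two stated degrees collapse to $\{0,|C|\}$ and $\{0,4\}$. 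The stubborn case is $C=P_{a-i}\cup P_{b+i}$ of size $4$, which by Lemma~\ref{lem:first-symmetry} forces $i\in[0,h-2]$. Writing $\alpha,\beta,\gamma,\delta\in\{0,2\}$ for the four trivial connections between $\{P_{a-h},P_{b+h}\}$ and $\{P_{a-i},P_{b+i}\}$, balance \emph{within} $\pi^{p-h-2}$ yields only $\alpha+\beta=\gamma+\delta$ and $\alpha+\gamma=\beta+\delta$, i.e.\ $\alpha=\delta$ and $\beta=\gamma$, which still allows the forbidden value $2$.

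Breaking this residual symmetry is the main obstacle, and the key idea is to read balance off a \emph{second}, finer partition. In $\pi^{p-i-1}$ the class $C=C_{p-i-1}$ is the unique unbalanced class, while $P_{a-h}$ and $P_{b+h}$ (as $h\ge i+2$) persist as separate pairs distinct from the two classes $P_{a-i-1},P_{b+i+1}$ that induce the split of $C$; hence $C$ is balanced with respect to each of $P_{a-h}$ and $P_{b+h}$, giving $\alpha=\gamma$ and $\beta=\delta$. Combined with the previous relations this forces $\alpha=\beta=\gamma=\delta$, so the connection between the two size-$4$ classes is all-or-nothing. Finally I would observe that the excluded classes $C_{p-h}$ and $C_{p-h-2}$ are precisely those whose constituent pairs are consecutive to $P_{a-h}$ or $P_{b+h}$, and are therefore attached by matchings; the single configuration that has to be inspected on its own is $b=a+2$ with $h=0$, where the middle pair $P_{a+1}$ is matched to both $P_a$ and $P_b$ at once and so does not fall under the generic argument.
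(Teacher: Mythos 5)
Your handling of \eqref{eq:consecutive-fours} matches the paper's: the paper defers to the proof of Lemma \ref{lem:first-symmetry}, which instantiates Equation \eqref{eq:equal-diff} and evaluates the summands with Lemmas \ref{lem:successive-matching} and \ref{lem:non-successive-pairs}, exactly as you do. For \eqref{eq:deg_fours} your route diverges in one substantive way, and the divergence is to your credit. For a class $C$ of size $4$ the paper merely asserts that ``the above argument'' (trivial attachment of the constituent pairs plus balancedness of $C_{p-h-1}$ wrt $C$ inside $\pi^{p-h-2}$) yields $\deg_C(P_{a-h})=\deg_C(P_{b+h})\in\{0,4\}$; as you observe, those relations alone still permit the value $2$ via two ``crossing'' complete-bipartite attachments. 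Your extra step --- reading balancedness of $C=C_{p-i-1}$ with respect to the still-separate pairs $P_{a-h}$ and $P_{b+h}$ off the finer partition $\pi^{p-i-1}$, legitimate because $C$ is unbalanced there only wrt $P_{a-i-1}$ and $P_{b+i+1}$ --- is exactly what forces all four pair-to-pair attachments to coincide, and it closes a real gap in the paper's one-line justification.

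The one place where your proposal stops short is the configuration you flag at the end: $b=a+2$, $h=0$, $C=P_{a+1}$. You are right that it escapes the generic argument, but it cannot simply be left ``to be inspected'': by Lemma \ref{lem:successive-matching} every vertex of $P_a\cup P_b$ has exactly one neighbour in $P_{a+1}$, so $\deg_{P_{a+1}}(P_a\cup P_b)=1\notin\{0,2\}$ and $\deg_{P_a\cup P_b}(P_{a+1})=2\notin\{0,4\}$; the stated conclusion actually fails for this $C$. The paper's own proof slides over this by claiming that $P_{a-h+1}$ and $P_{b+h-1}$ are never classes of $\pi^{p-h-2}$, which is false precisely when $b=a+2$ and $h=0$, so this is a defect of the lemma's statement rather than of your argument (later applications always add an explicit non-consecutiveness or size-$4$ hypothesis). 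A complete write-up must either add $P_{a+1}$ to the list of excluded classes in \eqref{eq:deg_fours} or restrict that equation to classes not consecutive to $P_{a-h}$ or $P_{b+h}$; with that caveat recorded, the rest of your argument goes through.
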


\begin{proof}
Equation \ref{eq:consecutive-fours} follows directly from the proof of Lemma \ref{lem:first-symmetry}. 
Statement \ref{eq:deg_fours} follows from the fact that $C_{p-h-1} = P_{a-h} \cup P_{b+h}$ is in $\pi^{p-h-1}$ and thus balanced wrt all classes in $\pi^{p-h-2}$. If $|C|=1$, this is sufficient to prove $\deg_{C}(P_{a-h}\cup P_{b+h}) \in \{0,1\}$ and the second, equivalent statement. If $C$ is a pair, then because $P_{a-h-1},P_{a-h+1},P_{b+h-1},P_{b+h+1}$ are not in $\pi^{p-h-2}$, Lemma \ref{lem:non-successive-pairs} implies that $\deg_{C}(P_{a-h})=\deg_{C}(P_{b+h})\in \{0,2\}$, giving the required result. Finally, if $|C| = 4$, then $C=P_{a-h'}\cup P_{b+h'}$ for some $h' \in [0,h-2]$, and the above argument implies $\deg_{C}(P_{a-h})=\deg_{C}(P_{b+h})\in \{0,4\}$, giving the desired result. 
\end{proof}

With the insights so far, we have collected information about the classes and their relations from the point on when all remaining classes of size $4$ split consecutively into pairs until only pairs and singletons are left, with the pairs then consecutively splitting into singletons until a discrete partition is reached. We intend to carry this one step further and to analyse the split(s) before this moment. We are going to prove that $C_{p-\ell-2}$ is either a union of three pairs or a union of a certain pair and a singleton $S \in \pi^p_\mathcal{S}$. 
Here, we need to perform a case distinction, and we introduce an auxiliary variable $t$ to deal with the cases in parallel. We let $t \coloneqq 0$ in the first case, and $t \coloneqq 1$ in the second. 
Recall that $p$ is the index of the first iteration in which all classes have size at most $2$, and $p-\ell$ is the index of the first iteration in a cascade of splits of classes of size $4$ into pairs (until iteration $p$). Depending on the case whether $t = 0$ or $t = 1$, we let $q(t) \coloneqq p - \ell - 1 - t$, which will be the first iteration in which there are classes of size at most $4$ (\textbf{q}uadruples). Define $\pi^{q(t)}_\mathcal{S} \coloneqq \{M \in \pi^{q(t)} \mid |M| = 1\}$. 
We also define  $c\coloneqq a-\ell$,  and $d\coloneqq c-t-1$ to simplify notation, which will play a similar role in future proofs that $a$ and $b$ did in proofs up to this point.

\begin{lemma}\label{lem:first-symmetry-end}
$C_{p-\ell-2}$ is either $P_{c-1}\cup P_{c}\cup P_{n_\mathcal{P}}$ or $P_{c-1}\cup S$ for a singleton $S \in \pi^p_\mathcal{S}$. In the first case, set $t \coloneqq 0$, in the second $t \coloneqq 1$. Then,  
\begin{multline*}
    \pi^{q(t)-1} =
    \pi^{q(t)}_{\mathcal{S}} \cup
    \{P_{a-i}\cup P_{b+i} \mid i \in [0,\ell-1]\} \cup \{P_{d} \cup P_{c} \cup P_{n_\mathcal{P}}\}\\\cup 
    \left\{P_{i} \mid i \in [c-t-2]\cup [a+1,b-1] \right\} \cup M_t,
\end{multline*}
and also $M_0 = \emptyset$ and $M_1 = \{P_{c-1}\cup S\}$.
\end{lemma}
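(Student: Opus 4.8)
The plan is to continue the reverse-engineering of Lemma \ref{lem:first-symmetry}. By the case $h=\ell$ of that lemma we know $\pi^{p-\ell-1}$ explicitly, and its unique unbalanced class is $C_{p-\ell-1}=P_c\cup P_{b+\ell}$ (with $c=a-\ell$), which splits in iteration $p-\ell$. Applying Lemma \ref{rmk:work-backwards} at index $p-\ell-1$, this class is unbalanced wrt both $A_{p-\ell-1}$ and $B_{p-\ell-1}$, the two classes into which $C_{p-\ell-2}$ splits in iteration $p-\ell-1$. Hence the whole task reduces to identifying $C_{p-\ell-2}$, i.e.\ determining which classes of $\pi^{p-\ell-1}$ merge to form it, in such a way that undoing the merge makes $P_c\cup P_{b+\ell}$ balanced and leaves no other unbalanced class.

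First I would pin down the adjacencies of $P_c\cup P_{b+\ell}$ in $\pi^{p-\ell-1}$. By Lemmas \ref{lem:successive-matching}, \ref{lem:non-successive-pairs} and \ref{lem:first-symmetry-neighbours}, the only classes it can be joined to non-trivially are the next inner quadruple $P_{c+1}\cup P_{b+\ell-1}$ (Equation \ref{eq:consecutive-fours}), the pair $P_{c-1}$ matched to $P_c$, the pair $P_{b+\ell+1}$ matched to $P_{b+\ell}$ if it exists, and possibly singletons or non-consecutive pairs attached completely or emptily. Since $P_{c+1}\cup P_{b+\ell-1}=C_{p-\ell}$ splits strictly later (iteration $p-\ell+1$), it cannot be $C_{p-\ell-2}$, so the split that unbalances $P_c\cup P_{b+\ell}$ must run through a boundary class. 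A counting argument in the spirit of Lemma \ref{lem:bdista}, together with the degree identity in Equation \ref{eq:equal-diff}, then identifies which extreme of $\prec$ the cascade has reached: in the configuration described by the lemma this is the top, so that $P_{b+\ell}=P_{n_\mathcal{P}}$ and the matched pair $P_{c-1}$ exists (equivalently $c\geq 2$). This is where the cyclic ``wrap-around'' of the backbone, namely the last pair $P_{n_\mathcal{P}}$ attaching near the low end $P_c,P_{c-1}$, must be extracted from the balance conditions rather than assumed.

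With $P_{b+\ell}=P_{n_\mathcal{P}}$ fixed, I would split into the two cases according to how $P_c\cup P_{n_\mathcal{P}}$ acquires its imbalance. In the first case ($t=0$) the quadruple is itself newly created, as the size-$4$ half of a size-$6$ class splitting off the matched pair $P_{c-1}$; here $C_{p-\ell-2}=P_{c-1}\cup P_c\cup P_{n_\mathcal{P}}$ splits into $P_{c-1}$ and $P_c\cup P_{n_\mathcal{P}}$, and the latter is unbalanced wrt $P_{c-1}$ because $P_c$ is matched to $P_{c-1}$ (degree $1$) while $P_{n_\mathcal{P}}$ is joined to it emptily or completely (degree $0$ or $2$). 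In the second case ($t=1$) the quadruple $P_c\cup P_{n_\mathcal{P}}$ already exists in $\pi^{p-\ell-2}$ and is balanced there only because an adjacent singleton $S$ compensates the degree of $P_{n_\mathcal{P}}$; the unbalanced class is instead the triple $C_{p-\ell-2}=P_{c-1}\cup S$, and its split into $P_{c-1}$ and $S$ destroys that compensation, unbalancing $P_c\cup P_{n_\mathcal{P}}$ in the next iteration. Distinguishing these cleanly, and verifying in the $t=1$ case that $S$ really does restore balance one step earlier, is the main technical obstacle; I expect to handle it by tracking $\deg_{P_{c-1}\cup S}$ of the four vertices of $P_c\cup P_{n_\mathcal{P}}$ and invoking the biregularity of the stable subgraphs.

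Finally, to read off the displayed description of $\pi^{q(t)-1}$, I would undo the identified split(s). For $t=0$ this replaces, in $\pi^{p-\ell-1}=\pi^{q(0)}$, the classes $P_{c-1}$ and $P_c\cup P_{n_\mathcal{P}}$ by their union, removing the $i=\ell$ term of the quadruple cascade and the pair $P_{c-1}$ and leaving the stated partition with $d=c-1$. For $t=1$ one takes a further step back: after undoing $P_{c-1}\cup S$ one reaches $\pi^{p-\ell-2}$, whose unbalanced class is the size-$6$ class $P_{c-2}\cup P_c\cup P_{n_\mathcal{P}}$, and undoing that split yields $\pi^{p-\ell-3}=\pi^{q(1)-1}$ with $d=c-2$ and the triple $P_{c-1}\cup S$ recorded in $M_1$. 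Reading off the index sets then gives exactly the claimed formula, with $\pi^{q(t)}_{\mathcal{S}}$ correctly excluding $S$ in the case $t=1$, where $S$ is still bound inside the triple $P_{c-1}\cup S$ and hence not yet a singleton.
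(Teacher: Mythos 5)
Your plan follows the paper's proof essentially step for step: it isolates the same two candidates for $C_{p-\ell-2}$ via the same balance/degree-difference arguments (Lemma \ref{rmk:work-backwards} with Equation \ref{eq:equal-diff}, together with Lemmas \ref{lem:successive-matching}, \ref{lem:non-successive-pairs} and \ref{lem:first-symmetry-neighbours}), including the preliminary step of showing that the cascade ends at $P_{n_\mathcal{P}}$ while $P_{c-1}$ still exists, and the extra backward step in the $t=1$ case to reach $P_d \cup P_c \cup P_{n_\mathcal{P}}$. One small correction: the inner quadruple $P_{c+1}\cup P_{b+\ell-1}$ is excluded from $\{A_{p-\ell-1},B_{p-\ell-1}\}$ because $C_{p-\ell-1}$ is \emph{balanced} wrt it by Equation \ref{eq:consecutive-fours}, not because it ``splits strictly later'' --- a class may well be created in one iteration and split only much later.
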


\begin{proof}
     We first prove by contradiction that $P_{c-1}$ exists. Assume that  $\ell = a-1$, i.e.\ $P_{c} = P_1$. Since $P_2 \prec P_a$ , $\ell$ must be greater than $1$, allowing us to apply Lemma \ref{lem:first-symmetry-neighbours}, along with Lemma \ref{lem:P_1-P_aP_b}, to get $\deg_{P_a\cup P_b}(P_{c}) - \deg_{P_a\cup P_b}(P_{b+\ell}) \in \{-2,2\}$. If $P_{b+\ell} \neq P_{n_\mathcal{P}}$, then we have $\deg_{P_{b+\ell+1}}(P_{c}) - \deg_{P_{b+\ell+1}}(P_{b+\ell}) \in \{-1,1\}$, by Lemmas \ref{lem:successive-matching} and \ref{lem:non-successive-pairs}. This contradicts Lemma \ref{rmk:work-backwards}. Assume then that $P_{b+\ell} = P_{n_\mathcal{P}}$, implying there are no pairs in $\pi^{p-\ell-1}$. By Lemma \ref{rmk:work-backwards}, there must be a class $C$ which satisfies $\deg_{C}(P_{a-\ell}) - \deg_{C}(P_{b+\ell}) \in \{-2,2\}$. By Lemma \ref{lem:first-symmetry-neighbours}, there are no classes of size $4$ that can satisfy this. Thus, we must have $P_1 \prec P_{a-\ell}$ and $\ell = n_\mathcal{P}-b$. 
    
 Then, $C_{p-\ell-1}$ is unbalanced by $P_{c-1}$, by Lemmas \ref{lem:non-successive-pairs} and \ref{lem:successive-matching}. Note that we have $\deg_{P_{c-1}}(P_{c})-\deg_{P_{c-1}}(P_{n_\mathcal{P}}) \in \{-1,1\}$. Next, we look for a class $B_{p-\ell-1}$ with $\deg_{B_{p-\ell-1}}(P_{c}) - \deg_{B_{p-\ell-1}}(P_{b+\ell}) \in \{-1,1\}$. By Lemma \ref{lem:first-symmetry-neighbours}, no class $P_{a-h} \cup P_{b+h} \in \pi^{p-\ell-1}$ with $h<\ell$ can satisfy this requirement. By Lemma \ref{lem:non-successive-pairs}, no pair other than $P_{c-1}$ can satisfy this equation. This leaves only $\pi^p_\mathcal{S}\cup \{P_{c}\cup P_{n_\mathcal{P}}\}$, proving the first statement of Lemma \ref{lem:first-symmetry-end}. This immediately gives us the desired result for $\pi^{q(t)-1}$ for $t=0$. 
 
 For $t=1$, the above yields $C_{q(t)}= P_{a-\ell-1} \cup S$, from which we can find $\pi^{q(t)}$, but we must still prove that $C_{q(t)-1}$ is $P_{d} \cup P_{c} \cup P_{n_\mathcal{P}}$, in order to get the desired expression for $\pi^{q(t)-1}$. First, notice that $\deg_{P_{c}\cup P_{n_\mathcal{P}}}(S) = 2$ and $\deg_{P_{c} \cup P_{n_\mathcal{P}}}(P_{c-1}) \in \{1,3\}$. Thus, without loss of generality, $A_{q(t)} = P_{c}\cup P_{n_\mathcal{P}}$ and $B_{q(t)}$ satisfies $\deg_{B_{q(t)}}(S)-\deg_{B_{q(t)}}(P_{c-1}) \in \{-1,1\}$. If $P_{c-1}=P_1$, then $P_{c-1}\cup S$ is unbalanced wrt $P_a \cup P_b$, but $P_a \cup P_b$ does not satisfy the above equation, so $P_1 \prec P_{c-1}$. Thus, $P_{c-2}$ is well-defined and, by Lemmas \ref{lem:successive-matching} and \ref{lem:first-symmetry-neighbours}, $\deg_{P_{c-2}}(P_{c-1})=1$ and $\deg_{P_{c-2}}(S) \in \{0,2\}$. This satisfies the requirement for $B_{q(t)-1}$, so $\pi^{q(t)} = \{P_{d} \cup P_{c} \cup P_{n_\mathcal{P}}\} \cup (\pi^{p-\ell-1}\backslash \{P_d, P_c \cup P_{n_\mathcal{P}}\})$. 
 \end{proof}

In the case that $t=1$, we will let $S_{q(t)}$ denote the singleton $S$ described in Lemma \ref{lem:first-symmetry-end}.  Notice that $\pi^{q(t)}_\mathcal{S} = \pi^{p}_\mathcal{S}$ if $t=0$, and $\pi^{q(t)}_\mathcal{S} = \pi^{p}_\mathcal{S}\backslash\{S_{q(t)}\}$ if $t=1$. Also, for either value of $t$, let $M_t$ be defined as in the lemma.

Having defined $C_{q(t)}$, where $t=1$, we now make some observations about its neighbourhood.

\begin{lemma}\label{lem:end-neighbours}
    Suppose that $t=1$, i.e.\ that $C_{p-\ell-2} = S_{q(t)} \cup P_{c-1}$. Then 
    \[\deg_{S_{q(t)} \cup P_{c-1}}(P_{d}\cup P_c \cup P_{n_\mathcal{P}}) \in \{1,2\}\] and 
    \[\deg_{P_{d}\cup P_c \cup P_{n_\mathcal{P}}}({S_{q(t)} \cup P_{c-1}}) = 2\cdot \deg_{S_{q(t)} \cup P_{c-1}}(P_{d}\cup P_c \cup P_{n_\mathcal{P}}) \in \{2,4\}.\]
    Furthermore, 
    \[\deg_{C}(S_{q(t)} \cup {P_{c-1}}) \in \{0,|C|\} \quad \text{and, equivalently,} \quad 
    \deg_{S_{q(t)} \cup P_{c-1}}(C)\in\{0,3\}\] 
    for every $C \in \pi^{q(t)-\ell-2}\backslash \{S_{q(t)} \cup P_{c-1}, P_d \cup P_c \cup P_{n_\mathcal{P}}\}$.
\end{lemma}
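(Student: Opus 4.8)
The statement is a direct analogue of Lemma~\ref{lem:first-symmetry-neighbours}, but for the final, irregular split in the case $t=1$, where $C_{p-\ell-2} = S_{q(t)} \cup P_{c-1}$ is a class of size $3$ rather than a class of size $4$. The guiding principle is the same throughout: a class $C_{p-\ell-2}$ that sits in $\pi^{q(t)}$ is \emph{balanced} with respect to every class in the preceding partition $\pi^{q(t)-1}$ (by Lemma~\ref{rmk:work-backwards}, since it is not the unique unbalanced class until it splits later). I would therefore read off the balance constraints this imposes on $\deg_C(S_{q(t)} \cup P_{c-1})$ and then translate each into the ``equivalent'' statement about $\deg_{S_{q(t)} \cup P_{c-1}}(C)$ using biregularity.

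\textbf{First claim (the neighbour class $P_d \cup P_c \cup P_{n_\mathcal{P}}$).} From Lemma~\ref{lem:first-symmetry-end} we know $C_{q(t)-1} = P_d \cup P_c \cup P_{n_\mathcal{P}}$ splits off as a direct consequence of $S_{q(t)} \cup P_{c-1}$ splitting, so these two classes are mutually unbalanced. I would compute the connection explicitly from the structural lemmas already established: by Lemma~\ref{lem:successive-matching}, $\deg_{P_{c-1}}(P_c) = 1$ (consecutive pairs), and by Lemma~\ref{lem:non-successive-pairs}, $\deg_{P_{c-1}}(P_d)$ and $\deg_{P_{c-1}}(P_{n_\mathcal{P}})$ each lie in $\{0,2\}$, while the singleton $S_{q(t)}$ contributes $\deg_{S_{q(t)}}(P_d \cup P_c \cup P_{n_\mathcal{P}}) \in \{0,1\}$ into each constituent. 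The key point is that the two vertices of $P_{c-1}$ must have the \emph{same} total degree into $P_d \cup P_c \cup P_{n_\mathcal{P}}$ (balance of $P_{c-1}$, which splits later than this union), and combining the matching edge to $P_c$ with the complete/empty patterns to $P_d$ and $P_{n_\mathcal{P}}$ pins $\deg_{S_{q(t)} \cup P_{c-1}}(P_d \cup P_c \cup P_{n_\mathcal{P}})$ to $\{1,2\}$. The factor-of-two relation $\deg_{P_d \cup P_c \cup P_{n_\mathcal{P}}}(S_{q(t)} \cup P_{c-1}) = 2 \cdot \deg_{S_{q(t)} \cup P_{c-1}}(P_d \cup P_c \cup P_{n_\mathcal{P}})$ I would obtain by double-counting edges across $G[S_{q(t)} \cup P_{c-1}, P_d \cup P_c \cup P_{n_\mathcal{P}}]$: the left class has $3$ vertices and the right class has $6$, so equating the total edge count from both sides gives exactly the factor $6/3 = 2$ once one checks biregularity holds (each side having a well-defined degree, which follows because both classes lie in the stable-enough partition $\pi^{q(t)}$). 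This also forces the range $\{2,4\}$ for the reverse degree.

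\textbf{Second claim (all other classes $C$).} For $C \in \pi^{q(t)-\ell-2} \setminus \{S_{q(t)} \cup P_{c-1},\, P_d \cup P_c \cup P_{n_\mathcal{P}}\}$, the class $S_{q(t)} \cup P_{c-1}$ is balanced with respect to $C$, so every vertex of $S_{q(t)} \cup P_{c-1}$ has the same degree $\deg_C(S_{q(t)} \cup P_{c-1})$ into $C$; since the union has $3$ vertices and is biregular against $C$, counting edges gives $\deg_{S_{q(t)} \cup P_{c-1}}(C) = 3 \cdot \deg_C(S_{q(t)} \cup P_{c-1}) / |C|$. It then remains to argue $\deg_C(S_{q(t)} \cup P_{c-1}) \in \{0, |C|\}$, i.e.\ that the connection is trivial (complete or empty). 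I would split on $|C| \in \{1,2,4\}$ as in Lemma~\ref{lem:first-symmetry-neighbours}: singletons are automatic; for pairs I would invoke Lemma~\ref{lem:non-successive-pairs}, using that the pairs adjacent to $P_{c-1}$ in $\prec$ (namely $P_{c-2}$ and $P_c$) are not among these ``other'' classes $C$, so $C$ is non-consecutive to $P_{c-1}$ and hence connected trivially; for size-$4$ classes $C = P_{a-h'} \cup P_{b+h'}$ the same non-consecutivity argument via Lemma~\ref{lem:non-successive-pairs} applies to each constituent pair, giving $\deg_C(P_{c-1}) \in \{0,2\}$ uniformly, which combines with the singleton contribution to the claimed $\{0,|C|\}$.

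\textbf{Main obstacle.} I expect the genuine work to be in the first claim, specifically controlling the singleton $S_{q(t)}$'s contribution and verifying that the matching-plus-bipartite edge pattern from $P_{c-1}$ to the three pairs $P_d, P_c, P_{n_\mathcal{P}}$ is consistent with a \emph{single} well-defined degree value in $\{1,2\}$ rather than collapsing to an impossible configuration. The subtlety is that $S_{q(t)}$ and $P_{c-1}$ play asymmetric roles (a singleton and a pair fused into a size-$3$ class), so unlike the clean size-$4$ symmetry of Lemma~\ref{lem:first-symmetry-neighbours}, I must check that the per-vertex degrees genuinely agree across all three vertices of $S_{q(t)} \cup P_{c-1}$ before the double-counting factor of $2$ is justified; getting that consistency is where the earlier lemmas (especially the complete-bipartite structure from Lemma~\ref{lem:P_1-P_aP_b} propagated through the cascade) must be marshalled carefully.
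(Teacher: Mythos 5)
Your overall strategy matches the paper's: use balance of $S_{q(t)}\cup P_{c-1}$ against the classes of the preceding partition, trivial attachment of the two constituents, and double counting across the biregular bipartite graph to obtain the factor of $2$; your treatment of the second claim is essentially the paper's argument. However, your computation for the first claim contains a concrete error. Since $t=1$, we have $d = c-t-1 = c-2$, so $P_d$ is the \emph{immediate predecessor} of $P_{c-1}$ in $\prec$. Lemma \ref{lem:successive-matching} therefore gives $\deg_{P_d}(P_{c-1}) = 1$, not a value in $\{0,2\}$ as you obtain by (incorrectly) applying Lemma \ref{lem:non-successive-pairs} to this pair. With your inputs, the total degree of each vertex of $P_{c-1}$ into $P_d\cup P_c\cup P_{n_\mathcal{P}}$ would lie in $1+\{0,2\}+\{0,2\}\subseteq\{1,3,5\}$, which is odd and hence incompatible with the very factor-of-two relation you derive by double counting (writing $m$ for the per-vertex degree of the size-$3$ class into the size-$6$ class and $m'$ for the reverse, $3m=6m'$ forces $m$ even). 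So your argument cannot pin the value to $\{1,2\}$; taken literally it would show the configuration is impossible.

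You are also missing the second key numerical input, namely $\deg_{P_c\cup P_{n_\mathcal{P}}}(S_{q(t)}) = 2$, which the paper takes from the proof of Lemma \ref{lem:first-symmetry-end}. This is stronger than ``$0$ or $1$ into each constituent'': because $\{S_{q(t)},P_{c-1}\}$ are exactly the two classes wrt which $C_{p-\ell-1}=P_c\cup P_{n_\mathcal{P}}$ is unbalanced, while each of $P_c$ and $P_{n_\mathcal{P}}$ separately is balanced wrt the singleton $S_{q(t)}$, one of $\deg_{S_{q(t)}}(P_c)$, $\deg_{S_{q(t)}}(P_{n_\mathcal{P}})$ equals $0$ and the other equals $1$; hence $S_{q(t)}$ has exactly two neighbours in $P_c\cup P_{n_\mathcal{P}}$. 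With the corrected inputs, $P_{c-1}$ has degree $1+1+\{0,2\}=\{2,4\}$ and $S_{q(t)}$ has degree $2+\{0,2\}=\{2,4\}$ into the size-$6$ class; balance of $S_{q(t)}\cup P_{c-1}$ wrt $P_d\cup P_c\cup P_{n_\mathcal{P}}$ equates the two, and your double count then yields $\{1,2\}$ as claimed. The gap is therefore local and repairable, but as written the first claim does not go through.
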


\begin{proof}
     We get the values for $\deg_{P_{d}\cup P_c \cup P_{n_\mathcal{P}}}({S_{q(t)} \cup \mspace{-2mu} P_{c-1}})$ directly from the ones for 
     $\deg_{P_{c}\cup P_{n_\mathcal{P}}}(S_{q(t)})$, $ \deg_{P_{c} \cup P_{n_\mathcal{P}}}(P_{c-1}),$ $\deg_{P_{d}}(S_{q(t)}),$ and $\deg_{P_{d}}(P_{c-1})$.  Since ${P_{d}\cup P_c \cup P_{n_\mathcal{P}}}$ has double the vertices as ${S_{q(t)} \cup P_{c-1}}$, is incident to the same number of edges, and both are balanced with respect to the other, we obtain $\deg_{S_{q(t)} \cup P_{a-\ell-1}}(P_{d}\cup P_c \cup P_{n_\mathcal{P}}) \in \{1,2\}$. 
     
     The second statement holds by the fact that $S_{q(t)} \cup P_{c-1}$ is balanced wrt all classes $C \in \pi^{p-\ell-1}\setminus\{P_d,P_c \cup P_{n_\mathcal{P}}\}$. Separately, $S_{q(t)}$ and $P_{c-1}$ are trivially attached to the classes in $\pi^{p-\ell-1} \setminus \{P_d,P_c \cup P_{n_\mathcal{P}}\}$. This holds by Lemma \ref{lem:non-successive-pairs} for singletons and pairs,  and by Lemma \ref{lem:first-symmetry-neighbours} for classes of four vertices. Since $S_{q(t)} \cup P_{c-1}$ is balanced wrt $C$, it holds that if  $S_{q(t)}$ is adjacent to $C$, then $P_{c-1}$ is as well, and vice versa. Thus $S_{q(t)} \cup P_{c-1}$ is trivially attached to $C$.
 \end{proof}

By a similar argument to that used in Lemma \ref{lem:first-symmetry}, we can work backwards from partition $\pi^{q(t)}$. Recall that $d = a - \ell - t - 1$. We define $\ell' \coloneqq \min\{\ell,d-1\}$. This definition guarantees that $P_{d-h}$ and $P_{c+h} \cup P_{n_\mathcal{P}-h}$ are elements of every $\pi^{q(t)-h}$ for $h \in [\ell']$.

\begin{figure}[htpb]
\centering
\begin{subfigure}{0.32\linewidth}
    \centering
    \includegraphics{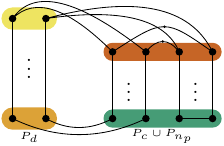}
    \caption{The iteration $p-\ell =q(t)$ when $t=0$}
    \label{fig:sym2-1}
\end{subfigure}
\begin{subfigure}{0.32\linewidth}
    \centering
    \includegraphics{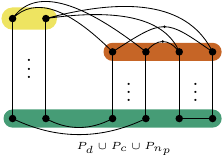}
    \caption{The iteration $p-\ell-1=q(t)-1$ when $t=0$}
    \label{fig:sym2-2}
\end{subfigure}
\begin{subfigure}{0.32\linewidth}
    \centering
    \includegraphics{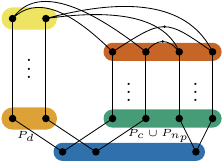}
    \caption{The iteration $p-\ell-1=q(t)$ when $t=1$}
    \label{fig:sym2-3}
\end{subfigure}
\begin{subfigure}{0.32\linewidth}
    \centering
    \includegraphics{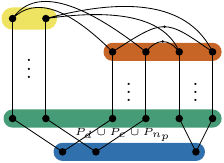}
    \caption{The iteration $p-\ell-2=q(t)$ when $t=1$}
    \label{fig:sym2-4}
\end{subfigure}
\begin{subfigure}{0.32\linewidth}
    \centering
    \includegraphics{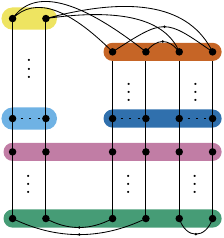}
    \caption{Induction Hypothesis}
    \label{fig:sym2-5}
\end{subfigure}
\begin{subfigure}{0.32\linewidth}
    \centering
    \includegraphics{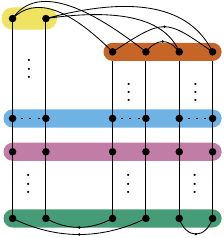}
    \caption{Induction Step}
    \label{fig:sym2-6}
\end{subfigure}
\caption{Diagrams depicting the base case and induction arguments for Lemma \ref{lem:second-symmetry}.}
\label{fig:sym2}
\end{figure}

\begin{lemma}\label{lem:second-symmetry}
For all $h \in [0,\ell']$,
\begin{multline*}
\pi^{q(t)-h-1} = \pi^{q(t)}_{\mathcal{S}} \cup \{P_{c+i} \cup P_{n_\mathcal{P}-i} \mid i \in [h+1,\ell]\} \cup \{P_{d-i}\cup P_{c+i} \cup P_{n_\mathcal{P}-i} \mid i \in [0,h]\} \\\cup 
	\{P_{i} \mid i \in [d-h-1]\cup [a+1,b-1]\} \cup M_t.
\end{multline*}
\end{lemma}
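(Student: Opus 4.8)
The plan is to prove the statement by induction on $h$, mirroring the reverse-engineering argument of Lemma \ref{lem:first-symmetry} but now tracking how a pair merges with a class of size $4$ to form a class of size $6$. The base case $h = 0$ is exactly the description of $\pi^{q(t)-1}$ established in Lemma \ref{lem:first-symmetry-end}: substituting $h = 0$ into the claimed formula recovers that partition, using that $d - 1 = c - t - 2$ so the pair range $[d-1] \cup [a+1,b-1]$ coincides with $[c-t-2] \cup [a+1,b-1]$, and that the single size-$6$ class $P_{d} \cup P_{c} \cup P_{n_\mathcal{P}}$ is the $i = 0$ term. The quantity $\ell' = \min\{\ell, d-1\}$ guarantees that throughout the induction the pair $P_{d-h}$ (of index at least $d - \ell' \ge 1$) and the class $P_{c+h} \cup P_{n_\mathcal{P}-h}$ (with indices in $[c,a]$ and $[b,n_\mathcal{P}]$ since $h \le \ell$) are well-defined elements of the relevant partitions.

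For the inductive step, fix $h \in [1,\ell']$ and assume the formula for all smaller values. For $h \ge 2$, comparing the formulas at levels $h-1$ and $h-2$ (which describe $\pi^{q(t)-h}$ and $\pi^{q(t)-h+1}$) shows that the unique class of $\pi^{q(t)-h}$ splitting in iteration $q(t)-h+1$ is the most recently formed size-$6$ class $C_{q(t)-h} = P_{d-h+1} \cup P_{c+h-1} \cup P_{n_\mathcal{P}-h+1}$; for $h = 1$ this is the class $P_{d} \cup P_{c} \cup P_{n_\mathcal{P}}$ supplied by Lemma \ref{lem:first-symmetry-end}. By Lemma \ref{rmk:work-backwards}, $C_{q(t)-h}$ is the unique unbalanced class of $\pi^{q(t)-h}$ and is unbalanced with respect to exactly two classes, which are $A_{q(t)-h}$ and $B_{q(t)-h}$. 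It therefore suffices to exhibit two classes with respect to which $C_{q(t)-h}$ is unbalanced and to check that these are $P_{c+h} \cup P_{n_\mathcal{P}-h}$ and $P_{d-h}$.

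The heart of the step is a parity argument on degrees into $C_{q(t)-h}$, analogous to the one concluding Lemma \ref{lem:first-symmetry}. For the candidate $A \coloneqq P_{c+h} \cup P_{n_\mathcal{P}-h}$, the sub-pairs $P_{c+h-1}$ and $P_{n_\mathcal{P}-h+1}$ of $C_{q(t)-h}$ are consecutive to $P_{c+h}$ and $P_{n_\mathcal{P}-h}$ respectively, so by Lemma \ref{lem:successive-matching} their vertices have exactly one neighbour across the relevant matching and hence an odd total ($1$ or $3$) number of neighbours in $A$, whereas $P_{d-h+1}$ is non-consecutive to both, so by Lemma \ref{lem:non-successive-pairs} its vertices have an even number ($0$, $2$, or $4$) of neighbours in $A$; thus $C_{q(t)-h}$ is unbalanced with respect to $A$. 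The symmetric computation for $B \coloneqq P_{d-h}$, which is consecutive only to $P_{d-h+1}$, shows the degree into $B$ is odd on $P_{d-h+1}$ and even on the other two sub-pairs, so $C_{q(t)-h}$ is unbalanced with respect to $B$ as well. By Lemma \ref{rmk:work-backwards} these are precisely $A_{q(t)-h}$ and $B_{q(t)-h}$, so $C_{q(t)-h-1} = P_{d-h} \cup P_{c+h} \cup P_{n_\mathcal{P}-h}$ and $\pi^{q(t)-h-1} = (\pi^{q(t)-h} \setminus \{A,B\}) \cup \{A \cup B\}$; removing the $i = h$ size-$4$ class and the pair $P_{d-h}$ and inserting the $i = h$ size-$6$ class yields exactly the level-$h$ formula. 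Equation \ref{eq:equal-diff} can be invoked, as in Lemma \ref{lem:first-symmetry}, to confirm the two halves of the merge are symmetric.

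The main obstacle is the boundary bookkeeping rather than the core idea: one must verify that the indices arising in the comparisons remain non-consecutive and inside $[1,n_\mathcal{P}]$, which follows from $b - a \in \{1,2\}$ (Lemma \ref{lem:bdista}), from $n_\mathcal{P} = b + \ell$ and $c = a - \ell$, and from $h \le \ell'$. The one genuinely delicate situation is when $P_1$ appears as $B = P_{d-h}$ (possible only when $\ell' = d-1$) while a sub-pair of $C_{q(t)-h}$ is $P_a$ or $P_b$, so that Lemma \ref{lem:non-successive-pairs} does not apply; here Lemma \ref{lem:P_1-P_aP_b} shows that such a sub-pair has exactly one neighbour in $P_1$, i.e.\ odd degree, which only shifts the parity count but still leaves vertices of both parities in $C_{q(t)-h}$, so the unbalancedness conclusion is unaffected. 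Keeping the reindexing $c = a - \ell$, $d = c - t - 1$ consistent across the two cases $t \in \{0,1\}$ is the most error-prone part of the write-up.
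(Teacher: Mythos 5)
Your proof is correct and takes essentially the same route as the paper's: the base case is read off from Lemma \ref{lem:first-symmetry-end}, and the inductive step identifies $C_{q(t)-h}$ as the newest size-$6$ class and shows it is unbalanced wrt exactly $P_{d-h}$ and $P_{c+h}\cup P_{n_\mathcal{P}-h}$ via the odd/even degree contrast from Lemmas \ref{lem:successive-matching} and \ref{lem:non-successive-pairs}, concluding with Lemma \ref{rmk:work-backwards} -- which is precisely the ``same argument as used in Lemma \ref{lem:first-symmetry}'' that the paper invokes. Your extra care about the $P_1$-versus-$\{P_a,P_b\}$ exception is harmless (that configuration in fact never occurs for $h\le\ell'$), and the rest of the bookkeeping matches the paper.
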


\begin{proof}
By using the relations between the variables $a$, $b$, $c$, $d$, the base case of $h=0$ is given by Lemma \ref{lem:first-symmetry-end}. Consider $h \in [\ell']$, with the above statement holding up to $h-1$. Then, by Lemmas \ref{lem:successive-matching}, \ref{lem:first-symmetry-neighbours}, and \ref{rmk:work-backwards}, and by the same argument as used in Lemma \ref{lem:first-symmetry}, we obtain $C_{q(t)-h-1} = P_{d-h} \cup P_{c+h} \cup P_{n_\mathcal{P}-h}$.
\end{proof}

\begin{corollary}\label{cor:second-symmetry-neighbours}
    For $h \in [0,\ell'-1]$,
    \begin{equation} \label{eq:consecutive-sixes}
    \deg_{C_{q(t)-h-2}}(C_{q(t)-h-1}) = \deg_{C_{q(t)-h-1}}(C_{q(t)-h-2}) \in \{1,5\}.
    \end{equation}
    Also,	
\begin{equation}\label{eq:selfdeg_sixes}
     \deg_{C_{q(t)-h-1}}(C_{q(t)-h-1}) \in \{0,1,4,5\}.\end{equation}
    For every $h \in [0,\ell'-1]$ and $C \in \pi^{q(t)-h-2}\backslash \{C_{q(t)-h-2}, C_{q(t)-h-1}, C_{q(t)-h}\}$, it holds that
	\begin{equation}\label{eq:deg_sixes}
    \deg_{C_{q(t)-h-1}}(C) \in \{0,6\} \text{ and } \deg_C(C_{q(t)-h-1}) \in \{0,|C|\}.\end{equation}   
\end{corollary}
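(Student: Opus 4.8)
The plan is to treat Corollary~\ref{cor:second-symmetry-neighbours} as the size-$6$ analogue of Lemma~\ref{lem:first-symmetry-neighbours}, exploiting the structure exposed by Lemma~\ref{lem:second-symmetry}: each size-$6$ class $C_{q(t)-h-1} = P_{d-h}\cup P_{c+h}\cup P_{n_\mathcal{P}-h}$ is the union of the pair $P_{d-h}$ and the size-$4$ class $F_h \coloneqq P_{c+h}\cup P_{n_\mathcal{P}-h}$, and $F_h$ is precisely one of the first-symmetry classes from Lemma~\ref{lem:first-symmetry}. I would therefore feed the already-established behaviour of size-$4$ classes, in particular their \emph{trivial attachment} (Equation~\ref{eq:deg_fours}) and the consecutive-degree bound (Equation~\ref{eq:consecutive-fours}), into the analysis of the size-$6$ classes rather than arguing everything from scratch.

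For Equation~\ref{eq:consecutive-sixes} I would first check that both degrees are defined. The class $C_{q(t)-h-1}$ lies in $\pi^{q(t)-h-2}$ and is distinct from the unique unbalanced class $C_{q(t)-h-2}$ there, hence is balanced wrt $C_{q(t)-h-2}$; conversely, by Lemma~\ref{rmk:work-backwards} the class $C_{q(t)-h-2}$ is unbalanced wrt exactly two classes of $\pi^{q(t)-h-2}$, and a short index check shows neither is $C_{q(t)-h-1}$, so $C_{q(t)-h-2}$ is balanced wrt $C_{q(t)-h-1}$ as well. Since both sets have size $6$, a double count of the edges between them forces the two degrees to be equal. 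To pin the common value to $\{1,5\}$, I would evaluate a vertex $v$ in the pair-part $P_{d-h}$: it contributes one matching edge to $P_{d-h-1}$ by Lemma~\ref{lem:successive-matching}, and, since $P_{d-h}$ is not a split-neighbour of the size-$4$ class $F_{h+1} = P_{c+h+1}\cup P_{n_\mathcal{P}-h-1}$ contained in $C_{q(t)-h-2}$, it has $0$ or $4$ neighbours in $F_{h+1}$ by Equation~\ref{eq:deg_fours}. Thus $v$ has $1$ or $5$ neighbours in $C_{q(t)-h-2}$, and balancedness propagates this value to the whole class.

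Equation~\ref{eq:selfdeg_sixes} I would derive directly from the three component pairs. For $h\in[0,\ell'-1]$ the indices $d-h$, $c+h$, $n_\mathcal{P}-h$ are pairwise non-consecutive (a short check using $c-d=t+1$, $\ell=n_\mathcal{P}-b$, and $h\le\ell'-1$), with the lone exception of the subcase $t=0$, $h=0$, where $P_d$ and $P_c$ are consecutive. Writing the regular self-degree as the within-pair contribution (in $\{0,1\}$) plus the two cross-pair contributions (each in $\{0,2\}$ by Lemma~\ref{lem:non-successive-pairs}), regularity forces the three pairwise complete-or-empty indicators to have equal pairwise sums; since those three pairwise sums cannot all be odd, the values $2$ and $3$ are excluded, leaving $\{0,1,4,5\}$. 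The exceptional subcase is settled by the same arithmetic with the single matching edge inserted. Equation~\ref{eq:deg_sixes} then mirrors the proof of Equation~\ref{eq:deg_fours}: $C_{q(t)-h-1}$ is balanced wrt every class of $\pi^{q(t)-h-2}$, and for $C\notin\{C_{q(t)-h-2},C_{q(t)-h-1},C_{q(t)-h}\}$ each component pair of $C_{q(t)-h-1}$ attaches to $C$ trivially, by Lemma~\ref{lem:non-successive-pairs} when $|C|\le 2$ and by Equation~\ref{eq:deg_fours} when $|C|\in\{4,6\}$, so balancedness forces an all-or-nothing connection, which yields $\{0,6\}$ and $\{0,|C|\}$ after a double count.

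I expect the main obstacle to be exactly the exclusion of the intermediate degree values that balancedness alone permits, namely $3$ in Equation~\ref{eq:consecutive-sixes} and $2,3$ in Equation~\ref{eq:selfdeg_sixes}: a naive count over the three component pairs admits these, and ruling them out is what forces the extra input, the size-$4$ trivial-attachment Equation~\ref{eq:deg_fours} in the first case and the parity-plus-regularity argument in the second. A secondary nuisance is the index bookkeeping required to certify that $P_{d-h}$ (and the various classes $C$) really are non-adjacent to $F_{h+1}$ in the sense needed to invoke Equation~\ref{eq:deg_fours}, together with the isolated exceptional subcase $t=0$, $h=0$, which must be verified separately throughout.
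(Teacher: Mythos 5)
Your proposal is correct and follows essentially the same route as the paper: both derive all three equations by decomposing the size-$6$ classes via Lemma \ref{lem:second-symmetry} into their component pairs and the first-symmetry size-$4$ class, and then combine the matching of Lemma \ref{lem:successive-matching}, the trivial attachments from Lemmas \ref{lem:non-successive-pairs} and \ref{lem:first-symmetry-neighbours}, and balancedness via Lemma \ref{rmk:work-backwards} and Equation \ref{eq:equal-diff}. The only cosmetic difference is in Equation \ref{eq:selfdeg_sixes}, where you split into three component pairs and argue by parity and case analysis, while the paper uses the pair-plus-quadruple decomposition with Equation \ref{eq:deg_fours} directly; both yield $\{0,1,4,5\}$.
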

\begin{proof}
     Using Equation \ref{eq:equal-diff}, we obtain \[\bot \neq \deg_{C_{q(t)-h-1}}(C_{q(t)-h-2}) \in \{1,5\}\] and hence Equation \ref{eq:consecutive-sixes}. 

    To see Equation \ref{eq:selfdeg_sixes}, recall that $C_{q(t)-h-1} \in \pi^{q(t)-h-1}$ is balanced wrt $C_{q(t)-h-1}\in \pi^{q(t)-h-2}$. Also $P_{d-h}$ and $P_{c+h} \cup P_{n_\mathcal{P}-h}$ are trivially attached, from Lemma \ref{lem:first-symmetry-neighbours}. There is also $\deg_{P_{d-h}}(P_{d-h}) \in \{0,1\}$. These ingredients together give $\deg_{C_{q(t)-h-1}}(P_{d-h}) \in \{0,1,4,5\}$ and therefore, by balancedness $\deg_{C_{q(t)-h-1}}(C_{q(t)-h-1}) \in \{0,1,4,5\}$.

Equation \ref{eq:deg_sixes} follows from the fact that $P_{a-h} \cup P_{b+h} = C_{q(t)-h-1} \in \pi^{q(t)-h-1}$, and it is thus balanced wrt all classes in $\pi^{q(t)-h-2}$. If $C$ is a singleton, this is sufficient to show $\deg_{C}(P_{a-h}\cup P_{b+h}) \in \{0,1\}$ and the second, equivalent statement. If $C$ is a pair, then by assumption $C \notin \{P_{a-h-1},P_{a-h+1},P_{b+h-1},P_{b+h+1}\}$. Hence, Lemma \ref{lem:non-successive-pairs} implies that $\deg_{C}(P_{a-h})=\deg_{C}(P_{b+h})\in \{0,2\}$, giving the required result. Finally, if $|C| = 4$, then $C=P_{a-h'}\cup P_{b+h'}$ for some $h' \in [0,h-2]$ and the above argument implies $\deg_{C}(P_{a-h})=\deg_{C}(P_{b+h})\in \{0,4\}$, giving the desired result. 
\end{proof}

Recall that by Lemma \ref{lem:bdista}, we know that $b \in \{a+1,a+2\}$. Depending on which of the two situations occurs, the following lemma establishes the possible ranges for $d$. We will use this insight, and a case distinction on the values of $d$, in Section \ref{sec:max4} to establish the existence and non-existence of long-refinement graphs and to classify them.

\begin{lemma}\label{lem:value-of-d}
If $b=a+1$, then $d \in [\ell,\ell+4]$, and if $b=a+2$, then $d \in [\ell,\ell+2]$.
\end{lemma}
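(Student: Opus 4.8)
The plan is to continue the reverse-engineering of the second cascade (Lemma~\ref{lem:second-symmetry}) by examining its \emph{terminal} class. Recall from Lemma~\ref{lem:first-symmetry-end} that $\ell = n_\mathcal{P}-b$ and $c = a-\ell \ge 2$, and that the cascade has depth $\ell' = \min\{\ell,d-1\}$; the asserted bound is equivalent to $\ell'\in\{\ell-1,\ell\}$ (this is the lower bound $d\ge\ell$) together with a cap on the number $d-\ell-1$ of bottom pairs $P_1,\dots,P_{d-\ell-1}$ that survive the cascade (this is the upper bound). The engine throughout is Lemma~\ref{rmk:work-backwards}: in each non-stable partition the unique unbalanced class is unbalanced with respect to exactly two classes, which then merge. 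I would set $C_\star := C_{q(t)-\ell'-1} = P_{d-\ell'}\cup P_{c+\ell'}\cup P_{n_\mathcal{P}-\ell'}$; its inward cascade partner $P_{c+\ell'+1}\cup P_{n_\mathcal{P}-\ell'-1}$ prescribed by the induction of Lemma~\ref{lem:second-symmetry} degenerates at the boundary (for $\ell'=\ell$ it collapses into the centre $P_a,P_b$, and for $\ell'<\ell$ the inward side is instead exhausted), so the second trigger of $C_\star$ must be located by hand, which is exactly the style of boundary analysis already carried out in Lemma~\ref{lem:first-symmetry-end}, and all of it is run with $t=0$ and $t=1$ in parallel as there.

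To locate the triggers I would first use Corollary~\ref{cor:second-symmetry-neighbours} together with Lemmas~\ref{lem:successive-matching}, \ref{lem:non-successive-pairs}, and~\ref{lem:first-symmetry-neighbours} to see that $C_\star$ is balanced with respect to every size-$4$ or size-$6$ class that is not cascade-consecutive to it, and, via Lemma~\ref{lem:end-neighbours}, with respect to the size-$3$ class of $M_t$ when $t=1$. Thus the only possible triggers are: (i) the next bottom pair $P_{d-\ell'-1}$, reached by the matching edge out of $P_{d-\ell'}$; (ii) for $\ell'=\ell$, the pair $P_1$, which by Lemma~\ref{lem:P_1-P_aP_b} satisfies $\deg_{P_1}(P_a)=\deg_{P_1}(P_b)=1$, so that $C_\star$ is unbalanced with respect to $P_1$ precisely when $P_{d-\ell}$ is non-consecutive to $P_1$; (iii) for $\ell'<\ell$, the surviving inward size-$4$ class $P_{c+\ell'+1}\cup P_{n_\mathcal{P}-\ell'-1}$; and (iv) the middle pair $P_{a+1}$ when $b=a+2$. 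The requirement that exactly two of these are present, combined with the balancedness identity Equation~\ref{eq:equal-diff} relating $C_\star$ to the class $C_{q(t)-\ell'-2}$ obtained by merging its two triggers, is what constrains $d$.

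For the lower bound I would assume $d<\ell$, so $\ell'=d-1$ and $C_\star = P_1\cup P_{c+d-1}\cup P_{n_\mathcal{P}-d+1}$ contains $P_1$ while the inward size-$4$ classes $P_{c+i}\cup P_{n_\mathcal{P}-i}$ for $i\in[d,\ell]$ (among them $P_a\cup P_b$ at $i=\ell$) survive; these, together with the single matching neighbour $P_{c+d}\cup P_{n_\mathcal{P}-d}$, cannot be arranged so that $C_\star$ has exactly two triggers compatible with Equation~\ref{eq:equal-diff}, reproducing the contradiction that opens the proof of Lemma~\ref{lem:first-symmetry-end} and forcing $d\ge\ell$. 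For the upper bound I take $\ell'=\ell$, so $C_\star = P_{d-\ell}\cup P_a\cup P_b$ and the surviving bottom pairs are $P_1,\dots,P_{d-\ell-1}$; as $d$ grows these bottom pairs must themselves be paired up into a further symmetric cascade whose only admissible inducer is $C_\star$, and bounding the extent of that cascade by the same ``at most two unbalanced-inducing classes'' principle as in Lemma~\ref{lem:bdista} yields $d-\ell-1\le 3$ when $b=a+1$ and $d-\ell-1\le 1$ when $b=a+2$. The gap of $2$ between the two cases is accounted for by trigger~(iv): when $b=a+2$ the middle pair $P_{a+1}$ permanently occupies one of the two trigger slots of $C_\star$, halving the room left for bottom pairs.

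The main obstacle is the bookkeeping at the two ends of this bottom cascade, where the indices collide: when $P_{d-\ell}=P_2$ the matching edge between $P_1$ and $P_2$ coincides with the special stars of Lemma~\ref{lem:P_1-P_aP_b} and turns a would-be trigger balanced, and when $P_{d-\ell-1}=P_1$ the bottom trigger~(i) and the special trigger~(ii) merge into one. These coincidences, together with the parities of the complete-bipartite degrees supplied by Lemma~\ref{lem:non-successive-pairs}, are precisely what separate the admissible values of $d$ from the inadmissible ones and produce the exact constants $\ell+4$ and $\ell+2$ rather than looser bounds; resolving them cleanly requires a short finite case distinction on $d-\ell\in\{0,1,2,3,4\}$ and on $b-a\in\{1,2\}$, checking in each case that exactly two triggers exist and that Equation~\ref{eq:equal-diff} is satisfiable.
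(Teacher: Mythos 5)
Your proposal is correct and takes essentially the same route as the paper: both apply Lemma~\ref{rmk:work-backwards} to the terminal class of the second cascade, deriving a parity clash between the forced triggers $P_a\cup P_b$ (even, nonzero degree difference) and the cascade-consecutive size-$4$ class (odd difference) when $d<\ell$, and exhibiting three simultaneous unbalancing classes ($P_{a+1}$, $P_1$, $P_{d-\ell-1}$ for $b=a+2$; $P_2$, $P_{d-\ell-2}$, $C_{q(t)-\ell-1}$ after one further merge for $b=a+1$) once $d$ exceeds the claimed upper bounds. The only slip is that your trigger enumeration omits the singletons of $\pi^{q(t)}_{\mathcal{S}}$, which can in general be unbalancing classes (cf.\ Lemma~\ref{lem:l+1l+2AB}); this is harmless here, since a singleton contributes only an odd ($\pm 1$) degree difference and additional triggers only strengthen the upper-bound contradictions.
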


\begin{proof}

    \begin{figure}[htpb]
	\centering
	\begin{subfigure}{0.3\textwidth}
		\centering
		\includegraphics{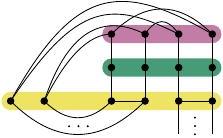}
		\caption{$\pi^{q(t)-d}$ if $d=\ell-1$}\label{val-d=l-1}
	\end{subfigure}
    \hspace{1mm}
    \begin{subfigure}{0.3\textwidth}
		\centering
		\includegraphics{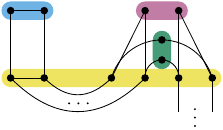}
		\caption{$\pi^{q(t)-\ell-1}$ if $d>\ell+2$ and $b = a+2$}\label{val-d>l+2}
	\end{subfigure}
    \hspace{3mm}
    \begin{subfigure}{0.3\textwidth}
		\centering
		\includegraphics{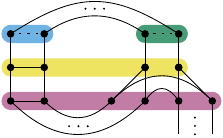}
		\caption{$\pi^{q(t)-\ell-2}$ if $d>\ell+4$}\label{val-d>l+4}
	\end{subfigure}
	\caption[Possible values of $d$]{Examples of contradictions occurring in the proof of Lemma \ref{lem:value-of-d} for $d<\ell$, $d > \ell+2$, and $d>\ell+4$ when the degrees are $2$ and $3$. Notice that yellow classes are unbalanced wrt blue, green, and pink classes.}
	\label{fig:value-of-d:contradictions}
    \end{figure}
    
    We prove this by showing that the assumption $d < \ell$ leads to a contradiction, as does $d>\ell+2$ when $b=a+2$ and $d>\ell+4$ when $b=a+1$. For the following, consider Figure \ref{fig:value-of-d:contradictions}.

   Assume that $1\leq d \leq \ell-1$. Then $P_{1} = P_{d-h}$, for some $h \leq \ell-2$. Recall that $C_{q(t)-h-1} = P_{1} \cup P_{c+h} \cup P_{n_\mathcal{P}-h}$. See the lowest class in Figure \ref{val-d=l-1}, which is unbalanced wrt all three classes. We have $\deg_{P_a \cup P_b}(P_1) = 2$. Since $P_{c+h}$ and $P_{n_\mathcal{P}-h}$ are not consecutive to $P_a$ or $P_b$, we have $\deg_{P_a \cup P_b}(P_{c+h}\cup P_{n_\mathcal{P}-h}) \in \{0,4\}$. Here, we have a difference of $2$. We also know that \[\deg_{P_{c+h+1}\cup P_{n_\mathcal{P}-h-1}}(P_{c+h}\cup P_{n_\mathcal{P}-h}) \in \{1,3\},\] whereas \[\deg_{P_{c+h+1}\cup P_{n_\mathcal{P}-h-1}}(P_1) \in \{0,4\}.\] We cannot get a difference of $2$ from these values, hence we have a contradiction to Lemma \ref{rmk:work-backwards}, proving $d \geq \ell$.

   Next, assume that $d \geq \ell+3$ and $b=a+2$. Then, $C_{q(t)-\ell-1} = P_{d-\ell}\cup P_a \cup P_b$, with $d-\ell \geq 3$. See Figure \ref{val-d>l+2}, where the yellow class $C_{q(t)-\ell-1}$ is unbalanced wrt the green class $P_{a+1}$, the blue class $P_{d-\ell-1}$ and the pink class $P_1$. Formally, since $b = a+2$, $\deg_{P_{a+1}}(P_a \cup P_b) = 1$ and, by Lemma \ref{lem:non-successive-pairs}, $\deg_{P_{a+1}}(P_{d-\ell}) \in \{0,2\}$. Next, we have $\deg_{P_{1}}(P_a \cup P_b) = 1$ and, by Lemma \ref{lem:non-successive-pairs}, $\deg_{P_{1}}(P_{d-\ell}) \in \{0,2\}$.  Finally, we have $\deg_{P_{d-\ell-1}}(P_a \cup P_b) \in \{0,2\}$ and $\deg_{P_{d-\ell-1}}(P_{d-\ell}) = 1$. Thus, as we have  $C_{q(t)-\ell-1}$ unbalanced wrt at least three classes, we obtain a contradiction to Lemma \ref{rmk:work-backwards} and have proved $d \leq \ell+2$ for $b = a+2$.

   Finally, assume that $b=a+1$ and, towards a contradiction, that $d \geq \ell+5$. See Figure \ref{val-d>l+4}. As before, we have $\deg_{P_{1}}(P_a \cup P_b) = 1$ and with Lemma \ref{lem:non-successive-pairs}, $\deg_{P_{1}}(P_{d-\ell}) \in \{0,2\}$. Also, $\deg_{P_{d-\ell-1}}(P_a \cup P_b) \in \{0,2\}$ and $\deg_{P_{d-\ell-1}}(P_{d-\ell}) = 1$. We must therefore have $C_{q(t)-\ell-2} = P_1 \cup P_{d-\ell-1}$. Then, $C_{q(t)-\ell-2}$ is unbalanced wrt $P_2$, $P_{d-\ell-1}$, and $C_{q(t)-\ell-1}$. This is a contradiction and therefore $d \leq \ell+4$. 
\end{proof}

With the reverse-engineering methods in this section, we have managed to reconstruct partitions during the execution of Colour Refinement in which the classes have size up to $6$. This could be carried further, but the analysis would become more and more involved. However, when restricting the degrees to a maximum of $4$, these last iterations comprise enough of the entire execution of Colour Refinement to yield characterisations and compact representations for all long-refinement graphs with small degrees. We will see this in Section \ref{sec:max4}, where we apply the ingredients collected in this section.

\section{Long-Refinement Graphs with Maximum Degree at most 4}\label{sec:max4}

In this section, we establish a comprehensive list of all the long-refinement graphs with maximum degree at most $4$. We use the observation from Lemma \ref{lem:value-of-d} that $d \in [\ell,\ell+4]$ and consider the cases separately. Unifying previous results \cite{KMcK20}, we include a string notation to represent the long-refinement graphs $G$ with $\deg(G) = \{2,3\}$, where each letter represents a pair.

\begin{notation}\label{notation:23}
	We define $\sigma \colon [n_\mathcal{P}] \rightarrow \Sigma$ as the function that maps $i$ to the element in $\Sigma$ that matches the type of $P_i$. More precisely, $\sigma(i)$ is defined as follows:
	\begin{itemize}
		\setlength\itemsep{0em}
		\item $\mathrm{S}$, \ if $P_i = \min(\prec) = P_1$.
		\item $\mathrm{X}$, \ if $P_i \in \{P_a,P_b\}$ and $P_i$ is not adjacent to any singleton $S \in \pi^p_{\mathcal{S}}$. 
		\item $\mathrm{X}_2$, \ if $P_i \in \{P_a,P_b\}$ and $P_i$ is adjacent to a singleton $S_i \in \pi^p_{\mathcal{S}}$.
		\item $0$, \ if $P_i \notin \{P_1,P_a,P_b\}$, $\deg(P_i)= 2$ and $P_i$  is not adjacent to any singleton $S \in \pi^p_{\mathcal{S}}$.
		\item $0_2$, \ if $P_i \notin \{P_1,P_a,P_b\}$, $\deg(P_i)= 2$, and $P_i$ is adjacent to a singleton $S_i  \in \pi^p_{\mathcal{S}}$.
		\item $1$, \ if $P_i \notin \{P_1,P_a,P_b\}$ with $\deg(P_i)= 3$, such that $P_i$ is not adjacent to any  $S \in \pi^p_{\mathcal{S}}$.
		\item $1_2$, \ if $P_i \notin \{P_1,P_a,P_b\}$, $\deg(P_i)= 3$, and $P_i$ is adjacent to a singleton $S_i  \in \pi^p_{\mathcal{S}}$.
	\end{itemize}
	Since $\prec$ is a linear order, we can define a string representation based on a long-refinement graph $G$, where the $i$-th character in the string is $\sigma(i)$ and corresponds to the $i$-th element of $\prec$, $P_i$. We call a string defined in this way a \emph{long-refinement string}.  
\end{notation}

The strings describe the exact neighbourhood of each pair and each singleton, as long as there is at most one singleton. We will prove that this is the case. Thus, every string $\Xi$ unambiguously represents exactly one graph, up to isomorphism, which we denote by $G(\Xi)$.

The limitation on degree allows for some immediate insights into the structure of long-refinement graphs. For example, Corollary \ref{cor:second-symmetry-neighbours}  yields the following when the maximum degree is $4$.

\begin{observation}\label{degree-middle6}
    For $h \in [1,\ell'-1]$, $C_{q(t)-h-1}$ (i.e.\ the class $P_{d-h} \cup P_{c+h} \cup P_{n_\mathcal{P}-h}$) has degree $2$ or $3$, with $N(C_{q(t)-h-1}) \subseteq C_{q(t)-h-2} \cup  C_{q(t)-h-1} \cup  C_{q(t)-h}$.
\end{observation}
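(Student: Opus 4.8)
The plan is to compute the degree of an arbitrary vertex $v \in C_{q(t)-h-1}$ by summing its neighbour counts across the classes of the partition $\pi^{q(t)-h-2}$, and to use the maximum-degree-$4$ hypothesis of this section to eliminate the large options left open by Corollary \ref{cor:second-symmetry-neighbours}. Recall that $C_{q(t)-h-2}$, $C_{q(t)-h-1}$, and $C_{q(t)-h}$ are all classes of $\pi^{q(t)-h-2}$ (this is implicit in the statement of Equation \ref{eq:deg_sixes}), so $\deg(v) = \sum_{C \in \pi^{q(t)-h-2}} \deg_C(v)$ decomposes into the contributions of these three chain classes plus the contributions of all remaining classes.

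First I would pin down the two chain neighbours. By Equation \ref{eq:consecutive-sixes}, $\deg_{C_{q(t)-h-2}}(C_{q(t)-h-1}) \in \{1,5\}$, and applying the same equation with $h$ replaced by $h-1$ (legitimate since $h \in [1,\ell'-1]$ gives $h-1 \in [0,\ell'-1]$) yields $\deg_{C_{q(t)-h}}(C_{q(t)-h-1}) = \deg_{C_{q(t)-h-1}}(C_{q(t)-h}) \in \{1,5\}$. Since $v$ has degree at most $4$, the value $5$ is impossible in both, so each of these two classes contributes exactly $1$ to $\deg(v)$, accounting for $2$ neighbours.

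Next I would handle the self-degree and the remaining classes. By Equation \ref{eq:selfdeg_sixes}, $\deg_{C_{q(t)-h-1}}(C_{q(t)-h-1}) \in \{0,1,4,5\}$; combined with the two neighbours already counted, the options $4$ and $5$ would force $\deg(v) \geq 6$, so the self-degree is $0$ or $1$. Finally, for every class $C \in \pi^{q(t)-h-2} \setminus \{C_{q(t)-h-2}, C_{q(t)-h-1}, C_{q(t)-h}\}$, Equation \ref{eq:deg_sixes} gives $\deg_{C_{q(t)-h-1}}(C) \in \{0,6\}$; the value $6$ would again force $\deg(v) \geq 6$, so it must be $0$, and hence the equivalent bound $\deg_C(C_{q(t)-h-1}) = 0$ holds as well. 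Summing the contributions gives $\deg(v) = 1 + 1 + \deg_{C_{q(t)-h-1}}(v) + 0 \in \{2,3\}$, and the vanishing of every off-chain contribution yields $N(C_{q(t)-h-1}) \subseteq C_{q(t)-h-2} \cup C_{q(t)-h-1} \cup C_{q(t)-h}$, as required.

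There is no serious obstacle here: the whole argument is a single degree count in which the maximum-degree-$4$ hypothesis simply deletes the larger of the two (or four) values permitted by each equation of Corollary \ref{cor:second-symmetry-neighbours}. The only points demanding care are the index shift $h \mapsto h-1$ needed to reach the forward chain neighbour $C_{q(t)-h}$ (which is exactly why the statement restricts to $h \in [1,\ell'-1]$ rather than $[0,\ell']$) and the observation that the three chain classes are genuinely distinct classes of $\pi^{q(t)-h-2}$, so that no contribution is double-counted or omitted.
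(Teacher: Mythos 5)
Your proof is correct and is exactly the argument the paper intends: the Observation is stated without proof as an immediate consequence of Corollary \ref{cor:second-symmetry-neighbours} under the maximum-degree-$4$ assumption, and your degree count (ruling out the values $5$, $4$, and $6$ in Equations \ref{eq:consecutive-sixes}--\ref{eq:deg_sixes}, with the index shift $h \mapsto h-1$ to reach $C_{q(t)-h}$) is precisely the omitted verification.
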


With $h = 0$, Corollary \ref{cor:second-symmetry-neighbours} also describes the neighbourhood of $C_{q(t)-1}$ (i.e. $P_d \cup P_c \cup P_{n_\mathcal{P}}$), which depends on the value of $t$. Figure \ref{fig:graph-bottoms} illustrates the possible edge layouts between $C_{q(t)}$ and $C_{q(t)-1}$ when $t=0$ or $t=1$, which are a result of the proof of Lemma \ref{lem:first-symmetry-end}. This yields the following observation.

\begin{observation}\label{degree-bottom6}
    If $t=0$, then $\deg(C_{q(t)-1})= 2$ with \[\deg_{C_{q(t)-2}}(C_{q(t)-1}) = 1 = \deg_{C_{q(t)-1}}(C_{q(t)-1}),\]
    and $\deg_{C}(C_{q(t)-1}) = 0$ for every other class $C \subseteq V(G)$. 
    
    \vspace{1ex}
    If $t=1$, then $\deg(C_{q(t)-1})\in \{2, 3, 4\}$ with
    \[\deg_{C_{q(t)-2}}(C_{q(t)-1}) = 1, \quad \deg_{C_{q(t)-1}}(C_{q(t)-1}) \in \{0,1\}, \quad \deg_{C_{q(t)}}(C_{q(t)-1}) \in \{1,2\},\] 
    and $\deg_{C}(C_{q(t)-1}) = 0$ for every other class $C \subseteq V(G)$.
\end{observation}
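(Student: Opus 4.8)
The plan is to pin down the entire neighbourhood of the size-$6$ class $C_{q(t)-1} = P_d \cup P_c \cup P_{n_{\mathcal{P}}}$ by first reading off the possible degree values from Corollary \ref{cor:second-symmetry-neighbours} (applied with $h=0$) and, for $t=1$, from Lemma \ref{lem:end-neighbours}, and then using the hypothesis $\max\deg(G)\le 4$ to discard all but the claimed values. Throughout I keep track of which neighbouring classes are genuinely external to $C_{q(t)-1}$ and which sit inside it, since this differs between the two cases: for $t=0$ the class $C_{q(t)}$ produced by the split of $C_{q(t)-1}$ equals $P_c \cup P_{n_{\mathcal{P}}} \subseteq C_{q(t)-1}$, whereas for $t=1$ we have $C_{q(t)} = S_{q(t)} \cup P_{c-1}$, which is disjoint from $C_{q(t)-1}$ (recall $d=c-1$ in the first case and $d=c-2$ in the second, from Lemma \ref{lem:first-symmetry-end}).

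First I would establish the part common to both cases. At $h=0$, Equation \ref{eq:consecutive-sixes} gives $\deg_{C_{q(t)-2}}(C_{q(t)-1}) \in \{1,5\}$, Equation \ref{eq:selfdeg_sixes} gives $\deg_{C_{q(t)-1}}(C_{q(t)-1}) \in \{0,1,4,5\}$, and Equation \ref{eq:deg_sixes} gives $\deg_{C_{q(t)-1}}(C) \in \{0,6\}$ for every class $C \in \pi^{q(t)-2}\setminus\{C_{q(t)-2}, C_{q(t)-1}, C_{q(t)}\}$. Since every vertex of $C_{q(t)-1}$ has degree at most $4$ in $G$, a vertex of such a $C$ cannot have $6$ neighbours in the six-vertex set $C_{q(t)-1}$, so the value $6$ is impossible and all these classes are non-adjacent to $C_{q(t)-1}$. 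Likewise $\deg_{C_{q(t)-2}}(C_{q(t)-1})=5$ would force degree at least $5$, so $\deg_{C_{q(t)-2}}(C_{q(t)-1})=1$; and a self-degree of $4$ or $5$ together with the mandatory edge into $C_{q(t)-2}$ would again exceed $4$, leaving $\deg_{C_{q(t)-1}}(C_{q(t)-1}) \in \{0,1\}$. Hence every neighbour of $C_{q(t)-1}$ lies in $C_{q(t)-2} \cup C_{q(t)-1} \cup C_{q(t)}$, with exactly one neighbour per vertex in $C_{q(t)-2}$.

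It then remains to treat the two cases separately. For $t=0$, since $C_{q(t)} \subseteq C_{q(t)-1}$ the only external neighbour class is $C_{q(t)-2}$, so each vertex $v\in C_{q(t)-1}$ satisfies $\deg(v) = 1 + \deg_{C_{q(t)-1}}(v)$. Because $d=c-1$, the pairs $P_{c-1}$ and $P_c$ are consecutive and both lie in $C_{q(t)-1}$, so Lemma \ref{lem:successive-matching} supplies a matching between them; every vertex of $P_{c-1}\cup P_c$ thus has at least one neighbour inside $C_{q(t)-1}$. As the self-degree is a single common value, it cannot be $0$, so $\deg_{C_{q(t)-1}}(C_{q(t)-1})=1$ and $\deg(C_{q(t)-1})=2$, as claimed. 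For $t=1$, the class $C_{q(t)}=S_{q(t)}\cup P_{c-1}$ is external, and Lemma \ref{lem:end-neighbours} gives precisely $\deg_{C_{q(t)}}(C_{q(t)-1}) \in \{1,2\}$ (and trivial attachment of $C_{q(t)}$ to everything else, consistent with the common part). Summing the three per-vertex contributions $1$ (into $C_{q(t)-2}$), $\deg_{C_{q(t)-1}}(C_{q(t)-1})\in\{0,1\}$ (internal), and $\deg_{C_{q(t)}}(C_{q(t)-1})\in\{1,2\}$ (into $C_{q(t)}$) yields $\deg(C_{q(t)-1})\in\{2,3,4\}$.

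The main technical point is that Corollary \ref{cor:second-symmetry-neighbours} is stated only for $h\in[0,\ell'-1]$, so the argument above presumes $\ell'\ge 1$, i.e.\ that $C_{q(t)-2}$ is itself a size-$6$ cascade class $P_{d-1}\cup P_{c+1}\cup P_{n_{\mathcal{P}}-1}$. When $\ell'=0$ the cascade is degenerate and $C_{q(t)-2}$ need not have this form; there I would instead derive the same three facts directly, observing via Lemma \ref{rmk:work-backwards} that $C_{q(t)-1}$ is balanced with respect to every class of $\pi^{q(t)-2}$ other than $A_{q(t)-1}, B_{q(t)-1}$, and computing the adjacencies of its constituent pairs $P_d, P_c, P_{n_{\mathcal{P}}}$ from Lemmas \ref{lem:successive-matching}, \ref{lem:non-successive-pairs}, and \ref{lem:first-symmetry-neighbours}, which together with the degree bound again collapse every external degree to $0$ or $1$ and the self-degree to $\{0,1\}$. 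I expect this bookkeeping, together with the one genuinely case-specific step of ruling out self-degree $0$ for $t=0$ via the forced $P_{c-1}$--$P_c$ matching, to be the only delicate part; the remainder is a mechanical collapse of the value lists of Corollary \ref{cor:second-symmetry-neighbours} under $\max\deg(G)\le 4$.
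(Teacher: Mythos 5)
Your main argument---reading off the value lists from Corollary \ref{cor:second-symmetry-neighbours} at $h=0$ and from Lemma \ref{lem:end-neighbours}, collapsing them under $\max\deg(G)\le 4$, and adding the one non-automatic step that the $P_{c-1}$--$P_c$ matching forces the self-degree to be exactly $1$ rather than $0$ when $t=0$---is exactly the route the paper intends (the paper offers no formal proof beyond the sentence pointing to the corollary at $h=0$ and to the edge layouts from the proof of Lemma \ref{lem:first-symmetry-end}), and for $\ell'\ge 1$ your account is correct and, if anything, more complete than the paper's.

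The gap is in your final paragraph. You rightly notice that Corollary \ref{cor:second-symmetry-neighbours} only covers $h\in[0,\ell'-1]$ and therefore says nothing when $\ell'=0$, but you then assert that in that case the same bookkeeping ``again collapses every external degree to $0$ or $1$ and the self-degree to $\{0,1\}$''. That step fails, and cannot be repaired, because the stated conclusion is simply false in the degenerate case: if $\ell=0$ and $t=0$ then $c=a$ and $n_\mathcal{P}=b$, so when $b=a+1$ the pairs $P_c=P_a$ and $P_{n_\mathcal{P}}=P_b$ are \emph{consecutive}, and Lemma \ref{lem:successive-matching} forces a matching between them in addition to the $P_{c-1}$--$P_c$ matching; the vertices of $P_a$ then have at least two neighbours inside $C_{q(t)-1}$ and at least one outside it, so $\deg(C_{q(t)-1})\ge 3$. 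A concrete witness from the paper's own classification is $G(\mathrm{S011XX})$: there $C_{q(t)-1}=P_4\cup P_5\cup P_6$, every vertex has degree $3$, and the self-degree is $2$, contradicting both displayed equalities of the $t=0$ clause. So the observation carries an implicit hypothesis (it is only ever invoked for $t=0$ in sub-cases with $\ell\ge 1$, i.e.\ where the corollary genuinely applies at $h=0$); the correct way to finish is to restrict the statement, or your proof of it, to that regime---not to claim that the $\ell'=0$ case collapses to the same values.
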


\begin{figure}[htpb]
         \centering 
    \begin{subfigure}{0.3\linewidth}
        \centering 
        \includegraphics{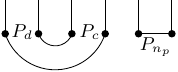}
         \caption{}\label{fig:graph-bottom-t=0}
    \end{subfigure}
    \begin{subfigure}{0.3\linewidth}
        \centering 
        \includegraphics{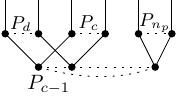}
         \caption{}
         \label{fig:graph-bottom-t=1A}
    \end{subfigure}
    \begin{subfigure}{0.3\linewidth}
        \centering 
        \includegraphics{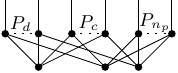}
         \caption{}
         \label{fig:graph-bottom-t=1B}
    \end{subfigure}
         \caption[]{All possible layouts between $C_{q(t)-1}$ ($P_d \cup P_c \cup P_{n_\mathcal{P}}$) and $C_{q(t)}$ ($P_c \cup P_{n_\mathcal{P}}$ or ${S_{q(t)} \cup P_{c-1}}$) when the maximum degree is at most $4$. Note}    \label{fig:graph-bottoms}
\end{figure}

We begin by proving the following two technical lemmas, which will assist us in cases where $\ell$ can be arbitrarily large. Recall that $C_{q(t)-i-1} = P_{d-i} \cup P_{c+i} \cup P_{n_\mathcal{P}-i}$ for $i \in [0,\ell']$ and $C_{q(t)-i-1} = P_{c-1} \cup S_{q(t)}$ for $i = -1$ when $t=1$.

\begin{lemma}\label{lem:1classgettingbigger}
    Let $j$ be an iteration $j>0$ and let the following statements hold.
    \begin{enumerate}
            \item $\{C_{q(t)-i-1} \mid i \in [-t,\ell']\} \subseteq\pi^j$, with $l'+t > 0$.
            \item $B_j$ is $C_{q(t)-\ell'-1}$, and $A_j$ is not contained in $\{C_{q(t)-i-1} \mid i \in [-t,\ell']\}$ 
    	\item $\deg_{C_{j-1}}(A_j) = \deg_{C_{j-1}}(B_j) + \deg_{C_{q(t)-\ell'}}(B_{j})$, where $\deg_{C_{q(t)-\ell'}}(B_{j}) = 1$ (or $2$ if $\ell'=0$ and $t=1$ as depicted in Figure \ref{fig:graph-bottom-t=1B}) and $\{\deg_{C_{j-1}}(A_j)\} = \deg(A_j)$.
    \end{enumerate}
    Then, for $j' \in [0,\ell'+t]$,  $\pi^{j-1-j'}\backslash \pi^j = \{C_{j-1-j'}\}$ and $C_{j-1-j'} = A_{j} \cup \left(\bigcup_{i \in [0,j']}C_{q(t)-\ell'+i-1}\right)$.
\end{lemma}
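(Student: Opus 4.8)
The plan is to prove the statement by induction on $j'$, reconstructing $\pi^{j-1-j'}$ from $\pi^{j-j'}$ one iteration at a time via Lemma~\ref{rmk:work-backwards}. Recall that to pass backwards from $\pi^{j-j'}$ to $\pi^{j-1-j'}$ it suffices to identify the unique pair of classes $A_{j-j'},B_{j-j'}\in\pi^{j-j'}$ with respect to which the unbalanced class $C_{j-j'}$ is unbalanced, since then $C_{j-1-j'}=A_{j-j'}\cup B_{j-j'}$. The key claim I aim to establish is that these two partners are always $C_{j-j'}$ itself together with the next chain class $C_{q(t)-\ell'+j'-1}$, which immediately gives $C_{j-1-j'}=C_{j-j'}\cup C_{q(t)-\ell'+j'-1}=A_j\cup\bigcup_{i\in[0,j']}C_{q(t)-\ell'+i-1}$. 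The base case $j'=0$ is just the definition of the split in iteration $j$ together with hypothesis~2: $C_{j-1}=A_j\cup B_j=A_j\cup C_{q(t)-\ell'-1}$ and $\pi^{j-1}\setminus\pi^j=\{C_{j-1}\}$. The argument mirrors the inductive cascade in Lemma~\ref{lem:first-symmetry}, except that here one side of the growing class, $A_j$, stays fixed while the chain is consumed from the other side.

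For the inductive step, I would assume the statement for $j'-1$, so that $C_{j-j'}=A_j\cup\bigcup_{i\in[0,j'-1]}C_{q(t)-\ell'+i-1}$ is the unbalanced class of $\pi^{j-j'}$ with $\pi^{j-j'}\setminus\pi^j=\{C_{j-j'}\}$. First I would show $C_{j-j'}$ is unbalanced with respect to $C_{q(t)-\ell'+j'-1}$. Decomposing $C_{j-j'}$: the vertices of $A_j$ contribute degree $0$ into $C_{q(t)-\ell'+j'-1}$, since by hypothesis~2 all their edges lie inside $C_{j-1}$ and $C_{q(t)-\ell'+j'-1}\not\subseteq C_{j-1}$ for $j'\geq 1$; the already-absorbed chain classes $C_{q(t)-\ell'+i-1}$ with $i\leq j'-2$ are non-consecutive to $C_{q(t)-\ell'+j'-1}$, so by Corollary~\ref{cor:second-symmetry-neighbours} (with $\max\deg\leq 4$) also contribute $0$; and only the consecutive class $C_{q(t)-\ell'+j'-2}$ contributes, with the matching degree $1$ (degree $2$ in the special bottom case of Figure~\ref{fig:graph-bottom-t=1B}). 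The interactions with the size-$3$ class at the very bottom are handled by Lemma~\ref{lem:end-neighbours} and Observation~\ref{degree-bottom6}. Hence the vertices of $C_{j-j'}$ split into those with one (or two) and those with zero neighbours in $C_{q(t)-\ell'+j'-1}$, so $\deg_{C_{q(t)-\ell'+j'-1}}(C_{j-j'})=\bot$.

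Next I would identify the second partner as $C_{j-j'}$ itself by proving $G[C_{j-j'}]$ is not regular. Since $j-j'\geq 1$ (forced by well-definedness of $\pi^{j-1-j'}$), the class $C_{j-j'}$ is contained in a single degree class of $\pi^1$, so all its vertices share one total degree $\delta$. The vertices of $A_j$ have all neighbours inside $C_{j-1}\subseteq C_{j-j'}$ by hypothesis~2, hence internal degree $\delta$; whereas $C_{q(t)-\ell'+j'-2}\subseteq C_{j-j'}$ sends a matching edge to $C_{q(t)-\ell'+j'-1}\notin C_{j-j'}$, so its vertices have internal degree strictly below $\delta$. Thus $\deg_{C_{j-j'}}(C_{j-j'})=\bot$. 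By the uniqueness of the two unbalanced partners in Lemma~\ref{rmk:work-backwards}, they are exactly $C_{j-j'}$ and $C_{q(t)-\ell'+j'-1}$, giving the claimed $C_{j-1-j'}=C_{j-j'}\cup C_{q(t)-\ell'+j'-1}$. For the bookkeeping claim I would combine the induction hypothesis $\pi^{j-j'}\setminus\pi^j=\{C_{j-j'}\}$ with the fact that $C_{q(t)-\ell'+j'-1}\in\pi^j$ (hypothesis~1, as its index $\ell'-j'$ lies in $[-t,\ell']$): passing to $\pi^{j-1-j'}$ merges $C_{j-j'}$ and $C_{q(t)-\ell'+j'-1}$, i.e.\ removes the unique extra class $C_{j-j'}$ and a class already present in $\pi^j$ while adding $C_{j-1-j'}$, so that $\pi^{j-1-j'}\setminus\pi^j=\{C_{j-1-j'}\}$.

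The main obstacle is the third step, pinning down that the remaining partner is $C_{j-j'}$ itself rather than some stray class; the degree-homogeneity argument is what makes this clean, and it is essential to check that $j-j'\geq 1$ holds throughout $j'\in[0,\ell'+t]$ so that $\pi^1$ already separates by degree. The secondary care is to run the special bottom configuration (Figure~\ref{fig:graph-bottom-t=1B}, where $t=1$, $\ell'=0$, and the relevant matching has degree $2$) in parallel, which alters only the numerical degree values and not the structure of the argument.
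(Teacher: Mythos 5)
Your proposal is correct and follows essentially the same route as the paper's proof: induction on $j'$, showing via the matching structure from Corollary \ref{cor:second-symmetry-neighbours} and Lemma \ref{lem:end-neighbours} that the current merged class is unbalanced precisely with respect to itself (since $A_j$ has all neighbours internal while the last absorbed chain class sends an edge out) and with respect to the next chain class, then invoking Lemma \ref{rmk:work-backwards}. The only nitpick is that the fact $N(A_j)\subseteq C_{j-1}$ comes from hypothesis 3 (via $\{\deg_{C_{j-1}}(A_j)\}=\deg(A_j)$) rather than hypothesis 2, but this does not affect the argument.
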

\begin{proof}
For $j'= 0$, we simply observe that $C_{j-1} = A_j \cup B_j$ is $A_j \cup C_{q(t)-\ell'-1}$, as required.

For $j'>1$, we assume the statement holds up until $j'-1$. We wish to show that \[C_{j-j'} = C_{j-j'+1} \cup  C_{q(t)-\ell'+j'-1}.\] To do this, we prove that  $C_{j-j'+1}$ is unbalanced wrt itself and $C_{q(t)-\ell'+j'-1}$.  
Since $A_j \subseteq C_{j-j'+1}$ and $N(A_j) \subseteq C_{j-1} \subseteq C_{j-j'+1}$, there are vertices $v \in A_j$ in $C_{j-j'+1}$ such that $\deg(v) = \deg_{C_{j-j'+1}}(v)$. The upper limit on $j'$ gives  ${{q(t)-\ell'+j'-1}\leq q(t)-1+t}$ so we can apply Corollary \ref{cor:second-symmetry-neighbours} and Lemma \ref{lem:end-neighbours} to show that $\deg_{C_{q(t)-\ell'+j'-1}}(C_{q(t)-\ell'+j'-2}) = 1$ (or $2$ if $\ell'=0$, and $t=1$ as depicted in Figure \ref{fig:graph-bottom-t=1B}). Since $C_{q(t)-\ell'+j'-2} \subseteq C_{j-j'+1}$, there are vertices $u$ with $\deg_{C_{q(t)-\ell'+j'-1}}(u)=1$ (or $2$) and thus also $\deg_{C_{j-j'+1}}(u) < \deg(u)$. We thus get that $C_{j-j'+1}$ is unbalanced wrt itself and $C_{q(t)-\ell'+j'-1}$. By Lemma \ref{rmk:work-backwards}, this proves the statement for $j'$.
\end{proof}

\begin{lemma}\label{lem:2classgettingbigger}
    Let $j>0$ be an iteration such that the following statements hold.
    \begin{enumerate}
            \item $\{C_{q(t)-i-1} \mid i \in [-t,\ell']\} \subseteq \pi^j$, with $\ell' > 0$.
    	\item $B_j$ is $C_{q(t)-\ell'-1}$, and $A_j$ is not contained in $\{C_{q(t)-i-1} \mid i \in [-t,\ell']\}$.
            \item There exists a class $M$ with $\deg_{A_j}(M) = 2$, $\deg_M(A_j) = 1$, and $\deg_{B_j}(M) = \deg_{M}(B_j)=0$\label{it:classM}.
            \item $\{\deg_{C_{j-1}}(B_j)+\deg_{C_{q(t)-\ell'}}(B_j)\} = \deg(B_j)$ with  $\deg_{C_{q(t)-\ell'}}(B_{j}) = 1$ (or $2$ if $\ell'=0$ and $t=1$ as depicted in Figure \ref{fig:graph-bottom-t=1B}).
    \end{enumerate}
Then $\deg_{C_{q(t)-\ell'}}(B_{j}) = 1$ and, for $j' \in [0,\ell'-1+t]$, there are two classes $E_{j'},D_{j'}$ in $\pi^{j-2-j'}$ which satisfy the following statements.
\begin{itemize}
    \item $\pi^{j-2-j'}\backslash\pi^{j} = \{D_{j'},E_{j'}\}$
    \item $D_{j'} = M \cup \bigcup_{i\in [0, \lfloor\frac{j'-1}{3}\rfloor]}C_{q(t)-\ell'+3i} $
    \item $E_{j'} = A_j \cup B_j \cup \left(\bigcup_{i \in [0, \lfloor\frac{j'-2}{3}\rfloor]} C_{q(t)-\ell'+1+3i}\right) \cup \left(\bigcup_{i \in [0, \lfloor\frac{j'-3}{3}\rfloor]}C_{q(t)-\ell'+2+3i}\right)$
\end{itemize}
\end{lemma}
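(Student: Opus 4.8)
The plan is to prove this by induction on $j'$, mirroring the structure of Lemma \ref{lem:1classgettingbigger} but now tracking \emph{two} growing classes $D_{j'}$ and $E_{j'}$ that alternate in absorbing the successive ``size-$6$'' classes $C_{q(t)-\ell'}, C_{q(t)-\ell'+1}, C_{q(t)-\ell'+2}, \dots$ as we run Colour Refinement backwards. The role of the extra class $M$ from hypothesis \ref{it:classM} is crucial: because $\deg_M(A_j) = 1$ while $\deg_{B_j}(M) = 0$, the class $M$ ``sees'' $A_j$ but not $B_j$, which is what forces the two strands $D$ and $E$ to separate rather than merge into a single chain. The floor functions $\lfloor \tfrac{j'-1}{3}\rfloor$, $\lfloor \tfrac{j'-2}{3}\rfloor$, $\lfloor \tfrac{j'-3}{3}\rfloor$ encode a period-$3$ alternation: at each backward step exactly one of $D$ or $E$ grows, and the index-shifting in the three unions records which of the three residues mod $3$ each $C_{q(t)-\ell'+k}$ lands in.

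First I would establish the base case. At $j' = 0$ we need $\pi^{j-2} \setminus \pi^j = \{D_0, E_0\}$ with $D_0 = M$ and $E_0 = A_j \cup B_j$ (both inner unions being empty). From hypothesis \ref{it:classM} and hypothesis 4, the class $C_{j-1} = A_j \cup B_j$ is already identified as in Lemma \ref{lem:1classgettingbigger}; the point is to check that the split in iteration $j-1$ is caused by $M$, so that $C_{j-2} = M$ splits off in the previous step, i.e.\ $M = C_{j-2}$ and $A_j \cup B_j = C_{j-1}$ are precisely the two classes of $\pi^{j-2}\setminus\pi^j$. Using $\deg_{A_j}(M) = 2$, $\deg_{B_j}(M) = 0$ together with Lemma \ref{rmk:work-backwards}, one argues $M$ is the unique unbalanced class witnessing the split, and verifies the degree bookkeeping via Equation \ref{eq:equal-diff}.

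For the inductive step, assuming the description of $D_{j'-1}, E_{j'-1}$, I would identify $C_{q(t)-\ell'+j'}$ (the next size-$6$ class moving backwards) and show it must merge onto exactly one of $D_{j'-1}$ or $E_{j'-1}$, determined by $j' \bmod 3$. The mechanism is the same unbalancedness argument as in Lemma \ref{lem:1classgettingbigger}: since the previously absorbed class has a matching-style degree $1$ into the newly arriving size-$6$ class (by Equation \ref{eq:consecutive-sixes} of Corollary \ref{cor:second-symmetry-neighbours}, or Lemma \ref{lem:end-neighbours} near the boundary), whichever of $D_{j'-1}, E_{j'-1}$ contains that neighbour becomes unbalanced with respect to $C_{q(t)-\ell'+j'}$ and hence absorbs it, while Equation \ref{eq:deg_sixes} guarantees the other class stays trivially attached and does not grow. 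Matching the resulting index against the floor expressions confirms the claimed closed form, and Equation \ref{eq:equal-diff} again supplies the degree balance needed to apply Lemma \ref{rmk:work-backwards}.

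The main obstacle I expect is the boundary behaviour, both at the top and the bottom of the range $j' \in [0, \ell'-1+t]$. At the lower end, the interaction with the $t=1$ case (where $C_{q(t)}$ is $S_{q(t)} \cup P_{c-1}$ rather than a size-$6$ class, and $\deg_{C_{q(t)-\ell'}}(B_j)$ may equal $2$ as in Figure \ref{fig:graph-bottom-t=1B}) breaks the clean period-$3$ pattern and requires invoking Lemma \ref{lem:end-neighbours} instead of Corollary \ref{cor:second-symmetry-neighbours}; one must check the alternation re-synchronises correctly, which is exactly why the upper limit is $\ell'-1+t$ rather than $\ell'-1$. At the upper end, I must verify that no \emph{third} class ever becomes unbalanced --- i.e.\ that at every backward step precisely the classes predicted by the floor formulas split off and nothing else --- which is where the disjointness of the neighbourhoods guaranteed by Observation \ref{degree-middle6} (each $C_{q(t)-h-1}$ has all neighbours confined to $C_{q(t)-h-2} \cup C_{q(t)-h-1} \cup C_{q(t)-h}$) does the essential work in conjunction with hypothesis \ref{it:classM}.
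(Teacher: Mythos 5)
Your overall strategy --- induction on $j'$, two strands $D$ and $E$ that alternate with period $3$ in absorbing the successive chain classes, with the hypothesis on $M$ forcing the two strands apart --- is the same as the paper's. However, two concrete steps in your plan are wrong and would derail the proof.

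First, the base case. You claim $C_{j-2} = M$ and that $\pi^{j-2}\setminus\pi^{j} = \{M,\, A_j\cup B_j\}$. This cannot be: $M$ is already a class of $\pi^{j}$ (hence of $\pi^{j-1}$), whereas $C_{j-2} \in \pi^{j-2}\setminus\pi^{j-1}$, so $C_{j-2} \neq M$ and $M$ cannot lie in $\pi^{j-2}\setminus\pi^{j}$ at all. The correct derivation is that $C_{j-1} = A_j\cup B_j$ is the unique unbalanced class of $\pi^{j-1}$ and is unbalanced wrt \emph{both} $M$ (since $\deg_{A_j}(M)=2$ but $\deg_{B_j}(M)=0$) and $C_{q(t)-\ell'}$ (since $B_j$ has positive degree into it while $A_j$, not being a chain class, has degree $0$ into it); hence by Lemma \ref{rmk:work-backwards} these two classes are $A_{j-1}$ and $B_{j-1}$, and $C_{j-2} = M \cup C_{q(t)-\ell'}$. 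Absorbing $C_{q(t)-\ell'}$ into the $M$-strand already at step zero is exactly what seeds the offset-$0 \bmod 3$ chain classes in $D$; with your base case the claimed closed forms cannot be reached. This step is also where the paper rules out $\deg_{C_{q(t)-\ell'}}(B_j)=2$, a part of the conclusion your plan never addresses.

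Second, the absorption rule in your inductive step is inverted. You say that whichever of $D_{j'-1}, E_{j'-1}$ contains the neighbour of the newly arriving chain class ``becomes unbalanced with respect to it and hence absorbs it''. In the framework of Lemma \ref{rmk:work-backwards}, the unbalanced class is the one that \emph{splits}, and the two classes it is unbalanced wrt are the ones that \emph{merge}. So when, say, $D_{j'-1}$ is the unbalanced class (unbalanced wrt the new chain class and wrt $E_{j'-1}$), the new class merges with $E_{j'-1}$, not with $D_{j'-1}$. Your rule agrees with the correct one only in the residue class where the splitting class is unbalanced wrt itself (the situation of Lemma \ref{lem:1classgettingbigger}); in the other two residues it assigns the new chain class to the wrong strand and would not reproduce the stated formulas. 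The paper's proof handles this by an explicit three-way case distinction on $j'-1 \bmod 3$, in each case identifying which of $D_{j'-1}, E_{j'-1}$ is the splitting class $C_{j-1-j'}$ and then determining the second class $B_{j-1-j'}$ wrt which it is unbalanced; that second class is the one that grows.
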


\begin{proof}
    We begin with the base case, $j'=0$, by finding an expression for $\pi^{j-2}$. Given the assumed values of $A_j$ and $B_j$, we can see that $\pi^{j-1} = \pi^{j}\backslash\{A_j,B_j\} \cup \{A_j\cup B_j\}$. Then, by Assumption \ref{it:classM}, the class $C_{j-1}=A_j\cup B_j$ is clearly unbalanced wrt $M$. Since $\ell'+t>0$, the class $C_{q(t)-\ell'}$ is in $\pi^{j-1}$, which implies that $C_{j-1}$ is unbalanced wrt $C_{q(t)-\ell'}$. Thus, by Lemma \ref{rmk:work-backwards}, $C_{j-2} = M \cup C_{q(t)-\ell'}$. It is therefore apparent that $\pi^{j-2}\backslash \pi^j = \{C_{j-1},C_{j-2}\}$ and that letting $D_{0} \coloneqq C_{j-2}$ and $E_0 \coloneqq C_{j-1}$ satisfies all desired statements. Note that this also contradicts $\deg_{C_{q(t)-\ell'}}(B_j)=2$, as it would not satisfy Lemma \ref{rmk:work-backwards}.

    We next consider the induction step, where $j \in [1,\ell'-1+t]$. We assume that the statements hold up to $j'-1$ (i.e. $\pi^{j-1-j'}$ and $D_{j'-1},E_{j'-1}$ are as described) and prove the same for $j'$ (i.e $\pi^{j-2-j'}$ and $D_{j'},E_{j'}$). We must consider three options separately: a) $C_{j-1-j'} = D_{j'-1}$, i.e.\ when $j'-1 \equiv 1 \mod 3$; b) $C_{j-1-j'} = E_{j'-1}$ and $C_{j-j'} = D_{j'-1}$, i.e.\  $j'-1 \equiv 2 \mod 3$; and c) $C_{j-1-j'} = E_{j'-1}$ with $C_{j-j'} \subseteq E_{j'-1}$ when $j'-1 \equiv 0 \mod 3$. 
    
    In all three cases, since $j'$ is no greater than $\ell'-1+t$, we may apply Corollary \ref{cor:second-symmetry-neighbours} and Lemma \ref{lem:end-neighbours} to observe that $C_{q(t)-\ell'+j'-2}$ has degree $1$ (or $2$ in the case where $t=1$ as depicted in Figure \ref{fig:graph-bottom-t=1B}), into $C_{q(t)-\ell'+j'-1}$. Therefore, $C_{j-1-j'}$ is unbalanced wrt $C_{q(t)-\ell'+j'-1}$, , which we may denote $A_{j-1-j'}$. To construct $\pi^{j-2-j'}$, we need $B_{j-1-j'}$. The analysis differs by case, with arguments presented below for each.
    
    In the first case, where $C_{j-1-j'} = D_{j'-1}$, we have by assumption $\deg_{E_{j'-1}}(M)=2$ whereas $\deg_{E_{j'-1}}({C_{q(t)-\ell'+j'-2}}) = 1$. This gives us that $B_{j-1-j'}=E_{j'-1}$ and thus 
    \begin{align*}
        D_{j'} &= D_{j'-1}\, ,\\
        E_{j'} &= E_{j'-1}\cup C_{q(t)-\ell'+j'-1}\, ,
    \end{align*}
    which is what we expect for $j' \equiv 2 \mod 3$, and also $\pi^{j-2-j'}\backslash\pi^{j} = \{D_{j'},E_{j'}\}$. 
    
    Now, consider the second case where $C_{j-1-j'} = E_{j'-1}$ and $C_{j-j'} = D_{j'-1}$. Consider an arbitrary $u \in B_j \subseteq E_{j'-1}$ which, by assumption, has degree $1$ into $D_{j'-1}$ and degree $\deg(u)-1$ into $E_{j'-1}$. Then $v \in C_{q(t)-\ell'+j'-2}$ has degree $1$ into $D_{j'-1}$, degree $1$ (or 2) into $A_{j-1-j'}$, and degree $\deg(v)-2$ (or $\deg(v) - 3) $ into $E_{j'-1}$. Hence, in this case, we obtain $B_{j-1-j'}= E_{j'-1}$ and 
    \begin{align*}
        D_{j'} &= D_{j'-1}\, ,\\
        E_{j'} &= E_{j'-1}\cup C_{q(t)-\ell'+j'-1}\, ,
    \end{align*}
    which is what we expect for $j' \equiv 0 \mod 3$. 
    
    Finally, we treat the case where $C_{j-1-j'} = E_{j'-1}$ with $C_{j-j'} \subseteq E_{j'-1}$. In this case, $C_{q(t)-\ell'+j'-2}$ has degree $0$ into $D_{j'-1}$,  whereas $B_j\subseteq E_{j'-1}$ has degree $1$ into $D_{j'-1}$, making $B_{j-1-j'} = D_{j'-1}$. Then we have $B_{j-1-j'}= D_{j'-1}$ and
    \begin{align*}
        D_{j'} &= D_{j'-1} \cup C_{q(t)-\ell'+j'-1}\, ,\\
        E_{j'} &= E_{j'-1}\, ,
    \end{align*}
    which is what we expect for $j' \equiv 1 \mod 3$. 
\end{proof}

Having established these results, we now begin to look for long-refinement graphs with maximum degree at most $4$. We make the following assumption throughout the rest of this section.

\begin{assumption}
$G$ is a long-refinement graph with $\max \deg(G) \leq 4$.
\end{assumption}

Recall that, by Lemma \ref{lem:value-of-d}, it holds that $d \geq \ell$. We first treat the case that $d = \ell$. In this case, $\ell'$, which we recall is defined as $\min\{\ell,d-1\}$, is thus $\ell-1$. Thus, Lemma \ref{lem:second-symmetry} with some substitutions yields the following equation.
 \begin{multline}\label{eq:d=l-pi^q(t)-l}
	\pi^{q(t)-\ell} = \pi^{q(t)}_{\mathcal{S}} \cup \{P_{i}\cup P_{a-i} \cup P_{b+i} \mid i \in [\ell-1]\} \\\cup \{P_{i} \mid i \in [a+1,b-1]\}\cup 
	\{P_{a} \cup P_{b}\} \cup \{S\cup P_{c-1} \mid t = 1\}
\end{multline} 

The next lemma will allow us to determine the possible expressions for $\pi^{q(t)-\ell-1}$.

\begin{lemma}\label{lem:d=l-A,B}
    Let $d=\ell$. Then (after possibly swapping the roles of $A_{q(t)-\ell}$ and $B_{q(t)-\ell}$) we have that $A_{q(t)-\ell} = P_a \cup P_b$, and either $B_{q(t)-\ell} \in \pi^{q(t)}_\mathcal{S}$ or $B_{q(t)-\ell} = P_1 \cup P_{a-1} \cup P_{b+1}$%
    , each with $\deg_{B_{q(t)-\ell}}(P_{a-1}\cup P_{b+1}) - \deg_{B_{q(t)-\ell}}(P_{1}) = 1$.
\end{lemma}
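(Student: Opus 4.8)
The plan is to take the explicit description of $\pi^{q(t)-\ell}$ supplied by Lemma~\ref{lem:second-symmetry} (with $d=\ell$, so that $\ell'=\ell-1$) and to run Lemma~\ref{rmk:work-backwards} one further step into the past. The unique unbalanced class of $\pi^{q(t)-\ell}$ is $C_{q(t)-\ell}=P_1\cup P_{a-1}\cup P_{b+1}$ (the $h=\ell'$ instance of $C_{q(t)-h-1}=P_{d-h}\cup P_{c+h}\cup P_{n_\mathcal{P}-h}$), and by the reverse-engineered partitions it splits, in the next iteration, into $P_1$ and $P_{a-1}\cup P_{b+1}$. By Lemma~\ref{rmk:work-backwards} there are exactly two classes $A_{q(t)-\ell},B_{q(t)-\ell}\in\pi^{q(t)-\ell}$ with respect to which $C_{q(t)-\ell}$ is unbalanced, and $C_{q(t)-\ell-1}=A_{q(t)-\ell}\cup B_{q(t)-\ell}$; identifying these two classes is the whole content of the lemma. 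Writing $P_1=\{u,w\}$ with $u$ complete to $P_a$ and $w$ complete to $P_b$ (Lemma~\ref{lem:P_1-P_aP_b}) and using the consecutive matchings $P_{a-1}$--$P_a$ and $P_{b+1}$--$P_b$ (Lemma~\ref{lem:successive-matching}), a direct count gives degrees $2,2,1,1,1,1$ of the six vertices into the size-$4$ class $P_a\cup P_b$; hence $C_{q(t)-\ell}$ is unbalanced there, and after the allowed swap we may set $A_{q(t)-\ell}=P_a\cup P_b$.

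The asserted degree identity is then immediate and, conveniently, does not depend on the identity of $B_{q(t)-\ell}$: applying the balance equation~\eqref{eq:equal-diff} at index $q(t)-\ell$ to the split $C_{q(t)-\ell}\to\{P_1,\,P_{a-1}\cup P_{b+1}\}$ gives $\deg_{P_a\cup P_b}(P_1)+\deg_{B_{q(t)-\ell}}(P_1)=\deg_{P_a\cup P_b}(P_{a-1}\cup P_{b+1})+\deg_{B_{q(t)-\ell}}(P_{a-1}\cup P_{b+1})$, and substituting the values $2$ and $1$ computed above yields $\deg_{B_{q(t)-\ell}}(P_{a-1}\cup P_{b+1})-\deg_{B_{q(t)-\ell}}(P_1)=1$, as required.

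It remains to show that $B_{q(t)-\ell}$ is a singleton of $\pi^{q(t)}_{\mathcal{S}}$ or $C_{q(t)-\ell}$ itself. Here I would use that all six vertices share one total degree $D$ (Lemma~\ref{rmk:work-backwards}) and that $u$ already has degree $3$, so $\max\deg(G)\le 4$ forces $D\in\{3,4\}$. Once the edges into $P_a\cup P_b$ and into the neighbouring size-$6$ class $P_2\cup P_{a-2}\cup P_{b+2}$ (where all six vertices have degree $1$, so it is balanced) are accounted for, each vertex's residual degree must be spent either internally to $C_{q(t)-\ell}$ or into the single further unbalancing class $B_{q(t)-\ell}$. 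The symmetry $P_{a-j}\leftrightarrow P_{b+j}$ from Section~\ref{sec:rev-eng} forces the extra neighbours of $P_{a-1}$ and of $P_{b+1}$ to lie in one common class, since two distinct targets would make $C_{q(t)-\ell}$ unbalanced with respect to three classes. Lemma~\ref{lem:non-successive-pairs} makes any attachment to a lone pair complete, contributing $2$ and, together with the two consecutive matchings, overshooting degree $4$; this is exactly where $\max\deg(G)\le 4$ discards the pair $P_{a+1}$ (the only such candidate, and only when $b=a+2$). The two surviving possibilities — a common singleton neighbour, or purely internal edges making $C_{q(t)-\ell}$ unbalanced with respect to itself — are precisely the two cases of the statement.

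The step I expect to be hardest is the clean elimination of the class $M_t=P_{c-1}\cup S_{q(t)}$ (when $t=1$) as a candidate for $B_{q(t)-\ell}$: the left--right symmetry stops at $j=\ell$, so $P_{c-1}=P_{a-\ell-1}$ sits just outside the symmetric region and the argument above does not apply to it. Ruling it out needs the trivial-attachment description of $M_t$ from Lemma~\ref{lem:end-neighbours} together with the degree bound — a complete attachment of $M_t$ to the size-$6$ class $C_{q(t)-\ell}$ would force degree $6$ at an $M_t$-vertex — and a careful \emph{directional} balancedness argument, since balancedness of $M_t$ only controls $\deg_{C_{q(t)-\ell}}(M_t)$, not $\deg_{M_t}(C_{q(t)-\ell})$. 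Finally, the small-parameter regimes must be checked separately: when $\ell=1$ the class $P_2\cup P_{a-2}\cup P_{b+2}$ is absent and the role of the neighbouring class is played by $P_c\cup P_{n_\mathcal{P}}$ and $P_d$, and when $a$ is small (e.g.\ $a=3$, where $P_{a-2}=P_1$) a putative consecutive matching is in fact an edge internal to $C_{q(t)-\ell}$, which shifts the degree accounting.
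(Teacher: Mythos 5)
Your overall strategy is the paper's: read off $\pi^{q(t)-\ell}$ from Lemma \ref{lem:second-symmetry}, show that $C_{q(t)-\ell}=P_1\cup P_{a-1}\cup P_{b+1}$ is unbalanced wrt $P_a\cup P_b$ so that $A_{q(t)-\ell}=P_a\cup P_b$ after the allowed swap, obtain the degree identity from Equation \ref{eq:equal-diff}, and then eliminate every other candidate for $B_{q(t)-\ell}$. The identification of $A_{q(t)-\ell}$ and the derivation of the identity are fine, but two of your elimination steps have genuine problems. First, your count ``$2,2,1,1,1,1$'' silently assumes $\deg_{P_a\cup P_b}(P_{a-1}\cup P_{b+1})=1$, whereas Equation \ref{eq:consecutive-fours} also permits the value $3$ (a complete bipartite attachment between $P_{a-1}$ and $P_b$, resp.\ $P_{b+1}$ and $P_a$, is allowed by Lemma \ref{lem:non-successive-pairs}); since the sign of the asserted identity flips between the two cases, the value $3$ must be excluded explicitly. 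It is excluded by the degree cap on $P_a\cup P_b$: each vertex of $P_a$ would then have one neighbour in $P_1$, three in $P_{a-1}\cup P_{b+1}$, and one more in $P_b$ or $P_{a+1}$ from the consecutive matching, exceeding degree $4$ --- this is exactly the ``degree restrictions on $P_a\cup P_b$'' step in the paper's proof. Second, your dismissal of $P_{a+1}$ by ``overshooting degree $4$'' does not work: a vertex of $P_{a-1}$ completely attached to $P_{a+1}$ has exactly $2+1+1=4$ neighbours, which is permitted. The correct reason is parity: Lemma \ref{lem:non-successive-pairs} puts both $\deg_{P_{a+1}}(P_1)$ and $\deg_{P_{a+1}}(P_{a-1}\cup P_{b+1})$ in $\{0,2\}$, so their difference can never equal the required $1$.

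On the step you flag as hardest, the elimination of $M_1=S_{q(t)}\cup P_{c-1}$, your worry about directionality is unfounded: Lemma \ref{lem:end-neighbours} states both directions, and in particular $\deg_{S_{q(t)}\cup P_{c-1}}(C)\in\{0,3\}$ for the relevant classes $C$, i.e.\ all six vertices of $C_{q(t)-\ell}$ have the same number of neighbours in $M_1$, so the degree difference is $0$ and $M_1$ is discarded in one line, exactly as in the paper. With the two gaps above repaired, your argument coincides with the paper's.
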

 \begin{proof}
 Note the inclusion of $P_1 \cup P_{a-1} \cup P_{b+1}$ and $P_a \cup P_b$ in the expression for $\pi^{q(t)-\ell}$ (Equation \ref{eq:d=l-pi^q(t)-l}). Lemma \ref{lem:P_1-P_aP_b} yields $\deg_{P_a\cup P_b}(P_1) = 2$, and Lemma \ref{lem:first-symmetry-neighbours} with the degree restrictions on $P_a\cup P_b$ yields $\deg_{P_a\cup P_b}(P_{a-1}\cup P_{b+1}) = 1$. Then $P_1 \cup P_{a-1} \cup P_{b+1}$ is unbalanced wrt $P_a \cup P_b$ and therefore, by Lemma \ref{rmk:work-backwards}, either $A_{q(t)-\ell}$ or $B_{q(t)-\ell}$ is $P_a \cup P_b$. Let us fix $A_{q(t)-\ell} = P_a \cup P_b$, without loss of generality. By Lemma \ref{rmk:work-backwards}, the class $B_{q(t)-\ell} \in \pi^{q(t)-\ell}\backslash\{A_{q(t)-\ell}\}$ must satisfy $\deg_{B_{q(t)-\ell}}(P_{a-1} \cup P_{b+1}) - \deg_{B_{q(t)-\ell}}(P_{1}) = 1$. 
     By Corollary \ref{cor:second-symmetry-neighbours}, there is no class $C \in \{P_{i}\cup P_{a-i} \cup P_{b+i} \mid i \in [2,\ell-1]\}$ that can satisfy this: by Observations \ref{degree-middle6} and \ref{degree-bottom6}, either $i=2$ and $G[P_1 \cup P_{a-1} \cup P_{b+1},C]$ is a matching, or $E(G[P_1 \cup P_{a-1} \cup P_{b+1},C]) = \emptyset$.  Lemma \ref{lem:end-neighbours} discounts  $B_{q(t)-\ell} = S \cup P_{c-1}$. So, observing the expression for $\pi^{q(t)-\ell}$, we see that $B_{q(t)-\ell}\in \pi^{q(t)}_\mathcal{S}$ or $B_{q(t)-\ell+1} = P_1 \cup P_{a-1} \cup P_{b+1}$. 
 \end{proof}

 Lemma \ref{lem:d=l-A,B} yields two possible expressions for $\pi^{q(t)-\ell-1}$. Taking $B_{q(t)-\ell} \in \pi^{q(t)}_\mathcal{S}$, we fix one expression for $\pi^{q(t)-\ell-1}$. By analysing this partition, we can determine possible expressions of previous partitions and the graphs that realise them.

\begin{lemma}\label{lem:dldegnot23}
Assume $d=\ell$, $A_{q(t)-\ell} = P_a \cup P_b$, and $B_{q(t)-\ell} \in \pi^{q(t)}_\mathcal{S}$. Then $G$ is one of the graphs depicted in Figure \ref{fig:LR-with-d=l-singleton}. 
     \begin{figure}[htpb]
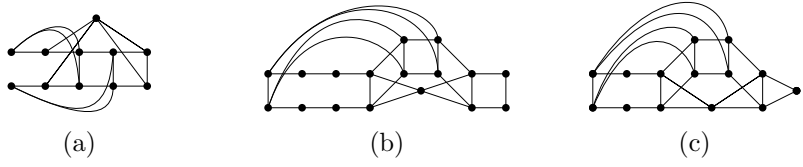

         \centering 
         \begin{subfigure}{0.24\linewidth}
            \centering 
            \includegraphics[scale=0.8]{d=l-deg34-11.pdf}
         \caption{}\label{fig:d=l-deg34-11}
         \end{subfigure}
         \begin{subfigure}{0.24\linewidth}
            \centering 
            \includegraphics[scale=0.8]{d=l-deg24-17.pdf}
         \caption{}\label{fig:d=l-deg24-17}
         \end{subfigure}
         \begin{subfigure}{0.24\linewidth}
            \centering 
            \includegraphics[scale=0.8]{d=l-deg24-14.pdf}
         \caption{}\label{fig:d=l-deg24-14}
         \end{subfigure}
         \caption{The long-refinement graphs with $d=\ell$ and $B_{q(t)-\ell} \in \pi^{q(t)}_\mathcal{S}$.}
         \label{fig:LR-with-d=l-singleton}
     \end{figure}
\end{lemma}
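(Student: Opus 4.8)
The plan is to continue the reverse-engineering started in Lemma \ref{lem:d=l-A,B}, working backwards from the explicit partition that Equation (\ref{eq:d=l-pi^q(t)-l}) and Lemma \ref{lem:d=l-A,B} hand us. In $\pi^{q(t)-\ell-1}$ the unique unbalanced class is the size-$5$ class $C_{q(t)-\ell-1} = P_a \cup P_b \cup S$ with $S = B_{q(t)-\ell}\in\pi^{q(t)}_\mathcal{S}$, while the remaining classes are the singletons $\pi^{q(t)}_\mathcal{S}\setminus\{S\}$, the size-$6$ classes $P_i\cup P_{a-i}\cup P_{b+i}$ for $i\in[\ell-1]$, any middle pair $P_i$ with $i\in[a+1,b-1]$, and, if $t=1$, the class $S_{q(t)}\cup P_{c-1}$. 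The goal is to identify the two classes with respect to which $C_{q(t)-\ell-1}$ is unbalanced (Lemma \ref{rmk:work-backwards}), iterate this across the whole cascade of size-$6$ classes, and let the degree-$4$ bound collapse the possibilities to finitely many graphs.

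First I would read off the neighbourhood of $C_{q(t)-\ell-1}$. Using Lemmas \ref{lem:successive-matching}, \ref{lem:non-successive-pairs}, \ref{lem:P_1-P_aP_b}, \ref{lem:first-symmetry-neighbours} together with the bound $\max\deg(G)\leq 4$, each vertex of $P_a$ and of $P_b$ has exactly two neighbours in the adjacent size-$6$ class $P_1\cup P_{a-1}\cup P_{b+1}$ — the matching edge to $P_{a-1}$ respectively $P_{b+1}$, and the complete-bipartite edge to $u\in P_1$ from Lemma \ref{lem:P_1-P_aP_b} — since the degree bound rules out the $\{0,2\}$-option of Lemma \ref{lem:non-successive-pairs} giving a further edge. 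By contrast, Lemma \ref{lem:d=l-A,B} forces $\deg_S(P_1\cup P_{a-1}\cup P_{b+1}) = 2\deg_S(P_1)+1$ to be odd, so $C_{q(t)-\ell-1}$ is unbalanced with respect to $P_1\cup P_{a-1}\cup P_{b+1}$. The second unbalanced partner comes either from $C_{q(t)-\ell-1}$ being unbalanced with respect to itself (its internal edges — the $P_a$–$P_b$ matching when $b=a+1$, or the edges from $s$ into $P_a\cup P_b$ — not being regular) or from a further class adjacent to $s$; a short case analysis on $b\in\{a+1,a+2\}$ and on the adjacency of $s$, again using the degree-$4$ bound, determines it and hence yields $C_{q(t)-\ell-2}$.

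From there the cascade must be handled uniformly in $\ell$. The clean route is to check the hypotheses of the ``growing class'' lemmas and invoke the appropriate one: Lemma \ref{lem:1classgettingbigger} when a single class absorbs the chain $C_{q(t)-\ell},C_{q(t)-\ell+1},\dots$ one step at a time, or Lemma \ref{lem:2classgettingbigger} (with the size-$6$ neighbour playing the role of $M$ and the mod-$3$ growth pattern appearing) when two classes grow in tandem. Here the degree-$4$ restriction, sharpened by Observations \ref{degree-middle6} and \ref{degree-bottom6}, is the engine: it caps the degree of $s$ and of every vertex absorbed into the growing class, so continuing the cascade for large $\ell$ eventually forces a vertex of degree exceeding $4$ or a partition violating Lemma \ref{rmk:work-backwards}. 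This bounds $\ell$ by a small constant, after which I would reconstruct each surviving candidate all the way back to the monochromatic $\pi^0$, read off the graph, and discard any whose degree set is not $\{2,4\}$ or $\{3,4\}$ or that fails to be long-refinement; exactly the three graphs of Figure \ref{fig:LR-with-d=l-singleton} remain.

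The main obstacle is the middle step: correctly reading the neighbourhood of the size-$5$ class and verifying precisely which of Lemmas \ref{lem:1classgettingbigger} and \ref{lem:2classgettingbigger} governs the cascade. The degree bookkeeping there is delicate and genuinely differs across the sub-cases $b=a+1$ versus $b=a+2$ (Lemma \ref{lem:bdista}) and $t=0$ versus $t=1$, because these change both the internal structure of $C_{q(t)-\ell-1}$ and the set of classes $s$ may touch. Once the cascade is pinned down and $\ell$ is bounded, the reconstruction to $\pi^0$ and the final enumeration are routine.
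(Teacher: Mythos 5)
Your overall strategy --- reverse-engineering from $\pi^{q(t)-\ell-1}$, identifying the two classes with respect to which $C_{q(t)-\ell-1}=P_a\cup P_b\cup S$ is unbalanced, and letting the degree-$4$ bound collapse the cascade --- is the right one and matches the paper's. However, two of your intermediate steps do not hold up. First, the claim that $\deg_S(P_1\cup P_{a-1}\cup P_{b+1})=2\deg_S(P_1)+1$ is odd is false: Lemma \ref{lem:d=l-A,B} forces $\deg_{S}(P_{a-1}\cup P_{b+1})=1$ and $\deg_S(P_1)=0$, so the singleton is adjacent to all four vertices of $P_{a-1}\cup P_{b+1}$ and to neither vertex of $P_1$, i.e.\ it has exactly $4$ neighbours in the size-$6$ class. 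The unbalancedness you want still follows (from $4\neq 2$, not from parity), but this correct count is also the linchpin of the whole argument and you never exploit it: since the singleton is already degree-saturated, Equation \ref{eq:equal-diff} forces $P_a\cup P_b$ to have degree exactly $2$ into $B_{q(t)-\ell-1}$ and $0$ into every other class, which immediately rules out $b=a+2$ (the matching into $P_{a+1}$ would give degree $1$ there), forces $B_{q(t)-\ell-1}$ to be $C_{q(t)-\ell-1}$ itself, and yields $P_a,P_b\in E(G)$.

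Second, your plan for the cascade --- invoking Lemma \ref{lem:1classgettingbigger} or Lemma \ref{lem:2classgettingbigger} and arguing that large $\ell$ ``eventually'' fails --- is not what is needed, and its hypotheses are never checked. Once the observations above are in place, $C_{q(t)-\ell-2}=S\cup P_1\cup P_{a-1}\cup P_a\cup P_b\cup P_{b+1}$ is an $11$-vertex class and the dichotomy is immediate: either it equals $V(G)$, giving the $\{3,4\}$ graph of Figure \ref{fig:d=l-deg34-11}, or all of its vertices have degree $4$, in which case $S\cup P_a\cup P_b$ is saturated and $P_1\cup P_{a-1}\cup P_{b+1}$ can meet exactly one further class $B_{q(t)-\ell-2}$, which must be $P_2\cup P_{a-2}\cup P_{b+2}$ or $P_2\cup S_{q(t)}$ and must complete $V(G)$ (Observations \ref{degree-middle6} and \ref{degree-bottom6} exclude $\deg(B_{q(t)-\ell-2})=4$). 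This is what bounds $\ell\leq 2$ and produces exactly the two remaining graphs of Figure \ref{fig:LR-with-d=l-singleton}; your proposal never reaches this enumeration, and the growing-class lemmas are a detour that the saturation argument makes unnecessary here.
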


\begin{proof}
	The expression for $\pi^{q(t)-\ell}$ from Equation \ref{eq:d=l-pi^q(t)-l} and the values of $A_{q(t)-\ell}$ and $B_{q(t)-\ell}$ give the following expression for $\pi^{q(t)-\ell-1}$.
	\begin{multline*}
		\pi^{q(t)-\ell-1} = \pi^{q(t)}_{\mathcal{S}}\backslash\{B_{q(t)-\ell}\} \cup \{P_{i}\cup P_{a-i} \cup P_{b+i} \mid i \in [\ell-1]\} \\\cup \{P_i \mid i \in [a+1,b-1]\ \} \cup
		\{P_{a} \cup P_{b} \cup B_{q(t)-\ell}\} \cup M_t
	\end{multline*}
	
	Recall that  $\deg_{B_{q(t)-\ell}}(P_{a-1}\cup P_{b+1}) - \deg_{B_{q(t)-\ell}}(P_{1}) = 1$. Since $|B_{q(t)-\ell}|=1$, this implies that $\deg_{B_{q(t)-\ell}}(P_1) = 0$ and $\deg_{B_{q(t)-\ell}}(P_{a-1}\cup P_{b+1}) = 1$. Then, $\deg_{P_1 \cup P_{a-1} \cup P_{b+1}}(B_{q(t)-\ell}) = 4$ and $\deg_{P_1 \cup P_{a-1} \cup P_{b+1}}(P_a \cup P_b) = 2$ by Lemmas \ref{lem:P_1-P_aP_b} and \ref{lem:non-successive-pairs}. Thus $C_{q(t)-\ell-1}$ is unbalanced wrt $P_1 \cup P_{a-1} \cup P_{b+1}$ so we can assume, without loss of generality, that $A_{q(t)-\ell-1} = P_1 \cup P_{a-1} \cup P_{b+1}$. By degree restrictions on $B_{q(t)-\ell}$, $\deg_{B_{q(t)-\ell}}(B_{q(t)-\ell-1}) = 0$, so to satisfy Lemma \ref{rmk:work-backwards}, $P_a \cup P_b$ will have degree $2$ into $A_{q(t)-\ell-1}$ and $B_{q(t)-\ell-1}$, and $0$ into all other classes in $\pi^{q(t)-\ell-1}$. If $b=a+2$, the degree of $P_a \cup P_b$  into class $P_{a+1}$ is $1$,  
	so $b=a+1$ and $P_a,P_b$ must be consecutive pairs. As $P_a \cup P_b$ has non-zero degree into $B_{q(t)-\ell} \cup P_a \cup P_b$, the latter class must be $B_{q(t)-\ell-1}$. This implies that $P_a \cup P_b$ has degree $2$ into $B_{q(t)-\ell} \cup P_a \cup P_b$ and thus that $P_a,P_b \in E(G)$. With this and the expression for $A_{q(t)-\ell-1}$, we get 
    $C_{q(t)-\ell-2} = B_{q(t)-\ell}\cup P_1 \cup P_{a-1} \cup P_a \cup P_b \cup P_{b+1}$. 
    Either $C_{q(t)-\ell-2}$ is $V(G)$ and $\deg(B_{q(t)-\ell} \cup P_a \cup P_b) \neq \deg(P_1 \cup P_{a-1} \cup P_{b+1})$, or 
	all vertices in ${C_{q(t)-\ell-2}}$ have degree $4$. In the first case, since vertices in $P_1$, and therefore also $P_{a-1} \cup P_{b+1}$, have degree at least 3, $G$ must be the graph illustrated in Figure \ref{fig:d=l-deg34-11} with $\deg(G) = \{3,4\}$. 
	
	Now, consider the second case. Notice that  $B_{q(t)-\ell} \cup P_a \cup P_b$ has degree $4$ into ${C_{q(t)-\ell-2}}$, so must have degree $0$ into all other classes. Thus, by Lemma \ref{rmk:work-backwards}, $P_1 \cup P_{a-1} \cup P_{b+1}$ must be adjacent to exactly one other class, which we may call $B_{q(t)-\ell-2}$ without loss of generality, and $\deg_{B_{q(t)-\ell-2}}(P_1 \cup P_{a-1} \cup P_{b+1})$ is at most $2$.  
	Degree restrictions on $B_{q(t)-\ell-2}$ imply that it is not a singleton. Therefore $B_{q(t)-\ell-2}$ is either $P_2 \cup P_{a-2} \cup P_{b+2}$ ($\ell>1$) or $P_2 \cup S_{q(t)}$  ($\ell=1$ and $t=1$). 
	By Observations \ref{degree-middle6} and \ref{degree-bottom6}, notice that $\deg(B_{q(t)-\ell-2}) = 4$ only in the case illustrated in Figure \ref{fig:graph-bottom-t=1B}, where $t=1$ and $\deg_{P_d \cup P_c \cup P_{n_\mathcal{P}}}(S\cup P_{c-1})=4$. Note, however, that this would result in a graph that is either $4$-regular or disconnected. Therefore $C_{q(t)-\ell-3}$ must be $V(G)$, with either $B_{q(t)-\ell-2} = P_d \cup P_c \cup P_{n_\mathcal{P}}$, as depicted in Figure \ref{fig:graph-bottom-t=0}, or $S\cup P_{c-1}$, as depicted in Figure \ref{fig:graph-bottom-t=1A}. The resulting graphs are depicted in Figure \ref{fig:d=l-deg24-17} and \ref{fig:d=l-deg24-14}. 
\end{proof}

Now, consider the second case from Lemma \ref{lem:d=l-A,B}, in which $B_{q(t)-\ell} = P_1 \cup P_{a-1} \cup P_{b+1}$, and thus consider the second possible expression for $\pi^{q(t)-\ell-1}$. There are infinitely many graphs which can realise this partition at iteration $q(t)-\ell-1$. We describe them below. 

\begin{lemma}\label{lem:d=l--P1Pa-1PaPbPb+1}
   Assume $d =\ell$, $A_{q(t)-\ell} = P_a \cup P_b$  and $B_{q(t)-\ell} = P_1 \cup P_{a-1} \cup P_{b+1}$. If $\deg(G) = \{2,3\}$, then $G$ is represented by a string in one of the following sets:
   \newpage
    \begin{itemize}
        \item $\{\mathrm{S1^{\mathit{k}}01^{\mathit{k}}1XX1^{\mathit{k}}1_2} \mid k \in \N_0\}$
        \item $\{\mathrm{S1^{\mathit{k}}0011^{\mathit{k}}XX1^{\mathit{k}}10} \mid k \in \N_0\}$ 
        \item $\{\mathrm{S(011)^{\mathit{k}}00(110)^{\mathit{k}}1X0X1(011)^{\mathit{k}}0} \mid k \in \N_0\}$
    \end{itemize}
    If $\deg(G) \neq \{2,3\}$, then $G$ is one of the graphs in Figure \ref{fig:LR-with-d=l} or in Table \ref{tab:adj-list-d=l-34}.
    \begin{figure}[htpb]
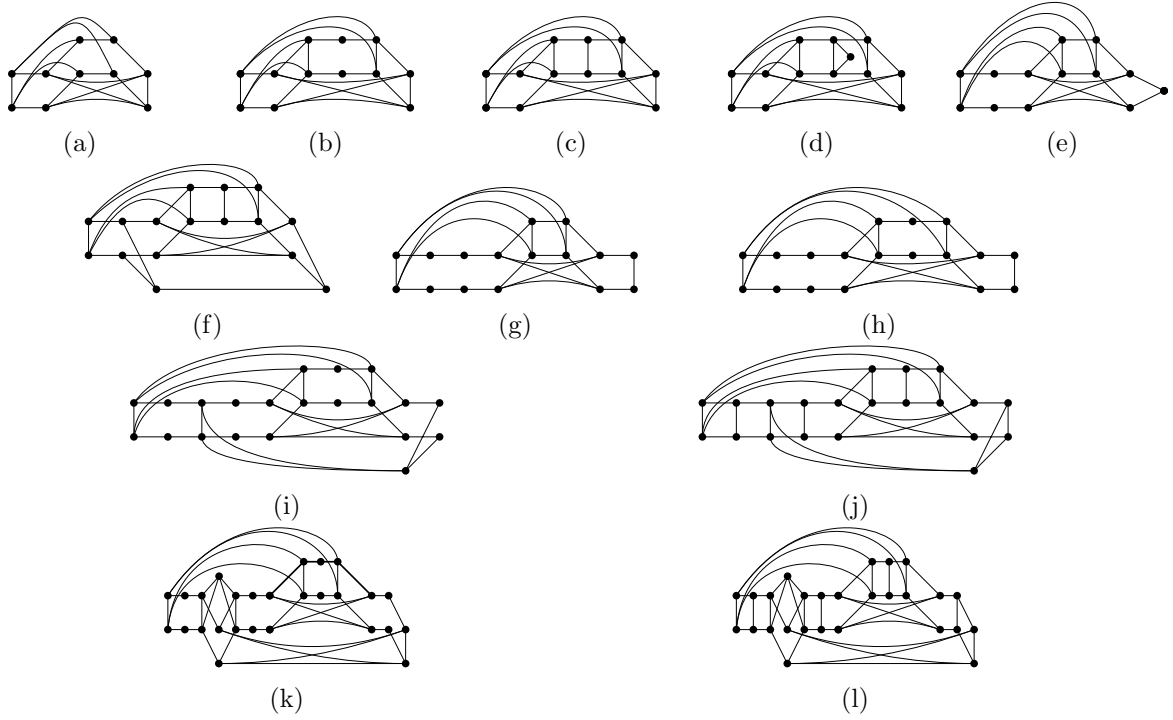

    \centering
        \begin{subfigure}{0.19\linewidth}
            \centering 
            \includegraphics[scale=0.8]{d=l-deg34-10.pdf}
         \caption{}\label{fig:d=l-deg34-10}
         \end{subfigure}
        \begin{subfigure}{0.19\linewidth}
            \centering 
            \includegraphics[scale=0.8]{d=l-deg24-12.pdf}
         \caption{}\label{fig:d=l-deg24-12}
        \end{subfigure}
        \begin{subfigure}{0.19\linewidth}
            \centering 
            \includegraphics[scale=0.8]{d=l-deg34-12.pdf}
         \caption{}\label{fig:d=l-deg34-12}
        \end{subfigure}
        \begin{subfigure}{0.19\linewidth}
            \centering 
            \includegraphics[scale=0.8]{d=l-deg24-13A.pdf}
         \caption{}\label{fig:d=l-deg34-13A}
        \end{subfigure}
        \begin{subfigure}{0.19\linewidth}
            \centering 
            \includegraphics[scale=0.8]{d=l-deg24-13.pdf}
         \caption{}\label{fig:d=l-deg24-13}
        \end{subfigure}
        \begin{subfigure}{0.24\linewidth}
            \centering 
            \includegraphics[scale=0.8]{deg34-16.pdf}
         \caption{}\label{fig:d=l-deg34-16}
        \end{subfigure}
        \begin{subfigure}{0.24\linewidth}
            \centering 
            \includegraphics[scale=0.8]{d=l-deg24-16.pdf}
         \caption{}\label{fig:d=l-deg24-16}
        \end{subfigure}
        \begin{subfigure}{0.33\linewidth}
            \centering 
            \includegraphics[scale=0.8]{d=l-deg24-18.pdf}
         \caption{}\label{fig:d=l-deg24-18}
         \end{subfigure}
        \begin{subfigure}{0.45\linewidth}
            \centering 
            \includegraphics[scale=0.8]{deg24-21.pdf}
         \caption{}\label{fig:d=l-deg24-21}
        \end{subfigure}
        \begin{subfigure}{0.45\linewidth}
            \centering 
            \includegraphics[scale=0.8]{deg34-21.pdf}
         \caption{}\label{fig:d=l-deg34-21}
        \end{subfigure}
        \begin{subfigure}{0.45\linewidth}
            \centering 
            \includegraphics[scale=0.8]{d=l-deg24-25.pdf}
         \caption{}\label{fig:d=l-deg24-27}
        \end{subfigure}
        \begin{subfigure}{0.45\linewidth}
            \centering 
            \includegraphics[scale=0.8]{d=l-deg34-25.pdf}
            \caption{}\label{fig:d=l-deg34-27}
        \end{subfigure}
        \caption{The long-refinement graphs $G$ with $d=\ell$, $B_{q(t)-\ell} = P_1 \cup P_{a-1} \cup P_{b+1}$, and $\deg(G) \neq \{2,3\}$.}
    \label{fig:LR-with-d=l}
\end{figure}

\begin{table}[htpb]
    \centering
\begin{tabular}{c|c|c}
    vertex & adjacency 1 with $n \in \N$ & adjacency 2 with $n \in \N_0$\\ \hline
    0 & $2n+1, 2n+2, 2n+5, 2n+6 $ & $2n+1, 2n+2, 2n+5, 2n+6 $\\
    1 & $3,4n+7, 4n+8$ & $3,4n+7, 4n+8$\\
    2 & $4, 4n+9, 4n+10$&  $4, 4n+9, 4n+10$\\
    odd  $i \in  I$& $i-2,i+1,i+2$& $i-2,i+1,i+2$\\
    even $i \in I$& $i-2,i-1, i+2$& $i-2,i-1, i+2$\\
    $2n+1^*$& $0, 2n-1, 2n+3$& $2n+1, 2n+2, 2n+5$\\
    $2n+2^*$& $0, 2n, 2n+4$& $2n, 2n+1 2n+4$\\
    $2n+3$& $2n+1, 2n+5, 6n+11, 6n+12$&$0, 2n+1, 2n+4, 2n+5$\\
    $2n+4$& $2n+4, 2n+8, 6n+17, 6n+18$&$0, 2n+2, 2n+3, 2n+6$\\
    $2n+5$& $0, 2n+3, 2n+7$&$2n+3,2n+6,2n+7$\\
    $2n+6$& $0, 2n+4, 2n+8$&$2n+4,2n+5,2n+8$\\
    $4n+7$& $1, 4n+5, 4n+9$&$1, 4n+5, 4n+9$\\
    $4n+8$&  $1, 4n+6, 4n+10$& $1, 4n+6, 4n+10$\\
    $4n+9$& $2,4n+7, 4n+11$&$2,4n+7, 4n+11$\\
    $4n+10$& $2, 4n+8, 4n+12$&$2, 4n+8, 4n+12$\\
    $6n+11$& $2n+3, 2n+4, 6n+9$&$0,6n+12,6n+10$\\
    $6n+12$&$2n+3, 2n+4, 6n+10$&$0,6n+11,6n+10$\\
\end{tabular}
    \caption{The two infinite families of long-refinement graphs $G$ with order $6n+13$ for $\deg(G) = \{3,4\}$ where $I= [3,2n]\cup[2n+7,4n+6] \cup [4n+11, 6n+10]$. Note also that $2n+1, 2n+2$ are omitted when $n=0$.} \label{tab:adj-list-d=l-34}
\end{table}

\end{lemma}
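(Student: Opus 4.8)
The plan is to push the reverse-engineering from $\pi^{q(t)-\ell}$ back to the single class $V(G)$, reading off the graph once that class is reached. Since $d = \ell$ forces $\ell' = \ell - 1$, the hypotheses identify $B_{q(t)-\ell}$ with the chain endpoint $C_{q(t)-\ell'-1}$, so the new class of the preceding partition is $C_{q(t)-\ell-1} = A_{q(t)-\ell}\cup B_{q(t)-\ell} = P_1 \cup P_{a-1}\cup P_a \cup P_b \cup P_{b+1}$. First I would fix the edges incident to this class from the structural lemmas already available: Lemma \ref{lem:P_1-P_aP_b} determines the attachment of $P_1$ to $P_a$ and $P_b$, Lemmas \ref{lem:successive-matching} and \ref{lem:non-successive-pairs} determine the matchings and trivial attachments among the remaining pairs, and the identity $\deg_{B_{q(t)-\ell}}(P_{a-1}\cup P_{b+1}) - \deg_{B_{q(t)-\ell}}(P_1) = 1$ from Lemma \ref{lem:d=l-A,B} fixes the internal wiring of $P_1 \cup P_{a-1}\cup P_{b+1}$. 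Using Lemma \ref{rmk:work-backwards} I then pin down the exactly two classes against which $C_{q(t)-\ell-1}$ is unbalanced, which dictates the next backward merge.

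The key point is that the size-six chain $\{P_i \cup P_{a-i}\cup P_{b+i}\mid i\in[\ell-1]\}$ is absorbed one class at a time as the reconstruction continues, which is precisely the content of the two cascade lemmas. I would split on $b = a+1$ versus $b = a+2$. When $b = a+1$, the class $P_a \cup P_b$ has all of its neighbours inside $C_{q(t)-\ell-1}$, so the last hypothesis of Lemma \ref{lem:1classgettingbigger} holds and the chain is swallowed by a single growing class; the two terminal outcomes, distinguished by $t$, give the families $\mathrm{S1^{\mathit{k}}01^{\mathit{k}}1XX1^{\mathit{k}}1_2}$ and $\mathrm{S1^{\mathit{k}}0011^{\mathit{k}}XX1^{\mathit{k}}10}$. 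When $b = a+2$, the intermediate pair $P_{a+1}$ satisfies $\deg_{P_a\cup P_b}(P_{a+1}) = 2$ and plays the role of the class $M$ in hypothesis \ref{it:classM} of Lemma \ref{lem:2classgettingbigger}; the period-three merging pattern of that lemma generates the periodic block $(011)^{\mathit{k}}$, yielding $\mathrm{S(011)^{\mathit{k}}00(110)^{\mathit{k}}1X0X1(011)^{\mathit{k}}0}$.

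Running the same reconstruction without the restriction $\deg(G) = \{2,3\}$ produces the graphs carrying a degree-$4$ vertex. Once the cascade exhausts the chain, the reconstruction reaches $V(G)$, and Observations \ref{degree-middle6} and \ref{degree-bottom6}, together with the parity of $|V(G)|$, determine whether a trailing singleton appears and which bottom configuration (in particular Figure \ref{fig:graph-bottom-t=1B}) is realised. The small terminal instances give the finite graphs of Figure \ref{fig:LR-with-d=l}, while the cascade families reappear as the two infinite $\deg(G) = \{3,4\}$ families of Table \ref{tab:adj-list-d=l-34}, whose $(011)$-periodicity is encoded through the index set $I$.

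The main obstacle I anticipate is the bookkeeping of the two-class cascade: verifying hypothesis \ref{it:classM} of Lemma \ref{lem:2classgettingbigger} and the correct assignment of $A_j$ and $B_j$ at the start, and then matching its $\bmod 3$ output to the string period $(011)$ and to the adjacency pattern of Table \ref{tab:adj-list-d=l-34}. Treating the boundary instances ($\ell$ small, and $t = 1$ with the degree-$4$ bottom of Figure \ref{fig:graph-bottom-t=1B}) uniformly with the generic cascade, and ensuring that across the four combinations of $b - a \in \{1,2\}$ and $t \in \{0,1\}$ no configuration is missed or double-counted, is where the bulk of the care lies.
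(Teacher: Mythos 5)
Your proposal follows essentially the same route as the paper: identify $C_{q(t)-\ell-1} = P_1\cup P_{a-1}\cup P_a\cup P_b\cup P_{b+1}$, constrain its incident edges via the degree-difference identity and Lemmas \ref{lem:successive-matching}--\ref{lem:non-successive-pairs}, split on $b-a$, and drive the backward reconstruction with Lemma \ref{lem:1classgettingbigger} (for $b=a+1$) and Lemma \ref{lem:2classgettingbigger} with $M=P_{a+1}$ (for $b=a+2$), separating the degree-$3$ cascades that yield the three string families from the degree-$4$ configurations that terminate early in the sporadic graphs and the Table \ref{tab:adj-list-d=l-34} families. The one point you understate is the claim that the identity \emph{fixes} the internal wiring of $P_1\cup P_{a-1}\cup P_{b+1}$: it only narrows it to four layouts (the paper's Figure \ref{fig:d=lP1Pa-1Pb+1-edges}), two of which occur only when $\ell=1$, and enumerating these layouts is precisely the top-level case split on which the paper's proof is organised.
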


\begin{proof}
	The expression for $\pi^{q(t)-\ell}$ from Equation \ref{eq:d=l-pi^q(t)-l} and the assumed values of $A_{q(t)-\ell}$ and $B_{q(t)-\ell}$ yield the following expression for the previous partition.
	\begin{multline*}
		\pi^{q(t)-\ell-1} = \pi^{q(t)}_{\mathcal{S}} \cup \{P_{i}\cup P_{a-i} \cup P_{b+i} \mid i \in [2,\ell-1]\} \\
		\cup \{P_{i} \mid i \in [a+1,b-1]\}\cup 
		\{P_1 \cup P_{a-1} \cup P_{a} \cup P_{b} \cup P_{b+1}\} \cup M_t
	\end{multline*}

    Recall that $\deg_{P_1\cup P_{a-1} \cup P_{b+1}}(P_{a-1}\cup P_{b+1}) -\deg_{P_1\cup P_{a-1} \cup P_{b+1}}(P_1) = 1$. 
    The degrees of $P_1$ and $P_{a-1} \cup P_{b+1}$ into $P_1 \cup P_{a-1} \cup P_{b+1}$ depend on whether $P_{a-1}$ is $P_2$ (which occurs when $\ell=1$ and $t=0$), whether $G[P_a,P_b]$ is empty or complete bipartite, and which among $P_1,P_{a-1},$ and $P_{b+1}$ are in $E(G)$. Careful consideration of all cases yields four layouts for the edges incident to $P_1 \cup P_{a-1}\cup P_{b+1}$ which satisfy the degree difference of $1$. These layouts are depicted in Figure \ref{fig:d=lP1Pa-1Pb+1-edges} with $P_1 \cup P_{a-1} \cup P_{b+1}$ depicted in blue. 
	
	\begin{figure}[H]
		\centering
		\begin{subfigure}{0.24\linewidth}
			\includegraphics[]{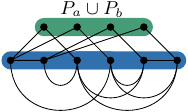}
			\caption{$\ell=1$ and $t=0$}\label{fig:d=lP1Pa-1Pb+1-edges-1}
		\end{subfigure}
		\begin{subfigure}{0.24\linewidth}
			\includegraphics[]{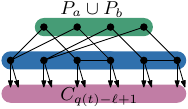}
			\caption{$l=1$ and $t=1$}\label{fig:d=lP1Pa-1Pb+1-edges-4}
		\end{subfigure}
		\begin{subfigure}{0.24\linewidth}
			\includegraphics[]{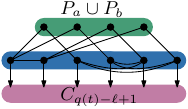}
			\caption{$\ell>1$ or $t=1$}\label{fig:d=lP1Pa-1Pb+1-edges-3}
		\end{subfigure}
		\begin{subfigure}{0.24\linewidth}
			\includegraphics[]{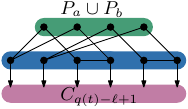}
			\caption{$l>1$ or $t=1$}\label{fig:d=lP1Pa-1Pb+1-edges-2}
		\end{subfigure}

		\caption{Edge layouts satisfying $\deg_{P_1\cup P_{a-1} \cup P_{b+1}}(P_{a-1}\cup P_{b+1}) -\deg_{P_1\cup P_{a-1} \cup P_{b+1}}(P_1) = 1$. Note these depict colour classes as in $\pi^{q(t)-\ell}$ and only the edges incident to $P_1 \cup P_{a-1}\cup P_{b+1}$. The union of the green and blue classes is $C_{q(t)-\ell-1}$. }
		\label{fig:d=lP1Pa-1Pb+1-edges}
	\end{figure} 
	\textbf{Consider Figure~\ref{fig:d=lP1Pa-1Pb+1-edges-1}}, where $\ell=1$ and $t=0$.
	First, let $C_{q(t)-\ell-1} = V(G)$.  Then $\deg(P_a \cup P_b) \neq \deg(P_1 \cup P_{a-1} \cup P_{b+1})$, so $\deg(P_a \cup P_b) \neq 4$. Since $P_a$ and $P_b$ have minimum degree $3$, $G$ must be as depicted in Figure \ref{fig:d=l-deg34-10}. 	
	Next, assume $C_{q(t)-\ell-1} \subsetneq V(G)$. Then $C_{q(t)-\ell-1}$ must be unbalanced wrt exactly two classes $A_{q(t)-\ell-1}$ and $B_{q(t)-\ell-1}$. One of those classes must be $C_{q(t)-\ell-1}$, as it contains all the neighbours of $P_1 \cup P_{a-1}\cup P_{b+1}$ and we say without loss of generality that $A_{q(t)-\ell-1}=C_{q(t)-\ell-1}$. 
	We can observe from Equation \ref{eq:d>l-pi^q(t)-l-1}, taking $\ell=1$ and $t=0$, that the only other classes possible in $\pi^{q(t)-\ell-1}$ are $P_{a+1}$ (if $b=a+2$) and singletons $\pi^{q(t)}_\mathcal{S}$. 
	We first assume that $B_{q(t)-\ell-1}$ is a singleton and prove a contradiction. Since $P_a \cup P_b$ is adjacent to but balanced wrt $B_{q(t)-\ell-1}$, $B_{q(t)-\ell-1}$ must be adjacent to all four vertices of $P_a \cup P_b$. Then all vertices in set $C_{q(t)-\ell-1} \cup B_{q(t)-\ell-1}$ have $4$ neighbours in that same set, making $G$ either disconnected or $4$-regular. Thus $B_{q(t)-\ell-1}$ is $P_{a+1}$ and $C_{q(t)-\ell-2} = P_1 \cup P_{a-1} \cup P_a \cup P_{a+1} \cup P_b \cup P_{b+1}$. Since $\ell=1$ and $t=0$, this contains all the pairs in $\pi^p_\mathcal{P}$.
	If  $\deg (P_{a+1}) \neq \{4\}$, then $V(G) = C_{q(t)-\ell-2}$ and $G$ is as depicted in Figures \ref{fig:d=l-deg24-12}, where $\deg(P_{a+1}) = 2$, and \ref{fig:d=l-deg34-12}, where $\deg(P_{a+1}) = 3$. 
	If $\deg (P_{a+1}) = 4$, then $P_{a+1}$ can be adjacent to exactly one class in $\pi^{q(t)-\ell-2} \backslash \{C_{q(t)-\ell-2}\}$, i.e.\ a singleton  $S\in\pi^{q(t)}_\mathcal{S}$. We then get $C_{q(t)-\ell-3} = S \cup C_{q(t)-\ell-2}$. Since $\deg_{S}(P_{a+1})$ can only be $1$, we also need $P_{a+1} \in E(G)$ to satisfy $\deg (P_{a+1}) = \{4\}$.  Singleton $S$ then has degree $2$ into $P_{a+1}$ and shares no edges with the other pairs. Therefore, $S$ can only have degree $4$ if it is itself adjacent to two other singletons, at which point $C_{q(t)-\ell-3}$ would be unbalanced wrt three classes contradicting Lemma \ref{rmk:work-backwards}. Therefore $C_{q(t)-\ell-2}= V(G)$ and $\deg(P_{a+1})=2$ and $G$ is as depicted in Figure \ref{fig:d=l-deg34-13A}.
	
	\textbf{Next, consider Figure \ref{fig:d=lP1Pa-1Pb+1-edges-4}}. Here, $C_{q(t)-\ell-1} \neq V(G)$, so $\deg(P_a \cup P_b) = \deg(P_1 \cup P_{a-1} \cup P_{b+1})=4$. The vertices of $P_1 \cup P_{a-1}\cup P_{b+1}$ have two neighbours in $C_{q(t)-\ell+1}$, whereas $P_a \cup P_b$ have none by Observations \ref{degree-middle6} and \ref{degree-bottom6}. Thus, $C_{q(t)-\ell+1}$ is in $\pi^{q(t)-\ell-1}\backslash\pi^{q(t)-\ell-2}$ and without loss of generality we may denote it $A_{q(t)-\ell-1}$. Thus, to satisfy Lemma \ref{rmk:work-backwards}, we require the class $B_{q(t)-\ell-1}$ to satisfy $\deg_{B_{q(t)-\ell-1}}(P_a\cup P_b) - \deg_{B_{q(t)-\ell-1}}(P_1 \cup P_{a-1}\cup P_{b+1}) = 2$. Thus $B_{q(t)-\ell-1}$ can be neither a singleton (trivial), nor $P_{a+1}$ (as $\deg_{P_{a+1}}(P_a \cup P_b)=1$), nor any of the classes of size $6$ (Observations \ref{degree-middle6},\ref{degree-bottom6}), and must therefore be $C_{q(t)-\ell-1}$. However, we then have the vertices of set $C_{q(t)-\ell-1} \cup C_{q(t)-\ell+1}$ each with $4$ neighbours in $C_{q(t)-\ell-1} \cup C_{q(t)-\ell+1}$, implying that either $G$ is $4$-regular, or disconnected.
	
	\textbf{Now, consider  Figure \ref{fig:d=lP1Pa-1Pb+1-edges-3}}.  Here, $C_{q(t)-\ell-1} \neq V(G)$, so $\deg(P_a \cup P_b) = \deg(P_1 \cup P_{a-1} \cup P_{b+1})= 4$. 
First, consider the case where $b = a+2$. We then have $\pi^{q(t)-\ell}$ satisfying the assumptions for $\pi^j$ in Lemma \ref{lem:2classgettingbigger}, with $M = P_{a+1}$. Recall that $\ell' = \ell-1$. Then, the results of the aforementioned lemma yield that \[\pi^{q(t)-2\ell-t}\backslash \pi^{q(t)-\ell} = \{D_{\ell-2+t},E_{\ell-2+t}\} \text{ with }P_1 \cup P_{a-1} \cup P_{b+1} = C_{q(t)-\ell} \subseteq E_{\ell-2+t}\]
    We therefore require $\deg(E_{\ell-2+t}) = \deg(P_1 \cup P_{a-1} \cup P_{b+1}) = 4$, which implies that it cannot include any class $P_{d-h} \cup P_{c+h} \cup P_{n_{\mathcal{P}-h}} = C_{q(t)-h-1}$ where $h\in [1, \ell-2]$, by Observations \ref{degree-middle6}. 
    This gives an upper limit to $\ell$, yielding $\ell \in [1,3]$.
    Where $\ell=1$, $t$ must be $1$ as $C_{q(t)-\ell+1}$ exists . By Lemma \ref{lem:2classgettingbigger},  $\deg_{C_{q(t)-\ell+1}}(B_j) = 1$, so we must have edges depicted as in Figure \ref{fig:graph-bottom-t=1A}, and we have $C_{q(t)-\ell-2} = P_{a+1} \cup P_{c-1} \cup S_{q(t)}$. The graph with $\deg(C_{q(t)-\ell-2})=2$ is not a long-refinement graph as $C_{q(t)-\ell-2}$ is then balanced at iteration ${q(t)-\ell-2}$. Degree $\deg(C_{q(t)-\ell-2})=4$ can only be achieved if $G[P_{c-1}\cup S_{q(t)}]$ is complete, $P_{a+1}$ is in $E(G)$, and $P_{a+1}$ is adjacent to a singleton. That singleton, however, would also need to have degree $4$, which is not achievable, given the classes in $\pi^{q(t)-\ell-2}$ without violating Lemma \ref{rmk:work-backwards}. Finally, if $\deg(C_{q(t)-\ell-2})=3$, then $G[P_{c-1}\cup S_{q(t)}]$ is an empty graph but $P_{c-1}\cup S_{q(t)}$ is adjacent to a singleton which we denote $S_{q(t)-\ell-2}$. Degree restrictions on $S_{q(t)-\ell-2}$ imply it is not adjacent to $P_{a+1}$, so either $P_{a+1} \in E(G)$ or $P_{a+1}$ is adjacent to another singleton $S_{q(t)-\ell-2}' \neq S_{q(t)-\ell-2}$. The former option yields the graph depicted in Figure \ref{fig:d=l-deg34-16}, while the latter leads to a contradiction as $C_{q(t)-\ell-3}$ would then be $S_{q(t)-\ell-2}' \cup S_{q(t)-\ell-2}$ implying $\deg(S'_{q(t)-\ell-2})=3$ and thus that it is adjacent to another singleton $S_{q(t)-\ell-3}$. As $\deg(G)=\{3,4\}$, $S_{q(t)-\ell-3}$ cannot have a satisfying degree without violating Lemma \ref{rmk:work-backwards}, given the classes in $\pi^{q(t)-\ell-3}$.
    Where $\ell=2$ and $t=0$, by Lemma \ref{degree-bottom6}, we get the graph depicted in Figure \ref{fig:d=l-deg24-18}. 
    Where $\ell=2$ and $t=1$, we have $C_{q(t)-2\ell-1} = E_{\ell-2+t} \cup P_{c-1} \cup S_{q(t)}$ . Then $\deg(P_{c-1} \cup S_{q(t)}) = \deg(E_{\ell-2+t}) = 4$. This implies $P_{c-1} \cup S_{q(t)}$ has edge layout as depicted in Figure \ref{fig:graph-bottom-t=1B}, or as in Figure \ref{fig:graph-bottom-t=1A} with $G[P_{c-1} \cup S_{q(t)}]$ a complete graph. The former is not possible as both $C_{q(t)-2\ell+1}$ and $D_{\ell-2+t}$ are then unbalanced. The latter leads to the long-refinement graph depicted in Figure \ref{fig:d=l-deg24-21} when $\deg(D_{\ell-2+t})=2$ and Figure \ref{fig:d=l-deg34-21} when $\deg(D_{\ell-2+t})=3$. 
    Finally, consider when $\ell=3$ and therefore $\deg(C_{q(t)-1}) = 4$, where $C_{q(t)-1}= P_d \cup P_c \cup P_{n_\mathcal{P}}$. This can only be satisfied by $t=1$, with edges as depicted in Figure \ref{fig:graph-bottom-t=1B} and $P_d,P_c,P_{n_\mathcal{P}}$. This case yields the graphs depicted in Figure \ref{fig:d=l-deg24-27} and \ref{fig:d=l-deg34-27}, for $\deg(D_{\ell-2+t})$ equal to $2$ or $3$, respectively. 

    Next, let us consider the case where $b=a+1$. The class $P_1 \cup P_{a-1}\cup P_{b+1}$ has degree $1$ into $C_{q(t)-\ell+1}$, whereas $P_a \cup P_b$ has degree $0$ into $C_{q(t)-\ell+1}$. Thus, $C_{q(t)-\ell+1}$ is in $\pi^{q(t)-\ell-1}\backslash\pi^{q(t)-\ell-2}$ and without loss of generality we may denote it $A_{q(t)-\ell-1}$. 
    Then, by Lemma \ref{rmk:work-backwards}, we require the class $B_{q(t)-\ell-1}$ to satisfy $\deg_{B_{q(t)-\ell-1}}(P_a\cup P_b) - \deg_{B_{q(t)-\ell-1}}(P_1 \cup P_{a-1}\cup P_{b+1}) = 1$. This cannot hold if $B_{q(t)-\ell-1}$ is a class of size $6$, by Corollary \ref{cor:second-symmetry-neighbours}, or if it is the class $P_{c-1}\cup S_{q(t)}$, by Lemma \ref{lem:end-neighbours}. 
    It also cannot be a singleton; we would then have $\deg(B_{q(t)-\ell-1}) = 4$, and since $A_{q(t)-\ell-1} \cup B_{q(t)-\ell-1} \subsetneq V(G)$, also $\deg(A_{q(t)-\ell-1}) = \{4\}$, which implies by Remark \ref{degree-middle6} and Observation \ref{degree-bottom6} that $C_{q(t)-\ell+1}$ must be $P_d \cup P_c \cup P_{n_\mathcal{P}}$, with edges as depicted in Figure \ref{fig:graph-bottom-t=1B}. However, we then get a graph that is either $4$-regular or disconnected. 
    Thus, $B_{q(t)-\ell-1}$ is $C_{q(t)-\ell-1}$. To satisfy $\deg(P_a \cup P_b)=4$, we must have $P_a,P_b \in E(G)$ and $b=a+1$. Now, either $\deg({C_{q(t)-\ell+1}}) = 4$ or $V(G) = C_{q(t)-\ell-1}\cup C_{q(t)-\ell+1}$. 
    The former can only be satisfied by ${C_{q(t)-\ell+1}}$ being $P_d \cup P_c \cup P_{n_\mathcal{P}}$ with edges as depicted in Figure \ref{fig:graph-bottom-t=1B}. 
    However, this results in $G$ being $4$-regular or disconnected. Thus, $V(G) = C_{q(t)-\ell-1}\cup C_{q(t)-\ell+1}$, with $C_{q(t)-\ell-1}$ either $P_d \cup P_c \cup P_{n_\mathcal{P}}$ as depicted in Figure \ref{fig:graph-bottom-t=0}, or $P_{c-1} \cup S_{q(t)}$, as in Figure \ref{fig:graph-bottom-t=1A}. 
    These graphs are depicted in Figures \ref{fig:d=l-deg24-13} and \ref{fig:d=l-deg24-16}. 
	
	\textbf{Finally, consider Figure \ref{fig:d=lP1Pa-1Pb+1-edges-2}}.  Here, we have $\deg(P_1 \cup P_{a-1}\cup P_{b+1})= 3$, and therefore also $\deg(P_a \cup P_b) = 3$. The class $P_a \cup P_b$ has degree $2$ into $P_1 \cup P_{a-1}\cup P_{b+1}$, and degree $1$ into either $P_a \cup P_b$ if $b=a+1$, or into $P_{a+1}$ if $b=a+2$.   
    In the former case, iteration $q(t)-\ell$ satisfies the assumptions for iteration $j$ in Lemma \ref{lem:1classgettingbigger}. Thus, taking $j' = \ell'+t = \ell-1+t$, we get $C_{q(t)-2\ell+t} = \bigcup_{i\in[1,d]\cup[c,n_\mathcal{P}]} P_i$. 
    If $t=0$, then $P_d \cup P_c \cup P_{n_\mathcal{P}} \subsetneq C_{q(t)-2\ell-1}$ has degree $2$ by Lemma \ref{degree-bottom6}, while $\deg(\bigcup_{i\in[1,d-1] \cup [c+1,n_\mathcal{P}-1]}P_i) = 3$ and therefore $V(G) = \bigcup_{i=1}^{n_\mathcal{P}}P_i$ and that $G$ is a graph whose representative string is contained in the set $\mathrm{\{S1^k001^k1XX11^k0 \mid k \in \N_0\}}$.  
    If $t=1$, then $C_{q(t)-2\ell-t}\subsetneq V(G)$ so $\deg(P_i)=3$ for $i \in [1,d]\cup [c,n_\mathcal{P}]$. The vertices $P_d \cup P_c \cup P_{n_\mathcal{P}}$ have degree $1$ into $P_{c-1} \cup S_{q(t)}$, where other vertices in $C_{q(t)-2\ell-1}$ have degree $0$, and have all their neighbours within $C_{q(t)-2\ell-1}$. 
    This implies that $\pi^{q(t)-2\ell-2}\backslash\pi^{q(t)-2\ell-1} = \{C_{q(t)-2\ell-1}, P_{c-1}\cup S\}$. If $C_{q(t)-2\ell-2} \neq V(G)$, then $\deg(P_{c-1} \cup S) = 3$, which can only be satisfied by having edges as depicted in Figure \ref{fig:graph-bottom-t=1A} and $P_{c-1} \cup S_{q(t)}$ also adjacent to a singleton $S$. 
    Here, either the singleton has degree $3$ and the graph is $3$-regular, or the singleton has degree greater than $3$ so there are vertices not in $C_{q(t)-2\ell-3} = C_{q(t)-2\ell-2} \cup S$, yet $\deg(C_{q(t)-2\ell-3}) = \bot$.
    We therefore let $C_{q(t)-2\ell-2} = V(G)$, in which case $\deg(P_{c-1} \cup S)$ is either $2$ or $4$. These can be achieved by edges as in \ref{fig:graph-bottom-t=1A}, with $\deg_{P_{c-1}\cup S}(P_{c-1}\cup S)\in \{0,2\}$. 
    We thus get three possible infinite families of graphs, either represented by the set $\mathrm{\{S1^k01^k1XX1^k1_2 \mid k \in \N_0\}}$, or by Table \ref{tab:adj-list-d=l-34}.

	Next, we consider the case where $b = a+2$. Here, iteration $q(t)-\ell$ satisfies the assumptions for iteration $j$ in Lemma~\ref{lem:2classgettingbigger}, with $M = P_{a+1}$, thus yielding $\pi^{q(t)-2\ell-t}\backslash \pi^{q(t)-\ell} = \{D_{\ell-2+t},E_{\ell-2+t}\}$. 
    If $t=0$, then $\deg(P_d \cup P_c \cup P_{n_\mathcal{P}}) = 2$ by Observation \ref{degree-bottom6}, so cannot be a subset of $E_{\ell-2+t}$ with degree $3$. It follows then that $\deg(D_{\ell-2+t}) = 2$, so $P \in E(G)$ for all pairs $P \in \pi^p_\mathcal{P}$ such that $P\subseteq  E_{\ell-2+t}\backslash\{P_{a-1}\cup P_{b+1}\}$, and $P \notin E(G)$ if $P \subseteq D_{\ell-3}\backslash\{P_{n_\mathcal{P}}\}$. We therefore get the graph represented by the set $\mathrm{\{S(011)^k00(110)^k1X0X1(011)^k0 \mid k\in \N_0\}}$. 
    Consider $t=1$ with the layout of edges as depicted in Figure \ref{fig:graph-bottom-t=1A}.  When $C_{q(t)-2\ell-t} = P_{c-1} \cup S_{q(t)} \cup D_{l-2+t}$, the argument that $G$  cannot be a long-refinement graph is the same as that used for Figure \ref{fig:d=lP1Pa-1Pb+1-edges-3}, when $\ell=1$ and $t=1$, except that since $E_{\ell-2+t}$ has degree $3$, the case that yields the long-refinement graph there instead yields a $3$-regular graph here. When $C_{q(t)-2\ell-t} = P_{c-1} \cup S_{q(t)} \cup E_{l-2+t}$ , $\deg(C_{q(t)-2\ell-t})=3$ and hence $P_{c-1}\cup S_{q(t)}$ must be adjacent to a singleton, which we will denote $S_{q(t)-2\ell-t}$.  If $C_{q(t)-2\ell+1-t} = D_{\ell-2+t}$, then $P_{c-1}\cup S_{q(t)}$ contradicts Lemma \ref{rmk:work-backwards} having higher degree into $D_{l-2+t}$ and $S_{q(t)-2\ell-t}$. If $C_{q(t)-2\ell+1-t} = E_{\ell-2+t}$,  then by Lemma \ref{rmk:work-backwards}, $C_{q(t)-2\ell-t-1} =  S_{q(t)-2\ell-t} \cup D_{l-2+t}$, and $\deg(D_{\ell-2+t}) \geq 3$. If $\deg(D_{\ell-2+t}) = 3$, we have a $3$-regular graph, whereas if $\deg(D_{\ell-2+t}) = 4$, then $\deg(P_{a+1})=4$ and must, by a previous argument, be adjacent to a singleton with degree $4$ - which is not possible. 
    Next, consider $t=1$ with the layout of edges depicted in Figure \ref{fig:graph-bottom-t=1B}. Here, $C_{q(t)-2\ell-t}= P_{c-1} \cup S_{q(t)} \cup D_{\ell-2+t}$ since $\deg(E_{\ell-2+t})=3$. Thus $C_{q(t)-l+1} \subseteq D_{\ell-2+t}$ can only be $P_d \cup P_c \cup P_{n_\mathcal{P}}$ or $P_{c-1}\cup S_{q(t)}$. The latter contradicts Lemma \ref{lem:2classgettingbigger}, the former contradicts Lemma~\ref{rmk:work-backwards}.
\end{proof}

Having identified the graphs that correspond to both cases from Lemma \ref{lem:d=l-A,B}, this concludes the case that $d = \ell$.  Now, since the remaining cases have $d >\ell$, we henceforth have $\ell' = \ell$ and thus Lemma \ref{lem:second-symmetry} yields 
\begin{multline}\label{eq:d>l-pi^q(t)-l-1}
	 \pi^{q(t)-\ell-1} = 
	 \pi^{q(t)}_{\mathcal{S}}
	 \cup \{P_{d-i}\cup P_{c+i} \cup P_{n_\mathcal{P}-i} \mid i \in [0,\ell]\}  
	 \\\cup\{P_i \mid i \in [d-\ell-1]\cup[a+1,b-1]\} \cup M_t
 \end{multline}

We now consider the cases where $d = \ell+1$ and $d =\ell+2$ simultaneously. In Lemma \ref{lem:l+1l+2AB}, similarly to Lemma \ref{lem:d=l-A,B}, we identify all the classes that may be $A_{q(t)-\ell-1}$ or $B_{q(t)-\ell-1}$, so as to determine all possible expression of $\pi^{q(t)-\ell-2}$ and treat them separately.  

\begin{lemma}\label{lem:l+1l+2AB}
     If $d\in \{\ell+1,\ell+2\}$, then $\{A_{q(t)-\ell-1},B_{q(t)-\ell-1}\} \subseteq \{P_{d-\ell} \cup P_a \cup P_b, P_{a+1}\}\cup \pi^{q(t)}_\mathcal{S}$.
\end{lemma}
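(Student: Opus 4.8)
The plan is to read off the class structure of $\pi^{q(t)-\ell-1}$ from Equation \ref{eq:d>l-pi^q(t)-l-1} and combine it with Lemma \ref{rmk:work-backwards}. Since $\ell'=\ell$ here and $c+\ell=a$, $n_\mathcal{P}-\ell=b$, the unique unbalanced class of $\pi^{q(t)-\ell-1}$ is $C_{q(t)-\ell-1}=P_{d-\ell}\cup P_a\cup P_b$, where $P_{d-\ell}\in\{P_1,P_2\}$ because $d-\ell\in\{1,2\}$. As this partition is non-trivial ($|V(G)|\geq 10$ but $|C_{q(t)-\ell-1}|=6$), Lemma \ref{rmk:work-backwards} guarantees that $C_{q(t)-\ell-1}$ is unbalanced wrt \emph{exactly two} classes, namely $A_{q(t)-\ell-1}$ and $B_{q(t)-\ell-1}$. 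Hence it suffices to prove that $C_{q(t)-\ell-1}$ is \emph{balanced} wrt every class of $\pi^{q(t)-\ell-1}$ lying outside $\{C_{q(t)-\ell-1},P_{a+1}\}\cup\pi^{q(t)}_\mathcal{S}$. By Equation \ref{eq:d>l-pi^q(t)-l-1}, these remaining classes are precisely: the size-$6$ classes $C_{q(t)-i-1}$ with $i\in[0,\ell-1]$; the pairs $P_i$ with $i\in[1,d-\ell-1]$ (which is empty unless $d=\ell+2$, in which case it is just $P_1$); and, if $t=1$, the class $P_{c-1}\cup S$.

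For the size-$6$ classes I would separate the consecutive from the non-consecutive ones. The only consecutive neighbour is $C_{q(t)-\ell}$, and Equation \ref{eq:consecutive-sixes} (with $h=\ell-1$) gives $\deg_{C_{q(t)-\ell}}(C_{q(t)-\ell-1})\in\{1,5\}\neq\bot$, so balancedness holds; for $\ell=0$ there is no such neighbour and the point is vacuous. For the non-consecutive classes $C_{q(t)-i-1}$ with $i\in[0,\ell-2]$ I would invoke the maximum-degree hypothesis through Observation \ref{degree-middle6} (for $i\in[1,\ell-2]$) and Observation \ref{degree-bottom6} (for $i=0$), which confine $N(C_{q(t)-i-1})$ to the three classes of consecutive indices $i-1,i,i+1$. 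Since $C_{q(t)-\ell-1}$ has index $\ell$ and $i+1\leq\ell-1<\ell$, it is not among these, so there are no edges between $C_{q(t)-i-1}$ and $C_{q(t)-\ell-1}$ and the degree into it is uniformly $0$.

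For the pair $P_1$ (which occurs only when $d=\ell+2$, so that $P_{d-\ell}=P_2$), the crucial point is that all six vertices of $C_{q(t)-\ell-1}$ have degree exactly $1$ into $P_1$: the vertices of $P_2$ by Lemma \ref{lem:successive-matching}, since $P_1,P_2$ are consecutive; and the vertices of $P_a$ and $P_b$ by Lemma \ref{lem:P_1-P_aP_b}, because the distinguished vertex $u\in P_1$ is complete to $P_a$ while $P_1\setminus\{u\}$ is complete to $P_b$, so each vertex of $P_a\cup P_b$ sees exactly one vertex of $P_1$. Thus $\deg_{P_1}(C_{q(t)-\ell-1})=1\neq\bot$. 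Finally, for $P_{c-1}\cup S$ when $t=1$: whenever $\ell\geq 1$, Lemma \ref{lem:end-neighbours} shows $P_{c-1}\cup S$ is trivially attached to $C_{q(t)-\ell-1}$, giving uniform degree $0$ or $3$; in the degenerate case $\ell=0$, where $C_{q(t)-\ell-1}$ coincides with the excluded class $P_d\cup P_c\cup P_{n_\mathcal{P}}$, the first part of Lemma \ref{lem:end-neighbours} instead yields $\deg_{P_{c-1}\cup S}(C_{q(t)-\ell-1})\in\{1,2\}\neq\bot$. In every case $C_{q(t)-\ell-1}$ is balanced, completing the reduction.

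The step I expect to be the main obstacle is the pair $P_1$. The attachment of $P_1$ to $P_a$ and $P_b$ described by Lemma \ref{lem:P_1-P_aP_b} is highly asymmetric, and at first glance it looks capable of unbalancing $C_{q(t)-\ell-1}$, since $P_1$ itself is very unevenly joined to $C_{q(t)-\ell-1}$. What rescues the argument is that balancedness is measured from the side of $C_{q(t)-\ell-1}$, and from that side every vertex of $P_a\cup P_b\cup P_2$ has a single neighbour in $P_1$. I would also be careful with the small-$\ell$ boundary cases $\ell\in\{0,1\}$, where several of the index ranges above become empty and where Lemma \ref{lem:end-neighbours} must be applied through its first statement rather than its second.
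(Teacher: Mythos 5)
Your proof is correct and follows essentially the same route as the paper's: both reduce the claim via Lemma \ref{rmk:work-backwards} to showing that $C_{q(t)-\ell-1}$ is balanced wrt every other class of $\pi^{q(t)-\ell-1}$, handling the size-$6$ classes via Observations \ref{degree-middle6}/\ref{degree-bottom6}, the pair $P_1$ (when $d=\ell+2$) via Lemmas \ref{lem:successive-matching} and \ref{lem:P_1-P_aP_b}, and $S\cup P_{c-1}$ via the degree restrictions captured in Lemma \ref{lem:end-neighbours}. Your treatment is merely more explicit about the boundary cases $\ell\in\{0,1\}$ than the paper's.
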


\begin{proof}
	Observe the classes in Equation \ref{eq:d>l-pi^q(t)-l-1}.
	By Remark \ref{degree-middle6}, $C_{q(t)-\ell-1}$ is balanced wrt all classes in $\{P_{d-i}\cup P_{c+i} \cup P_{n_\mathcal{P}-i} \mid i \in [0,\ell-1]\}$. Degree restrictions on $C_{q(t)-\ell-1}$ and $S\cup P_{c-1}$ give the same result for $S \cup P_{c-1}$ if $t=1$. When $d = \ell+2$, Lemmas \ref{lem:successive-matching} and \ref{lem:P_1-P_aP_b} imply $C_{q(t)-\ell-1} = P_2 \cup P_a \cup P_b$ is balanced wrt $P_1$. Thus, $C_{q(t)-\ell-1}$  may only be unbalanced wrt the classes $P_{d-\ell} \cup P_a \cup P_b$ and $P_{a+1}$, and the singletons in $\pi^{q(t)}_\mathcal{S}$. 
\end{proof}

This result allows us to divide the search for long-refinement graphs where $d \in \{\ell+1,\ell+2\}$ into the following cases: $C_{q(t)-\ell-1}$ is either unbalanced wrt by a) two singletons,\, b) $P_{a+1}$ and a singleton,\, c) $P_{d-\ell} \cup P_a \cup P_b$ and a singleton adjacent to $P_{d-\ell}$,\, d) $P_{d-\ell} \cup P_a \cup P_b$ and a singleton adjacent to $P_{a}\cup P_b$, and\, e) $P_{a+1}$ and $P_{d-\ell} \cup P_a \cup P_b$. We treat these cases in the next five lemmas. We determine that the first two cases cannot be satisfied by any long-refinement graphs.

\begin{lemma}\label{lem:l+1l+2ABsingletons}
    Let $d\in \{\ell+1,\ell+2\}$. There is no long-refinement graph with $A_{q(t)-\ell-1},B_{q(t)-\ell-1} \in \pi^{q(t)}_\mathcal{S}$. 
\end{lemma}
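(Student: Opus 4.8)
The plan is to derive a parity contradiction from the single-split constraint of Lemma~\ref{rmk:work-backwards}, using the fact that the class about to split has size $6$ and splits into parts of size $2$ and $4$. First I would pin down the relevant structure. Since we are in the regime $d \in \{\ell+1,\ell+2\}$, we have $\ell' = \ell$, so Equation~\ref{eq:d>l-pi^q(t)-l-1} (obtained from Lemma~\ref{lem:second-symmetry}) describes $\pi^{q(t)-\ell-1}$, and its unique unbalanced class is $C_{q(t)-\ell-1} = P_{d-\ell} \cup P_a \cup P_b$, a class of size $6$. Comparing $\pi^{q(t)-\ell-1}$ with $\pi^{q(t)-\ell}$ shows that in iteration $q(t)-\ell$ this class splits into the pair $P_{d-\ell}$ and the size-$4$ class $P_a \cup P_b$, i.e.\ $\{A_{q(t)-\ell},B_{q(t)-\ell}\} = \{P_{d-\ell},\, P_a \cup P_b\}$. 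This $2+4$ decomposition is the structural fact that drives everything.

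Next I would exploit the hypothesis. Writing $A_{q(t)-\ell-1} = \{s_1\}$ and $B_{q(t)-\ell-1} = \{s_2\}$ with both in $\pi^{q(t)}_{\mathcal{S}}$, the class $C_{q(t)-\ell-2} = \{s_1,s_2\}$ is a pair, so $|C_{q(t)-\ell-2}| = 2$. Applying Equation~\ref{eq:equal-diff} of Lemma~\ref{rmk:work-backwards} at iteration $i = q(t)-\ell-1$ yields $\deg_{C_{q(t)-\ell-2}}(C_{q(t)-\ell-1}) \in [1,\,|C_{q(t)-\ell-2}|-1] = \{1\}$; in particular this degree is well defined (not $\bot$) and equal to $1$, so each of $s_1$ and $s_2$ has \emph{exactly one} neighbour in $C_{q(t)-\ell-1}$.

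The contradiction is then a parity count, which I would carry out for $s_1$ (the case of $s_2$ being identical). Since $\{s_1\} \in \pi^{q(t)-\ell-1}$ and $\pi^{q(t)-\ell} \preceq \pi^{q(t)-\ell-1}$, any two vertices sharing a colour in $\pi^{q(t)-\ell}$ have the same number of neighbours in the colour class $\{s_1\}$. Hence the two vertices of $P_{d-\ell}$ share a common degree $\delta \in \{0,1\}$ into $\{s_1\}$, and the four vertices of $P_a \cup P_b$ share a common degree $\delta' \in \{0,1\}$, so $s_1$ has $2\delta + 4\delta'$ neighbours in $C_{q(t)-\ell-1}$ — an even number, contradicting that this count is $1$. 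I do not expect a genuine obstacle here; the one point that needs care is precisely the step that both parts of the split have \emph{even} size, i.e.\ confirming via Lemma~\ref{lem:second-symmetry} that $C_{q(t)-\ell-1}$ breaks into a pair and a size-$4$ class rather than, say, two classes of size $3$ (where an odd incidence count would be possible). Once the $2+4$ split is secured, the argument is uniform in whether $d = \ell+1$ or $d = \ell+2$ and in the labelling of $A_{q(t)-\ell}$ versus $B_{q(t)-\ell}$.
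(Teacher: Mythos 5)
Your argument collapses at the step where you translate Equation \ref{eq:equal-diff} into a statement about the singletons. In the paper's notation, $\deg_{C'}(C)$ is the degree of $C$ \emph{into} $C'$, so the quantity $\deg_{C_{q(t)-\ell-2}}(C_{q(t)-\ell-1}) = 1$ that you correctly extract from Lemma \ref{rmk:work-backwards} says that every vertex of the size-$6$ class $C_{q(t)-\ell-1} = P_{d-\ell}\cup P_a\cup P_b$ has exactly one neighbour in $\{s_1,s_2\}$ --- not that $s_1$ and $s_2$ each have exactly one neighbour in the size-$6$ class. The latter quantity, $\deg_{C_{q(t)-\ell-1}}(C_{q(t)-\ell-2})$, is subject to no such constraint (a singleton is trivially balanced wrt every class), and your own parity count shows it is even. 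There is no contradiction: the configuration in which $s_1$ is joined to both vertices of $P_{d-\ell}$ and $s_2$ to all four vertices of $P_a\cup P_b$ satisfies everything you list ($2+4=6$ edges in total, exactly one per vertex of the size-$6$ class, and $C_{q(t)-\ell-1}$ unbalanced wrt each singleton). Indeed, this is precisely the configuration the paper's proof derives from the hypotheses and then has to work hard to rule out.

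The correct conclusion at this stage is only that $P_{d-\ell}$ is adjacent to one singleton and $P_a\cup P_b$ to the other. The paper then continues the reverse-engineering for several more iterations --- identifying $A_{q(t)-\ell-2} = P_{d-\ell}\cup P_a\cup P_b$, forcing $B_{q(t)-\ell-2} = P_1$ and $d=\ell+2$, building up $C_{q(t)-\ell-3} = P_1\cup P_2\cup P_a\cup P_b$ with all vertices of degree $4$, and so on --- before concluding that any such graph would be $4$-regular or disconnected. The obstruction lies several iterations deeper than where you look for it, and a one-step parity argument at iteration $q(t)-\ell-1$ cannot reach it.
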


\begin{proof}
From Equation \ref{eq:d>l-pi^q(t)-l-1} and the values of $A_{q(t)-\ell-1}, B_{q(t)-\ell-1}$, we get the following.
\begin{multline*}
\pi^{q(t)-\ell-2} = 
\pi^{q(t)}_{\mathcal{S}}\setminus\{A_{q(t)-\ell-1},B_{q(t)-\ell-1}\}
\cup \{A_{q(t)-\ell-1} \cup B_{q(t)-\ell-1}\}
\\\cup \{P_{d-i}\cup P_{c+i} \cup P_{n_\mathcal{P}-i} \mid i \in [0,\ell]\}  
\cup\{P_i \mid i \in [d-\ell-1] \cup [a+1,b-1]\} \cup M_t
\end{multline*}
Since $|A_{q(t)-\ell-1}|=|B_{q(t)-\ell-1}|=1$ and $C_{q(t)-\ell-1}$ is unbalanced wrt both classes, we must have $P_{d-\ell}$ adjacent to one singleton, and $P_{a}\cup P_b$ adjacent to the other. 
Without loss of generality, let $\deg_{A_{q(t)-\ell-1}}(P_{d-\ell}) = 1$ and $\deg_{B_{q(t)-\ell-1}}(P_a \cup P_b)=1$. 
Then $\deg_{P_{d-\ell} \cup P_a \cup P_b}(A_{q(t)-\ell-1}) = 2$ and $\deg_{P_{d-\ell} \cup P_a \cup P_b}(B_{q(t)-\ell-1}) = 4$, and without loss of generality $A_{q(t)-\ell-2} = P_{d-l}\cup P_a \cup P_b$. 
By degree restrictions, $B_{q(t)-\ell-1}$ has degree $0$ into $B_{q(t)-\ell-2}$, so $\deg_{B_{q(t)-\ell-2}}(A_{q(t)-\ell-1})=2$. 
All classes in $\pi^{q(t)-\ell}$ are balanced wrt -- and since $A_{q(t)-\ell-1}$ is a singleton, trivially connected to -- $A_{q(t)-\ell-1} \in \pi^{q(t)-\ell-1}$. 
The class $B_{q(t)-\ell-2}$ must therefore be a pair, so $d = \ell+2$ and $B_{q(t)-\ell-2} = P_1$. 
From this and by Lemma \ref{rmk:work-backwards}, we get $C_{q(t)-\ell-3} = P_1 \cup P_2 \cup P_a \cup P_b$. With $P_1$ adjacent to singleton $A_{q(t)-\ell-1}$, it has degree $4$, so $P_2 \cup P_a \cup P_b$ must also have degree $4$. 
If $\ell = 0$ and $t=0$, the graph would be $4$-regular or disconnected, so $\ell > 0$ or $t=1$ must hold. Thus $C_{q(t)-\ell}$ is either $P_3 \cup P_{a-1}\cup P_{b+1}$, or $P_3 \cup S_{q(t)}$.
In either case, class $C_{q(t)-\ell-3}$ is unbalanced wrt itself and $C_{q(t)-\ell}$, so by Lemma \ref{rmk:work-backwards}, $C_{q(t)-\ell-4} = C_{q(t)-\ell-3} \cup C_{q(t)-\ell}$ and thus  $C_{q(t)-\ell}$ must also have degree $4$. 
This can only occur if $C_{q(t)-\ell}$ is $\{P_d, P_c, P_{n_\mathcal{P}}\}$ or $M_t$, where the layout of edges is as depicted in Figure \ref{fig:graph-bottom-t=1B}. 
Again, in this case, the graph is $4$-regular or disconnected.
\end{proof}

\begin{lemma}\label{lem:Pa+1sing}
        Let $d\in \{\ell+1,\ell+2\}$ and $A_{q(t)-\ell-1} = P_{a+1}$ and $B_{q(t)-\ell-1} \in \pi^{q(t)}_\mathcal{S}$. Then $G$ is depicted in Figure \ref{fig:d=l+2-deg34-15}.
\end{lemma}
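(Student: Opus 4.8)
The plan is to reverse-engineer the partitions below $\pi^{q(t)-\ell-1}$ down to $\pi^0 = \{V(G)\}$, exactly as in the proofs of Lemmas \ref{lem:dldegnot23} and \ref{lem:d=l--P1Pa-1PaPbPb+1}. First I would pin down $b$: since $A_{q(t)-\ell-1} = P_{a+1}$ must be one of the lone pairs listed in Equation \ref{eq:d>l-pi^q(t)-l-1}, and since $a+1 > d-\ell-1$, the index $a+1$ has to lie in $[a+1,b-1]$, which forces $b = a+2$ (consistent with Lemmas \ref{lem:bdista} and \ref{lem:value-of-d}). Merging $A_{q(t)-\ell-1} = P_{a+1}$ with the singleton $B_{q(t)-\ell-1} = S$ then yields the size-$3$ class $C_{q(t)-\ell-2} = P_{a+1}\cup S$, while the remaining classes of $\pi^{q(t)-\ell-2}$ are inherited unchanged from $\pi^{q(t)-\ell-1}$.

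Next I would determine the edges incident to $C_{q(t)-\ell-2}$. By Lemma \ref{lem:successive-matching} the pair $P_{a+1}$ sends exactly one edge into each of $P_a$ and $P_b$, and by Lemma \ref{lem:non-successive-pairs} its degree into the non-consecutive pair $P_{d-\ell}$ (and into any singleton) lies in $\{0,2\}$, while the singleton $S$ attaches to $C_{q(t)-\ell-1} = P_{d-\ell}\cup P_a\cup P_b$ in the essentially unique way that makes $C_{q(t)-\ell-1}$ unbalanced wrt $S$. Since $P_{a+1}$ and $S$ share a colour class they have equal degree, so combining these facts with $\max\deg(G)\le 4$ leaves only a short list of admissible layouts. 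I would then locate the two classes $A_{q(t)-\ell-2},B_{q(t)-\ell-2}$ wrt which the (unique unbalanced) class $C_{q(t)-\ell-2}$ splits, by testing the candidates $C_{q(t)-\ell-1}$, the neighbouring pair $P_{d-\ell-1}$ (or $P_1$), $M_t$, and the singletons, and discarding the impossible ones using Lemmas \ref{lem:non-successive-pairs}, \ref{lem:first-symmetry-neighbours}, \ref{lem:end-neighbours}, Corollary \ref{cor:second-symmetry-neighbours}, and Observations \ref{degree-middle6} and \ref{degree-bottom6}. As soon as the merged class begins to absorb the size-$6$ classes $C_{q(t)-i-1}$, I expect the hypotheses of Lemma \ref{lem:1classgettingbigger} or \ref{lem:2classgettingbigger} to be met, which ties the growth of the class to $\ell$ and, through the degree cap, bounds $\ell$ to a small value.

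Finally I would close the two subcases. In the $d = \ell+1$ subcase, where $P_{d-\ell}=P_1$ and Lemma \ref{lem:P_1-P_aP_b} fixes the $P_1$--$P_a$--$P_b$ edges, I expect every completion to force either a $4$-regular or a disconnected graph, or else to violate Lemma \ref{rmk:work-backwards}, so that this subcase is empty. In the $d = \ell+2$ subcase, the analogous exclusion of $4$-regular, disconnected, and Lemma-\ref{rmk:work-backwards}-violating completions should eliminate all but one value of $\ell$, leaving a single graph whose two vertex degrees are $3$ and $4$, namely the one in Figure \ref{fig:d=l+2-deg34-15}. The main obstacle is the bookkeeping in the middle step: correctly reading off the neighbourhood of the size-$3$ class $P_{a+1}\cup S$ (in particular the singleton's adjacency pattern, constrained by $\deg(S)=\deg(P_{a+1})$), and then ruling out, at each further merge, the competing candidate split-classes along with the regular, disconnected, and infinite-family possibilities.
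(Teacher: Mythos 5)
Your overall strategy --- merge $P_{a+1}$ with the singleton, read off the neighbourhood of the resulting class of size $3$, and then repeatedly identify the two classes wrt which the current merged class is unbalanced, discarding candidates via Lemmas \ref{lem:non-successive-pairs} and \ref{lem:end-neighbours} and Observations \ref{degree-middle6} and \ref{degree-bottom6} --- is exactly the paper's, and your preliminary observation that $b=a+2$ is forced is correct and implicit there. The one place your plan diverges is the expected closing mechanism: you anticipate that the merged class will absorb the size-$6$ classes, at which point Lemmas \ref{lem:1classgettingbigger} or \ref{lem:2classgettingbigger} would bound $\ell$. In the paper's proof this never happens: $C_{q(t)-\ell-2}=P_{a+1}\cup B_{q(t)-\ell-1}$ is shown to have \emph{no} edges to the classes $P_{d-i}\cup P_{c+i}\cup P_{n_\mathcal{P}-i}$ with $i\in[0,\ell-1]$ and to be balanced (degree $2$) wrt $P_{d-\ell}\cup P_a\cup P_b$, so the hypotheses of those two lemmas are never met; instead the admissible splitting classes reduce to singletons, $P_1$ (excluded by a degree count), and $P_{a+1}\cup B_{q(t)-\ell-1}$ itself. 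The argument then closes by forcing the degree-$4$ class $P_1\cup P_2\cup P_a\cup P_b$ to appear, which yields $d=\ell+2$ (so your subcase $d=\ell+1$ is indeed empty, though for this reason rather than via a regularity or connectivity obstruction) and, after checking the two possible completions at the bottom of the graph against Lemma \ref{rmk:work-backwards}, leaves only the graph of Figure \ref{fig:d=l+2-deg34-15}. So the skeleton of your plan is right, but the step you delegate to Lemmas \ref{lem:1classgettingbigger} and \ref{lem:2classgettingbigger} has to be replaced by this direct analysis of the remaining singleton merges.
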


\begin{figure}
    \centering    \includegraphics{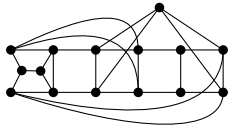}
    \caption{The long-refinement graph satisfying the prerequisites of Lemma \ref{lem:Pa+1sing}.}
    \label{fig:d=l+2-deg34-15}
\end{figure}

\begin{proof}
	The expression for $\pi^{q(t)-\ell-1}$ from Equation \ref{eq:d>l-pi^q(t)-l-1} and the assumed values of $A_{q(t)-\ell-1}$ and $B_{q(t)-\ell-1}$ yield the following expression for $\pi^{q(t)-\ell-2}$.
	\begin{multline}\label{eq:d=l+1l+2ABsingletons-pi^q(t)-l-2}
		\pi^{q(t)-\ell-2} = 
		\pi^{q(t)}_\mathcal{S}\backslash\{B_{q(t)-\ell-1}\}
		\cup \{P_{d-i}\cup P_{c+i} \cup P_{n_\mathcal{P}-i} \mid i \in [0,\ell]\}  \\
		\cup \{P_{a+1} \cup B_{q(t)-\ell-1}\}
		\cup\{P_i \mid i \in [d-\ell-1]\} \cup M_t
	\end{multline}
	Clearly $C_{q(t)-\ell-2}= P_{a+1} \cup B_{q(t)-\ell-1}$ is a strict subset of $V(G)$, so we require $C_{q(t)-\ell-2}$ to be unbalanced wrt two classes $A_{q(t)-\ell-2}$ and $B_{q(t)-\ell-2}$. We will show that there is no pair of classes in $\pi^{q(t)-\ell-2}$ that satisfy the requirements of $A_{q(t)-\ell-2}$ and $B_{q(t)-\ell-2}$.
	
	First, consider the classes in $\{P_{d-i}\cup P_{c+i} \cup P_{n_\mathcal{P}-i} \mid i \in [0,\ell]\}$. Observe that $P_{d-\ell}\cup P_a\cup P_b$ ($i=\ell$) cannot be $A_{q(t)-\ell-2}$ or $B_{q(t)-\ell-2}$ since $\deg_{P_{d-\ell}\cup P_a\cup P_b}(C_{q(t)-\ell-2})=2$. 
    Next, by Remark \ref{degree-middle6} and Observation \ref{degree-bottom6}, there are no edges between $C_{q(t)-\ell-2}$ and any class $P_{d-i} \cup P_{c+i} \cup P_{n_\mathcal{P}-i}$ where $i \in [0,\ell-1]$, so such classes are not $A_{q(t)-\ell-2}$ or $B_{q(t)-\ell-2}$. 
	By Lemma \ref{lem:end-neighbours} and degree restrictions on $A_{q(t)-\ell-1}$ and $B_{q(t)-\ell-1}$, $S_{q(t)} \cup P_{c-1} \notin \{A_{q(t)-\ell-2},B_{q(t)-\ell-2}\}$. 
	
	Observing Equation \ref{eq:d=l+1l+2ABsingletons-pi^q(t)-l-2}, this leaves us with \[\{A_{q(t)-\ell-2},B_{q(t)-\ell-2}\} \subseteq \pi^{q(t)}_\mathcal{S}\backslash\{B_{q(t)-\ell-1}\}\cup \{P_{a+1} \cup B_{q(t)-\ell-1}\}\}\cup\{P_i \mid i \in [d-\ell-1]\}.\]
    Let us assume that $P_1 =A_{q(t)-\ell-2}$ ($d=\ell+2$) and prove a contradiction. In this case, $\deg_{P_1}(B_{q(t)-\ell-1})- \deg_{P_1}(P_{a+1})$ must be $2$, by Lemmas \ref{rmk:work-backwards}, \ref{lem:non-successive-pairs}, and degree restrictions on $P_1$. However, no other class in $\pi^{q(t)}_\mathcal{S}\backslash\{B_{q(t)-\ell-1}\}\cup \{P_{a+1} \cup B_{q(t)-\ell-1}\}$ can then satisfy Lemma \ref{rmk:work-backwards}: trivially, singleton classes are not large enough, while $\deg_{P_{a+1} \cup B_{q(t)-\ell-1}}(P_{a+1}) \in \{0,1\}$ by degree restrictions on $B_{q(t)-\ell-1}$.
    By this and the symmetric argument for $B_{q(t)-\ell-2}$, $P_1 \notin \{A_{q(t)-\ell-2},B_{q(t)-\ell-2}\}$. Thus, either $A_{q(t)-\ell-2} \cup B_{q(t)-\ell-2}$ is the union of two singletons $S_{q(t)-\ell-2}, S'_{q(t)-\ell-2}$, or of a singleton $S_{q(t)-\ell-2}$ and $P_{a+1} \cup B_{q(t)-\ell-1}$, when $P_{a+1} \in E(G)$. 

    Both cases, within a few simple applications of Lemma \ref{rmk:work-backwards}, lead to the class $C_{q(t)-\ell-4}$ in the first case, and $C_{q(t)-\ell-3}$ in the second case, being $P_1 \cup P_2 \cup P_a \cup P_b$ with $\deg(P_1 \cup P_2 \cup P_a \cup P_b)= 4$ and therefore $d=\ell+2$. 
    This requires the previous iteration to contain, by Lemma \ref{rmk:work-backwards}, either $P_1 \cup P_2 \cup P_a \cup P_b \cup M_t$ with $M_t$ as depicted in Figure \ref{fig:graph-bottom-t=1A} with $G[M_t]$ a complete graph, or $P_1 \cup P_2 \cup P_3 \cup P_4 \cup P_a \cup P_b \cup P_{b+1}$, with all vertices of degree $4$, therefore $M_t$ as depicted in Figure \ref{fig:graph-bottom-t=1B}. The latter leads to contradictions, either leading to a $4$-regular graph, or leading to a class with two degrees after iteration $1$. This statement holds for the former case when both $A_{q(t)-\ell-2}$ and $B_{q(t)-\ell-2}$ are singletons. However, when $A_{q(t)-\ell-2}$ or $B_{q(t)-\ell-2}$ is $P_{a+1}$, we can have the graph depicted in Figure \ref{fig:d=l+2-deg34-15}.
\end{proof}

Now, we consider the cases where $A_{q(t)-\ell-1}= P_{d-\ell} \cup P_a \cup P_b$ and $B_{q(t)-\ell-1} \in \pi^p_\mathcal{S}$. Here we distinguish cases according to which of $A_{q(t)-\ell}$ or $B_{q(t)-\ell}$, i.e.\ which of $P_{d-\ell}$ and $P_a \cup P_b$, the singleton $B_{q(t)-\ell-1}$ is adjacent to. Although both cases have the same expression for $\pi^{q(t)-\ell-2}$, they require dissimilar proofs. 

\begin{lemma}\label{lem:d=l+1-singleton-adjXX}
    Let $d\in \{\ell+1,\ell+2\}$, $A_{q(t)-\ell-1} = P_{d-\ell} \cup P_a \cup P_b,$ and $B_{q(t)-\ell-1} \in \pi^{q(t)}_\mathcal{S}$ with $B_{q(t)-\ell-1}$ adjacent to $P_a \cup P_b$. Then $G$ is depicted in Figure \ref{fig:d=l+1l+2-singleton-adj-ab}.
\end{lemma}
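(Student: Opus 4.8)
The plan is to mirror the reverse-engineering used in Lemmas \ref{lem:dldegnot23}--\ref{lem:Pa+1sing}: start from the partition obtained by merging the two given classes, pin down the edges incident to the resulting large class, and then work backwards one split at a time until $V(G)$ is reached, using Lemma \ref{rmk:work-backwards} at every step. First I would record $\pi^{q(t)-\ell-2}$ explicitly. Merging $A_{q(t)-\ell-1}=P_{d-\ell}\cup P_a\cup P_b$ with the singleton $B_{q(t)-\ell-1}$ in Equation \ref{eq:d>l-pi^q(t)-l-1} produces the size-$7$ class $C_{q(t)-\ell-2}=P_{d-\ell}\cup P_a\cup P_b\cup B_{q(t)-\ell-1}$, while all remaining classes are inherited from $\pi^{q(t)-\ell-1}$. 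Since $d\in\{\ell+1,\ell+2\}$, I have $P_{d-\ell}=P_1$ or $P_{d-\ell}=P_2$; in the former case Lemma \ref{lem:P_1-P_aP_b} fixes the internal edges between $P_{d-\ell}$ and $P_a\cup P_b$, and in the latter the internal wiring is the degree-$1$ connection inside a size-$6$ class forced by Corollary \ref{cor:second-symmetry-neighbours} together with $\max\deg(G)\le 4$.

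The key structural step is that $A_{q(t)-\ell-1}$, lying in the finer partition $\pi^{q(t)-\ell-1}$, is balanced with respect to the coarser class $C_{q(t)-\ell-2}$ by Lemma \ref{rmk:work-backwards}, so the six core vertices share a common degree into $C_{q(t)-\ell-2}$. A vertex of $P_{d-\ell}$ has degree $2$ into $P_a\cup P_b$ and is not adjacent to $B_{q(t)-\ell-1}$, so this common degree is $2$; combined with the pair--pair edges governed by Lemmas \ref{lem:successive-matching} and \ref{lem:non-successive-pairs} and the cap $\max\deg(G)\le 4$, this pins down exactly how $B_{q(t)-\ell-1}$ attaches to $P_a\cup P_b$, the value of $\deg_{P_a}(P_b)$, and whether $b=a+1$ or $b=a+2$ (recall Lemma \ref{lem:bdista}). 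I expect the only consistent option to be $b=a+2$ with $B_{q(t)-\ell-1}$ adjacent to all of $P_a\cup P_b$ (hence of degree $4$, forcing $\deg(G)=\{2,4\}$), the case $b=a+1$ being eliminated because it would leave the singleton isolated from $P_a\cup P_b$, contradicting the hypothesis.

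With the edges fixed, I would identify the classes wrt which $C_{q(t)-\ell-2}$ is unbalanced: the external matchings from $P_{d-\ell}$, $P_a$ and $P_b$ make it unbalanced wrt the neighbouring size-$6$ class $C_{q(t)-\ell}$, the matchings into $P_{a+1}$ make it unbalanced wrt $P_{a+1}$, and the attachment of $B_{q(t)-\ell-1}$ makes it self-unbalanced. For $\ell\ge 1$ these are three distinct classes, which contradicts the ``exactly two'' clause of Lemma \ref{rmk:work-backwards}; this is precisely where, unlike the companion cases, the cascade of Lemmas \ref{lem:1classgettingbigger} and \ref{lem:2classgettingbigger} is obstructed, forcing $\ell=0$. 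For $\ell=0$ the symmetric tower collapses, the size-$6$ class coincides with the bottom class $P_d\cup P_c\cup P_{n_\mathcal{P}}$, and only a handful of further splits remain (depending on $t$ and on $M_t$, cf.\ Observations \ref{degree-middle6} and \ref{degree-bottom6} and Lemma \ref{lem:end-neighbours}), which I would resolve by the now-standard degree-$4$, connectivity and non-regularity arguments from the end of Lemma \ref{lem:dldegnot23} to land on the two graphs of Figure \ref{fig:d=l+1l+2-singleton-adj-ab}. The hard part will be the bookkeeping of the previous paragraph: determining the exact attachment of the singleton (which fixes the degree sequence) and counting the unbalanced classes to extract the contradiction for $\ell\ge 1$, while cleanly handling the two distinct internal wirings for $d=\ell+1$ versus $d=\ell+2$.
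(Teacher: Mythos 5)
Your overall strategy (write down $\pi^{q(t)-\ell-2}$, use the balancedness of $A_{q(t)-\ell-1}$ wrt the merged class from Lemma \ref{rmk:work-backwards} to pin down the edges, then walk backwards) is the right one, but the central case analysis comes out backwards, and this is not a cosmetic slip. Under the lemma's hypothesis the two classes wrt which $C_{q(t)-\ell-1}=P_{d-\ell}\cup P_a\cup P_b$ is unbalanced are exactly $A_{q(t)-\ell-1}=C_{q(t)-\ell-1}$ itself and the singleton. If $b=a+2$, then $P_{a+1}$ is a separate class of $\pi^{q(t)-\ell-1}$, each vertex of $P_a\cup P_b$ has exactly one neighbour in $P_{a+1}$ (Lemma \ref{lem:successive-matching}) while each vertex of $P_{d-\ell}$ has $0$ or $2$ (Lemma \ref{lem:non-successive-pairs}), so $C_{q(t)-\ell-1}$ would be unbalanced wrt $P_{a+1}$ as well, forcing $P_{a+1}\in\{A_{q(t)-\ell-1},B_{q(t)-\ell-1}\}$ and contradicting the hypothesis. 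Hence $b=a+1$ is \emph{forced}, whereas you assert the exact opposite and discard $b=a+1$ on the grounds that "it would leave the singleton isolated from $P_a\cup P_b$" --- there is no such implication, and indeed in both target graphs the singleton is adjacent to all four vertices of $P_a\cup P_b$ with $b=a+1$. Relatedly, your claimed common degree of the core vertices into $C_{q(t)-\ell-2}$ being $2$ is off: balancing the singleton's contribution forces $P_1\in E(G)$ and $P_a,P_b\notin E(G)$, so that common degree is $3$.

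Two further deductions are missing or wrong. First, you never actually eliminate $d=\ell+2$; the paper does so by showing that $\deg_{P_2\cup P_a\cup P_b}(P_2)>\deg_{P_2\cup P_a\cup P_b}(P_a\cup P_b)$ admits no edge layout consistent with Lemma \ref{lem:first-symmetry-neighbours} and the degree bound, so $d=\ell+1$ and $P_{d-\ell}=P_1$. Second, your conclusion that $\ell=0$ is forced would lose one of the two graphs in Figure \ref{fig:d=l+1l+2-singleton-adj-ab}: the correct endgame is that $\deg(P_1\cup P_a\cup P_b)=4$ (since the singleton has degree $4$ and $C_{q(t)-\ell-2}\subsetneq V(G)$), whence the class containing $P_2$ must be the unique remaining class, and it can be either $P_2\cup S_{q(t)}$ (the $\ell=0$, $t=1$ case) or the size-$6$ class $P_2\cup P_3\cup P_5$ (which requires $\ell=1$). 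So the argument as proposed would both admit non-existent configurations ($b=a+2$) and exclude one of the two graphs the lemma asserts.
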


\begin{figure}[htpb]
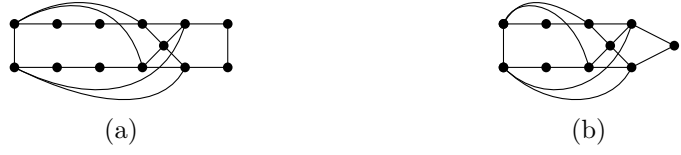

    \centering
    \begin{subfigure}{0.24\linewidth}
            \centering 
            \includegraphics{d=l+1-singleton-adjXX-1.pdf}
         \caption{}\label{fig:d=l+1-singleton-adjXX-1}
    \end{subfigure}
    \hspace{2cm}
    \begin{subfigure}{0.24\linewidth}
            \centering 
            \includegraphics{d=l+1-singleton-adjXX-2.pdf}
         \caption{}\label{fig:d=l+1-singleton-adjXX-2}
         \end{subfigure}
    \caption{The two graphs satisfying the prerequisites from Lemma \ref{lem:d=l+1-singleton-adjXX}.}
    \label{fig:d=l+1l+2-singleton-adj-ab}
\end{figure}

\begin{proof}
	We know by the fact that $P_{a+1}\notin \{A_{q(t)-\ell-1},B_{q(t)-\ell-1}\}$, that $b=a+1$. Let us first assume that $d = \ell+2$. 
    Substituting $2$ for $d-\ell$, we then get that $\deg_{P_{2} \cup P_a \cup P_b}(P_2) > \deg_{P_{2} \cup P_a \cup P_b}(P_a \cup P_b)$. However, no possible layout of edges satisfies this requirement without violating degree restrictions or Lemma \ref{lem:first-symmetry-neighbours}. 
    We must therefore have $d=\ell+1$. There is a layout of edges that satisfies this, in which $P_1 \in E(G)$ and $P_a, P_b \notin E(G)$.
	
The expression for $\pi^{q(t)-\ell-1}$ from Equation \ref{eq:d>l-pi^q(t)-l-1} and the values of $A_{q(t)-\ell-1}$ and $B_{q(t)-\ell-1}$ yield 
\begin{multline*}
 \pi^{q(t)-\ell-2} = 
\pi^{q(t)}_{\mathcal{S}}
\cup \{P_{d-i}\cup P_{c+i} \cup P_{n_\mathcal{P}-i} \mid i \in [0,\ell]\}  
\\\cup\{P_i \mid i \in [d-\ell-1]\cup[a+1,b-1]\} \cup M_t
\end{multline*}

By assumption, $\deg_{B_{q(t)-\ell-1}}(P_a \cup P_b) = 1$ and thus also $\deg_{P_a \cup P_b}(B_{q(t)-\ell-1}) = 4$. Therefore, since $C_{q(t)-\ell-2} \subsetneq V(G)$ (it must omit at least $P_2$) then $\deg(P_{d-\ell} \cup P_a \cup P_b)= 4$. Note that, now, the class that includes pair $P_2$ must either be the only other class, or must also have degree $4$. From Lemma \ref{degree-bottom6} and Remark \ref{degree-middle6}, it is clear that, if the latter is true, $G$ is either $4$-regular or disconnected. Thus, the class containing $P_2$ is the only other class at iteration $q(t)-\ell-2$ and can be either $P_2 \cup S_{q(t)}$ or $P_2 \cup P_{3} \cup P_5$. This yields the two graphs in Figure \ref{fig:d=l+1l+2-singleton-adj-ab}.
\end{proof}

\begin{lemma}
    Let $d\in \{\ell+1,\ell+2\}$. There is no long-refinement graph satisfying $A_{q(t)-\ell-1} = P_{d-\ell} \cup P_a \cup P_b$ and $B_{q(t)-\ell-1} \in \pi^{q(t)}_\mathcal{S}$ with $B_{q(t)-\ell-1}$ adjacent to $P_{d-\ell}$. 
\end{lemma}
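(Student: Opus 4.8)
The plan is to follow the template of Lemma~\ref{lem:d=l+1-singleton-adjXX} but to show that, when the singleton caps the $P_{d-\ell}$-side rather than the $P_a\cup P_b$-side, no consistent continuation of the reverse construction survives. Write $s\coloneqq B_{q(t)-\ell-1}$ and $C\coloneqq C_{q(t)-\ell-1}=P_{d-\ell}\cup P_a\cup P_b$. As in Lemma~\ref{lem:d=l+1-singleton-adjXX}, since $P_{a+1}\notin\{A_{q(t)-\ell-1},B_{q(t)-\ell-1}\}$ we first get $b=a+1$, so that $P_a,P_b$ are joined by a matching (Lemma~\ref{lem:successive-matching}). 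Applying Equation~\ref{eq:equal-diff} to the split of $C$ into $P_a\cup P_b$ and $P_{d-\ell}$, and using that $s$ is adjacent to $P_{d-\ell}$ but not to $P_a\cup P_b$ (so $\deg_{\{s\}}(P_{d-\ell})=1$ and $\deg_{\{s\}}(P_a\cup P_b)=0$), gives the internal-degree identity $\deg_C(P_a\cup P_b)=\deg_C(P_{d-\ell})+1$.

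For $d=\ell+1$ we have $P_{d-\ell}=P_1$, and here Lemma~\ref{lem:P_1-P_aP_b} pins down the edges between $P_1$ and $P_a,P_b$ (one vertex of $P_1$ complete to $P_a$, the other to $P_b$). Feeding this into the identity forces $G[P_1]$ to be edgeless while $G[P_a]$ and $G[P_b]$ are each a single edge, and together with the unique matching edge from each vertex of $C$ into $C_{q(t)-\ell}$ -- of degree $1$, since Equation~\ref{eq:consecutive-sixes} excludes degree $5$ under $\max\deg(G)\le 4$, or via Observation~\ref{degree-bottom6} when $\ell=0$ -- and the two edges of $s$ into $P_1$, this yields that every vertex of $P_1\cup P_a\cup P_b$ has degree exactly $4$.

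Note a contrast with Lemma~\ref{lem:d=l+1-singleton-adjXX}: there the degree bound already excluded $d=\ell+2$ (since $P_2$ would need five incident edges), whereas here the identity does \emph{not} produce a degree violation (it constrains $P_2$ to send few rather than many edges into $P_a\cup P_b$), so that $d=\ell+2$ must be excluded as part of the reverse construction, with $P_1$ surviving as a standalone pair class. I would therefore treat both $d=\ell+1$ and $d=\ell+2$ (and both values of $t$) by reverse-engineering $\pi^{q(t)-\ell-3},\pi^{q(t)-\ell-4},\dots$ from $C_{q(t)-\ell-2}=P_{d-\ell}\cup P_a\cup P_b\cup\{s\}$, absorbing the size-$6$ cascade exactly as in Lemmas~\ref{lem:1classgettingbigger} and~\ref{lem:2classgettingbigger} and controlling the interior and bottom classes through Observations~\ref{degree-middle6} and~\ref{degree-bottom6} (and Lemma~\ref{lem:end-neighbours} for $t=1$). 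When $d=\ell+1$, a useful early lever is that $C_{q(t)-\ell-2}$ is unbalanced with respect to itself, since the saturated six-vertex part and $s$ differ in internal degree; by Lemma~\ref{rmk:work-backwards} it may then be unbalanced with respect to exactly one further class, which forces $s$ to reach at most one class besides $C_{q(t)-\ell}$, immediately killing $\deg(s)=3$ and leaving only $\deg(s)\in\{2,4\}$ with tightly constrained attachment.

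I expect this reverse construction to be the main obstacle. In each surviving configuration the degree-$4$ saturation of $P_1\cup P_a\cup P_b$ propagates as classes merge, and the goal is to show that every branch ends either with $V(G)=C_{q(t)-\ell-2}\cup C_{q(t)-\ell}$ and all degrees equal to $4$ (so $G$ is $4$-regular, contradicting $|\deg(G)|=2$), with a disconnected graph, or with a merge at which the two halves cannot be assigned equal degree into the two cause classes (contradicting Equation~\ref{eq:equal-diff}). The bookkeeping is heavier than in Lemma~\ref{lem:d=l+1-singleton-adjXX} precisely because neither value of $d$ is eliminated cheaply and because $s$, carrying only two edges into $P_{d-\ell}$, leaves residual degree that must be tracked across all of $\deg(s)\in\{2,4\}$, $t\in\{0,1\}$, and the parity-dependent structure of the absorbed cascade.
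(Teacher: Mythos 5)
Your setup is sound as far as it goes: the reduction to $b=a+1$, the identity $\deg_C(P_a\cup P_b)=\deg_C(P_{d-\ell})+1$, and (for $d=\ell+1$) the conclusion that every vertex of $P_1\cup P_a\cup P_b$ is saturated at degree $4$ while $s\coloneqq B_{q(t)-\ell-1}$ carries only its two edges into $P_1$ all match what the paper establishes. The genuine gap is that you stop exactly where the proof should close and replace the contradiction by a programme (``I would therefore treat\ldots'', ``I expect this reverse construction to be the main obstacle'', ``the goal is to show\ldots''). No extended reverse construction is needed, and your own observations already supply the finish. Since $d=\ell+1$ forces $a=2\ell+t+2>2$, the class $C_{q(t)-\ell}$ (namely $P_2\cup P_{a-1}\cup P_{b+1}$, or $P_2\cup S_{q(t)}$ when $\ell=0$ and $t=1$) is a class of $\pi^{q(t)-\ell-2}$ distinct from $C_{q(t)-\ell-2}$; each of the six saturated vertices has exactly one neighbour in it (the matching edge), whereas $s$ has none by Observations \ref{degree-middle6} and \ref{degree-bottom6} (resp.\ Lemma \ref{lem:end-neighbours} and the degree bound). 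Hence $C_{q(t)-\ell-2}$ is unbalanced wrt \emph{both} itself (degrees $3$ versus $2$) and $C_{q(t)-\ell}$ (degrees $1$ versus $0$); by Lemma \ref{rmk:work-backwards} these two classes must then be exactly $A_{q(t)-\ell-2}$ and $B_{q(t)-\ell-2}$, and Equation \ref{eq:equal-diff} would require $3+1=2+0$ --- an immediate contradiction. Your ``early lever'' paragraph comes within a hair of this but draws the wrong conclusion (a constraint on $\deg(s)$) instead of noticing that the second unbalancing class is already forced and the two degree sums cannot agree.

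The same one-step argument disposes of $d=\ell+2$ whenever $\ell+t>0$: there $P_1$ is a separate class into which $P_2\cup P_a\cup P_b$ has degree $1$ and $s$ has degree $0$, and $C_{q(t)-\ell}$ plays the same role as before, so $C_{q(t)-\ell-2}$ is again unbalanced wrt two classes with the six-vertex half strictly ahead in both. The only branch that genuinely requires further reverse-engineering is $d=\ell+2$ with $\ell=t=0$, which the paper settles in a few additional lines; your proposal neither isolates this residual case nor carries out the case analysis it promises. As written, the proposal is a plausible plan with a correct preamble, but it does not prove the statement.
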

\begin{proof}
From the expression for $\pi^{q(t)-\ell-1}$ in Equation \ref{eq:d>l-pi^q(t)-l-1} and the given values of $A_{q(t)-\ell-1}$ and $B_{q(t)-\ell-1}$, we get the expression
\begin{multline}
    \pi^{q(t)-\ell-2} = 
	 \pi^{q(t)}_{\mathcal{S}}\backslash\{B_{q(t)-\ell-1}\}
	 \cup \{P_{d-i}\cup P_{c+i} \cup P_{n_\mathcal{P}-i} \mid i \in [0,\ell-1]\}  
	 \\\cup\{ P_{d-\ell} \cup P_a \cup P_b \cup B_{q(t)-\ell-2}\} \cup\{P_i \mid i \in [d-\ell-1]\cup[a+1,b-1]\} \cup M_t.
\end{multline}

First, assume that $d = \ell+1$. Then, recalling that $a > 2$, we must have either $\ell >0$ or $t=1$, implying that $C_{q(t)-\ell} \in \{P_2 \cup P_{a-1} \cup P_{b+1},\ P_2 \cup S\}$ is a class in $\pi^{q(t)-\ell-2}$. As the singleton $B_{q(t)-\ell-1}$ is only adjacent to $P_{d-\ell}$, it has degree $2$ into $C_{q(t)-\ell-1}$. Meanwhile, $A_{q(t)-\ell-1}$ has degree $3$ into $C_{q(t)-\ell-2}$ and $1$ into $C_{q(t)-\ell}$. However, by Remark \ref{degree-middle6}, $B_{q(t)-\ell-1}$ shares no edges with $C_{q(t)-\ell}$. This implies that $C_{q(t)-\ell-2}$ is unbalanced by both $C_{q(t)-\ell-2}$ and $C_{q(t)-\ell}$, with $A_{q(t)-\ell-1}$ having higher degree into both, contradicting Lemma \ref{rmk:work-backwards}. 

Letting $d =\ell+2$, $A_{q(t)-\ell-1}$ is $P_2 \cup P_a \cup P_b$. Consequently, $C_{q(t)-\ell-2}$ is unbalanced by $P_1$ as $\deg_{P_1}(P_2 \cup P_a \cup P_b) = 1$ by Lemmas \ref{lem:successive-matching} and \ref{lem:P_1-P_aP_b}, whereas $B_{q(t)-\ell-1}$ has degree $0$ into $P_1$ by Lemma \ref{lem:non-successive-pairs} and degree restrictions. If either $\ell > 0$ or $t=1$, then $C_{q(t)-\ell}$ is another class in $\pi^{q(t)-\ell-2}$ which by Corollary \ref{cor:second-symmetry-neighbours} and Remark \ref{degree-middle6}, $P_2 \cup P_a \cup P_b$ is adjacent to and $B_{q(t)-\ell-1}$ is not. This contradicts Lemma \ref{rmk:work-backwards}. Thus, we can only have $\ell=0$ and $t=0$, in which case $\pi^{q(t)-\ell-2} = \pi^{q(t)}_{\mathcal{S}}\backslash\{B_{q(t)-\ell-2}\} \cup \{P_1, P_2 \cup P_a \cup P_b \cup B_{q(t)-\ell-1}\}$. We can deduce that $\deg (P_2 \cup P_a \cup P_b \cup B_{q(t)-\ell-1})=3$, since $P_2\cup P_a\cup P_b$ could not be adjacent to any singleton without contradicting the degree restrictions. Thus, we must have $B_{q(t)-\ell-1}$ adjacent to a singleton to satisfy $\deg(B_{q(t)-\ell-1})=3$ and thus $C_{q(t)-\ell-2}$ must be unbalanced wrt some singleton $S'$. Thus $C_{q(t)-\ell-2} = P_1 \cup S'$ with $C_{q(t)-\ell-2}$ unbalanced wrt $C_{q(t)-\ell-1}$ with $P_1$ having degree $3$ into $C_{q(t)-\ell-1}$ and $S'$ having degree $0$. There are no other classes that can satisfy a degree difference of $2$, so we have a final contradiction. 
\end{proof}

We now treat the final case for $d \in \{\ell+1,\ell+2\}$.

\begin{lemma}
    Let $d\in \{\ell+1,\ell+2\}$, $A_{q(t)-\ell-1} = P_{d-\ell} \cup P_a \cup P_b$, and $B_{q(t)-\ell-1} = P_{a+1}$. If $\deg(G) = \{2,3\}$, then $G$ is represented by a long-refinement string in one of the following sets
    \begin{itemize}
        \item $\mathrm{\{S1X1_2X\}}$
        \item $\{\mathrm{S(011)^{\mathit{k}}00(110)^{\mathit{k}}X1_2X(011)^{\mathit{k}}0} \mid k \in \N_0\}$
        \item $\mathrm{\{S1^{\mathit{k}}001^{\mathit{k}}X1X1^{\mathit{k}}0} \mid k \in \N_0\}$
        \item $\mathrm{\{S0X1X_2\}} \cup \{\mathrm{S1^{\mathit{k}}1011^{\mathit{k}}X1X1^{\mathit{k}}1_2}  \mid k \in \N_0\}$
    \end{itemize}
If $\deg(G) \neq \{2,3\}$, then $G$ is as represented in Figure \ref{fig:d=l+1l+2-graphs} or is isomorphic to a graph in Table \ref{tab:adj-list2-d=l-34}.
\end{lemma}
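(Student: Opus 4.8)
The plan is to continue the reverse-engineering from the partition $\pi^{q(t)-\ell-1}$ of Equation \ref{eq:d>l-pi^q(t)-l-1}, exactly as in the preceding lemmas of this section. First I would observe that the hypothesis $B_{q(t)-\ell-1}=P_{a+1}$ forces $b=a+2$: the pairs occurring in $\pi^{q(t)-\ell-1}$ include $\{P_i \mid i\in[a+1,b-1]\}$, which is empty when $b=a+1$, so $P_{a+1}$ can only be a standalone class when $b=a+2$ (and $b\le a+2$ holds by Lemma \ref{lem:bdista}). Merging $A_{q(t)-\ell-1}=P_{d-\ell}\cup P_a\cup P_b$ with $B_{q(t)-\ell-1}=P_{a+1}$ then produces the class $C_{q(t)-\ell-2}=P_{d-\ell}\cup P_a\cup P_{a+1}\cup P_b$ of size $8$, and substituting into Equation \ref{eq:d>l-pi^q(t)-l-1} gives an explicit expression for $\pi^{q(t)-\ell-2}$ whose only other non-pair classes are the size-$6$ classes $P_{d-i}\cup P_{c+i}\cup P_{n_\mathcal{P}-i}$ for $i\in[0,\ell-1]$, together with $M_t$ when $t=1$.

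Next I would pin down the edges incident to this size-$8$ class. Since $d\in\{\ell+1,\ell+2\}$, we have $P_{d-\ell}\in\{P_1,P_2\}$, and I would split on these two subcases. Lemma \ref{lem:successive-matching} fixes the matchings $P_a$--$P_{a+1}$ and $P_{a+1}$--$P_b$ inside the class, Lemma \ref{lem:non-successive-pairs} forces $G[P_a,P_b]$ to be empty or complete bipartite, and when $P_{d-\ell}=P_1$ Lemma \ref{lem:P_1-P_aP_b} further fixes the two edges from $P_1$ into $P_a\cup P_b$. Combined with the bound $\max\deg(G)\le 4$, this leaves only a handful of admissible internal edge layouts for $C_{q(t)-\ell-2}$, each determining the degrees of its constituent pairs, its connection to the adjacent size-$6$ class $C_{q(t)-\ell}=P_{d-\ell+1}\cup P_{a-1}\cup P_{b+1}$, and its connection to the pair $P_{d-\ell-1}$ (which is present only when $d=\ell+2$).

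From each admissible layout I would then determine, via Lemma \ref{rmk:work-backwards} and the equal-difference condition of Equation \ref{eq:equal-diff}, the two classes wrt which $C_{q(t)-\ell-2}$ is unbalanced, and continue reverse-engineering towards $V(G)$. Because $\ell$ can be arbitrarily large, the cascade of size-$6$ classes $P_{d-i}\cup P_{c+i}\cup P_{n_\mathcal{P}-i}$ has to be absorbed uniformly, which is precisely what Lemmas \ref{lem:1classgettingbigger} and \ref{lem:2classgettingbigger} provide: a single growing class produces the $1^{k}$-type families, while the period-$3$ pattern of Lemma \ref{lem:2classgettingbigger} (with the auxiliary class $M$ supplied by the relevant pair) produces the $(011)^{k}$-type family. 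The parameter $k$ of each target string family corresponds to $\ell$ (or a residue of it), while the distinction $t=0$ versus $t=1$, recorded by the terminal layouts in Figures \ref{fig:graph-bottom-t=0}, \ref{fig:graph-bottom-t=1A}, and \ref{fig:graph-bottom-t=1B} via Observations \ref{degree-middle6} and \ref{degree-bottom6}, controls whether the final singleton $S_{q(t)}$ appears, i.e.\ whether the order is even or odd. Reading off the degree of the middle pair $P_{a+1}$ and whether it is adjacent to a singleton then fixes its symbol ($1$ or $1_2$) and lets me match each configuration either to one of the four listed families when $\deg(G)=\{2,3\}$, or to the explicit graphs of Figure \ref{fig:d=l+1l+2-graphs} and the family of Table \ref{tab:adj-list2-d=l-34} otherwise.

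The hard part will be the case analysis itself: correctly enumerating the admissible edge layouts of the size-$8$ class under the degree bound (in particular the $0$-or-$2$ choice for $G[P_a,P_b]$ and which pairs lie in $E(G)$), verifying in each surviving branch the exact hypotheses of Lemmas \ref{lem:1classgettingbigger} and \ref{lem:2classgettingbigger} -- especially the existence and prescribed degrees of the class $M$ -- and then assigning each terminated configuration unambiguously to the correct string family rather than a neighbouring one. Throughout, I would discard every branch whose terminal class has all its neighbours internal, since Observation \ref{degree-bottom6} shows these yield $4$-regular or disconnected graphs, which are not long-refinement graphs.
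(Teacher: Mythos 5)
Your plan follows essentially the same route as the paper's proof: starting from the partition in Equation \ref{eq:d=l+34-pi^q(t)-l-2}, splitting on $d=\ell+1$ versus $d=\ell+2$, determining the two unbalanced classes via Lemma \ref{rmk:work-backwards}, and absorbing the cascade of size-$6$ classes with Lemmas \ref{lem:1classgettingbigger} and \ref{lem:2classgettingbigger} to land on the listed string families and sporadic graphs. Two caveats: in the paper's application of Lemma \ref{lem:2classgettingbigger} here the auxiliary class $M$ is a \emph{singleton} adjacent to $P_{a+1}$ (with $A_j = P_{a+1}$ and $B_j = P_{d-\ell}\cup P_a\cup P_b$), not the pair itself, and that distinction is precisely what separates the $(011)^k$-family (symbol $1_2$) and the sporadic graphs of Figure \ref{fig:d=l+1l+2-graphs} from the $1^k$-families obtained via Lemma \ref{lem:1classgettingbigger}; and your write-up defers the entire case enumeration (ruling out $P_1$, the size-$6$ classes and $M_t$ as candidates for $B_{q(t)-\ell-2}$, and the $\ell=0$, $t=1$ terminal subcases), which is where the bulk of the paper's argument actually lies.
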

\begin{figure}[htpb]
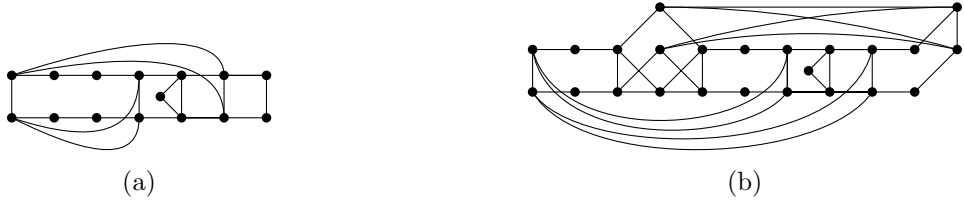

    \centering
\begin{subfigure}{0.48\linewidth}
    \centering
    \includegraphics{d=l+1-graphs1.pdf}
    \caption{}\label{fig:S+00X+1_2+X+0}\label{fig:deg24-14a}
\end{subfigure}
\begin{subfigure}{0.48\linewidth}
    \centering
    \includegraphics{d=l+1-graphs2.pdf}
    \caption{}\label{fig:deg24-24a}
\end{subfigure}
    \caption{The long-refinement graphs $G$ with $d \in \{\ell+1,\ell+2\}$, where $A_{q(t)-\ell-1}=P_{d-\ell}\cup P_a \cup P_b$, $B_{q(t)-\ell-1}=P_{a+1}$, and $\deg(G) \neq \{2,3\}$.}
    \label{fig:d=l+1l+2-graphs}
\end{figure}

\begin{table}[tpb]
    \centering
\begin{tabular}{c|c|c}
    vertex & adjacency 1 & adjacency 2 \\ \hline
    0 & $2k+3, 2k+4, 2k+7, 2k+8 $&$2k+5,2k+6,6k+13,6k+14$\\
    1 & $3,4k+9, 4k+10$&$3,4k+9,4k+10$ \\
    2 & $4, 4k+11, 4k+12$&$4, 4k+11, 4k+12$\\
    odd  $i \in I$ & $i-2,i+1,i+2$& $i-2,i+1,i+2$\\
    even $i \in I$ & $i-2,i-1, i+2$& $i-2,i-1, i+2$\\
    $2k+3$ & $0, 2k+1, 2k+5$&$2k+1,2k+4,2k+5$\\
    $2k+4$ & $0, 2k+2, 2k+6$&$2k+2,2k+3,2k+6$\\
    $2k+5$& $2k+3, 2k+7, 6k+13, 6k+14$&$0,2k+3,2k+6,2k+7$\\
    $2k+6$& $2k+4, 2k+8, 6k+13, 6k+14$&$0,2k+4,2k+5,2k+8$\\
    $2k+7$& $0, 2k+5, 2k+9$&$2k+5,2k+8,2k+9$\\
    $2k+8$& $0, 2k+6, 2k+10$&$2k+6,2k+7,2k+10$\\
    $4k+9$& $1, 4k+7, 4k+11$&$1, 4k+7, 4k+11$\\
    $4k+10$&  $1, 4k+8, 4k+12$&$1, 4k+8, 4k+12$\\
    $4k+11$& $2,4k+9, 4k+13$&$2,4k+9, 4k+13$\\
    $4k+12$& $2, 4k+10, 4k+14$&$2, 4k+10, 4k+14$\\
    $6k+13$& $2k+5, 2k+6, 6k+11$&$0, 6k+14, 6k+11$\\
    $6k+14$&$2k+5, 2k+6, 6k+12$&$0, 6k+13, 6k+12$\\
\end{tabular}
    \caption{All infinite families of long-refinement graphs $G$ with order $6k+15$ and $\deg(G) = \{3,4\}$, where $I= [3,2k+2]\cup[2k+9,4k+8] \cup [4k+13, 6k+12]$, defined for $k\in \N_0$.}\label{tab:adj-list2-d=l-34}
\end{table}

\begin{proof}
Given the expression for $\pi^{q(t)-\ell-1}$ in Equation \ref{eq:d>l-pi^q(t)-l-1} and the assumed values of $A_{q(t)-\ell-1}$ and $B_{q(t)-\ell-1}$, we get the following expression for $\pi^{q(t)-\ell-2}$.
\begin{multline}\label{eq:d=l+34-pi^q(t)-l-2}
	 \pi^{q(t)-\ell-2} = 
	 \pi^{q(t)}_{\mathcal{S}} \cup \{P_{d-\ell}\cup P_a\cup P_{a+1}\cup P_{b}\}
	 \cup \{P_{d-i}\cup P_{c+i} \cup P_{n_\mathcal{P}-i} \mid i \in [0,\ell-1]\}  
	 \\\cup\{P_i \mid i \in [d-\ell-1]\} \cup M_t.
 \end{multline}
Let us start by considering $d = \ell+1$. Then $C_{q(t)-\ell-2}$ is $P_1 \cup P_a \cup P_{a+1} \cup P_b$ with $\deg(C_{q(t)-\ell-2})\geq 3$. Since $P_2 \prec P_a$, we must have either $\ell > 0$ or $t = 1$. 
If $\ell=0$ and $t=1$ with edges as depicted in Figure \ref{fig:graph-bottom-t=1B}, then $\deg_{C_{q(t)-\ell}}(P_1 \cup P_a \cup P_b)- \deg_{C_{q(t)-\ell}}(P_{a+1}) = 2$. However, the other classes in $\pi^{q(t)-\ell-2}$ (the singletons in $\pi^{q(t)}_\mathcal{S}$, $P_1 \cup P_a \cup P_{a+1} \cup P_b$) cannot satisfy Lemma \ref{rmk:work-backwards} to complete $\{A_{q(t)-\ell-2},B_{q(t)-\ell-2}\}$.
In all cases where $l>0$, $\deg_{C_{q(t)-\ell}}(P_1 \cup P_a \cup P_b)- \deg_{C_{q(t)-\ell}}(P_{a+1}) = 1$, by Lemmas \ref{degree-middle6} and \ref{degree-bottom6}. Without loss of generality, let $A_{q(t)-\ell-2} = C_{q(t)-\ell}$. To satisfy Lemma \ref{rmk:work-backwards}
$B_{q(t)-\ell-2}$ must satisfy $\deg_{B_{q(t)-\ell-2}}(P_1 \cup P_a \cup P_b)- \deg_{B_{q(t)-\ell-2}}(P_{a+1}) = -1$. Thus, by \ref{degree-middle6} and \ref{degree-bottom6}, $B_{q(t)-\ell-2}$ cannot be a class of size $6$, nor $M_t$ as there is no layout of edges that satisfies requirements. 
Thus $B_{q(t)-\ell-2}$ may only be $P_1 \cup P_a \cup P_{a+1} \cup P_b$, or in $\pi^{q(t)}_\mathcal{S}$. 

Let us first assume $B_{q(t)-\ell-2}$ is $P_1 \cup P_a \cup P_{a+1} \cup P_b$. Then $\deg(P_{a+1}) = 3$, as in this case $P_{a+1}$ is not adjacent to a singleton. Furthermore, $P_{a+1}$ has all its neighbours in its own class, which means that the partition $\pi^{q(t)-\ell-2}$ satisfies the requirements of partition $\pi^j$, which means one can apply Lemma \ref{lem:1classgettingbigger} to yield $C_{j-1-j'} = C_{q(t)-2\ell-3-t}= A_j \cup \left(\bigcup_{i\in [0,\ell+t]}C_{q(t)-\ell-1+i}\right)$. 
If $t=0$, then by the result from Remark \ref{degree-bottom6} that $\deg(P_d\cup P_c\cup P_{n_\mathcal{P}}) = 2$, $C_{q(t)-2\ell-3-t}=V(G)$ so we have exactly the graphs represented by the strings in the set $\mathrm{\{S1^{\mathit{k}}001^{\mathit{k}}X1X1^{\mathit{k}}0} \mid k \in \N_0\}$.
If $t=1$, then either $C_{q(t)-2\ell-3-t}=V(G)$ and $\deg(P_{c-1}\cup S_{q(t)}) \in \{2,4\}$, or $C_{q(t)-2\ell-3-t}\subsetneq V(G)$ and $\deg(P_{c-1}\cup S_{q(t)}) = 3$. By an argument identical to that used in the proof Lemma \ref{lem:d=l--P1Pa-1PaPbPb+1}, the former case yields exactly three infinite families of graphs -- one represented by the strings in $\mathrm{\{S1^{\mathit{k}}1011^{\mathit{k}}X1X1^{\mathit{k}}1_2} \mid k \in \N_0\}$, and two represented in Table \ref{tab:adj-list2-d=l-34} -- while the latter case yields none.

We return to the case where $B_{q(t)-\ell-2} \in \pi^{q(t)}_\mathcal{S}$. Note that, in this case, $\pi^{q(t)-\ell-1}$ matches the requirements for partition $j$ in Lemma \ref{lem:2classgettingbigger} with $M$ as the singleton $B_{q(t)-\ell-2}$. Note that $\deg B_{q(t)-\ell-2} = 2$, as it cannot be adjacent to the larger classes, by degree restrictions, and any singleton adjacent to $B_{q(t)-\ell-2}$ could not then have degree greater than $2$, leading to a contradiction to Lemma \ref{rmk:work-backwards} at the previous iteration. Now, $\deg P_{a+1}$ can be $3$ or $4$ depending on whether $P_{a+1}$ is in $E(G)$. Where $\deg P_{a+1} = 4$, we can have at most $\ell = 2$ by Remark \ref{degree-middle6}. We then get the graph in Figure \ref{fig:deg24-14a} when $\ell=1$ and the graph in Figure \ref{fig:deg24-24a} when $\ell=2$. When $\deg P_{a+1} = 3$, we obtain the graphs represented by the strings $\{\mathrm{S(011)^{\mathit{k}}00(110)^{\mathit{k}}X1_2X(011)^{\mathit{k}}0} \mid k \in \N_0\}$.

Now, consider the case with $d = \ell+2$ and recall the partition $\pi^{q(t)-\ell-2}$ (Equation \ref{eq:d=l+34-pi^q(t)-l-2}). The class $P_1$ is adjacent to $P_2 \cup P_a \cup P_b$, but not to $P_{a+1}$, by Lemma \ref{lem:non-successive-pairs} and degree restrictions on $P_1$. By Lemma \ref{rmk:work-backwards}, the same cannot be true of either $P_3 \cup P_{a-1} \cup P_{b+1}$ or $P_3 \cup S$, yet neither class can be adjacent to $P_{a+1}$ by Lemmas \ref{degree-middle6} and \ref{degree-bottom6}, respectively, so neither class exists and thus $\ell=0$ and $t=0$. Taking, without loss of generality, $A_{q(t)-\ell-2} = P_1$, we then need $\deg_{B_{q(t)-\ell-2}}(P_{a+1})- \deg_{B_{q(t)-\ell-2}}(P_2 \cup P_a \cup P_b) = 1$. This may be satisfied by any of the other classes in $\pi^{q(t)-\ell-2}$, $B_{q(t)-\ell-2} = P_2 \cup P_a \cup P_{a+1} \cup P_b$ or $B_{q(t)-\ell-2} \in \pi^{q(t)}_\mathcal{S}$, but the former results in a $3$-regular or disconnected graph. The latter yields the long-refinement graph $G(\mathrm{S1X1_2X})$. 
\end{proof}

Having addressed every case from Lemma \ref{lem:l+1l+2AB}, this concludes the case for both $d = \ell+1$ and $d = \ell+2$. We now consider the remaining cases for the value of $d$. As per Lemma \ref{lem:value-of-d}, those are $d \in \{\ell+3,\ell+4\}$, in which case $b=a+1$. 

\begin{figure}[tpb]
    \centering
        \includegraphics{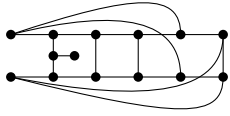}
    \caption{The unique long-refinement graph with $\deg(G)=\{1,3\}$.}
    \label{fig:d=l+3l+4-graphs}
\end{figure}

\begin{lemma}
    Let $d \in \{\ell+3, \ell+4\}$. Then $G$ is either the graph depicted in Figure \ref{fig:d=l+3l+4-graphs} or isomorphic to a graph from Table \ref{tab:adj-list3-d=l-34} or 
    represented by a string in one of the following sets.
    \begin{itemize}
        \item $\{\mathrm{S}111^k001^k\mathrm{X}\mathrm{X}1^k0 \mid k \in \N_0\}$
        \item $\mathrm{\{S110XX_2\}}\cup\mathrm{\{S111^{\mathit{k}}1011^{\mathit{k}}XX1^{\mathit{k}}1_2} \mid k \in \N_0\}$
        \item $\mathrm{\{S011(011)^{\mathit{k}}00(110)^{\mathit{k}}XX(011)^{\mathit{k}}0} \mid k \in \N_0\}$
        \item $\mathrm{\{S11XX, S011XX, S1_211XX\}}$
    \end{itemize}
\end{lemma}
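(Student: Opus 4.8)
The plan is to continue the reverse-engineering template of the preceding lemmas, peeling off one partition at a time from the description of $\pi^{q(t)-\ell-1}$ in Equation \ref{eq:d>l-pi^q(t)-l-1} and, at each step, using Lemma \ref{rmk:work-backwards} to identify the unique unbalanced class together with the two classes witnessing its imbalance. Since $d\in\{\ell+3,\ell+4\}$ forces $b=a+1$ by Lemma \ref{lem:value-of-d}, the interval $[a+1,b-1]$ is empty, so $\pi^{q(t)-\ell-1}$ consists only of the singletons $\pi^{q(t)}_{\mathcal{S}}$, the cascade of size-$6$ classes $P_{d-i}\cup P_{c+i}\cup P_{n_\mathcal{P}-i}$ for $i\in[0,\ell]$, the isolated pairs $P_1,\dots,P_{d-\ell-1}$ (where $d-\ell-1\in\{2,3\}$), and $M_t$. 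The first, and cleanest, step is to show that, uniformly across all sub-cases, the active class $C_{q(t)-\ell-1}=P_{d-\ell}\cup P_a\cup P_b$ is unbalanced with respect to \emph{exactly} $P_1$ and $P_{d-\ell-1}$, so that $\{A_{q(t)-\ell-1},B_{q(t)-\ell-1}\}=\{P_1,P_{d-\ell-1}\}$ and hence $C_{q(t)-\ell-2}=P_1\cup P_{d-\ell-1}$.

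I would establish this via three observations. First, $P_{d-\ell}$ is matched to $P_{d-\ell-1}$ by Lemma \ref{lem:successive-matching}, whereas $P_a$ and $P_b$ are non-consecutive to $P_{d-\ell-1}$, so by Lemma \ref{lem:non-successive-pairs} and $\max\deg(G)\le 4$ they have no edge to it; this yields imbalance at $P_{d-\ell-1}$. Second, by Lemma \ref{lem:P_1-P_aP_b} both $P_a$ and $P_b$ send an edge to $P_1$, while $P_{d-\ell}$ (non-consecutive to $P_1$, as $d-\ell\ge 3$) cannot, since a vertex of $P_1$ already has two neighbours in $P_a$ and one in $P_2$ and a further two would exceed degree $4$; this yields imbalance at $P_1$. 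Third, there is no third witness: the other cascade classes are trivially attached by Observation \ref{degree-middle6} and Corollary \ref{cor:second-symmetry-neighbours}, and any singleton adjacency of $P_a$ or $P_b$ (the $\mathrm{X}_2$ situation) can only come from $S_{q(t)}$ in $M_t=\{P_{c-1}\cup S_{q(t)}\}$ when $t=1$, whose merge with $P_{c-1}$ exactly re-balances $C_{q(t)-\ell-1}$ against $P_{c-1}\cup S_{q(t)}$ — the two matchings $P_{d-\ell},P_a\to P_{c-1}$ together with the edge $P_b\to S_{q(t)}$ give every vertex of $C_{q(t)-\ell-1}$ the same degree into that class (as one checks, e.g., for $\mathrm{S110XX_2}$). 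In particular, when $t=0$ the pairs $P_a,P_b$ carry no singleton, which is why the $\mathrm{X}_2$ and $1_2$ letters occur only in the $t=1$ branches.

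With $C_{q(t)-\ell-2}=P_1\cup P_{d-\ell-1}$ fixed, I would branch on $d=\ell+3$ (the merge $P_1\cup P_2$, consecutive) versus $d=\ell+4$ (the merge $P_1\cup P_3$, non-consecutive) and continue reverse-engineering $\pi^{q(t)-\ell-3},\pi^{q(t)-\ell-4},\dots$. The two arms of the configuration — the size-$6$ cascade terminating in the bottom class $C_{q(t)-1}$, whose layout is dictated by Observation \ref{degree-bottom6}, Lemma \ref{lem:end-neighbours}, and Figure \ref{fig:graph-bottoms}, and the class growing out of $P_1\cup P_{d-\ell-1}$ — either close off, when the current unbalanced class together with its two witnesses exhausts $V(G)$ (forcing the finite graphs $G(\mathrm{S11XX})$, $G(\mathrm{S011XX})$, $G(\mathrm{S1_211XX})$, and the $\{1,3\}$-graph of Figure \ref{fig:d=l+3l+4-graphs}), or keep merging. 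In the latter situation the hypotheses of Lemma \ref{lem:1classgettingbigger} or Lemma \ref{lem:2classgettingbigger} (with $M$ taken to be $P_{a+1}$ or an appropriate singleton) are verified, producing the long linear runs of $1$'s recorded by the exponent $k$ in the claimed strings, exactly as in Lemma \ref{lem:d=l--P1Pa-1PaPbPb+1}. Splitting further on $t\in\{0,1\}$ and on which of the layouts of Figures \ref{fig:graph-bottom-t=0}, \ref{fig:graph-bottom-t=1A}, \ref{fig:graph-bottom-t=1B} realises the bottom class (deciding $\deg(C_{q(t)-1})\in\{2,3,4\}$ and whether a trailing $1_2/\mathrm{X}_2$ singleton appears), and on whether individual pairs lie in $E(G)$, I would read off the four string families, the $\{3,4\}$-family of Table \ref{tab:adj-list3-d=l-34}, and the exceptional graphs, discarding every branch producing a $4$-regular or disconnected graph, or a class with two distinct degrees already after iteration $1$ (the standard contradiction with $\WL_1(G)=|G|-1$).

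The main obstacle is the breadth of the terminal bookkeeping rather than any single hard idea. Once the uniform determination $C_{q(t)-\ell-2}=P_1\cup P_{d-\ell-1}$ is in place, the two values of $d$, the two values of $t$, the three bottom layouts, and the independent binary choices ``pair in $E(G)$ or not'' multiply into many leaves; for each leaf one must invoke Lemma \ref{lem:1classgettingbigger} or \ref{lem:2classgettingbigger} with the correct residue class modulo $3$, verify the global bound $\max\deg(G)\le 4$, and match the resulting quotient graph to the precise string — including the exact placement of the single subscripted letter — or to the adjacency table. Guaranteeing that no admissible leaf is overlooked, and that the $\{1,3\}$, $\{2,3\}$, and $\{3,4\}$ outcomes are sorted correctly, is where essentially all the effort lies; the structural machinery of Sections \ref{sec:colref} and \ref{sec:rev-eng} renders each individual leaf routine.
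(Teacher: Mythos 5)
Your opening step coincides with the paper's: from Equation \ref{eq:d>l-pi^q(t)-l-1}, the class $C_{q(t)-\ell-1}=P_{d-\ell}\cup P_a\cup P_b$ is unbalanced wrt $P_{d-\ell-1}$ (matching vs.\ degree in $\{0,2\}$, by Lemmas \ref{lem:successive-matching} and \ref{lem:non-successive-pairs}) and wrt $P_1$ (by Lemma \ref{lem:P_1-P_aP_b}), and Lemma \ref{rmk:work-backwards} then forces $C_{q(t)-\ell-2}=P_1\cup P_{d-\ell-1}$; your ``no third witness'' discussion is not even needed for this, since the lemma caps the number of witnesses at two for any long-refinement graph. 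The overall strategy — branch on $d$, $t$, and the bottom layouts of Figure \ref{fig:graph-bottoms}, then invoke Lemmas \ref{lem:1classgettingbigger} and \ref{lem:2classgettingbigger} — is also the paper's.

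The gap is that the proposal stops where the lemma's actual content begins. The conclusion to be proved is a specific list of string families, one adjacency table, and a handful of exceptional graphs, and none of these is derived: you describe a tree of cases and assert that reading off its leaves ``would'' yield the claimed list, but you neither carry out any leaf nor establish the intermediate step the paper needs before the cascade lemmas apply, namely that $B_{q(t)-\ell-2}$ cannot be a singleton and hence $C_{q(t)-\ell-3}=C_{q(t)-\ell-2}\cup C_{q(t)-\ell-1}$ (the paper rules out the singleton option by a separate two-case contradiction argument). There is also a concrete error in the part you do sketch: you propose applying Lemma \ref{lem:2classgettingbigger} ``with $M$ taken to be $P_{a+1}$'', but in the present regime Lemma \ref{lem:value-of-d} forces $b=a+1$, so $P_{a+1}=P_b$ sits inside $A_j\cup B_j$ and cannot serve as the auxiliary class $M$; the correct choice in the $d=\ell+4$ branch is $P_2$, the pair left over after $P_1$ merges with $P_{d-\ell-1}=P_3$. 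As it stands, the argument establishes the first reverse-engineering step correctly but does not prove the classification statement.
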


\begin{table}[tpb]
    \centering
\begin{tabular}{c|c|c}
    vertex  &  adjacency 1  &  adjacency 2 \\ \hline
    0  &  $2k+7, 2k+8, 2k+11,2k+12$  & $2k+5,2k+6,6k+17,6k+18$\\
    1  &  $3,4k+13, 4k+14$  &  $3,4k+13, 4k+14$\\
    2  &  $4, 4k+15, 4k+16$  &  $4, 4k+15, 4k+16$\\
    odd  $i \in I$  &  $i-2,i+1,i+2$ &  $i-2,i+1,i+2$\\
    even $i \in I$  &  $i-2,i-1, i+2$ &  $i-2,i-1, i+2$\\
    $2k+7$  &  $0, 2k+5, 2k+9$ &  $2k+5,2k+8,2k+9$\\
    $2k+8$  &  $0, 2k+6, 2k+10$ & $2k+6,2k+7,2k+10$\\
    $2k+9$ &  $2k+7, 2k+11, 6k+17, 6k+18$ & $0, 2k+7, 2k+10, 2k+11$\\
    $2k+10$ &  $2k+8, 2k+12, 6k+17,6k+18$ & $0, 2k+8 ,2k+9 ,2k+12$\\
    $2k+11$ &  $0, 2k+9, 2k+13$ & $2k+9, 2k+12, 2k+13$\\
    $2k+12$ &  $0, 2k+10, 2k+14$ & $2k+10,2k+11,2k+14$\\
    $4k+13$ &  $1, 4k+11, 4k+15$ & $1, 4k+11, 4k+15$\\
    $4k+14$ &   $1, 4k+12, 4k+16$ & $1, 4k+12, 4k+16$\\
    $4k+15$ &  $2,4k+13, 4k+17$ & $2, 4k+13, 4k+17$\\
    $4k+16$ &  $2, 4k+14, 4k+18$ & $2, 4k+14, 4k+18$\\
    $6k+17$ &  $2k+9, 2k+10, 6k+15$ & $0, 6k+18, 6k+15$\\
    $6k+18$ & $2k+9, 2k+10, 6k+16$ & $0, 6k+17, 6k+16$\\
\end{tabular}
    \caption{All infinite families of long-refinement graphs $G$ with order $6k+19$ and $\deg(G) = \{3,4\}$, where $I= [3,2k+6]\cup[2k+13,4k+12] \cup [4k+17, 6k+16]$.}\label{tab:adj-list3-d=l-34}
\end{table}

\begin{proof}
We are given that $d \in \{\ell+3,\ell+4\}$ and therefore have, from Equation \ref{eq:d>l-pi^q(t)-l-1}, an expression for $\pi^{q(t)-\ell-1}$. Since $d>\ell+2$, the pairs $P_{d-\ell-1}$ and $P_1$ are distinct. By Lemmas \ref{lem:successive-matching} and \ref{lem:P_1-P_aP_b}, $\deg_{P_{d-\ell-1}}(P_{d-\ell}) = 1 = \deg_{P_1}(P_a \cup P_b)$, whereas $\deg_{P_1}(P_{d-\ell}) \in \{0,2\}$ and $\deg_{P_{d-\ell-1}}(P_{d-\ell})\in \{0,2\}$ by Lemma \ref{lem:non-successive-pairs}. Thus, by Lemma \ref{rmk:work-backwards}, we must have $C_{q(t)-\ell-2} = P_1 \cup P_{d-\ell-1}$ and
\begin{multline}\label{eq:d=l+3,4-pi^q(t)-l-2}
	 \pi^{q(t)-\ell-2} = 
	 \pi^{q(t)}_{\mathcal{S}}
	 \cup \{P_{d-i}\cup P_{c+i} \cup P_{n_\mathcal{P}-i} \mid i \in [0,\ell]\} \cup \{P_1 \cup P_{d-\ell-1}\}  
	 \cup\{P_i \mid i \in [2,d-\ell-2]\} \cup M_t.
 \end{multline}
 By Lemma \ref{lem:non-successive-pairs} and degree restrictions, we have $\deg_{P_1}(P_{d-\ell}) = \deg_{P_{d-\ell-1}}(P_{d-\ell})=0$. 
 
Now, we examine the neighbourhoods of $P_{d-\ell-1}$ and $P_1$ and their classes at iteration $q(t)-\ell-2$. 
Notice that $\deg_{P_{d-\ell}\cup P_a \cup P_b}(P_1) = 2$ and $\deg_{P_{d-\ell}\cup P_a \cup P_b}(P_{d-\ell-2}) = 1$ and so $P_{d-\ell}\cup P_a \cup P_b \in \pi^{q(t)-\ell-2}\backslash\pi^{q(t)-\ell-3}$ and without loss of generality we can denote it $A_{q(t)-\ell-2}$. Then we need another class $B_{q(t)-\ell-2}$ such that $\deg_{B_{q(t)-\ell-2}}(P_{d-\ell-1})- \deg_{B_{q(t)-\ell-2}}(P_{1}) = 1$. 
By Lemma \ref{lem:successive-matching}, Remarks \ref{degree-middle6} and \ref{degree-bottom6}, and Lemma \ref{lem:end-neighbours} respectively, we can discard $P_2$ (when $d = \ell+4$), the classes of six, and where $t=1$ the class $C_{q(t)}$. We are left with $B_{q(t)-\ell-2} \in \pi^{q(t)}_\mathcal{S}$ or $B_{q(t)-\ell-2} = P_1 \cup P_{d-\ell-1}$. 
We show that the former leads to a contradiction, since it would lead to $C_{q(t)-\ell-3} = P_{d-\ell}\cup P_a \cup P_b \cup S$ for some singleton $S$. We get a contradiction when $\ell+t>0$, as $P_{d-\ell}\cup P_a \cup P_b$ has more neighbours in both $P_{d-\ell}\cup P_a \cup P_b \cup S$ and $C_{q(t)-\ell}$ than $S$, contradicting Lemma \ref{rmk:work-backwards}. 
We get a contradiction when $\ell+t = 0$ because then $\deg_{P_1 \cup P_{d-\ell}}(S) - \deg_{P_1 \cup P_{d-\ell}}(P_{d-\ell}\cup P_a \cup P_b)=1$ whereas $\deg_{P_1 \cup P_{d-\ell}}(S) - \deg_{P_1 \cup P_{d-\ell}}(P_{d-\ell}\cup P_a \cup P_b)=-2$, contradicting Lemma \ref{rmk:work-backwards}. 
As such, we must have $C_{q(t)-\ell-3} = C_{q(t)-\ell-2} \cup C_{q(t)-\ell-1}$, with possible edge layouts within $C_{q(t)-\ell-3}$ depicted in Figure \ref{fig:d=l+3l+4-C_q(t)-l-3}.
\begin{figure}[htpb]
    \centering
    \begin{subfigure}{0.4\linewidth}
    \centering
        \includegraphics{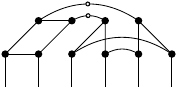}
        \label{fig:d=l+3l+3=l+t>0}
        \caption{Edge layout when $\ell>0$ or $t=1$}
    \end{subfigure}
    \begin{subfigure}{0.4\linewidth}
    \centering
        \includegraphics{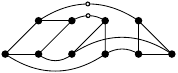}
        \label{fig:d=l+3l+3=l+t=0}
        \caption{Edge layout when $\ell=0$ and $t=0$}
    \end{subfigure}
    \caption{Diagram showing vertices $C_{q(t)-\ell-3}$ and all edges incident to it. Note the empty disks representing $P_2$ only when $d=\ell+4$.}
    \label{fig:d=l+3l+4-C_q(t)-l-3}
\end{figure}
We work backward from each of the two possible expression for partition $\pi^{q(t)-\ell-3}$ to determine all long-refinement graphs in which $d \in \{\ell+3, \ell+4\}$.
    We know that $C_{q(t)-\ell-3} = P_1 \cup P_{d-\ell-1} \cup P_{d-\ell} \cup P_a \cup P_b$ with (up to swapping $A$ and $B$) $A_{q(t)-\ell-2}=  P_1 \cup P_{d-\ell-1}$ and $B_{q(t)-\ell-2} = P_{d-\ell} \cup P_a \cup P_b$. Let us consider the cases according to the value of $t+\ell$, which determines whether $P_{d-\ell}$ and $P_a$ are subsequent pairs. 

    If  $t+\ell = 0$ and $d = \ell+3$, then $C_{q(t)-\ell-3}$ includes all the pairs in $\mathcal{P}$. $B_{q(t)-\ell-2}$ cannot be adjacent to any singletons, as we would then get a vertex with degree at least $6$.  Thus, either $C_{q(t)-\ell-3} = V(G)$ and the graph is $3-$regular, or $C_{q(t)-\ell-3} \subsetneq V(G)$ and $\deg(C_{q(t)-\ell-3}=2$. Either is a contradiction.

    If $t+\ell = 0$ and $d=\ell+4$, then $C_{q(t)-\ell-3}$ includes all the pairs in $\mathcal{P}$, except $P_2$. As $P_1 \cup P_{d-\ell}$ has degree $1$ to $P_2$ whereas the remaining vertices in $C_{q(t)-\ell-3}$ have all their neighbours in $C_{q(t)-\ell-3}$, then $C_{q(t)-\ell-4} = C_{q(t)-\ell-3} \cup P_2 = \bigcup_{P\in\pi^p_\mathcal{P}} P$, by Lemma \ref{rmk:work-backwards}. 
    If $C_{q(t)-\ell-4} = V(G)$, then $\deg (P_2) \neq 3$ and thus we get $G(\mathrm{S011XX})$. If $C_{q(t)-\ell-4} \subsetneq V(G)$, then $\deg(P_2) = 3$, so $P_2$ must have degree $1$ into some singleton $S$ and by the same argument as above, we get $C_{q(t)-\ell-5} = \bigcup_{P\in\pi^p_\mathcal{P}} P \cup S$. If $C_{q(t)-\ell-5} = V(G)$, then $\deg(S) \neq 3$, which is satisfied by $G(\mathrm{S1_211XX})$. If $C_{q(t)-\ell-5} \subsetneq V(G)$, then $\deg(S)=3$, so it must itself be adjacent to a singleton, which we denote $S_{q(t)-\ell-5}$. We cannot have $\deg(S_{q(t)-\ell-5})=3$ unless there are two singletons in $\pi^{q(t)-\ell-6}$ adjacent to $S_{q(t)-\ell-5}$ - neither of which could be adjacent to $\bigcup_{P\in \pi^p_\mathcal{P}}P$, contradicting Lemma \ref{rmk:work-backwards}. Thus, $C_{q(t)-\ell-5} = V(G)$ and the last long-refinement graph in this case is that depicted in Figure \ref{fig:d=l+3l+4-graphs}.

    If $d = \ell+3$ and $t+\ell>0$, then the partition $\pi^{q(t)-\ell-2}$ satisfies the requirements for $\pi^j$ in Lemma \ref{lem:1classgettingbigger}, since $A_{q(t)-\ell-2}$ has all its vertices in $C_{q(t)-\ell-3}$. Applying that lemma gives us that $C_{q(t)-2\ell-3-t}$ is a superset of all the classes of $C_{q(t)-i}$ for $i \in [0,\ell+1]$ with $A_{q(t)-2\ell-2-t}= C_{q(t)-1+t}$. If $t=0$ then by Remark \ref{degree-bottom6}, $A_{q(t)-2\ell-2-t}= C_{q(t)-1+t}$ must have degree $2$ yielding only the graphs represented by strings in the set $\mathrm{\{S111^{\mathit{k}}001^{\mathit{k}}XX1^{\mathit{k}}0} \mid k \in \N_0\}$. If $t=1$, the edges between the vertices of $C_{q(t)}$ and $C_{q(t)-1}$ can be either as in Figure \ref{fig:graph-bottom-t=1A} or as in \ref{fig:graph-bottom-t=1B}. In the former case, we see that if $V(G) = C_{q(t)-2\ell-3-t}$, then $\deg (P_{c-1}\cup S_{q(t)})$ can be $2$ -- which results in the graphs represented by strings $\mathrm{\{S1111^{\mathit{k}}011^{\mathit{k}}XX1^{\mathit{k}}1_2} \mid k \in \N_0\}$ -- or $3$ -- only if it is adjacent to a singleton $S$ with $\deg (S) = 3$ which, without contradicting Lemma \ref{rmk:work-backwards}, yields only $3$-regular or disconnected graphs -- or $4$ -- which yields a graph in the right column of Table \ref{tab:adj-list3-d=l-34}. In the latter case, the vertices of $P_{c-1}\cup S_{q(t)}$ have degree $4$, so $C_{q(t)-2\ell-3-t}=V(G)$ and we obtain the graphs described in the left column of Table \ref{tab:adj-list3-d=l-34}.  
    
    If $d = \ell+4$ and $t+\ell>0$, then we can apply Lemma \ref{lem:2classgettingbigger}, with $j=q(t)-\ell-2$, to show that $\pi^{q(t)-2\ell-3-t} \backslash \pi^{q(t)-\ell-2} = \{D_{\ell-1+t},E_{\ell-1+t}\}$. 

    If $D_{\ell-1+t}\cup E_{\ell-1+t} = V(G)$, then $\deg (D_{\ell-1+t}) \neq \deg (E_{\ell-1+t}) = 3$. However, we cannot achieve $\deg (D_{\ell-1+t}) = 4$ without $P_2$ adjacent to some singleton, which would contradict  $D_{\ell-1+t}\cup E_{\ell-1+t} = V(G)$. Thus $\deg (D_{\ell-1+t}) = 2$. If $t=0$, then $\deg (P_d \cup P_c \cup P_{n_p}) = 2$. Thus $C_{q(t)-1} \subsetneq D_{\ell-1+t}$, and therefore we narrow down to the graphs represented by strings in the set $\mathrm{\{S011(011)^{\mathit{k}}00(110)^{\mathit{k}}XX(011)^{\mathit{k}}0} \mid k \in \N_0\}$. If $t=1$, we must again have $\deg (D_{\ell-1+t})\in \{2,4\}$. The latter, once again, cannot be achieved without a singleton not in $D_{\ell-1+t}\cup E_{\ell-1+t}$. The former leads to a stable graph.

    If $D_{\ell-1+t}\cup E_{\ell-1+t} \subsetneq V(G)$, then $t$ must be $1$ as otherwise, as previously argued $C_{q(t)-2\ell-3-t} = D_{\ell-1+t}$ and $N(D_{\ell-1+t})\subseteq D_{\ell-1+t}\cup E_{\ell-1+t}$, so $G$ is disconnected. With $t=1$, we could have a singleton $S$ adjacent to $P_{c-1}\cup S_{q(t)}$, or where $P_2 \in \{A_{q(t)-2\ell-2-t}, B_{q(t)-2\ell-3-t}\}$, adjacent to $P_2$. The former case can only occur when $t=1$, as in Figure \ref{fig:graph-bottom-t=1A}, with $P_{c-1}\cup S_{q(t)}$ adjacent to a singleton. Without contradicting Lemma \ref{rmk:work-backwards}, this can only give us $3-$regular or disconnected graphs. The latter case is also quickly dismissed: either $\deg (P_2)$ is $3$ -- so $C_{q(t)-1+t}$ must be $P_{c-1}\cup S_{q(t)}$ adjacent to a singleton $S'$ yielding $C_{q(t)-2\ell-3-t}=S\cup S'$ which causes contradictions within two applications of Lemma  \ref{rmk:work-backwards} -- or it is $4$ -- and $\deg S = 4$, which is impossible considering the classes in partition $\pi^{q(t)-2\ell-3-t}$, without contradicting Lemma \ref{rmk:work-backwards}.
\end{proof}

This concludes our case analysis of long-refinement graphs with maximum degree at most $4$. 

\section{The Classification}\label{sec:classification}

We now bring together the results that we have found in the Section \ref{sec:max4} for our final classification. 

Recall Notation \ref{notation:23}. The work \cite{KMcK20} identified families of long-refinement graphs $G$ with $\deg(G) = \{2,3\}$, but left as an open problem whether they are comprehensive. Here, we answer their question affirmatively. 

\begin{theorem}\label{thm:classification23}
    The even-order long-refinement graphs $G$ with $\deg(G) = \{2,3\}$ are exactly the ones represented by strings contained in the following sets:
	\begin{itemize}
		\setlength\itemsep{0em}
		\item $\{\mathrm{S011XX}\}$
		\item $\{\mathrm{S1^{\mathit{k}}001^{\mathit{k}}X1X1^{\mathit{k}}0} \mid k \in \N_0\}$
		\item $\{\mathrm{S1^{\mathit{k}}11001^{\mathit{k}}XX1^{\mathit{k}}0} \mid k \in \N_0\}$
		\item $\{\mathrm{S1^{\mathit{k}}0011^{\mathit{k}}XX1^{\mathit{k}}10} \mid k \in \N_0\}$
		\item $\{\mathrm{S011(011)^{\mathit{k}}00(110)^{\mathit{k}}XX(011)^{\mathit{k}}0} \mid k \in \N_0\}$
		\item $\{\mathrm{S(011)^{\mathit{k}}00(110)^{\mathit{k}}1X0X1(011)^{\mathit{k}}0} \mid k \in \N_0\}$
	\end{itemize}

    The odd-order long-refinement graphs $G$ with $\deg(G) = \{2,3\}$ are exactly the ones represented by strings contained in the following sets:
\begin{itemize}
	\setlength\itemsep{0em}
        \item $\{\mathrm{S1_211XX}\}$
	\item $\{\mathrm{S0X1X_2}\} \cup \{\mathrm{S1^{\mathit{k}}1011^{\mathit{k}}X1X1^{\mathit{k}}1_2} \mid k \in \N_0\}$
	\item $\{\mathrm{S110XX_2}\} \cup \{\mathrm{S111^{\mathit{k}}1011^{\mathit{k}}XX1^{\mathit{k}}1_2} \mid k \in \N_0\}$
	\item $\{\mathrm{S1^{\mathit{k}}01^{\mathit{k}}1XX1^{\mathit{k}}1_2} \mid k \in \N_0\}$
	\item $\{\mathrm{S(011)^{\mathit{k}}00(110)^{\mathit{k}}X1_2X(011)^{\mathit{k}}0} \mid k \in \N_0\}$
\end{itemize}
\end{theorem}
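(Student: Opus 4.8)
The plan is to assemble the two lists directly from the case analysis of Section~\ref{sec:max4}, which already enumerates, case by case on the value of $d$, every long-refinement graph with $\max\deg(G)\le 4$ and records a long-refinement string for each one with $\deg(G)=\{2,3\}$. Before assembling, I would first check that the string notation of Notation~\ref{notation:23} is well-defined on all such graphs, i.e.\ that $|\pi^p_{\mathcal{S}}|\le 1$. Writing $n=2|\pi^p_{\mathcal{P}}|+|\pi^p_{\mathcal{S}}|$ shows $|\pi^p_{\mathcal{S}}|\equiv n \pmod 2$, so only the bound $\le 1$ needs work. This I would read off from the structural lemmas of Section~\ref{sec:rev-eng}: the singletons of $\pi^p_{\mathcal{S}}$ are carried along unchanged as one reverse-engineers the partitions, and by Lemmas~\ref{lem:first-symmetry-end} and~\ref{lem:second-symmetry} at most one singleton $S_{q(t)}$ is ever absorbed (the case $t=1$). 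Since the backward process must terminate in the single class $\pi^0$, the degree-$\{2,3\}$ graphs produced in the Section~\ref{sec:max4} analysis leave no room for a second surviving singleton, so $|\pi^p_{\mathcal{S}}|\le 1$ and every such graph is represented by a unique $G(\Xi)$ up to isomorphism.

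Next I would sort the enumerated graphs by the parity of $n$ via the correspondence with the auxiliary variable $t$. Since there is at most one singleton and it must eventually be absorbed, $t=1$ holds exactly when $|\pi^p_{\mathcal{S}}|=1$, i.e.\ when $n$ is odd, and $t=0$ holds exactly when $\pi^p_{\mathcal{S}}=\emptyset$, i.e.\ when $n$ is even. Concretely, the even-order strings are those with no subscripted letter, and the odd-order strings are those carrying a single subscripted letter (marking the unique pair adjacent to $S_{q(t)}$). I would then walk through $d=\ell$, $d\in\{\ell+1,\ell+2\}$, and $d\in\{\ell+3,\ell+4\}$ in turn, extracting from Lemma~\ref{lem:d=l--P1Pa-1PaPbPb+1}, from the lemmas following Lemma~\ref{lem:l+1l+2AB}, and from the final lemma of Section~\ref{sec:max4} the degree-$\{2,3\}$ string sets produced in each case, collecting even strings into the first list and odd strings into the second. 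Completeness of this enumeration rests on Lemma~\ref{lem:value-of-d}, which forces $d\in[\ell,\ell+4]$, so that the five cases are exhaustive.

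The main obstacle, and the step demanding the most care, is the bookkeeping that turns the raw union of the lemma outputs into the two clean lists. Two reconciliations are needed. First, a few strings produced formally by the lemmas fall below the minimum order of a long-refinement graph and must be discarded; notably $\mathrm{S11XX}$ would describe an $8$-vertex graph, but no long-refinement graph has fewer than $10$ vertices, so it is removed. Second, the exponent reindexings across different lemmas must be verified to agree, e.g.\ that $\mathrm{S}111^{k}001^{k}\mathrm{XX}1^{k}0$ and $\mathrm{S}1^{k}11001^{k}\mathrm{XX}1^{k}0$ denote the same family (both place $k+2$ consecutive type-$1$ pairs before the $00$ block), and that strings appearing in intermediate sub-cases which describe the same isomorphism class as a listed string are merged to a single canonical representative. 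After dropping under-sized strings and merging equivalent ones, the survivors are precisely the members of the two lists; the converse (soundness) direction is immediate, since each listed $G(\Xi)$ was constructed in Section~\ref{sec:max4} as a genuine long-refinement graph with $\deg(G)=\{2,3\}$ of the stated parity.
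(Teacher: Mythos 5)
Your proposal matches the paper's approach: the paper gives no separate proof of this theorem but obtains it by collecting the degree-$\{2,3\}$ outputs of the case analysis in Section~\ref{sec:max4} (cases $d=\ell$, $d\in\{\ell+1,\ell+2\}$, $d\in\{\ell+3,\ell+4\}$, exhaustive by Lemma~\ref{lem:value-of-d}), exactly as you describe, including the parity split via $t$ and the reconciliation of string representatives. One small correction to your bookkeeping: $\mathrm{S11XX}$ has five letters and hence encodes a $10$-vertex graph, not an $8$-vertex one; it is discarded not for being undersized but because the corresponding graph is $3$-regular (its case, $t+\ell=0$ and $d=\ell+3$, is shown contradictory in the proof of the final lemma of Section~\ref{sec:max4}), so it cannot be a long-refinement graph.
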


In particular, the theorem implies that the gaps that the families found in \cite{kiefer:phd} leave open cannot be covered with degrees $2$ and $3$.

\begin{corollary}\label{no-LRgraphs-n}
There is no long-refinement graph $G$ with $\deg(G) = \{2,3\}$ and order $n \neq 12$ such that $n \bmod 18\in\{6,12\}$.
\end{corollary}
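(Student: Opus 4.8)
The plan is to derive the corollary directly from the classification in Theorem~\ref{thm:classification23}, reducing it to a finite modular-arithmetic check on the orders of the graphs in each family. The first step is to record how the order of a graph is read off from its long-refinement string. By Notation~\ref{notation:23}, the $i$-th letter of the string corresponds to the pair $P_i$, so the number of pairs $n_{\mathcal{P}}$ equals the length of the string, where the abbreviations $1^{k}$ and $(011)^{k}$ contribute $k$ and $3k$ letters, respectively. By the structural results underlying Theorem~\ref{thm:classification23} (even-order long-refinement graphs with $\deg(G)=\{2,3\}$ have a stage consisting only of pairs, and odd-order ones a stage of pairs plus a single singleton), we have $n = 2n_{\mathcal{P}}$ in the even case and $n = 2n_{\mathcal{P}}+1$ in the odd case.

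The second step is to observe that the odd-order families require no work: since $6$ and $12$ are even, any $n$ with $n \bmod 18 \in \{6,12\}$ is even, so no odd-order graph can attain such an order. It therefore suffices to examine the six even-order families. For each I would compute the string length as an affine function of $k$ and double it to obtain $n$; for example, $\mathrm{S}1^{k}001^{k}\mathrm{X}1\mathrm{X}1^{k}0$ has length $3k+7$, giving $n=6k+14$, while $\mathrm{S}011(011)^{k}00(110)^{k}\mathrm{X}\mathrm{X}(011)^{k}0$ has length $9k+10$, giving $n=18k+20$. Reducing modulo $18$ (equivalently, recording $n \bmod 6$) shows that all five \emph{infinite} even-order families yield $n \equiv 2$ or $4 \pmod 6$, hence $n \not\equiv 6,12 \pmod{18}$, whereas the singleton family $\{\mathrm{S}011\mathrm{X}\mathrm{X}\}$ contributes only $n=12$. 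As $12 \bmod 18 = 12$, this is exactly the one permitted exception, and the corollary follows.

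The computation is entirely routine, so the only real care needed is bookkeeping: parsing each string correctly, in particular its repeated blocks, to obtain the length, and invoking the correct even/odd order formula. I expect the one subtle point to be the separation of the residues $0$, $6$, and $12$ modulo $18$ among graphs with $n \equiv 0 \pmod 6$: the mod-$6$ shortcut rules out $\{6,12\}$ for every even-order family except $\{\mathrm{S}011\mathrm{X}\mathrm{X}\}$, and one must then verify by hand that this remaining family gives \emph{exactly} $n=12$ rather than some other multiple of $6$, which it does because its string has the fixed length $6$. All genuine difficulty resides in Theorem~\ref{thm:classification23}; the corollary is a finite verification layered on top of it.
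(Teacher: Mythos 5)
Your approach is the intended one: the paper gives no explicit proof of this corollary, deriving it directly from Theorem~\ref{thm:classification23} by reading off the orders of the classified families, exactly as you propose. Your reduction to the even-order families (since $n \bmod 18 \in \{6,12\}$ forces $n$ even) and your order formulas $n = 2n_{\mathcal{P}}$ (even case) and $n = 2n_{\mathcal{P}}+1$ (odd case) are correct.

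However, your bookkeeping for the two $(011)$-block families is off, and this breaks the ``mod-$6$ shortcut'' you rely on. The string $\mathrm{S011(011)^{k}00(110)^{k}XX(011)^{k}0}$ has length $4 + 3k + 2 + 3k + 2 + 3k + 1 = 9k+9$, not $9k+10$, so $n = 18k+18 \equiv 0 \pmod{18}$; similarly $\mathrm{S(011)^{k}00(110)^{k}1X0X1(011)^{k}0}$ has length $1+3k+2+3k+5+3k+1 = 9k+9$ and again $n = 18k+18$. Consequently it is false that all five infinite even-order families give $n \equiv 2$ or $4 \pmod 6$: these two give $n \equiv 0 \pmod 6$, the same residue class as the forbidden values $6$ and $12$, so reducing modulo $6$ cannot exclude them. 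You genuinely need the modulo-$18$ computation here: these families realise exactly the residue $0 \pmod{18}$, which avoids $\{6,12\}$. With that correction the verification goes through --- the families $\{\mathrm{S1^{\mathit{k}}001^{\mathit{k}}X1X1^{\mathit{k}}0}\}$ give $n = 6k+14 \equiv 2 \pmod 6$, the two families of length $3k+8$ give $n = 6k+16 \equiv 4 \pmod 6$, the $(011)$-families give $n \equiv 0 \pmod{18}$, and $\mathrm{S011XX}$ gives $n = 12$ --- so the only even order that is $\equiv 0 \pmod 6$ but $\not\equiv 0 \pmod{18}$ is $12$, as claimed. The conclusion survives, but as written your intermediate claim is wrong and the argument does not compile without the corrected lengths.
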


Theorem \ref{thm:classification23} and Corollary \ref{no-LRgraphs-n} completely classify the long-refinement graphs $G$ with $\deg(G)=\{2,3\}$. Expanding to $\max\deg(G) \leq 4$, we obtain the following further parts of our classification of long-refinement graphs with small degrees from the results in section \ref{sec:max4}.

\begin{theorem}\label{thm:deg-12-or-14}
    There is no long-refinement graph $G$ with $\deg(G) = \{1,2\}$ or $\deg(G) = \{1,4\}$. 
\end{theorem}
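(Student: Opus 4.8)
The plan is to dispatch the two degree sets separately. For $\deg(G)=\{1,2\}$ there is essentially nothing to do: a connected graph with these degrees is a path, and the quoted fact that a monochromatic path on $n$ vertices stabilises after $\lfloor (n-1)/2\rfloor$ iterations already gives Observation \ref{obs:maxdeg2}, so no such long-refinement graph exists. The whole difficulty sits in the case $\deg(G)=\{1,4\}$, which I would attack by combining a few rigid constraints on where the degree-$1$ vertices (leaves) can sit with the exhaustive case analysis of Section \ref{sec:max4}.

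First I would record the structural skeleton that holds for \emph{any} long-refinement graph. Since Colour Refinement produces the discrete partition on $G$, no two vertices may have identical neighbourhoods; in particular two leaves cannot share their unique neighbour, so \emph{each degree-$4$ vertex has at most one leaf-neighbour} and the leaves form an independent set. Next I would locate the leaves inside the pair/singleton structure of $\pi^p$. By Lemma \ref{lem:successive-matching} consecutive pairs are matched, so any pair that is neither the minimum nor the maximum of $\prec$ has both a predecessor and a successor and hence minimum degree $2$; thus a \emph{leaf-pair} (a pair both of whose vertices are leaves) can only be $P_1$ or $P_{n_\mathcal{P}}$. By Lemma \ref{lem:P_1-P_aP_b} some $u\in P_1$ is completely joined to the pair $P_a$ (a pair by Lemma \ref{lem:C_p-1}), so $u$ has degree $\ge 2$, whence $P_1$ is a degree-$4$ pair; the same argument forces $P_a$ and $P_b$ to be degree-$4$ pairs, since otherwise $u$ would have two leaf-neighbours. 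Hence the only possible leaf-pair is $P_{n_\mathcal{P}}$. Finally, by Lemma \ref{lem:non-successive-pairs} every singleton meets every pair in $0$ or $2$ vertices; a leaf that is a singleton therefore has an even number of neighbours inside the pairs, so, having degree $1$, its unique neighbour is a singleton, and by independence of leaves this neighbour must be a degree-$4$ singleton.

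With the leaves pinned down to at most one leaf-pair $P_{n_\mathcal{P}}$ together with leaf-singletons hanging off degree-$4$ singletons, I would feed these constraints into the reverse-engineering of Section \ref{sec:max4}. The crucial point is that the analysis there assumes only $\max\deg(G)\le 4$ and never $\min\deg(G)\ge 2$: via Lemma \ref{lem:value-of-d} it splits into the cases $d\in[\ell,\ell+4]$ and in each case produces the complete list of admissible graphs (as strings, figures, or adjacency tables), including min-degree-$1$ graphs. Running through that list, the degree sets that actually occur are $\{2,3\}$, $\{2,4\}$, $\{3,4\}$ and, in the single sub-case $d=\ell+4$ with $t+\ell=0$, the lone degree-$\{1,3\}$ graph of Figure \ref{fig:d=l+3l+4-graphs}; no branch ever realises a graph whose \emph{only} degrees are $1$ and $4$. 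The reason, traceable through those lemmas, is that every vertex they create with degree $4$ is a vertex completely joined to a pair or to a class of size $\ge 4$, which by the no-shared-neighbour constraint cannot simultaneously carry a leaf, so the degree-$4$ vertices forced by the construction never leave room for a graph all of whose non-leaf degrees equal $4$.

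The main obstacle, and the step I would spend the most care on, is exactly this last piece of bookkeeping: ruling out directly the coexistence of a leaf-pair $P_{n_\mathcal{P}}$ and leaf-singletons with an all-degree-$4$ remainder, rather than by appeal to the enumeration. A self-contained contradiction should proceed by a parity/counting argument — the number of leaves is even, each leaf consumes a distinct degree-$4$ neighbour, and each leaf-singleton forces an additional degree-$4$ singleton whose four neighbours cannot be accommodated without either creating false twins or introducing a second non-leaf degree in the final partition. I expect this direct route to be workable but fiddly, which is why the cleanest write-up leans on the fact that Section \ref{sec:max4} already enumerates \emph{every} long-refinement graph with $\max\deg(G)\le 4$ and simply observes that $\{1,4\}$ (like $\{1,2\}$) is absent from the resulting catalogue, completing the proof together with Observation \ref{obs:maxdeg2}.
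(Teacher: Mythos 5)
Your proposal is correct and follows essentially the same route as the paper: the $\{1,2\}$ case is exactly Observation \ref{obs:maxdeg2}, and the $\{1,4\}$ case is settled, as in the paper, by noting that the exhaustive case analysis of Section \ref{sec:max4} (which assumes only $\max\deg(G)\leq 4$ and does admit minimum degree $1$, as the $\{1,3\}$ graph shows) produces no graph with degree set $\{1,4\}$. Your supplementary structural observations about where leaves can sit (no two leaves sharing a neighbour, leaf-pairs confined to the extremes of $\prec$, leaf-singletons forced onto degree-$4$ singletons) are sound and would support a more self-contained argument, but as you note the clean write-up reduces to the same appeal to the enumeration that the paper makes.
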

\begin{theorem}\label{thm:deg-24}
    The long-refinement graphs $G$ with $\deg(G) = \{2,4\}$ are precisely the ones in Figures \ref{fig:d=l-deg24-17}, \ref{fig:d=l-deg24-14}, \ref{fig:d=l-deg24-12}, \ref{fig:d=l-deg24-13}, \ref{fig:d=l-deg24-16},  \ref{fig:d=l-deg24-18}, \ref{fig:d=l-deg24-21}, \ref{fig:d=l-deg24-27}, \ref{fig:d=l+1-singleton-adjXX-1},
    \ref{fig:d=l+1-singleton-adjXX-2},
 \ref{fig:deg24-14a}, and \ref{fig:deg24-24a}. 
\end{theorem}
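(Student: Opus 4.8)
The plan is to read the statement off directly from the exhaustive case analysis of Section~\ref{sec:max4}, filtering its output by degree set. By Lemma~\ref{lem:value-of-d}, every long-refinement graph $G$ with $\max\deg(G)\leq 4$ satisfies $d\in[\ell,\ell+4]$, and the lemmas of Section~\ref{sec:max4} jointly exhaust all five values of $d$: for each value, and for each admissible choice of the splitting classes guaranteed by Lemmas~\ref{lem:d=l-A,B} and~\ref{lem:l+1l+2AB}, the corresponding lemma pins down up to isomorphism every graph realising the reconstructed partition, establishing both directions of the characterisation at once. Since $G$ is assumed to have exactly two vertex degrees, it suffices to retain from each lemma precisely the output graphs whose degree set equals $\{2,4\}$ and discard the rest; completeness is then inherited from the fact that the case split is already complete over all of $\max\deg(G)\leq 4$.

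First I would treat $d=\ell$. Lemma~\ref{lem:dldegnot23} (the subcase $B_{q(t)-\ell}\in\pi^{q(t)}_\mathcal{S}$) produces the three graphs of Figure~\ref{fig:LR-with-d=l-singleton}, of which exactly Figures~\ref{fig:d=l-deg24-17} and~\ref{fig:d=l-deg24-14} have degree set $\{2,4\}$, the third having degrees $\{3,4\}$. Lemma~\ref{lem:d=l--P1Pa-1PaPbPb+1} (the subcase $B_{q(t)-\ell}=P_1\cup P_{a-1}\cup P_{b+1}$) covers the remaining $d=\ell$ graphs; restricting its $\deg(G)\neq\{2,3\}$ branch to degree $\{2,4\}$ extracts Figures~\ref{fig:d=l-deg24-12},~\ref{fig:d=l-deg24-13},~\ref{fig:d=l-deg24-16},~\ref{fig:d=l-deg24-18},~\ref{fig:d=l-deg24-21}, and~\ref{fig:d=l-deg24-27} from Figure~\ref{fig:LR-with-d=l}, while the companion graphs there carry degrees $\{3,4\}$ and are dropped.

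Next I would sweep $d\in\{\ell+1,\ell+2\}$, where Lemma~\ref{lem:l+1l+2AB} reduces the analysis to five configurations of $\{A_{q(t)-\ell-1},B_{q(t)-\ell-1}\}$. Two yield no graph (Lemma~\ref{lem:l+1l+2ABsingletons} and the case with $B_{q(t)-\ell-1}$ adjacent to $P_{d-\ell}$), and the configuration of Lemma~\ref{lem:Pa+1sing} yields only a $\{3,4\}$-graph. The surviving $\{2,4\}$ contributions are Figures~\ref{fig:d=l+1-singleton-adjXX-1} and~\ref{fig:d=l+1-singleton-adjXX-2} from Lemma~\ref{lem:d=l+1-singleton-adjXX}, together with Figures~\ref{fig:deg24-14a} and~\ref{fig:deg24-24a} from the final $\{\ell+1,\ell+2\}$ lemma (its $\deg(G)\neq\{2,3\}$ branch). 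Finally, for $d\in\{\ell+3,\ell+4\}$ the last lemma of Section~\ref{sec:max4} outputs only graphs with degrees $\{1,3\}$ (Figure~\ref{fig:d=l+3l+4-graphs}), $\{3,4\}$ (Table~\ref{tab:adj-list3-d=l-34}), or $\{2,3\}$ (the string families), contributing nothing here. Assembling the retained figures gives exactly the list in the statement.

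I expect the main obstacle to be bookkeeping rather than mathematics: the substance lives entirely in Section~\ref{sec:max4}, so the real risk is a cataloguing error — omitting a $\{2,4\}$-graph that some lemma produces, double-listing a graph appearing in two subcases, or mis-sorting a graph by its degree set. To guard against this I would verify, for each figure cited in the statement, that its graph genuinely has both a degree-$2$ and a degree-$4$ vertex and no others, and conversely re-scan every graph produced in Section~\ref{sec:max4} to confirm that each one with degree set $\{2,4\}$ is accounted for in the list.
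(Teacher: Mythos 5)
Your proposal is correct and is essentially the paper's own argument: the paper states Theorem~\ref{thm:deg-24} as a direct consequence of the exhaustive case analysis in Section~\ref{sec:max4} (over $d\in[\ell,\ell+4]$ via Lemma~\ref{lem:value-of-d} and the subcase lemmas), filtered by degree set, which is exactly the bookkeeping you carry out. Your tally of the twelve $\{2,4\}$-graphs across the subcases matches the paper's list.
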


\begin{theorem}\label{thm:deg-34}
    The long-refinement graphs $G$ with $\deg(G) = \{3,4\}$ are precisely the ones in Figures \ref{fig:d=l-deg34-11}, \ref{fig:d=l-deg34-10},  \ref{fig:d=l-deg34-12}, \ref{fig:d=l-deg34-13A}, \ref{fig:d=l-deg34-16}, \ref{fig:d=l-deg34-21}, \ref{fig:d=l-deg34-27}, 
    \ref{fig:d=l+2-deg34-15}
    as well as Tables \ref{tab:adj-list-d=l-34}, \ref{tab:adj-list2-d=l-34}, and \ref{tab:adj-list3-d=l-34}.
\end{theorem}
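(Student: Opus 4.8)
The plan is to assemble the statement directly from the exhaustive case analysis of Section \ref{sec:max4}, filtering its output for the degree set $\{3,4\}$. Under the standing assumption $\max\deg(G)\le 4$, Lemma \ref{lem:value-of-d} confines the parameter $d$ to the range $[\ell,\ell+4]$, so it suffices to treat the five cases $d=\ell,\ell+1,\ell+2,\ell+3,\ell+4$. For each of them, Section \ref{sec:max4} provides a \emph{complete} list of the long-refinement graphs that can occur, and in every such list the graphs with $\deg(G)=\{2,3\}$ are recorded as strings while those with $\deg(G)\neq\{2,3\}$ are recorded in figures or adjacency tables together with their degree sets. Proving the theorem thus reduces to collecting, across the five cases, precisely those graphs whose degree set equals $\{3,4\}$, and verifying that this collection is both sound (each listed object is such a graph) and complete (no such graph is missed).

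I would walk through the cases in the order of Section \ref{sec:max4}. For $d=\ell$, Lemma \ref{lem:d=l-A,B} separates the subcases $B_{q(t)-\ell}\in\pi^{q(t)}_\mathcal{S}$ and $B_{q(t)-\ell}=P_1\cup P_{a-1}\cup P_{b+1}$: the first, resolved by Lemma \ref{lem:dldegnot23}, contributes the single $\{3,4\}$-graph of Figure \ref{fig:d=l-deg34-11}, while the second, resolved by Lemma \ref{lem:d=l--P1Pa-1PaPbPb+1}, contributes Figures \ref{fig:d=l-deg34-10}, \ref{fig:d=l-deg34-12}, \ref{fig:d=l-deg34-13A}, \ref{fig:d=l-deg34-16}, \ref{fig:d=l-deg34-21}, \ref{fig:d=l-deg34-27} together with the family of Table \ref{tab:adj-list-d=l-34}. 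For $d\in\{\ell+1,\ell+2\}$, Lemma \ref{lem:l+1l+2AB} reduces to five subcases: two singletons; $P_{a+1}$ with a singleton; $P_{d-\ell}\cup P_a\cup P_b$ with a singleton adjacent to $P_{d-\ell}$; the same with a singleton adjacent to $P_a\cup P_b$; and $P_{a+1}$ with $P_{d-\ell}\cup P_a\cup P_b$. The first and the third admit no long-refinement graph (Lemma \ref{lem:l+1l+2ABsingletons} and the lemma treating adjacency to $P_{d-\ell}$), the fourth produces only $\{2,4\}$-graphs (Lemma \ref{lem:d=l+1-singleton-adjXX}), the second yields the $\{3,4\}$-graph of Figure \ref{fig:d=l+2-deg34-15} (Lemma \ref{lem:Pa+1sing}), and the fifth yields -- besides $\{2,3\}$-strings and the $\{2,4\}$-graphs of Figure \ref{fig:d=l+1l+2-graphs} -- the $\{3,4\}$-family of Table \ref{tab:adj-list2-d=l-34}. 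Finally, the case $d\in\{\ell+3,\ell+4\}$ (which forces $b=a+1$) contributes, besides $\{2,3\}$-strings and the unique $\{1,3\}$-graph of Figure \ref{fig:d=l+3l+4-graphs}, the $\{3,4\}$-family of Table \ref{tab:adj-list3-d=l-34}. The union of the $\{3,4\}$-objects so collected is exactly the list in the theorem.

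Soundness is immediate: each figure and each table row was produced constructively inside the cited lemma, where its degree set was determined during the construction; one only confirms by inspection that the set is $\{3,4\}$. Completeness is inherited from the exhaustiveness of Sections \ref{sec:rev-eng} and \ref{sec:max4}: by the analysis there, every long-refinement graph with $\max\deg(G)\le 4$ reaches the stage $\pi^p$ in which all classes have size at most $2$, the reverse-engineering lemmas reconstruct every earlier partition up through the cascades of size-$4$ and size-$6$ splits, and the five values of $d$ exhaust the remaining branchings. Hence no $\{3,4\}$-graph can arise outside the enumerated cases.

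The main obstacle is bookkeeping rather than mathematics: one must be certain that the five case-lists, taken together, are correctly partitioned by degree set, so that no $\{3,4\}$-graph is omitted and none is double-counted across cases. I would be especially careful with the figures produced jointly for the $\{2,4\}$ and $\{3,4\}$ classifications inside Lemmas \ref{lem:dldegnot23} and \ref{lem:d=l--P1Pa-1PaPbPb+1}, where the two degree sets appear side by side, and with confirming that each of the three infinite families in Tables \ref{tab:adj-list-d=l-34}--\ref{tab:adj-list3-d=l-34} is assigned to exactly one $d$-case and indeed has degree set $\{3,4\}$ for every admissible value of its parameter.
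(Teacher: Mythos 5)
Your proposal is correct and follows essentially the same route as the paper, which gives no separate proof of this theorem but obtains it as the $\{3,4\}$-degree slice of the exhaustive case analysis over $d\in[\ell,\ell+4]$ in Section \ref{sec:max4}. Your assignment of figures and tables to the individual case lemmas (Figure \ref{fig:d=l-deg34-11} from Lemma \ref{lem:dldegnot23}; Figures \ref{fig:d=l-deg34-10}--\ref{fig:d=l-deg34-27} and Table \ref{tab:adj-list-d=l-34} from Lemma \ref{lem:d=l--P1Pa-1PaPbPb+1}; Figure \ref{fig:d=l+2-deg34-15} and Table \ref{tab:adj-list2-d=l-34} from the $d\in\{\ell+1,\ell+2\}$ subcases; Table \ref{tab:adj-list3-d=l-34} from $d\in\{\ell+3,\ell+4\}$) matches the paper's enumeration exactly.
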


\begin{theorem}\label{thm:classification3}
    The graph in Figure \ref{fig:d=l+3l+4-graphs} is the only long-refinement graph $G$ with $\deg (G) = \{1,3\}$. 
\end{theorem}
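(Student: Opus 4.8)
The plan is to derive the theorem directly from the comprehensive case analysis carried out in Section \ref{sec:max4}, rather than re-running the refinement dynamics from scratch. First I would observe that any long-refinement graph $G$ with $\deg(G) = \{1,3\}$ automatically satisfies $\max\deg(G) = 3 \leq 4$, so $G$ falls under the scope of that section and, by Lemma \ref{lem:value-of-d}, must realise one of the cases $d \in [\ell, \ell+4]$. The whole argument then reduces to scanning the graphs produced in each case and isolating those that contain a vertex of degree $1$. This also dovetails with Theorem \ref{thm:deg-12-or-14} and Observation \ref{obs:maxdeg2}, which rule out $\{1,2\}$ and $\{1,4\}$ and thereby single out $\{1,3\}$ as the only candidate degree set with minimum degree $1$ and maximum degree at most $4$.

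Next I would eliminate the cases $d \in \{\ell, \ell+1, \ell+2\}$. The graphs produced there --- those of Figures \ref{fig:LR-with-d=l-singleton} and \ref{fig:LR-with-d=l}, the infinite $\{2,3\}$-families of Lemma \ref{lem:d=l--P1Pa-1PaPbPb+1}, the graphs of the lemmas following Lemma \ref{lem:l+1l+2AB}, and the associated $\{3,4\}$-tables --- all have degree set $\{2,3\}$, $\{2,4\}$, or $\{3,4\}$, so their minimum degree is at least $2$. Hence none of them can contribute a vertex of degree $1$, and it suffices to inspect the remaining case $d \in \{\ell+3, \ell+4\}$.

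The case $d \in \{\ell+3, \ell+4\}$ is where a vertex of degree $1$ can first appear: in the final lemma of Section \ref{sec:max4}, when $t + \ell = 0$ and $d = \ell + 4$, the reverse construction may terminate with a singleton attached by a single edge, and the proof there shows that the only graph so obtained with a genuine degree-$1$ vertex is precisely the graph of Figure \ref{fig:d=l+3l+4-graphs}, whose degree set is $\{1,3\}$. Combining this with the previous paragraph yields uniqueness. I expect the main obstacle to be the bookkeeping that guarantees \emph{completeness} of the inspection --- one must be certain that no family or sporadic graph carrying a degree-$1$ vertex was overlooked in the earlier cases. The cleanest way to discharge this is the structural remark that every construction for $d \leq \ell+2$ forces all terminal classes to consist of vertices of degree at least $2$, so a degree-$1$ vertex is structurally impossible there. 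As a self-contained cross-check I would also record the short route via Theorem \ref{thm:classification23}: by \cite[Lemma 13]{KMcK20}, $G$ has at most two vertices of degree $1$, and deleting the single such vertex, respectively joining the two such vertices, as in Lemmas \ref{lem:deg-13-singleton-intro} and \ref{lem:deg-13-pair-intro}, reduces $G$ to a $\{2,3\}$ long-refinement graph whose classification leaves only the graph of Figure \ref{fig:d=l+3l+4-graphs}.
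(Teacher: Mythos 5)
Your proposal is correct and matches the paper's own treatment: the paper states that Theorem \ref{thm:classification3} follows from the exhaustive case analysis of Section \ref{sec:max4} (where the only graph with a degree-$1$ vertex indeed arises in the $d=\ell+4$, $t+\ell=0$ sub-case of the final lemma, all other cases yielding minimum degree at least $2$), and it then gives exactly your ``cross-check'' as its written-out alternative proof, via \cite[Lemma 13]{KMcK20} and the reduction to the $\{2,3\}$ classification in Lemmas \ref{lem:deg-13-singleton} and \ref{lem:deg-13-pair}. Both of your routes are the paper's routes, so nothing further is needed.
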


While Theorems \ref{thm:deg-12-or-14} -- \ref{thm:classification3} are proved through an analysis of the refinement iterations in Section \ref{sec:max4}, we now present a simple alternative proof for Theorem \ref{thm:classification3}, which relies merely on Theorem \ref{thm:classification23}.

By \cite[Lemma 13]{KMcK20}, a long-refinement graph $G$ with $\deg(G) = \{1,3\}$ has at most two vertices of degree $1$. We start by treating the case where $G$ has exactly one vertex with degree $1$.

\begin{lemma}\label{lem:deg-13-singleton}
    The graph in Figure \ref{fig:d=l+3l+4-graphs} is the only long-refinement graph $G$ with $\deg(G) = \{1,3\}$ in which $|\{v \in V(G) : \deg(v) = 1\}| = 1$.
\end{lemma}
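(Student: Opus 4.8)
The plan is to reduce the statement to the already-established classification of long-refinement graphs with degrees $\{2,3\}$ (Theorem \ref{thm:classification23}), rather than re-running the iteration analysis from Section \ref{sec:max4}. Let $G$ be a long-refinement graph with $\deg(G) = \{1,3\}$ whose unique degree-$1$ vertex is $v_1$, and let $v_2$ be its sole neighbour. Since the first Colour Refinement iteration separates vertices by degree, we have $\pi_G^1 = \{\{v_1\}, V(G)\setminus\{v_1\}\}$, so $\{v_1\}$ is a singleton colour class that persists through every subsequent iteration. First I would delete $v_1$ and set $G' \coloneqq G[V(G)\setminus\{v_1\}]$. Then $v_2$ loses exactly one neighbour and now has degree $2$, while every other vertex keeps degree $3$, so that $\deg(G') = \{2,3\}$ with $v_2$ being the only degree-$2$ vertex.

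The crux is to show that $G'$ is itself a long-refinement graph, i.e.\ that $\WL_1(G') = |V(G')| - 1 = n-2$, where $n \coloneqq |V(G)|$. The key observation is that the persistent singleton $\{v_1\}$ contributes a constant amount to the colour of every vertex other than $v_2$, while $v_2$ is the only vertex whose colour records an incident $\{v_1\}$-edge. Hence the refinement restricted to $V(G)\setminus\{v_1\}$ in $G$ behaves exactly like the refinement on $G'$ once $v_2$ is distinguished, which happens automatically in $G'$ through the initial degree split. Tracking the number of colour classes makes the correspondence precise: in $G$ the class $V(G)\setminus\{v_1\}$ must be split into $n-1$ singletons by exactly $n-2$ further splits, one per iteration, and these match the $n-2$ iterations carrying $G'$ from the trivial partition to the discrete one. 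I would formalise this by arguing that the trace of $\pi^{i+1}_G$ on $V(G)\setminus\{v_1\}$ coincides with $\pi^i_{G'}$ for all $i \geq 0$, so that $G$ is a long-refinement graph if and only if $G'$ is.

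With $G'$ identified as a long-refinement graph with $\deg(G') = \{2,3\}$, I would pin it down via Theorem \ref{thm:classification23}. A double-counting argument on the degree sum $1 + 3(n-1) = 3n-2$, which must be even, forces $n$ to be even, so $|V(G')| = n-1$ is odd and $G'$ falls under the odd-order list. Scanning that list, the only string yielding a graph with exactly one vertex of degree $2$ is $\mathrm{S}1_211\mathrm{XX}$, whence $G' = G(\mathrm{S}1_211\mathrm{XX})$. Finally, reversing the deletion, $G$ is recovered by reattaching a degree-$1$ vertex to the unique degree-$2$ vertex $v_2$ of $G'$; the degree constraints make this reconstruction forced, and the resulting graph is precisely the one in Figure \ref{fig:d=l+3l+4-graphs}.

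I expect the main obstacle to be the middle step: rigorously certifying that $G'$ is a long-refinement graph. The remaining arguments are essentially bookkeeping, but one must carefully justify that removing $v_1$ drops the iteration number by \emph{exactly} one, neither more nor less. This rests entirely on the persistence of the singleton $\{v_1\}$ throughout the execution on $G$ and on its influence being confined to distinguishing the single vertex $v_2$.
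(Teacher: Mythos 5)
Your proposal is correct and follows essentially the same route as the paper's own proof: delete the unique degree-$1$ vertex, observe that the resulting graph $G'$ has degrees $\{2,3\}$ with a single degree-$2$ vertex and must itself be a long-refinement graph of odd order, and then invoke Theorem \ref{thm:classification23} to identify $G'$ as $G(\mathrm{S}1_211\mathrm{XX})$. Your explicit trace correspondence between $\pi^{i+1}_G$ and $\pi^i_{G'}$ and the parity double-count merely spell out steps the paper leaves terse.
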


\begin{proof}
     Let $G$ be any long-refinement graph with $\deg(G)=\{1,3\}$ and the set of vertices of degree $1$ being a singleton $\{v_1\}$. Then the partition after the first Colour Refinement iteration on $G$ is $\pi_G^1 = \{\{v_1\}, V(G) \setminus \{v_1\}\}$. The graph $G' \coloneqq G[V(G) \setminus \{v_1\}]$, i.e.\ the subgraph of $G$ induced by $V(G) \setminus \{v_1\}$, satisfies $\deg(G') = \{2,3\}$ and $\{v \in V(G') : \deg(v) = 2\} = \{v_2\}$, where $\{v_2\} = N_G(v_1)$. Also, Colour Refinement takes $n-2$ iterations to terminate on $G'$, otherwise $G$ would not be a long-refinement graph. Hence, $G'$ is a long-refinement graph as well, and since $|V(G)|$ is even, $|V(G')|$ is odd. By our classification in Theorem \ref{thm:classification23}, $G'$ must be $G(\mathrm{S}1_211\mathrm{XX})$, since that is the only long-refinement graph with degrees $2$ and $3$ in which there is only one vertex of degree $2$. So, by degree arguments, $G$ must be the graph in Figure \ref{fig:d=l+3l+4-graphs}.
\end{proof}

Now we consider the situation in which there are exactly two vertices of degree $1$.

\begin{lemma}\label{lem:deg-13-pair}
    There is no long-refinement graph $G$ with $\deg(G) = \{1,3\}$ in which $|\{v \in V(G) : \deg(v) = 1\}| = 2$. 
\end{lemma}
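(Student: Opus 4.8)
The plan is to mirror the strategy from Lemma~\ref{lem:deg-13-singleton} but with the opposite outcome, deriving a contradiction instead of identifying a graph. Given a long-refinement graph $G$ with $\deg(G)=\{1,3\}$ and two degree-$1$ vertices $v_1,v_1'$, I would first note that $v_1\neq v_1'$ are non-adjacent: if they formed an edge, that edge would be an isolated connected component, contradicting connectedness of long-refinement graphs (recalled after Observation~\ref{obs:maxdeg2}). The key move is then to \emph{add} the edge $\{v_1,v_1'\}$, forming $G'\coloneqq(V(G),E(G)\cup\{\{v_1,v_1'\}\})$. This is the reverse of the deletion used in the singleton case, and it promotes both degree-$1$ vertices to degree $2$, so $\deg(G')=\{2,3\}$ with exactly the two degree-$2$ vertices $v_1,v_1'$, which are now adjacent.

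Next I would argue that $G'$ is again a long-refinement graph and, crucially, that $G$ and $G'$ produce \emph{identical} Colour Refinement partitions at every iteration. The intuition is that $v_1,v_1'$ are the ``poorest'' vertices and must be the last pair to split into singletons: since $|N(\{v_1,v_1'\})|\le 2$ in both graphs, the class $\{v_1,v_1'\}$ is forced to be the final splitting class in each. Because adding the single edge $\{v_1,v_1'\}$ only changes the internal adjacency of this last-splitting pair and leaves $N_G(v_1)\setminus\{v_1'\}=N_{G'}(v_1)\setminus\{v_1'\}$ and symmetrically for $v_1'$, the refinement dynamics outside this pair are untouched. I would make this precise by checking that $\pi_G^1=\pi_{G'}^1$ (the degree partition is unaffected, since both endpoints move from degree $1$ to degree $2$ uniformly) and then that the whole cascade of splits coincides, so $\pi_G^i=\pi_{G'}^i$ for every $i\in\N$; hence $G'$ is long-refinement iff $G$ is.

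Finally I would invoke the classification. A double-counting of edges shows $G$, and therefore $G'$, has an even number of vertices, so $G'$ must appear in the even-order list of Theorem~\ref{thm:classification23}. I would then scan that list for a graph with exactly two degree-$2$ vertices that are moreover \emph{adjacent}. The only candidate with exactly two degree-$2$ vertices is $G(\mathrm{S011XX})$, but in that graph the two degree-$2$ vertices are non-adjacent, contradicting the adjacency of $v_1,v_1'$ in $G'$. Hence no such $G'$ exists, so no such $G$ exists.

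The main obstacle I expect is the middle step: rigorously justifying that inserting the edge $\{v_1,v_1'\}$ leaves every refinement partition invariant. The subtle point is that one must verify the inserted edge never causes an \emph{earlier} split somewhere in the graph — i.e.\ that the pair $\{v_1,v_1'\}$ genuinely remains the last class to split in $G'$ and that no class becomes unbalanced one iteration sooner than in $G$. The cardinality bound $|N(\{v_1,v_1'\})|\le 2$ is what pins this down, forcing $\{v_1,v_1'\}$ to split last in both graphs, but one has to confirm this interacts correctly with Lemma~\ref{rmk:work-backwards} so that the unique unbalanced class in each iteration is the same for $G$ and $G'$. Once that invariance is secured, the contradiction via the classification is immediate.
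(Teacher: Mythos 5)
Your proposal is correct and follows essentially the same route as the paper's proof: non-adjacency of the two degree-$1$ vertices via connectedness, adding the edge $\{v_1,v_1'\}$ to obtain $G'$ with $\deg(G')=\{2,3\}$, arguing $\pi_G^i=\pi_{G'}^i$ for all $i$ via the bound $|N(\{v_1,v_1'\})|\le 2$ and the unchanged external neighbourhoods, and then ruling out $G'$ by noting that $G(\mathrm{S011XX})$ is the only even-order candidate with exactly two degree-$2$ vertices and that these are non-adjacent. The obstacle you flag at the end is exactly the point the paper also handles with the same cardinality argument, so no gap remains.
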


\begin{proof}
     Suppose $G$ is a long-refinement graph with $\deg(G) = \{1,3\}$ for which $\{v \in V(G) : \deg(v) = 1\} = \{v_1, v'_1\}$ with $v_1 \neq v'_1$. We know that $\{v_1,v'_1\} \notin E(G)$, since $G$ is connected. However, now the graph $G' \coloneqq (V(G), E(G) \cup \{\{v_1,v'_1\}\}$ is a long-refinement graph with $\deg(G') = \{2,3\}$ and $\{v \in V(G') : \deg(v) = 2\} = \{v_1, v'_1\}$. The degree property is clear. To see that $G'$ is a long-refinement graph, note that inserting the new edge does not alter the partition into degrees, i.e.\ $\pi^1$. Also, since in both graphs, it holds that $|N(\{v_1,v'_1\})| \leq 2$, the class $\{v_1,v'_1\}$ must be the last class that splits (in both graphs). Furthermore, $N_G(v_1)\setminus\{v'_1\} = N_{G'}(v_1)\setminus\{v'_1\}$ and $N_G(v'_1)\setminus\{v_1\} = N_{G'}(v'_1)\setminus\{v_1\}$. This implies that $\pi_G^i = \pi_{G'}^i$ for every $i \in \N$, so $G'$ is a long-refinement graph if and only if $G$ is one. By double counting edges, $G$ (and hence also $G'$) has an even number of vertices and hence $G'$ is one of the graphs from the first list in Theorem \ref{thm:classification23}. The only graph from that list with exactly two vertices of degree $2$ is $G(\mathrm{S011XX})$, but these vertices are not adjacent. Hence, $G'$ does not exist, and therefore $G$ does not exist.
\end{proof}

This concludes the proof of Theorem \ref{thm:classification3}, which together with Theorem \ref{thm:deg-12-or-14} yields that the graph in Figure \ref{fig:d=l+3l+4-graphs} is the only long-refinement graph $G$ with $\min\deg (G) = 1$ and $\max\deg(G) \leq 4$.

\section{Distinguishability of Long-Refinement Graphs}\label{sec:long-ref-distinguish}

To conclude and to address another open question from \cite{KMcK20}, we consider the situation when Colour Refinement is applied to two graphs $G$ and $H$ in parallel, for example to check whether they are isomorphic. Formally, Colour Refinement is applied to the graph consisting of the disjoint union $U$ of $G$ and $H$ (ensuring that their vertex sets are disjoint). Recall that $\chi^i_U$ denotes that colouring that Colour Refinement has computed on $U$ after $i$ iterations. We say that Colour Refinement \emph{distinguishes} $G$ and $H$ in $i$ iterations if $\doubleBrackets{\chi^i_U(v) \mid v \in V(G)} \neq \doubleBrackets{\chi^i_U(v) \mid v \in V(H)}$. If there is no such $i$, we say that $G$ and $H$ are \emph{equivalent} with respect to Colour Refinement.

\begin{theorem}\label{thm:nolongdist}
    For $n \geq 3$, there is no pair of graphs $G$, $H$ with $|H| \leq |G| = n$ that Colour Refinement distinguishes only after $n - 1$ iterations.
\end{theorem}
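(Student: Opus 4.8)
The plan is to run Colour Refinement on the disjoint union $U$ of $G$ and $H$ (with disjoint vertex sets) and to exploit the standard fact that, since there are no edges between the two parts, $\chi^i_U$ restricted to $V(G)$ induces exactly $\pi^i_G$ and restricted to $V(H)$ exactly $\pi^i_H$. Writing the hypothesis out, ``distinguished only after $n-1$ iterations'' means that $G$ and $H$ have equal colour multisets in every iteration $i \le n-2$ and different ones in iteration $n-1$. At $i=0$ the graphs are monochromatic, so the two multisets have sizes $|G|$ and $|H|$; since $n-1 \ge 2$, equality at iteration $0$ already forces $|H| = n$. For $i \le n-2$, equality of the multisets means every colour class of $U$ meets $V(G)$ and $V(H)$ equally often, so $|\pi^i_G| = |\pi^i_H| = |\pi^i_U| =: m_i$. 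If $U$ were stable at some $i \le n-2$ it would stay matched forever, contradicting distinguishability; hence $m_i$ strictly increases on $[0,n-2]$, giving $m_i \ge i+1$, while $m_{n-2} = |\pi^{n-2}_G| \le n$. Thus $m_{n-2} \in \{n-1,n\}$.

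First I would eliminate $m_{n-2}=n$. Here $\pi^{n-2}_G$ and $\pi^{n-2}_H$ are both discrete, hence stable, and I would invoke the fixpoint property that a stable colour of a vertex determines the multiset of colours occurring in its neighbourhood. Since $G$ and $H$ are matched at this stable level, matched (discrete) vertices have identical neighbourhood-colour multisets, so no colour class of $U$ splits in iteration $n-1$; thus $G$ and $H$ are not distinguished there, a contradiction. (Equivalently, the colour-preserving bijection between the two discrete stable colourings is forced to be a graph isomorphism.) Therefore $m_i = i+1$ for all $i \le n-2$: exactly one class splits per iteration, at iteration $n-2$ each of $G$, $H$ carries one pair and $n-2$ singletons, and $U$ carries a single size-$4$ class $Q = P_G \cup P_H$ (the two merged pairs) together with $n-2$ matched singletons.

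The heart of the argument is iteration $n-1$. Because $Q$ is a class of $\pi^{n-3}_U$ that does not split at iteration $n-2$, it is balanced with respect to every class of $\pi^{n-3}_U$; the only classes changing at iteration $n-2$ are the two matched singletons obtained from the merged consecutive pair $\{a,b\}\cup\{a',b'\}$, which---since an uneven split would already distinguish $G$ and $H$ at iteration $n-2$---must split into matched singleton classes, say $\{a,a'\}$ and $\{b,b'\}$ with equal colours $\chi^{n-2}(a)=\chi^{n-2}(a')$ and $\chi^{n-2}(b)=\chi^{n-2}(b')$. Using that there are no $G$--$H$ edges together with the balancedness of $Q$ with respect to all unchanged classes, I would show that the iteration-$(n-1)$ neighbourhood multisets of the four vertices of $Q$ satisfy $\doubleBrackets{\chi^{n-2}(u)\mid u\sim g_1}=\doubleBrackets{\chi^{n-2}(u)\mid u\sim h_1}$ and likewise for $g_2,h_2$: the common parts agree because the common neighbours lie in unchanged, $Q$-balanced classes, and the only differing contributions come from $\{a,a'\}$ and $\{b,b'\}$, whose colours are shared across $G$ and $H$. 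Consequently $Q$ either stays intact or splits into the two matched pairs $\{g_1,h_1\}$ and $\{g_2,h_2\}$, and the same balancedness bookkeeping shows that no matched singleton splits either. Hence every class of $U$ refines into $G$--$H$-balanced pieces in iteration $n-1$, so $G$ and $H$ are \emph{not} distinguished there, contradicting the hypothesis.

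I expect the main obstacle to be this last balancedness computation, and in particular ruling out the asymmetric configuration in which $P_G$ is unbalanced (so $G$ is a long-refinement graph whose last pair splits) while $P_H$ is balanced (so $H$ stabilises one iteration early). The resolution is the multiplicity argument forced by the shared colour of $\{a,a'\}$: since $a$ has exactly one neighbour in $P_G$, its $U$-partner $a'$ must have exactly one neighbour in $P_H$, and symmetrically for $b,b'$; together with the balancedness of $Q$ this yields a perfect matching between $\{h_1,h_2\}$ and $\{a',b'\}$, i.e.\ $P_H$ is unbalanced as well. This is exactly the matching-between-consecutive-pairs phenomenon of Lemma \ref{lem:successive-matching} and the degree-balance of Lemma \ref{rmk:work-backwards}, now transported to the union $U$, and I would have to carry it out carefully (together with the small-order and connectivity checks implicit in $n \ge 3$) to make the parallel between the two graphs fully rigorous.
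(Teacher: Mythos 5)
Your overall strategy is the same as the paper's: pass to the disjoint union $U$, observe that at iteration $n-2$ each graph carries one pair and $n-2$ singletons merged into matched classes of $U$, and then use balancedness of the four-vertex class $Q=P_G\cup P_H$ with respect to the classes of $\pi^{n-3}_U$, together with the matching between consecutive pairs, to show that $Q$ and the matched singletons can only split evenly at iteration $n-1$. That main case is handled correctly, and you are in fact more explicit than the paper about the point it leaves implicit (why the partner $w_1$ of $v_1$ exists and why the labelling $a\leftrightarrow a'$ is consistent): the paper simply asserts that both graphs must be long-refinement graphs and that a suitable $w_1\in P_H$ ``has to'' exist, whereas you derive both from the degree-symmetry of the classes of $\pi_U^{n-2}$ with respect to those of $\pi_U^{n-3}$.

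There is, however, one step that fails as stated: the elimination of the subcase $m_{n-2}=n$ via ``the fixpoint property that a stable colour of a vertex determines the multiset of colours occurring in its neighbourhood.'' Stability is an \emph{intra-graph} property: that $\pi_G^{n-2}$ and $\pi_H^{n-2}$ are each discrete (hence each trivially stable) does not imply that a $G$-vertex and an $H$-vertex with the same colour have the same neighbourhood-colour multiset, which is what you need to prevent a matched class of $\pi_U^{n-2}$ from splitting. The principle you invoke is false across two graphs: the $1$-regular graph $2K_2$ and the empty graph on four vertices are each stable and colour-matched at iteration $0$, yet are distinguished at iteration $1$. A colour-matched pair of discrete colourings likewise need not come from an isomorphism unless one argues further (the colour at iteration $n-2$ only encodes neighbourhood multisets at level $n-3$, and the lift to level $n-2$ breaks down when $\pi^{n-3}$ is not yet discrete, or when two classes of $U$ split in the same iteration). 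This subcase should instead be killed by the same degree-symmetry bookkeeping you use in the main case ($\deg_{\{u,u'\}}(x)=\deg_{\{u,u'\}}(x')$ for matched $x,x'$, propagated one level back for the newly formed classes). A second, smaller omission: in the case $m_{n-2}=n-1$ you assume the class that splits at iteration $n-2$ is a merged pair of pairs, but if neither $P_G$ nor $P_H$ ever splits (so neither graph is a long-refinement graph and Lemma \ref{lem:C_p-1} is unavailable), that class could a priori be a merged pair of triples splitting into $Q$ and a matched singleton; this configuration also has to be checked (it leads to $\pi_U^{n-2}$ being stable, hence again no distinguishing at $n-1$). Neither issue requires a new idea -- both yield to the balancedness argument you already deploy -- but as written these two branches are not covered.
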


\begin{proof}
 Graphs of different orders are distinguished within one iteration of Colour Refinement, so any counterexample to the theorem would have to consist of graphs $G$, $H$ with $|G| = |H|$ and both graphs would be long-refinement graphs. In particular, Colour Refinement computes the discrete partition on both graphs and, hence, the last colour class that splits in each graph is a pair $P_G$ and $P_H$, respectively. Let $U$ be the disjoint union of $G$ and $H$. Since $P_G$ and $P_H$ must have the same colour in iteration $n-2 \geq 1$, it holds that $P_G \in E(G)$ if and only if $P_H \in E(H)$. Let $\{v_1,v_2\} \coloneqq P_G$. Since in $\pi_U^{n-2}$, all vertices in $V(G) \setminus P_G$ and in $V(H) \setminus P_H$ form singleton colour classes, there has to be a vertex $w_1 \in P_H$ that is adjacent to exactly the same colour classes from iteration $n-2$ as $v_1$, and similarly, the second vertex $w_2 \in P_H$ must be adjacent to exactly the same singleton classes as $v_2$. However, then for each $C \in \pi_U^{n-2}$ and for $i \in \{1,2\}$, it holds that $\deg_C(v_i) = \deg_C(w_i)$. But then $\chi_U^{n-1}(v_i) = \chi_U^{n-1}(w_i)$. Thus, $G$ and $H$ are equivalent with respect to Colour Refinement and in particular not distinguished after $n-1$ iterations.
\end{proof}

The theorem gives a negative answer to the question whether there are pairs of graphs that are only distinguished after the maximum possible number of iterations of Colour Refinement. The above argument might be pushed a little further to exclude other strong lower bounds on the number of iterations for distinguishability. Note, however, that it may still be possible that there exists a constant $c$ such that for all $n \in \N$ (from some start value on), there are pairs of graphs $G$ and $H$ with $|G| = |H| = n$ which are only distinguished in $n-c$ iterations of Colour Refinement.

\section{Conclusion}

Long-refinement graphs are graphs which take the maximum theoretically possible number of Colour Refinement iterations to stabilise. Even though the Colour Refinement procedure is simple and its power and mechanics are well-understood (see, for example, \cite{kiefer:phd}), the fact that long-refinement graphs actually exist is surprising, given that this contrasts the situation for higher dimensions of the Weisfeiler--Leman algorithm. The original approaches to show their existence were empirical, i.e.\ to test candidate graphs \cite{Goedicke,KMcK20}. A theoretical analysis of the results then yielded infinite families of long-refinement graphs with small degrees, but also left infinitely many gaps with respect to graph sizes.

In this work, we have shown that for minimal degrees, those gaps cannot be closed. More precisely, we have proved that Theorems 18 and 24 from \cite{KMcK20} are actually a classification, meaning those graphs are \emph{the only long-refinement graphs with degrees $\{2,3\}$ that exist}, for arbitrarily large $n$. We have extended this to the first classification of long-refinement graphs of maximum degree $4$. While previous searches for long-refinement graphs started from computational data, our reverse-engineering method is a purely theoretically-motivated approach. Our analysis shows that \emph{all} families of long-refinement graphs with minimal degrees (i.e.\ with maximum degree at most $3$) can be captured with string representations. The characterisation that falls out of this insight yields a simple proof that there is only one long-refinement graph with minimum degree $1$ and maximum degree at most $4$. 

As witnessed in Lemma \ref{lem:complement}, the graph obtained from a long-refinement graph by taking the complement of the edge set is again a long-refinement graph. Thus, our classification of long-refinement graphs with maximum degree at most $4$ also immediately yields a classification of long-refinement graphs with minimum degree at least $n-5$. Extending the analyses to maximum degree larger than $4$ and minimum degree smaller than $n-5$ remains an interesting project for future work. The analysis in Section \ref{sec:rev-eng} describes at least the final five Colour Refinement iterations on long-refinement graphs. For the long-refinement graphs with maximum degree at most $4$, it accounts for at least the final half of all iterations. Previous study of long-refinement graphs with general degree has consisted in computing possible \emph{split procedures} in a brute-force-like fashion \cite{Goedicke}. Assuming a relatively small number of singletons (or equivalently, a relatively high number of pairs) in the first iteration in which every class has size at most $2$, our structural results have the potential to significantly reduce the search space for split procedures, as they provide additional constraints that the previous approach did not exploit. This would potentially allow us to compute larger long-refinement graphs of arbitrary degrees.

Given the existence of infinite families of long-refinement graphs, in the light of the practical applications of Colour Refinement, it is an intriguing question to ask for pairs of long-refinement graphs that are only distinguished (i.e.\ receive different colourings) in the last Colour Refinement iteration on them. This search was initiated in \cite{KMcK20} and has been ongoing since then. In Section \ref{sec:long-ref-distinguish}, we close it by showing that such graphs do not exist. Our technique to capture long-refinement graphs via string representations opens the door to generalisations, e.g.\ to the construction of long-refinement graphs which take long to be distinguished by Colour Refinement. Without the requirement to be long-refinement graphs, the currently best known lower bound is $n - 8 \sqrt{n}$ \cite[Theorem 4.6]{krever15}. The task to improve on this lower bound to $n - c$ for a constant $c$ (which must be larger than $1$, due to Theorem \ref{thm:nolongdist}) is particularly captivating given that such pairs of graphs do not exist for higher dimensions of the Weisfeiler--Leman algorithm \cite{GroheLN23}. It is plausible that the bound of $n-1$ in the statement of Theorem \ref{thm:nolongdist} could be strengthened to $n-c$ for some small $c$. For such graphs, there can be at most $c-1$ -- hence very few -- iterations in which the number of colours increases by more than $1$, i.e.\ which differ from splittings in long-refinement graphs. Since the number of iterations is crucial for the parallelisability of the algorithm, improved upper bounds mean an improved complexity of a parallel implementations.

\bibliography{refs}
\end{document}